\DeclareFontFamily{U}{mathc}{}
\DeclareFontShape{U}{mathc}{m}{it}%
{<->s*[1.03] mathc10}{}
\DeclareMathAlphabet{\mathscr}{U}{mathc}{m}{it}
\newcommand*{\rom}[1]{\expandafter\@slowromancap\romannumeral #1@}
\DeclareMathAlphabet{\mathpzc}{OT1}{pzc}{m}{it}
\DeclarePairedDelimiter\ceil{\lceil}{\rceil}
\DeclarePairedDelimiter\floor{\lfloor}{\rfloor}
\newcommand{\etal}{\textit{et al.}}
\newtheorem{theorem}{Theorem}[section]
\newtheorem{example}{Example}[section]
\newtheorem{lemma}{Lemma}[section]
\newtheorem{remark}{Remark}[section]
\newtheorem{proposition}{Proposition}[section]
\numberwithin{equation}{section}
\begin{document} \title{{A recursive approach to the construction and enumeration of self-orthogonal and self-dual codes over finite commutative chain rings of even characteristic}}
\author{Monika Yadav{\footnote{Email address:~\urlstyle{same}\href{mailto:monikay@iiitd.ac.in}{monikay@iiitd.ac.in}}}  ~and 
  Anuradha Sharma{\footnote{Corresponding Author, Email address:~\urlstyle{same}\href{mailto:anuradha@iiitd.ac.in}{anuradha@iiitd.ac.in}} }\\
 {Department of  Mathematics, IIIT-Delhi}\\{New Delhi 110020, India}}
\date{}
\maketitle
\begin{abstract}
Let $\mathcal{R}_{e,m}$ be a finite commutative chain ring of even characteristic  with maximal ideal $\langle u \rangle$ of nilpotency index $e \geq 2,$  Teichm$\ddot{u}$ller set $\mathcal{T}_{m},$ and residue field $\mathcal{R}_{e,m}/\langle u \rangle$ of order $2^m.$ Suppose that $2 \in \langle u^{\kappa}\rangle \setminus \langle u^{\kappa+1}\rangle$ for some even positive integer $ \kappa \leq  e.$   In this paper,   we  provide a recursive method to construct a self-orthogonal code $\mathcal{C}_e$ of type $\{\lambda_1, \lambda_2, \ldots, \lambda_e\}$ and length $n$ over  $\mathcal{R}_{e,m}$   from a chain \vspace{-1mm}\begin{equation*}\vspace{-1mm}
\mathcal{D}^{(1)}\subseteq \mathcal{D}^{(2)} \subseteq \cdots \subseteq \mathcal{D}^{(\ceil{\frac{e}{2}})} \end{equation*} of self-orthogonal codes of length $n$ over $\mathcal{T}_{m},$ and vice versa, where $\dim \mathcal{D}^{(i)}=\lambda_1+\lambda_2+\cdots+\lambda_i$ for $1 \leq i \leq \ceil{\frac{e}{2}},$  the codes $\mathcal{D}^{(\floor{\frac{e+1}{2}}-\kappa)},\mathcal{D}^{(\floor{\frac{e+1}{2}}-\kappa+1)},\ldots,\mathcal{D}^{(\floor{\frac{e}{2}}-\frac{\kappa}{2})}$  satisfy certain additional conditions, and  $\lambda_1,\lambda_2,\ldots,\lambda_e$ are   non-negative integers  satisfying $2\lambda_1+2\lambda_2+\cdots+2\lambda_{e-i+1}+\lambda_{e-i+2}+\lambda_{e-i+3}+\cdots+\lambda_i \leq n$  for $\ceil{\frac{e+1}{2}}\leq i\leq e,$ (here  
 $\floor{\cdot }$ and $\ceil{\cdot }$ denote the floor and ceiling functions, respectively).  This construction guarantees that $Tor_i(\mathcal{C}_e)=\mathcal{D}^{(i)}$ for $1 \leq i \leq \ceil{\frac{e}{2}}.$
By employing this recursive construction method, together with the results from group theory and finite geometry,  we derive explicit enumeration formulae for all self-orthogonal and self-dual codes of an arbitrary length over $\mathcal{R}_{e,m}.$ We also demonstrate these results through examples.
A subsequent study \cite{YSub2}   addresses the complementary case where $\kappa$ is odd.
\end{abstract}
{\bf Keywords:} Self-orthogonal codes; Self-dual codes; Equivalent codes; Mass formulae.\\
{\bf 2020 Mathematics Subject Classification}:  15A63, 94B99, 94B15.
 \section{Introduction}\label{intro}

Self-orthogonal and self-dual codes are among the most significant and extensively studied classes of linear codes due to their intricate algebraic structures and wide-ranging applications. These codes are deeply intertwined with combinatorial design theory \cite{Gaborit,Kennedy}, modular forms, and unimodular lattices \cite{HaradaBannai,Dough,P}, reflecting their rich mathematical framework. In addition to their theoretical importance, self-orthogonal and self-dual codes are instrumental in several critical areas. They form the foundation of code-based cryptography — a leading candidate for post-quantum secure cryptosystems — where their structural properties have been exploited to enhance efficiency and decoding performance in schemes such as the McEliece and Niederreiter cryptosystems \cite{Bernstein,Mariot}. Furthermore, these codes are essential in quantum error correction, particularly in the construction of stabilizer codes that protect quantum information against decoherence and noise \cite{Ashikhmin,XingJin}.  Motivated by these diverse applications, extensive research has been devoted to understanding the structure and construction of self-orthogonal and self-dual codes \cite{BETTY,Dough,Dougherty}.

The theory of self-orthogonal and self-dual codes has advanced substantially over the past decades. A major breakthrough occurred in the 1990s when it was discovered that many important binary non-linear codes — including Kerdock, Preparata, Goethals, and Delsarte-Goethals codes — can be realized as Gray images of linear codes over the ring 
$\mathbb{Z}_4$	 of integers modulo $4$ \cite{R,sole}. This discovery sparked significant interest in exploring self-orthogonal and self-dual codes over more general algebraic structures, especially finite commutative chain rings \cite{Choi,Dougherty,Zp,GBN,Fidel,AT,K}. 
    In particular, the explicit enumeration and construction of self-orthogonal and self-dual codes over various finite commutative chain rings has gained significant attention \cite{b,Choi,Zp,Jose,Y}. Such enumeration results are not only theoretically valuable but also essential for classifying these codes up to monomial equivalence \cite{ GBN,AT,Y}. Below, we provide a summary of key results in this direction.

Throughout this paper, for a prime number $p$ and positive integers $m$ and  $\mathfrak{s},$ let  $GR(p^{\mathfrak{s}},m)$ denote the Galois ring of characteristic $p^{\mathfrak{s}}$ and cardinality $p^{\mathfrak{s}m}.$  
By Theorem XVII.5 of \cite{Mcdonald}, every finite commutative chain ring is isomorphic to a quotient ring of the form \vspace{-2mm}\begin{equation}\vspace{-1mm}\label{Chain}\mathcal{R}_{e,m}=\frac{GR(p^{\mathfrak{s}},m)[x]}{\langle g(x),p^{\mathfrak{s}-1}x^{\mathtt{t}}\rangle},\end{equation}
   where $g(x)=x^{\kappa}+p(g_{\kappa -1}x^{\kappa-1}+\cdots+g_1x+g_0) \in GR(p^{\mathfrak{s}},m)[x] $ is an Eisenstein polynomial of degree $\kappa \geq 1$ with $g_0$  as a unit in $GR(p^{\mathfrak{s}},m),$  the integer $\mathtt{t}$ satisfies  $1\leq \mathtt{t}\leq \kappa$ when $\mathfrak{s}\geq 2$, while $\mathtt{t}=\kappa$ when $\mathfrak{s}=1,$ and $e=\kappa(\mathfrak{s}-1)+\mathtt{t}.$ The converse also holds.
   
  In the special case  when $\mathfrak{s}=\kappa=\mathtt{t}=1,$ 
the chain ring $\mathcal{R}_{e,m}$ reduces to the finite field 
$\mathbb{F}_{p^m}$ of order $p^m,$  and the enumeration formulae for self-orthogonal and self-dual codes over $\mathbb{F}_{p^m}$ follow from Pless \cite{V}.  For a detailed study on the enumeration of self-orthogonal and self-dual codes over finite commutative chain rings of odd characteristic, \textit{i.e.}, when $p$ is an odd prime,
the reader is referred to \cite{b,BETTY,Zp,GBN,AT,K, Y}  and references therein. 

 When $p=2,$ $m=1,$ $\mathfrak{s}\geq 1$ and $\kappa=1,$ we have $\mathtt{t}=1$ and $\mathcal{R}_{e,m}\simeq \mathbb{Z}_{2^{e}},$ and   the enumeration formula for  self-dual codes over $\mathbb{Z}_{2^{e}}$ is derived by Nagata  {\etal}  \cite{Fidel}.
On the other hand, when $p=2$ and $\mathfrak{s}=1,$ we have $\mathtt{t}=\kappa,$ $GR(2^{\mathfrak{s}},m)=\mathbb{F}_{2^m}$ and $\mathcal{R}_{e,m}\simeq \mathbb{F}_{2^m}[x]/\langle x^{\kappa}\rangle,$  which is a quasi-Galois ring of even characteristic.   Yadav and Sharma \cite{quasi} provided a recursive method to construct
and enumerate self-orthogonal and self-dual codes over quasi-Galois rings of even characteristic using linear codes with a certain structural property $(\ast)$ (see Section 4 
of  \cite{quasi}).

Furthermore, when $p=2,$  $\mathfrak{s}\geq 1$ and $\kappa=1,$ we have $\mathtt{t}=1$ and $\mathcal{R}_{e,m}\simeq GR(2^{e},m).$   In this setting, Yadav and Sharma \cite{Galois} observed that the enumeration techniques used in \cite{Fidel,quasi} and \cite{Y} could not be directly extended  to count self-orthogonal and self-dual codes over $GR(2^e,m).$  To overcome this limitation, they further defined doubly even codes over the Teichm$\ddot{u}$ller set $\mathcal{T}$ of $GR(2^e,m)$ and provided a recursive method to construct and enumerate self-orthogonal and self-dual codes over $GR(2^e,m)$ from doubly even codes over $\mathcal{T}.$   

Nevertheless, as demonstrated in Section 3 of \cite{quasi} and illustrated in Example \ref{EX1}, these existing recursive methods do not extend, in a straightforward manner, to finite commutative chain rings of even characteristic when both  $\mathfrak{s}, \kappa \geq 2.$ In a recent work \cite{YSub2}, we investigate the case where $\kappa$ is odd and present a recursive method for constructing self-orthogonal and self-dual codes over $\mathcal{R}_{e,m}$ from a nested sequence of self-orthogonal codes  over the Teichm$\ddot{u}$ller set  $\mathcal{T}_m$ of $\mathcal{R}_{e,m}$, among which exactly $\lfloor \tfrac{e}{2} \rfloor - \lfloor \tfrac{\kappa}{2} \rfloor$ codes are doubly even, subject to certain additional conditions. 

The primary objective of this paper is to employ the modified recursive method derived in Section 3 of Yadav and Sharma \cite{YSub2}  for the construction and enumeration of all self-orthogonal and self-dual codes over $\mathcal{R}_{e,m}$ in the case when $p=2,$ $e \geq 2,$ and $ \kappa\geq 2$ is an even integer. In this context, we construct  these codes  from a nested sequence of self-orthogonal codes    over the Teichm$\ddot{u}$ller set  $\mathcal{T}_m$ of $\mathcal{R}_{e,m}$, satisfying  conditions \textbf{A1)} - \textbf{A5)} in Lemma \ref{l3.4} when $2\kappa \leq e,$ and   conditions \textbf{B1)} - \textbf{B4)} in Lemma \ref{l3.4a} when $2\kappa > e.$   Together with the results derived in \cite{YSub2}, \cite{Galois} and \cite{Y}, this work completely solves the problem of construction and enumeration of self-orthogonal and self-dual codes over  finite commutative chain rings.  This problem is not only fundamental in the theoretical development of algebraic coding theory  but also has significant practical implications. These enumeration formulae are useful in counting   self-orthogonal and self-dual quasi-abelian and  Galois additive cyclic codes over finite commutative chain rings (as demonstrated in  \cite{Jose, Lavanya}), and other specialized linear codes that can be expressed as direct sums of linear codes over finite commutative chain rings. These enumeration formulae are particularly valuable for classifying codes up to monomial equivalence, which is a crucial aspect of the code equivalence problem. Notably, monomially equivalent codes preserve important invariants such as the homogeneous and overweight distances \cite{Gassner,Ozbudak}.

Throughout this paper, we assume that $p=2,$ $e \geq 2$ and $\kappa \geq 2$ is an even integer, and that the chain ring $\mathcal{R}_{e,m}$ is as defined in \eqref{Chain}. 
This paper is organized as follows: In Section \ref{prelim}, we  present some  preliminaries essential for establishing our main results. In Section \ref{construction}, we provide  a  recursive method for  constructing a self-orthogonal (\textit{resp.} self-dual) code $\mathcal{C}_e$ of type  $\{\lambda_1, \lambda_2, \ldots, \lambda_e\}$ and  length $n$ over  $\mathcal{R}_{e,m}$ satisfying $Tor_i(\mathcal{C}_e)=\mathcal{D}^{(i)}$  from a chain 
\vspace{-1mm}\begin{equation*}\vspace{-1mm}
    \mathcal{D}^{(1)}\subseteq \mathcal{D}^{(2)} \subseteq \cdots \subseteq \mathcal{D}^{(\ceil{\frac{e}{2}})} \end{equation*} of self-orthogonal codes of length $n$ over the Teichm$\ddot{u}$ller set    $\mathcal{T}_m$ of $\mathcal{R}_{e,m}$  with  $\dim \mathcal{D}^{(i)}=\lambda_1+\lambda_2+\cdots+\lambda_i$ for $1 \leq i \leq \ceil{\frac{e}{2}},$ where the codes $\mathcal{D}^{(\floor{\frac{e+1}{2}}-\kappa)},\mathcal{D}^{(\floor{\frac{e+1}{2}}-\kappa+1)},\ldots,\mathcal{D}^{(\floor{\frac{e}{2}}-\frac{\kappa}{2})}$  satisfy conditions \textbf{A1)} - \textbf{A5)} in Lemma \ref{l3.4} when $2\kappa \leq e,$ and   conditions \textbf{B1)} - \textbf{B4)} in Lemma \ref{l3.4a} when $2\kappa > e,$ 
 and $\lambda_1,\lambda_2,\ldots,\lambda_e$ are non-negative integers satisfying $2\lambda_1+2\lambda_2+\cdots+2\lambda_{e-i+1}+\lambda_{e-i+2}+\lambda_{e-i+3}+\cdots+\lambda_i \leq n$ for $\ceil{\frac{e+1}{2}}\leq i\leq e$ (see Propositions \ref{pe=k=2} - \ref{p3.9Keven} and Construction methods (\textbf{X}) and (\textbf{Y})). 
In Section \ref{counting},  leveraging Construction methods (\textbf{X}) and (\textbf{Y}), we derive explicit enumeration formulae for all self-orthogonal and self-dual codes of type $\{\lambda_1,\lambda_2,\ldots, \lambda_e\}$ and length $n$  over  $\mathcal{R}_{e,m}$ (Theorems \ref{t4.4e=k=2} - \ref{t4.2Keven}). Furthermore,  these enumeration formulae, together with Remarks \ref{R2.1} and  \ref{R4.1}, yield enumeration formulae for all self-orthogonal and self-dual codes of length $n$ over $\mathcal{R}_{e,m}.$ Notably, when $\mathfrak{s}=1,$ we have $e=\kappa=\mathtt{t},$ and the ring $\mathcal{R}_{e,m}$ reduces to the quasi-Galois ring $\mathbb{F}_{2^m}[x]/\langle x^{\kappa}\rangle.$  In this particular case, for any even $\kappa,$ the enumeration formulae for self-orthogonal and self-dual codes over $\mathbb{F}_{2^m}[x]/\langle x^{\kappa}\rangle$ can be deduced from Theorems \ref{t4.4e=k=2}, \ref{t4.1Keven} -  \ref{t4.3e=k=2} and \ref{t4.2Keven} by setting $\mathfrak{s}=1$ and $e=\mathtt{t}=\kappa$ (see Remarks \ref{R4.1} - \ref{rem8.2}).

 \section{Some preliminaries}\label{prelim} 
In this section, we  present some basic concepts and notations necessary for  establishing our main results. We begin by summarizing essential properties of finite commutative chain rings, followed by an overview of linear codes over finite commutative chain rings, including duality, self-orthogonality, and torsion codes.
\subsection{Basic properties of finite commutative chain rings}\label{prelim1}

A finite commutative ring with unity is called a chain ring if its ideals form a chain under inclusion. Notable examples of finite commutative chain rings include finite fields, quasi-Galois rings, and Galois rings. In general, we see, by \eqref{Chain}, that every finite commutative chain ring is a quotient ring of the form $\mathcal{R}_{e,m}=\frac{GR(p^{\mathfrak{s}},m)[x]}{\langle g(x),p^{\mathfrak{s}-1}x^{\mathtt{t}}\rangle},$ where  $g(x)$ is an Eisenstein polynomial of  degree $\kappa$ over the Galois ring $GR(p^{\mathfrak{s}},m),$  and  the integer $\mathtt{t}$ satisfies  $1\leq \mathtt{t}\leq \kappa$ when $\mathfrak{s}\geq 2$, whereas  $\mathtt{t}=\kappa$ when $\mathfrak{s}=1.$   
   
    Let us define the coset  $u:= x+\langle g(x),p^{\mathfrak{s}-1}x^{\mathtt{t}}\rangle \in  \mathcal{R}_{e,m}.$ Then the ideal $\langle u \rangle$  is the unique maximal ideal of  $\mathcal{R}_{e,m}$  of nilpotency index
        $e=\kappa (\mathfrak{s}-1)+\mathtt{t}.$ In fact, 
         the ideals of $\mathcal{R}_{e,m}$ form the chain
         $$\{0\}\subsetneq   \langle u^{e-1}\rangle \subsetneq \langle u^{e-2}\rangle \subsetneq \cdots \subsetneq \langle u\rangle \subsetneq \langle 1\rangle=\mathcal{R}_{e,m}.$$  
          By Lemma  XVII.4 of \cite{Mcdonald}, we have  $|\langle u^i\rangle|=p^{m(e-i)}$ for $0\leq i\leq e,$ where $|\cdot|$ denotes the cardinality function.  The residue field $\overline{\mathcal{R}}_{e,m}=\mathcal{R}_{e,m}/\langle u \rangle$ has order $p^m.$ Note that $p \in \langle u^{\kappa}\rangle \setminus \langle u^{\kappa+1}\rangle.$ Moreover, we have $e=\kappa(\mathfrak{s}-1)+\mathtt{t}.$
    
  In addition, by Theorem XVII.5 of \cite{Mcdonald},  there exists an element $\zeta \in \mathcal{R}_{e,m}$ of multiplicative order  $p^m-1$. The cyclic subgroup generated by $\zeta$ is the unique subgroup of the unit group of $\mathcal{R}_{e,m}$, which is isomorphic to the multiplicative group of the residue field $\overline{\mathcal{R}}_{e,m}.$  The Teichm$\ddot{u}$ller set of $\mathcal{R}_{e,m}$ is defined as $\mathcal{T}_{e,m}=\{0,1,\zeta,\zeta^2,\ldots, \zeta^{p^m-2}\}.$  By Lemma  XVII.4 of \cite{Mcdonald}, each element $a\in \mathcal{R}_{e,m}$ admits a unique Teichm$\ddot{u}$ller representation  $a=a_{0}+ua_1+\cdots+u^{e-1}a_{e-1},$ where $a_0,a_1,\ldots,a_{e-1}\in \mathcal{T}_{e,m}.$  Moreover, $a$  is a unit in $\mathcal{R}_{e,m}$ if and only if $a_0\neq0.$ Note that there exists a canonical epimorphism  $^{-} : \mathcal{R}_{e,m}\rightarrow \overline{\mathcal{R}}_{e,m},$ given by  $d\mapsto \bar{d}=d+\langle u\rangle$ for all $d\in \mathcal{R}_{e,m}$. It is straightforward to verify that  the function $^{-}{\restriction_{\mathcal{T}_{e,m}}}:\mathcal{T}_{e,m}\rightarrow \overline{\mathcal{R}}_{e,m}$  is a field isomorphism.  

Next, for $0 \leq i \leq e-1,$ define a map $\pi_i: \mathcal{R}_{e,m} \rightarrow \mathcal{T}_{e,m}$ as $\pi_i(d)=d_i$ for all  $d=d_{0}+ud_1+\cdots+u^{e-1}d_{e-1},$ where $d_0, d_1,\ldots, d_{e-1} \in \mathcal{T}_{e,m}.$  The map $\pi_0$ induces   a binary operation $\oplus$ on $\mathcal{T}_{e,m},$ given by $w\oplus v=\pi_0(w+v)$ for all $w,v \in \mathcal{T}_{e,m}.$ Under the addition operation $\oplus$ and the usual multiplication operation of $\mathcal{R}_{e,m}$, the Teichm$\ddot{u}$ller set $\mathcal{T}_{e,m}$ forms a finite field of order $p^m,$ isomorphic to the residue field $\overline{\mathcal{R}}_{e,m}.$

Additionally, for an integer $\ell$ satisfying  $1 \leq \ell < e,$   the quotient ring $\mathcal{R}_{e,m}/\langle u^{\ell} \rangle $ is also a finite commutative chain ring with the unique maximal ideal  $\langle u+\langle u^{\ell} \rangle \rangle $  of nilpotency index $\ell.$  For convenience, we denote this quotient ring $\mathcal{R}_{e,m}/\langle u^{\ell} \rangle $ by   $\mathcal{R}_{\ell,m}.$  
The element $\zeta_{\ell} := \zeta +\langle u^{\ell}\rangle\in \mathcal{R}_{\ell,m}$ has  multiplicative order $p^m-1$ and  the Teichm$\ddot{u}$ller set of $\mathcal{R}_{\ell,m}$ is given by  $\mathcal{T}_{\ell,m}=\{0,1, \zeta_{\ell}, \zeta_{\ell}^2,\ldots, \zeta_{\ell}^{p^m-2}\}.$  There is a canonical epimorphism from $\mathcal{R}_{e,m}$ onto $\mathcal{R}_{\ell,m},$ given by  $b \mapsto b+\langle u^{\ell}\rangle$ for all $b \in \mathcal{R}_{e,m}.$ For convenience, we will identify each element $a +\langle u^{\ell}\rangle \in \mathcal{R}_{\ell,m}$ with its representative $a \in \mathcal{R}_{e,m},$ performing addition and multiplication in $\mathcal{R}_{\ell,m}$ modulo $u^{\ell}.$ In particular, we will identify the element $\zeta_{\ell}\in \mathcal{T}_{\ell,m}$ with  its representative $\zeta \in \mathcal{T}_{e,m}.$ Under this identification,   we assume, throughout this paper, that 
\vspace{-0.5mm}\begin{equation*}\vspace{-0.5mm}
\mathcal{T}_{1,m}=\mathcal{T}_{2,m}=\cdots=\mathcal{T}_{e,m}=\{0,1,\zeta,{\zeta}^2,\ldots, {\zeta}^{p^m-2}\}=\mathcal{T}_{m}\text{ (say)}\end{equation*} and \vspace{-0.5mm}\begin{equation*}\vspace{-0.5mm}
\mathcal{R}_{1,m}=\overline{\mathcal{R}}_{1,m}=\overline{\mathcal{R}}_{2,m}=\cdots=\overline{\mathcal{R}}_{e,m}=\mathcal{T}_{m} .\end{equation*} 
Hence,  for $1 \leq \ell \leq e,$  each element $a \in \mathcal{R}_{\ell,m}$  can be uniquely expressed  as $a=a_0+ua_1 +u^2a_2 +\cdots+u^{\ell-1}a_{\ell-1},$ where $a_0,a_1,a_2,\ldots,a_{\ell-1} \in \mathcal{T}_{m}.$ Similarly, any matrix $A \in \mathcal{M}_{h_1\times h_2}(\mathcal{R}_{\ell,m}) $ admits a unique representation $A=A_0+uA_1+u^2A_2+\cdots+u^{\ell-1}A_{\ell-1},$ where $A_0,A_1,\ldots,A_{\ell-1} \in \mathcal{M}_{h_1\times h_2}(\mathcal{T}_{m}).$

 \subsection{Linear codes over finite commutative chain rings}\label{linearcodes}
Let $n$ be a positive integer, and let $\mathcal{R}_{e,m}^n$ denote the $\mathcal{R}_{e,m}$-module consisting of all $n$-tuples over $\mathcal{R}_{e,m}.$  A linear code $\mathcal{C}$  of length $n$ over $\mathcal{R}_{e,m}$  is
defined as an $\mathcal{R}_{e,m}$-submodule of $\mathcal{R}_{e,m}^n,$ whose elements 
are referred to as codewords.  A generator matrix of the code $\mathcal{C}$  is a matrix over  $\mathcal{R}_{e,m}$ 	
  whose rows form a minimal generating set for the code $\mathcal{C}$ as an $\mathcal{R}_{e,m}$-module. 
  
Two linear codes of length $n$ over $\mathcal{R}_{e,m}$ are said to be permutation equivalent if one  can be obtained from the other by  permuting its coordinate positions.  
By  Proposition 3.2 of Norton and  S$\check{a}$l$\check{a}$gean \cite{Norton},  every linear code $\mathcal{C}$ of length $n$ over $\mathcal{R}_{e,m}$ is permutation equivalent to a code having a generator matrix in the following standard form:
 \vspace{-1mm}\begin{equation}\label{e1.1} 
 \begin{aligned}
G=\begin{bmatrix}
	\mathtt{I}_{\lambda_1}& \mathtt{A}_{1,1}&\cdots& \mathtt{A}_{1,e-2}&\mathtt{A}_{1,e-1}& \mathtt{A}_{1,e} \\
	\mathbf{0} & u\mathtt{I}_{\lambda_2} &\cdots & u\mathtt{A}_{2,e-2}& u\mathtt{A}_{2,e-1} &u\mathtt{A}_{2,e} \\
	\vdots & \vdots &\vdots &\vdots& \vdots&\vdots\\
	\mathbf{0}&\mathbf{0}&\cdots&u^{e-2}\mathtt{I}_{\lambda_{e-1}}& u^{e-2}\mathtt{A}_{e-1,e-1}&u^{e-2}\mathtt{A}_{e-1,e}\\
	\mathbf{0}&\mathbf{0}&\cdots &\mathbf{0}&u^{e-1}\mathtt{I}_{\lambda_{e}}&u^{e-1}\mathtt{A}_{e,e}
\end{bmatrix}=\begin{bmatrix}
    T^{(e)}_1\\uT^{(e)}_2\\ \vdots\\u^{e-2}T_{e-1}^{(e)}\\u^{e-1}T_{e}^{(e)}
\end{bmatrix},
\end{aligned}\end{equation}
where the columns of $G$  are grouped into blocks of sizes $\lambda_1$, $\lambda_2$, $\ldots$, $\lambda_{e}$ and $ \lambda_{e+1}=n-(\lambda_1+\lambda_2+\cdots+\lambda_e).$ Here, $\mathtt{I}_{\lambda_i}$ denotes the $\lambda_i\times\ \lambda_i$ identity matrix over $\mathcal{R}_{e,m}$ and $\mathtt{A}_{i,j}\in \mathcal{M}_{\lambda_i\times \lambda_{j+1}}(\mathcal{R}_{e,m})$  is considered modulo $u^{j-i+1}$ for $1\leq i\leq j\leq e.$ More precisely,  the matrix  $\mathtt{A}_{i,j}\in \mathcal{M}_{\lambda_i\times \lambda_{j+1}}(\mathcal{R}_{e,m})$ is of the form $\mathtt{A}_{i,j}=\mathtt{A}_{i,j}^{(0)}+u\mathtt{A}_{i,j}^{(1)}+\cdots+u^{j-i}\mathtt{A}_{i,j}^{(j-i)}$ with $\mathtt{A}_{i,j}^{(0)},\mathtt{A}_{i,j}^{(1)},\ldots,\mathtt{A}_{i,j}^{(j-i)}\in \mathcal{M}_{\lambda_i\times \lambda_{j+1}}(\mathcal{T}_{e,m})$  for $1\leq i\leq j\leq e.$ (Throughout this paper, $\mathcal{M}_{\ell\times h}(R)$ denotes the set of all $\ell \times h$ matrices over a ring $R.$)  

A linear code $\mathcal{C}$ of length $n$ over $\mathcal{R}_{e,m}$ is said to be of type $\{\lambda_1,\lambda_2,\ldots,\lambda_{e}\}$ if it is permutation equivalent to a code with a generator matrix in the above standard form  \eqref{e1.1}.

The Euclidean bilinear form on $\mathcal{R}_{e,m}^n$ is a map  $\cdot: \mathcal{R}_{e,m}^n \times \mathcal{R}_{e,m}^n \to \mathcal{R}_{e,m},$ defined as  $ \mathbf{c}\cdot \mathbf{d}=\sum\limits_{i=1}^{n}c_id_i$  for all $\mathbf{c}=(c_1,c_2,\ldots,c_n)$, $\mathbf{d}=(d_1,d_2,\ldots,d_n) \in \mathcal{R}_{e,m}^n.$ This bilinear form is  a non-degenerate and symmetric bilinear form  on $\mathcal{R}_{e,m}^n.$ 
  Given a linear code $\mathcal{C}$ of length $n$ over $\mathcal{R}_{e,m},$  its (Euclidean) dual code $\mathcal{C}^{\perp}$ is defined as $\mathcal{C}^{\perp}=\{\mathbf{w}\in \mathcal{R}_{e,m}^n ~:~ \mathbf{v} \cdot \mathbf{w} =0 ~\text{for all }\mathbf{v} \in \mathcal{C}\}.$ The dual code $\mathcal{C}^{\perp}$ is itself a linear code of  length $n$ over $\mathcal{R}_{e,m}$. By Theorem 3.10 of Norton and  S$\check{a}$l$\check{a}$gean \cite{Norton}, if the code $\mathcal{C}$ is of   type $\{\lambda_1,\lambda_2,\ldots, \lambda_{e}\},$ then its dual code $\mathcal{C}^{\perp}$   is of type $\{n-(\lambda_1+\lambda_2+\cdots+\lambda_e),\lambda_{e},\lambda_{e-1},\ldots,\lambda_2\}.$ 
  The code $\mathcal{C}$ is said to be self-orthogonal if it satisfies $\mathcal{C}\subseteq \mathcal{C}^{\perp},$ while  the code $\mathcal{C}$ is said to be  self-dual if it satisfies  $\mathcal{C}= \mathcal{C}^{\perp}$.    
 The following lemma establishes a necessary and sufficient condition for a linear code of length $n$ over  $\mathcal{R}_{e,m}$ to be self-orthogonal.
\vspace{-1mm}\begin{lemma}\cite{Dougherty, Y}\label{l2.2}   Let  $e \geq 2$ and $n \geq 1$ be integers, and let $\lambda_1,\lambda_2,\ldots,\lambda_{e+1}$ be non-negative integers satisfying $\lambda_1+\lambda_2+\cdots+\lambda_{e+1}=n.$   
Let $\mathcal{C}$ be a linear code of  type $\{\lambda_1,\lambda_2,\ldots,\lambda_e\}$ and  length $n$ over $\mathcal{R}_{e,m}$ with a  generator matrix $G$ in the standard form \eqref{e1.1}. Then, the  code $\mathcal{C}$ is   self-orthogonal  if and only if 
	\vspace{-2mm}\begin{equation*}\label{e2.3a}
	T_i^{(e)}T_j^{(e)t}\equiv 0 \pmod{u^{e-i-j+2}} \text{ ~~for } 1\leq i\leq j\leq e \text{ and }  i+j\leq e+1,
	\vspace{-1mm}\end{equation*}
 where $T_i^{(e)}$ are the block matrices as defined in \eqref{e1.1} and $(\cdot)^t$ denotes the matrix transpose. Moreover,  the code $\mathcal{C}$  is   self-dual  if and only if  the code $\mathcal{C}$ is  self-orthogonal and  $\lambda_i=\lambda_{e-i+2}$ for $1 \leq i \leq e.$ 
		\end{lemma}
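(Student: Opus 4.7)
The plan is to reduce self-orthogonality of $\mathcal{C}$ to the vanishing of all pairwise inner products of the rows of its generator matrix $G$, and then exploit the horizontal block structure of $G$ in \eqref{e1.1} to rewrite these conditions in terms of the matrices $T_i^{(e)}$. First I would recall that $\mathcal{C}$ is the $\mathcal{R}_{e,m}$-span of the rows of $G$ and that the Euclidean form is $\mathcal{R}_{e,m}$-bilinear; hence $\mathcal{C}\subseteq \mathcal{C}^{\perp}$ if and only if every pair of rows of $G$ is orthogonal. The $i$-th horizontal block of $G$ consists of the $\lambda_i$ rows of $u^{i-1}T_i^{(e)}$, so the inner product of a row from the $i$-th block with a row from the $j$-th block equals $u^{i+j-2}$ times the corresponding entry of the $\lambda_i\times\lambda_j$ matrix $T_i^{(e)}(T_j^{(e)})^t$. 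Collecting these entrywise statements into a single matrix identity, self-orthogonality becomes equivalent to
\begin{equation*}
u^{i+j-2}\,T_i^{(e)}(T_j^{(e)})^t=\mathbf{0}\quad\text{in }\mathcal{M}_{\lambda_i\times \lambda_j}(\mathcal{R}_{e,m}),\qquad 1\le i\le j\le e.
\end{equation*}

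Next I would translate each such identity into the stated congruence. Since $u$ has nilpotency index $e$ in $\mathcal{R}_{e,m}$, the uniqueness of the Teichm\"uller expansion $a=a_0+ua_1+\cdots+u^{e-1}a_{e-1}$ with $a_i\in \mathcal{T}_m$ implies the chain-ring identity: for any $a\in \mathcal{R}_{e,m}$ and $0\le k\le e$, one has $u^k a=0$ if and only if $a\in\langle u^{e-k}\rangle$. Applying this entrywise to $T_i^{(e)}(T_j^{(e)})^t$ with $k=i+j-2$, self-orthogonality is equivalent to $T_i^{(e)}(T_j^{(e)})^t\equiv\mathbf{0}\pmod{u^{e-i-j+2}}$ whenever $i+j-2<e$, that is, $i+j\le e+1$. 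When $i+j\ge e+2$ the exponent $i+j-2\ge e$, so $u^{i+j-2}=0$ already and the identity holds vacuously; this explains why the range in the lemma is restricted to $i+j\le e+1$.

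Finally, for the self-duality claim I would use that $\mathcal{C}=\mathcal{C}^{\perp}$ is equivalent to $\mathcal{C}\subseteq \mathcal{C}^{\perp}$ together with $|\mathcal{C}|=|\mathcal{C}^{\perp}|$. By the cited Theorem 3.10 of Norton and S\v{a}l\v{a}gean, $\mathcal{C}^{\perp}$ has type $\{n-(\lambda_1+\lambda_2+\cdots+\lambda_e),\lambda_e,\lambda_{e-1},\ldots,\lambda_2\}$, and the cardinality of a code over $\mathcal{R}_{e,m}$ is determined by its type via $|\mathcal{C}|=|\mathcal{T}_m|^{\sum_{i=1}^e(e-i+1)\lambda_i}$. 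Setting the two type vectors componentwise equal (which, under the standing convention $\lambda_{e+1}=n-(\lambda_1+\cdots+\lambda_e)$, is the relation $\lambda_i=\lambda_{e-i+2}$ for $1\le i\le e$) forces equality of cardinalities, so self-orthogonality upgrades to self-duality; conversely, self-duality forces the dual type to match the original type, giving the same equalities.

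The main obstacle should be nothing deep; the argument is essentially a bookkeeping reduction. The one point that genuinely needs care is the chain-ring identity $u^k a=0\Leftrightarrow a\in\langle u^{e-k}\rangle$, which I would justify once using the uniqueness of the Teichm\"uller decomposition so that the matrix congruences can be read off entrywise without ambiguity. Everything else is a direct reorganization of Euclidean inner products along the block structure of the standard-form generator matrix.
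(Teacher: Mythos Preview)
Your argument is correct. Note, however, that the paper does not give its own proof of this lemma: it is stated with a citation to \cite{Dougherty, Y} and no proof is supplied. Your derivation is exactly the standard one underlying those references---reduce $\mathcal{C}\subseteq\mathcal{C}^\perp$ to orthogonality of generator rows, read off the block products $u^{i+j-2}T_i^{(e)}(T_j^{(e)})^t$, and use the annihilator identity $u^k a=0\Leftrightarrow a\in\langle u^{e-k}\rangle$ entrywise; for self-duality, compare types via Norton--S\u{a}l\u{a}gean and convert $\mathcal{C}\subseteq\mathcal{C}^\perp$ with $|\mathcal{C}|=|\mathcal{C}^\perp|$ into $\mathcal{C}=\mathcal{C}^\perp$. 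There is nothing to contrast methodologically here.
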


The set $\mathcal{T}_{m}^n$ consisting of all $n$-tuples over $\mathcal{T}_{m}$ can be viewed  as an $n$-dimensional vector space over the finite field $\mathcal{T}_{m},$ where vector addition is defined component-wise using the operation $\oplus$ 
and scalar multiplication is also defined component-wise.  Define a map $B_{m}:\mathcal{T}_{m}^n \times \mathcal{T}_{m}^n \rightarrow \mathcal{T}_{m}$  as $B_{m}(\mathbf{a},\mathbf{b})=\pi_0(\mathbf{a}\cdot \mathbf{b})=\pi_0(\sum\limits_{i=1}^{n}a_ib_i)$ for all $\mathbf{a}=(a_1,a_2,\ldots,a_n),$ $\mathbf{b}=(b_1,b_2,\ldots,b_n) \in \mathcal{T}_{m}^n.$  It is straightforward to verify that the map $B_m$  is a symmetric and non-degenerate  bilinear form on $\mathcal{T}_{m}^n.$  A linear code $\mathcal{D}$ of length $n$ over $\mathcal{T}_{m}$ is defined as a $\mathcal{T}_{m}$-linear subspace of $\mathcal{T}_{m}^n.$ The dual code of $\mathcal{D}$ with respect to the bilinear form $B_m$ is given by $\mathcal{D}^{\perp_{B_{m}}}=\{\mathbf{b} \in \mathcal{T}_{m}^n : B_{m}(\mathbf{b},\mathbf{d})=0 \text{ for all } \mathbf{d} \in \mathcal{D}\},$ which itself is a linear code of length $n$ over $\mathcal{T}_{m}.$
  Further, a linear code $\mathcal{D}$ of length $n$ over $\mathcal{T}_{m}$ is said to be (i) self-orthogonal if   $\mathcal{D}\subseteq \mathcal{D}^{\perp_{B_{m}}},$ and (ii) self-dual if   $\mathcal{D}= \mathcal{D}^{\perp_{B_{m}}}.$

  Let $\mathcal{C}$ be a linear code of length $n$ over $\mathcal{R}_{e,m}.$ For  $1 \leq i \leq e,$  the $i$-th torsion code  of $\mathcal{C}$  is defined as
\begin{equation*} Tor_i(\mathcal{C})=\{\mathbf{c} \in \mathcal{T}_m^n: u^{i-1} (\mathbf{c}+u\mathbf{c}^{\prime})\in \mathcal{C} \text{ for some } \mathbf{c}^{\prime}\in \mathcal{R}_{e,m}^n \}.\end{equation*} Each $Tor_i(\mathcal{C})$ is a linear code of length $n$ over $\mathcal{T}_{m}.$  If  $\mathcal{C}$ admits a generator matrix $G$ in the standard form \eqref{e1.1}, then the $i$-th torsion code $Tor_i(\mathcal{C})$  has dimension  $\lambda_1+\lambda_2+\cdots + \lambda_i$  over $\mathcal{T}_m$ and has a generator matrix 
 \vspace{-2mm}\begin{equation}\vspace{-2mm}\label{e1.2}
	\begin{aligned}
	\begin{bmatrix}
	\mathtt{I}_{\lambda_1}& \mathtt{A}_{1,1}^{(0)}& \mathtt{A}_{1,2}^{(0)}&  \cdots& \mathtt{A}_{1,i-1}^{(0)}& \cdots&\mathtt{A}_{1,e-1}^{(0)}&\mathtt{A}_{1,e}^{(0)}\\
	\mathbf{0}&\mathtt{I}_{\lambda_2}& \mathtt{A}_{2,2}^{(0)}&\cdots & \mathtt{A}_{2,i-1}^{(0)}& \cdots&\mathtt{A}_{2,e-1}^{(0)}&\mathtt{A}_{2,e}^{(0)}\\
	\vdots&\vdots&\vdots&\vdots& \vdots& \vdots &\vdots&\vdots\\
	\mathbf{0}&\mathbf{0}&\mathbf{0}&\cdots &\mathtt{I}_{\lambda_i}&\cdots &\mathtt{A}_{i,e-1}^{(0)}&\mathtt{A}_{i,e}^{(0)}
	\end{bmatrix}.
	\end{aligned}
	\end{equation}
	 It is important to note that these torsion codes form a nested chain:  \vspace{-1mm}\begin{equation*}\vspace{-1mm}
	    Tor_1(\mathcal{C})\subseteq Tor_{2}(\mathcal{C})\subseteq \cdots \subseteq Tor_{e}(\mathcal{C})\end{equation*}  and  the size of the code $\mathcal{C}$ satisfies the relation $|\mathcal{C}|=\prod\limits_{i=1}^{e}|Tor_i(\mathcal{C})|.$
	
To enumerate all self-orthogonal and self-dual codes over $\mathcal{R}_{e,m}$ 	
  of a given type, we need the following lemma. 
		\begin{lemma}\cite{Dougherty, Y}\label{l1.2} Let $\mathcal{C}$ be  a  self-orthogonal code  of length $n$ over $\mathcal{R}_{e,m}.$ The following hold.
\begin{enumerate}[(a)]
\vspace{-0.5mm}	\item 	\label{l1.2a}$Tor_i(\mathcal{C})\subseteq Tor_i(\mathcal{C})^{\perp_{B_m}}~\text{for }1\leq i \leq \floor{\frac{e+1}{2}}.$
				\vspace{-1mm}\item  	\label{l1.2b}$Tor_i(\mathcal{C})\subseteq Tor_{e-i+1}(\mathcal{C})^{\perp_{B_m}}~\text{for } \floor{\frac{e+1}{2}}+1\leq i \leq e.$
					\vspace{-1mm}\end{enumerate}
			In particular, if the code $\mathcal{C}$ is self-dual, then  we have $Tor_i(\mathcal{C})=Tor_{e-i+1}(\mathcal{C})^{\perp_{B_m}}$ for $\ceil{\frac{e+1}{2}}\leq i\leq e.$ (Throughout this paper,  $\floor{\cdot }$ and $\ceil{\cdot }$ denote the floor and ceiling functions, respectively.)
	\end{lemma}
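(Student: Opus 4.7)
The plan is to prove both parts (a) and (b) by a uniform lift-and-inner-product argument, and then deduce the self-dual statement from (b) via a dimension count.

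First, I would fix $\mathbf{c} \in Tor_i(\mathcal{C})$ and $\mathbf{d} \in Tor_j(\mathcal{C})$, taking $j = i$ for (a) and $j = e-i+1$ for (b); in either case the stipulated range of $i$ forces $i+j \leq e+1$, hence $i+j-2 \leq e-1$. By the definition of torsion codes there exist lifts $\mathbf{c}', \mathbf{d}' \in \mathcal{R}_{e,m}^n$ with $u^{i-1}(\mathbf{c}+u\mathbf{c}') \in \mathcal{C}$ and $u^{j-1}(\mathbf{d}+u\mathbf{d}') \in \mathcal{C}$. Self-orthogonality of $\mathcal{C}$ then gives
\[ u^{i+j-2}(\mathbf{c}+u\mathbf{c}') \cdot (\mathbf{d}+u\mathbf{d}') = 0 \quad \text{in } \mathcal{R}_{e,m}. \]
Expanding, every term other than $u^{i+j-2}(\mathbf{c}\cdot\mathbf{d})$ has $u$-order at least $i+j-1$.

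Next, I would write $\mathbf{c}\cdot\mathbf{d} \in \mathcal{R}_{e,m}$ in its unique Teichmüller expansion; by the very definition of $B_m$, its constant-in-$u$ term equals $\pi_0(\mathbf{c}\cdot\mathbf{d}) = B_m(\mathbf{c},\mathbf{d})$. Since $i+j-2 \leq e-1$, comparing the coefficient of $u^{i+j-2}$ on both sides of the displayed identity and invoking uniqueness of the Teichmüller expansion in $\mathcal{R}_{e,m}$ forces $B_m(\mathbf{c},\mathbf{d}) = 0$. This immediately yields $\mathbf{c} \in Tor_j(\mathcal{C})^{\perp_{B_m}}$, which is exactly (a) when $j=i$ and (b) when $j=e-i+1$.

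For the self-dual case, part (b) already delivers the inclusion $Tor_i(\mathcal{C}) \subseteq Tor_{e-i+1}(\mathcal{C})^{\perp_{B_m}}$ for $\lceil\frac{e+1}{2}\rceil \leq i \leq e$. To upgrade it to equality, I would count $\mathcal{T}_m$-dimensions: by Lemma \ref{l2.2}, self-duality forces $\lambda_k = \lambda_{e-k+2}$ for $1 \leq k \leq e$ together with $\lambda_1 = \lambda_{e+1} = n - \sum_{k=1}^{e}\lambda_k$. Substituting these symmetries into $\sum_{k=1}^{i}\lambda_k + \sum_{k=1}^{e-i+1}\lambda_k$ and reindexing the second sum, the total telescopes to $\lambda_1 + \sum_{k=1}^{e}\lambda_k = n$. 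Hence $\dim Tor_{e-i+1}(\mathcal{C})^{\perp_{B_m}} = n - \dim Tor_{e-i+1}(\mathcal{C}) = \dim Tor_i(\mathcal{C})$, and equality follows.

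The main obstacle is the first step: carefully extracting $B_m(\mathbf{c},\mathbf{d})=0$ from a single vanishing inner product in $\mathcal{R}_{e,m}$. This relies on isolating the leading $u$-coefficient of $u^{i+j-2}(\mathbf{c}\cdot\mathbf{d})$ via uniqueness of the Teichmüller expansion, together with the precise bound $i+j-2 < e$, which is exactly the reason for the stipulated ranges of $i$ in (a) and (b); outside these ranges the product $u^{i+j-2}(\mathbf{c}\cdot\mathbf{d})$ vanishes automatically in $\mathcal{R}_{e,m}$ and carries no information.
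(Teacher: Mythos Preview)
Your argument is correct. The paper itself does not supply a proof of this lemma; it is cited from \cite{Dougherty,Y} as a known result, so there is no in-paper proof to compare against. Your lift-and-reduce argument is precisely the standard one: the key inequality $i+j-2\le e-1$ guarantees that the vanishing of $u^{i+j-2}(\mathbf{c}+u\mathbf{c}')\cdot(\mathbf{d}+u\mathbf{d}')$ forces the product into $\langle u\rangle$, whence $\pi_0(\mathbf{c}\cdot\mathbf{d})=B_m(\mathbf{c},\mathbf{d})=0$.

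One cosmetic remark on the self-dual part: you write that ``part (b) already delivers the inclusion for $\lceil\frac{e+1}{2}\rceil\le i\le e$,'' but for odd $e$ the value $i=(e+1)/2$ lies in the range of (a), not (b) (since $\lfloor\frac{e+1}{2}\rfloor+1=(e+3)/2$ there). This does not affect your proof, because your unified argument in fact establishes $Tor_i(\mathcal{C})\subseteq Tor_{e-i+1}(\mathcal{C})^{\perp_{B_m}}$ for \emph{every} $1\le i\le e$ (the identity $i+(e-i+1)=e+1$ always meets the bound), so the needed inclusion is available regardless. The dimension count via $\lambda_k=\lambda_{e-k+2}$ for $1\le k\le e+1$ and the reindexing $\sum_{k=1}^{e-i+1}\lambda_k=\sum_{k=i+1}^{e+1}\lambda_k$ is clean and correct.
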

\begin{remark}\label{R2.1}By Lemma \ref{l1.2}, if $\mathcal{C}$ is a self-orthogonal code of type $\{\lambda_1, \lambda_2, \ldots, \lambda_e\}$ and length $n$ over $\mathcal{R}_{e,m}$, then for $\ceil{\frac{e+1}{2}} \leq i \leq e$, the inequality
\vspace{-1.5mm}\begin{equation*}\vspace{-1.5mm}
   2(\lambda_1 + \cdots + \lambda_{e - i + 1}) + \lambda_{e - i + 2} + \cdots + \lambda_i \leq n 
\end{equation*}
holds. In particular, this implies that
\vspace{-1.5mm}\begin{equation*}\vspace{-1.5mm}
   n \geq 
\begin{cases}
2(\lambda_1 + \cdots + \lambda_{\frac{e}{2}}) + \lambda_{\frac{e}{2} + 1} & \text{if } e \text{ is even};\\[6pt]
2(\lambda_1 + \cdots + \lambda_{\frac{e+1}{2}}) & \text{if } e \text{ is odd}.
\end{cases} 
\end{equation*}
\end{remark} 
    
From now on, we assume, throughout this paper,  that  $p=2,$ $e \geq 2,$ and $\kappa$ is an even integer satisfying  $2 \leq \kappa\leq e$ and $2 \in \langle u^{\kappa}  \rangle \setminus \langle u^{\kappa+1}\rangle.$ Let us  write \vspace{-1.5mm}\begin{equation*}\vspace{-1.5mm}
    2 \equiv u^{\kappa}(\eta_0+u\eta_1+u^2\eta_2+\cdots+u^{e-1-\kappa}\eta_{e-1-\kappa})\pmod{u^{e}},
\end{equation*} where $\eta_0\in \mathcal{T}_m\setminus \{0\}$ and $\eta_1,\eta_2,\ldots,\eta_{e-1-\kappa}\in \mathcal{T}_{m}.$  For any positive integer $d,$ we define $\mathrm{f}_d=\floor[\big]{\frac{d}{2}},$ $\mathrm{c}_d=\ceil[\big]{\frac{d}{2}}$ and $\theta_d=\mathrm{c}_d-\mathrm{f}_d.$ It is clear that  $\theta_d= 0$ if $d$ is even, whereas $\theta_d =1$ if $d$ is odd. For notational convenience, we set $s=\mathrm{f}_e$ and $\kappa_1=\mathrm{f}_\kappa.$ 
Unless mentioned otherwise, we will adhere to the notations and conventions  introduced in Section \ref{prelim}.
\vspace{-2mm} \section{A recursive method for constructing and enumerating   self-orthogonal and self-dual codes over $\mathcal{R}_{e,m}$}\label{construction}
 In this section, we will first provide an example to illustrate that the recursive construction methods proposed in \cite{Galois, quasi} do not necessarily yield self-orthogonal codes over $\mathcal{R}_{e,m}.$ Subsequently, we will adopt the recursive technique established in \cite[Sec. 3]{YSub2} to construct and enumerate self-orthogonal and self-dual codes over $\mathcal{R}_{e,m},$  using certain chains of self-orthogonal codes over $\mathcal{T}_{m}.$

We begin by presenting an example to illustrate the limitations of the recursive techniques employed in \cite{Galois, quasi}.
    
\begin{example}\label{EX1}
     Let  $\mathcal{R}_{5,2}=\frac{GR(8,2)[x]}{\langle x^2+2,4x\rangle}$  be a finite commutative chain ring with  maximal ideal $\langle u \rangle,$ where $u:= x+\langle x^2+2, 4x \rangle \in \mathcal{R}_{5,2}.$ Note that  the maximal ideal $\langle u\rangle $ has  nilpotency index $5,$ and that $2\equiv  u^2+u^4 \pmod {u^5} .$ 

 Let $n=3,$ $\lambda_1=1$  and $\lambda_2=\lambda_3=\lambda_4=\lambda_5=0.$ Let $\mathcal{D}_0$ be a self-orthogonal  code of length $3$ and dimension $\lambda_1+\lambda_2+\lambda_3=1$ over $\mathcal{T}_2$ with a generator matrix $\begin{bmatrix}1& 1&0\end{bmatrix}.$   Observe that $\pi_{1}(\textbf{v}\cdot \textbf{v})=0$ for all $\textbf{v}\in \mathcal{D}_0,$ where each $\textbf{v}\cdot \textbf{v}$ is viewed as an element of $\mathcal{R}_{5,2}.$
 
 \begin{description}\item[(i)] We will first demonstrate that the code  $\mathcal{D}_0$ can not be lifted to a self-orthogonal code of type $\{1,0,0,0,0\}$ and length $3$ over  $\mathcal{R}_{5,2}$ using the recursive construction method provided in Section 4  of Yadav and Sharma \cite{quasi}. For this,  consider a linear code $\mathcal{C}_3$ of type $\{1,0,0\} $ and length $3$ over $\mathcal{R}_{3,2}$ with a generator matrix $\begin{bmatrix}
    1&1&0
\end{bmatrix}+u\begin{bmatrix}
    0&a_1&b_1
\end{bmatrix}+u^2\begin{bmatrix}
    0&a_2&b_2
\end{bmatrix},  $  where $a_1,b_1,a_2,b_2\in \mathcal{T}_2.$ 
According to the recursive construction provided in Section 4 of \cite{quasi},  the code $\mathcal{D}_0$ can be lifted to a self-orthogonal code of  type $\{1,0,0,0,0\}$ over $\mathcal{R}_{5,2}$ if and only if $\mathcal{C}_3$ is a   self-orthogonal code   over $\mathcal{R}_{3,2}$ satisfying   property $(\ast)$ (see \cite[Sec. 4]{quasi}), which holds if and only if  $a_1^2+b_1^2\equiv 1\pmod u,$ $a_1^2+b_1^2\equiv 0\pmod u$ and $a_2^2+b_2^2\equiv 0\pmod u.$ It is straightforward to see that no choice of 
$a_1,b_1,a_2,b_2\in \mathcal{T}_2$
  can satisfy these conditions simultaneously, and hence  $\mathcal{C}_3$ is not a self-orthogonal code  over $\mathcal{R}_{3,2}$ satisfying property $(\ast)$ for any choice of $a_1,b_1,a_2,b_2\in \mathcal{T}_2.$  This  illustrates that the recursive construction method provided in Yadav and Sharma \cite{quasi} does not always lift  a self-orthogonal code over $\mathcal{T}_m$ to a self-orthogonal code over $\mathcal{R}_{e,m}.$ 
%According to the recursive construction provided in Section 4 of \cite{quasi},  the code $\mathcal{D}_0$ can be lifted to a self-orthogonal code of  type $\{1,0,0,0,0\}$ over $\mathcal{R}_{5,2}$ if and only if $\mathcal{C}_3$ is a   self-orthogonal code   over $\mathcal{R}_{3,2}$ satisfying   property $(\ast)$ (see \cite[Sec. 4]{quasi}), which holds if and only if   $a_1^2+b_1^2\equiv 0\pmod u$ and $a_2^2+b_2^2\equiv 0\pmod u.$ 
%In particular, for $a_1=b_1=a_2=b_2=0,$ the code $\mathcal{C}_3$ with a generator matrix $\begin{bmatrix}1& 1&0\end{bmatrix}$ is self-orthogonal and satisfies  property $(\ast).$ Corresponding to this specific choice of  $\mathcal{C}_3$,  consider a linear code $\mathcal{C}_5$ of  type $\{1,0,0,0,0\} $ and length $3$ over $\mathcal{R}_{5,2}$ with a generator matrix \vspace{-1mm}\begin{equation*}\vspace{-1mm}   \begin{bmatrix}   1&1&0\end{bmatrix}+u^3\begin{bmatrix}0&a_3&b_3\end{bmatrix}+u^4\begin{bmatrix}    0&a_4&b_4\end{bmatrix},  \end{equation*} where $a_3,b_3,a_4,b_4\in \mathcal{T}_2.$  
%One can easily observe that the code $\mathcal{C}_5$ is not self-orthogonal for any choice of $a_3,b_3,a_4,b_4\in \mathcal{T}_2.$  This  illustrates that the recursive construction method provided in Yadav and Sharma \cite{quasi} does not always lift  a self-orthogonal code over $\mathcal{T}_m$ to a self-orthogonal code over $\mathcal{R}_{e,m}.$ 

\item[(ii)] We will next demonstrate that the code $\mathcal{D}_0$ can not be lifted to a self-orthogonal code of type $\{1,0,0,0,0\}$ and length $3$ over  $\mathcal{R}_{5,2},$ using the recursive construction method provided in Section 4  of Yadav and Sharma \cite{Galois}. To proceed, we  recall, from \cite[Def. 2.1]{Galois}, that  a self-orthogonal code $\mathscr{C}$ over $\mathcal{T}_2$ is called doubly even if it satisfies $\pi_1(\textbf{c}\cdot \textbf{c})=0$ for all $\textbf{c}\in \mathscr{C},$ where  each $\textbf{c}\cdot \textbf{c}$ is viewed as an element of $\mathcal{R}_{5,2}$. 
Since  $\pi_1(\textbf{d}\cdot \textbf{d})=0$ for all $\textbf{d}\in \mathcal{D}_0,$  the code $\mathcal{D}_0$ is doubly even in  accordance with   \cite[Def. 2.1]{Galois}.   Now, consider a linear code $\mathcal{D}_3$ of type $\{1,0,0\} $ and length $3$ over $\mathcal{R}_{3,2}$ with a generator matrix $\begin{bmatrix}
    1&1&0
\end{bmatrix}+u\begin{bmatrix}
    0&c_1&d_1
\end{bmatrix}+u^2\begin{bmatrix}
    0&c_2&d_2
\end{bmatrix},  $  where $c_1,d_1,c_2,d_2\in \mathcal{T}_2.$ According to the recursive construction method given in Section 4 of Yadav and Sharma \cite{Galois},  the code $\mathcal{D}_3$ can be lifted to a self-orthogonal code of type $\{1,0,0,0,0\}$ over $\mathcal{R}_{5,2}$ if and only if $\mathcal{D}_3$ is a  $\lambda_1$-doubly even and self-orthogonal code   over $\mathcal{R}_{3,2}$ (see \cite[Def. 2.1]{Galois}), which  holds  if and only if $c_1=0$ and $d_1=1.$  In particular, for $c_1=c_2=0$ and  $d_1=d_2=1,$ we observe that  the code $\mathcal{D}_3$  with a generator matrix $\begin{bmatrix}
    1&1&u+u^2
\end{bmatrix}$ is  $\lambda_1$-doubly even and self-orthogonal. Corresponding to this particular choice of the code $\mathcal{D}_3$, consider a linear code $\mathcal{D}_5$ of type $\{1,0,0,0,0\}$ and length $3$ over $\mathcal{R}_{5,2}$ with a generator matrix 
$$\begin{bmatrix}
    1&1&u+u^2
\end{bmatrix}+u^3\begin{bmatrix}
0&c_3&d_3\end{bmatrix}+u^4\begin{bmatrix}
    0&c_4&d_4
\end{bmatrix},  $$  where $c_3,d_3,c_4,d_4\in \mathcal{T}_2.$  
It is easy to see that  the code $\mathcal{D}_5$ is not self-orthogonal for any choice of $c_3,d_3,c_4,d_4\in \mathcal{T}_2.$   This  illustrates that the recursive construction method provided in Section 4 of Yadav and Sharma \cite{Galois}  does not always lift a doubly even code over 
$\mathcal{T}_m$ to a self-orthogonal code over $\mathcal{R}_{e,m}.$
\end{description}
\end{example}

We will now employ the recursive technique proposed in Yadav and Sharma \cite{YSub2} to construct and enumerate self-orthogonal and self-dual codes of length $n$ over $\mathcal{R}_{e,m},$ using certain chains of self-orthogonal codes over $\mathcal{T}_m.$  Throughout this section, we assume, by Remark \ref{R2.1}, that   $n$ is a positive integer, and that   $\lambda_1, \lambda_2, \ldots,  \lambda_{e+1}$ are non-negative integers satisfying $\lambda_1+\lambda_2+\cdots+\lambda_{e+1}=n$ and  $2\lambda_1+2\lambda_2+\cdots+2\lambda_{e-i+1}+\lambda_{e-i+2}+\cdots+\lambda_i \leq n$ for $s+1\leq i\leq e.$       Define $\Lambda_0=0$  and 
$\Lambda_i=\lambda_1+\lambda_2+\cdots+\lambda_i $   for $1\leq i \leq e+1 .$  

Throughout this paper, for  positive integers  $a$ and $b \leq e,$ let  $(\mathtt{R})_{a,b} $ denote the block matrix whose  $(i,j)$-th block is  the matrix $\mathtt{R}_{i,j}\in \mathcal{M}_{\lambda_{i}\times \lambda_{j+1}}(\mathcal{T}_{m})$   for $1 \leq i \leq a$ and $b \leq j \leq e.$  For any square matrix $Y,$
 let $\mathcal{D}iag(Y)$
   denote the diagonal matrix whose principal diagonal entries coincide with those of  the matrix $Y.$ For any positive integer $z,$ let  $[B]_z$  denote the column block matrix whose $j$-th block is  $B_j$ for $1 \leq j \leq z.$ 

We further recall, from Section \ref{prelim}, that $(\mathcal{T}_m,\oplus,\cdot)$ is the finite field of order $2^m.$  Now, let $\mathcal{C}$ be a  self-orthogonal code of type $\{\lambda_1,\lambda_2,\ldots, \lambda_e\}$ and length $n$ over $\mathcal{R}_{e,m}$.   By Lemma \ref{l1.2},  the torsion code $Tor_{s+\theta_e}(\mathcal{C})$ is also a  self-orthogonal code of length $n$  and  dimension $\Lambda_{s+\theta_e}$ over $\mathcal{T}_{m}(=\overline{\mathcal{R}}_{e,m}).$ We further recall the following important observation:
\vspace{-2mm}\begin{remark}\label{r3.1}\cite[Rem. 4.1]{Galois} and \cite[Rem. 4.1]{quasi}
Every  self-orthogonal code of length $n$ and dimension $\Lambda_{s+\theta_e}$ over  $\mathcal{T}_{m}$ is permutation equivalent to a self-orthogonal code with a generator matrix  in the form
\vspace{-1mm}\begin{equation*}\vspace{-1mm}
\mathtt{G}=\begin{bmatrix}
T_1^{(0)}\\T_2^{(0)}\\\vdots\\T_{s+\theta_e}^{(0)}
\end{bmatrix}   =\begin{bmatrix}
\mathtt{I}_{\lambda_1}&\mathtt{A}_{1,1}^{(0)}&\mathtt{A}_{1,2}^{(0)}&\cdots&\mathtt{A}_{1,s+\theta_e-1}^{(0)}&\mathtt{A}_{1,s+\theta_e}^{(0)}&\cdots &\mathtt{A}_{1,e}^{(0)}\\
 \mathbf{0} &\mathtt{I}_{\lambda_2} &\mathtt{A}_{2,2}^{(0)} &\cdots &\mathtt{A}_{2,s+\theta_e-1}^{(0)}&\mathtt{A}_{2,s+\theta_e}^{(0)}&\cdots &\mathtt{A}_{2,e}^{(0)}\\
	\vdots& \vdots &\vdots &\vdots &\vdots &\vdots& \vdots&\vdots\\
 \mathbf{0}& \mathbf{0}& \mathbf{0}&\cdots&  \mathtt{I}_{\lambda_{s+\theta_e}}&\mathtt{A}_{s+\theta_e,s+\theta_e}^{(0)}& \cdots & \mathtt{A}_{s+\theta_e,e}^{(0)}
\end{bmatrix} , \end{equation*} 
where the columns of $\mathtt{G}$ are grouped into blocks of sizes $\lambda_1,\lambda_2,\ldots, \lambda_e$ and $ \lambda_{e+1}.$ Here,  $\mathtt{I}_{\lambda_i}$ denotes the $\lambda_i\times \lambda_i$ identity matrix over $\mathcal{T}_{m}$  and  $\mathtt{A}_{i,j}^{(0)} \in \mathcal{M}_{\lambda_i\times \lambda_{j+1}}(\mathcal{T}_{m})$  for $1 \leq i \leq s+\theta_e$ and $i \leq j \leq e.$ Additionally, each of the matrices 
$(\mathtt{A}^{(0)})_ { s,s+\theta_e+1}$, $(\mathtt{A}^{(0)})_{ s-1,s+\theta_e+2},\ldots, (\mathtt{A}^{(0)})_{2,e-1},$  $\mathtt{A}_{1,e}^{(0)}$ has   full row-rank. \end{remark}

For $ 1\leq \ell \leq e$ and  $0 \leq i \leq \ell-1,$ we recall that the map $\pi_i: \mathcal{R}_{\ell,m} \rightarrow \mathcal{T}_{m}$ is defined as $\pi_i(d)=d_i$ for all  $d=d_0+ud_1+\cdots+u^{\ell-1}d_{\ell-1}\in \mathcal{R}_{\ell,m} ,$ where $d_0,d_1,\ldots,d_{\ell-1} \in \mathcal{T}_{m}.$ Now, if  $\mathcal{C}$ is a self-orthogonal code of length $n$ over $\mathcal{T}_m,$ one can easily see, for all $\textbf{v}\in \mathcal{C},$ that $\pi_{0}(\textbf{v}\cdot \textbf{v})=0$ and $\pi_{2j+1}(\textbf{v}\cdot \textbf{v})=0$ for $ 0\leq j \leq \kappa_1-1,$  where $\kappa_1=\mathrm{f}_\kappa=\frac{\kappa}{2}$ and each $\textbf{v}\cdot \textbf{v}$ is viewed as an element of $\mathcal{R}_{e,m}.$ We next recall the definition of a special class of linear codes of length $n$ over $\mathcal{R}_{\ell,m},$ referred to as linear codes satisfying   property $(\mathfrak{P}),$ as introduced in \cite[Sec. 3]{YSub2} for all $e \geq 3$ and $2 \leq \ell \leq e.$ 

\noindent \textbf{Linear codes over $\mathcal{R}_{\ell,m}$ satisfying  property $(\mathfrak{P})$:} \textit{Let $e, \ell $ be integers such that $e \geq 3 ,$ $2 \leq \ell\leq  e$ and $\ell \equiv e\pmod2.$ Define $\gamma_{\ell}=s-\mathrm{f}_\ell.$   Let $\mathcal{C}_{\ell}$ be a linear code of type $\{\Lambda_{\gamma_{\ell}+1},\lambda_{\gamma_{\ell}+2}, \ldots, \lambda_{\gamma_{\ell}+\ell}\}$ and length $n$ over $\mathcal{R}_{\ell,m}$ with a generator matrix 
\vspace{-2mm}\begin{equation}\label{e3.0}
\mathcal{G}_{\ell}=\begin{bmatrix}
T_1^{(\ell)}\\T_2^{(\ell)} \\ \vdots\\ T_{\gamma_{\ell}+1}^{(\ell)}\vspace{0.5mm}\\ uT_{\gamma_{\ell}+2}^{(\ell)}\\ \vdots\\u^{\ell-1}T_{\gamma_{\ell}+\ell}^{(\ell)}
\end{bmatrix},\end{equation} 
where $T_i^{(\ell)}\in \mathcal{M}_{\lambda_i\times n}(\mathcal{R}_{\ell,m})$ for $1 \leq i \leq \gamma_{\ell}+1$ and  the matrix $T_{\gamma_{\ell}+j}^{(\ell)}\in \mathcal{M}_{\lambda_{\gamma_{\ell}+j}\times n}(\mathcal{R}_{\ell,m})$ is to be considered modulo $u^{\ell-j+1}$ for  $2 \leq j \leq \ell.$ 
\begin{enumerate}
    \item[(i)] When $\ell \leq \min\{\kappa-1,e-\kappa\} ,$ we say that the code $\mathcal{C}_{\ell}$ satisfies  property $(\mathfrak{P})$ if 
   \hspace{-3mm} \begin{eqnarray*}
    \displaystyle \mathcal{D}iag\left([T^{(\ell)}]_{\gamma_{\ell}+1-\mathrm{f}_i}[T^{(\ell)}]_{\gamma_{\ell}+1-\mathrm{f}_i}^t\right) &\equiv & \mathbf{0} \pmod{u^{\ell+i}} \text{ ~~for ~} i\in \{2,4,6,\ldots,\ell-\theta_e\}~\text{ and}\\ 
   \hspace{-2mm} \displaystyle  \pi_{\ell+j}\big( \mathcal{D}iag\big([T^{(\ell)}]_{\gamma_{\ell}+1-\mathrm{f}_{j+2}}[T^{(\ell)}]_{\gamma_{\ell}+1-\mathrm{f}_{j+2}}^t\big) \big)&=& \mathbf{0} \text{ ~for~~ } j \in \{ \ell-1,\ell+1,\ldots, \kappa -1-\theta_e  \}.
    \end{eqnarray*}
     \item[(ii)] When $e-\kappa< \ell \leq \kappa-\mathrm{f}_{2\kappa-e}+1 ,$  we say that the code $\mathcal{C}_{\ell}$ satisfies  property $(\mathfrak{P})$ if 
     \begin{eqnarray*}
   \displaystyle \mathcal{D}iag\left([T^{(\ell)}]_{\gamma_{\ell}+1-\mathrm{f}_i}[T^{(\ell)}]_{\gamma_{\ell}+1-\mathrm{f}_i}^t\right) &\equiv &\mathbf{0} \pmod{u^{\ell+i}} \text{ ~~~for ~} i\in \{2,4,6,\ldots,\ell-\theta_e\}\text{ and}\\ 
     \displaystyle  \pi_{\ell+j}\left( \mathcal{D}iag\big([T^{(\ell)}]_{\gamma_{\ell}+1-\mathrm{f}_{j+2}}[T^{(\ell)}]_{\gamma_{\ell}+1-\mathrm{f}_{j+2}}^t\big)\right)&=& \mathbf{0} \text{ ~ for~ } j \in \{ \ell-1,\ell+1,\ell+3,\ldots, e-\ell-1-\theta_e\}.
    \end{eqnarray*}
     \item[(iii)] When $\kappa \leq \ell \leq e-\kappa,$ we say that the code $\mathcal{C}_{\ell}$ satisfies  property $(\mathfrak{P})$ if 
    \begin{eqnarray*}
   \displaystyle  \mathcal{D}iag\left([T^{(\ell)}]_{\gamma_{\ell}+1-\mathrm{f}_i}[T^{(\ell)}]_{\gamma_{\ell}+1-\mathrm{f}_i}^t\right) &\equiv &\mathbf{0} \pmod{u^{\ell+i}} \text{ ~~for ~} i\in \{2,4,6,\ldots,\kappa\}.
    \end{eqnarray*}
    \item[(iv)] When $\ell > \max\{e-\kappa,\kappa-\mathrm{f}_{2\kappa-e}+1\},$  we say that the code $\mathcal{C}_{\ell}$ satisfies  property $(\mathfrak{P})$ if 
    \begin{eqnarray*}
   \displaystyle  \mathcal{D}iag\left([T^{(\ell)}]_{\gamma_{\ell}+1-\mathrm{f}_i}[T^{(\ell)}]_{\gamma_{\ell}+1-\mathrm{f}_i}^t\right) &\equiv& \mathbf{0} \pmod{u^{\ell+i}} \text{ ~~for ~} i\in \{2,4,6,\ldots,e-\ell\}.
    \end{eqnarray*}(Here, all matrices are considered  over $\mathcal{R}_{e,m}$ and all matrix multiplications are performed over $\mathcal{R}_{e,m}.$) 
    \end{enumerate}}

It is worth noting that when $\mathfrak{s}=1,$ 
linear codes over $\mathcal{R}_{\ell,m}$ satisfying property $(\mathfrak{P})$  match with linear codes over $\mathcal{R}_{\ell,m}$ satisfying property $(\ast)$  as defined in  \cite[Sec. 4]{quasi}.  
    
   Next, let $\mathcal{C}_e$ be a linear code of type $\{\lambda_1,\lambda_2,\ldots,\lambda_e\}$ and length $n$ over $\mathcal{R}_{e,m}$ with a generator matrix $\mathcal{G}_e$ as defined in \eqref{e3.0} for the case  $\ell=e,$    where 
    the block matrix  $T^{(e)}_j \in \mathcal{M}_{\lambda_j\times n }(\mathcal{R}_{e,m})$ is of the form  $T^{(e)}_j=T^{(0)}_j+u\mathtt{U}^{(1)}_j+u^2\mathtt{U}^{(2)}_j+\cdots+u^{e-j}\mathtt{U}^{(e-j)}_j$ with $T^{(0)}_j,\mathtt{U}^{(1)}_j,\mathtt{U}^{(2)}_j,\ldots, \mathtt{U}^{(e-j)}_j\in \mathcal{M}_{\lambda_j\times n}(\mathcal{T}_m)$
     for $1 \leq j \leq e.$ Now, corresponding to the code $\mathcal{C}_e,$     we consider, for each integer $\ell$ satisfying $2 \leq \ell \leq e-2$ and  $\ell\equiv e\pmod 2,$ a linear   code $\mathcal{C}_{\ell}$ of  type $\{\Lambda_{\gamma_{\ell}+1},\lambda_{\gamma_{\ell}+2}, \ldots, \lambda_{\gamma_{\ell}+\ell}\}$ and length $n$ over $\mathcal{R}_{\ell,m}$    with a generator matrix $\mathcal{G}_{\ell}$  (as defined in \eqref{e3.0}),
whose block matrices are of the forms $T_i^{(\ell)} =T^{(0)}_i+u\mathtt{U}_i^{(1)}+\cdots+ u^{\ell-1}\mathtt{U}_i^{(\ell-1)}$ \big(\textit{i.e.,} $T_i^{(\ell)}\equiv T_i^{(e)} \pmod{ u^{\ell}}$\big) for $1 \leq i\leq \gamma_{\ell}+1$ and  $T_{\gamma_{\ell}+j}^{(\ell)}=T^{(0)}_{\gamma_{\ell}+j}+u\mathtt{U}_{\gamma_{\ell}+j}^{(1)}+\cdots+ u^{\ell-j}\mathtt{U}_{\gamma_{\ell}+j}^{(\ell-j)}$ \big(\textit{i.e.,} $T_{\gamma_{\ell}+j}^{(\ell)}\equiv T_{\gamma_{\ell}+j}^{(e)} \pmod{ u^{\ell-j+1}}$\big)  for   $2 \leq j \leq \ell.$ We will refer to $\mathcal{C}_{\ell}$ as the linear code over $\mathcal{R}_{\ell,m}$ corresponding to the code $\mathcal{C}_e.$ We will also say that the code $\mathcal{C}_{e}$ is a lift of the code $\mathcal{C}_{\ell}.$    
Now, the following lemma provides a necessary condition under which a linear code over $\mathcal{R}_{\ell,m}$
can be lifted to a self-orthogonal or  self-dual code over $\mathcal{R}_{e,m},$ where $e \geq 4$ and $\ell$ is an integer satisfying $2 \leq \ell \leq e-2$ and $\ell \equiv e \pmod 2.$  
\begin{lemma}\label{l3.2} Let $e \geq 4$ be an integer, and  let $\mathcal{C}_e$  be a linear code 
of  type $\{\lambda_1,\lambda_2,\ldots,\lambda_e\}$ and length $n$ over $\mathcal{R}_{e,m}$ with a generator matrix $\mathcal{G}_{e}$ as defined in \eqref{e3.0}. For   each integer $\ell$ satisfying $2 \leq \ell \leq e-2$ and $\ell\equiv e \pmod 2,$ let $\mathcal{C}_{\ell}$ be  the linear code 
 over $\mathcal{R}_{\ell,m}$  corresponding to the code  $\mathcal{C}_e.$ If the code $\mathcal{C}_e$ is self-orthogonal (\textit{resp.} self-dual), then the code $\mathcal{C}_{\ell}$  is  also self-orthogonal  (\textit{resp.} self-dual) and satisfies   property $(\mathfrak{P}).$

 % of type  $\{\Lambda_{\gamma_{\ell}+1},\lambda_{\gamma_{\ell}+2}, \ldots, \lambda_{\gamma_{\ell}+\ell}\}$ and length $n$ over $\mathcal{R}_{\ell,m}$ with a generator matrix $\mathcal{G}_{\ell}$ (as defined in \eqref{e3.0}), where  $\gamma_{\ell}=s-\mathrm{f}_{\ell}.$ 
 % Then $\mathcal{C}_{\ell}$ is  a self-orthogonal  (\textit{resp.} self-dual) code over $\mathcal{R}_{\ell,m}$ satisfying    property $(\mathfrak{P}).$ 
    \end{lemma}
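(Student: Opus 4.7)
The plan is to proceed via Lemma~\ref{l2.2}, translating its self-orthogonality criterion between $\mathcal{R}_{e,m}$ and $\mathcal{R}_{\ell,m}.$ By Lemma~\ref{l2.2}, self-orthogonality of $\mathcal{C}_e$ is equivalent to the block congruences $T_i^{(e)}(T_j^{(e)})^t \equiv 0 \pmod{u^{e-i-j+2}}$ for $1 \leq i \leq j \leq e$ and $i+j \leq e+1.$ First I would recast $\mathcal{G}_\ell$ into a standard form over $\mathcal{R}_{\ell,m}$ by grouping the first $\gamma_\ell+1$ row-blocks into a single level-$1$ block $[T^{(\ell)}]_{\gamma_\ell+1}$ of size $\Lambda_{\gamma_\ell+1}\times n$, and treating each $T_{\gamma_\ell+j}^{(\ell)}$ for $2 \leq j \leq \ell$ as the $j$-th level. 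Applying Lemma~\ref{l2.2} in $\mathcal{R}_{\ell,m}$ reduces self-orthogonality of $\mathcal{C}_\ell$ to three families of block congruences, each of which would follow by specializing the $\mathcal{C}_e$-congruences above together with the defining relations $T_p^{(\ell)} \equiv T_p^{(e)} \pmod{u^\ell}$ for $p \leq \gamma_\ell+1,$ $T_{\gamma_\ell+j}^{(\ell)} \equiv T_{\gamma_\ell+j}^{(e)} \pmod{u^{\ell-j+1}}$ for $j \geq 2,$ and the arithmetic identity $\gamma_\ell + \ell = e - \gamma_\ell$ (which follows from $\ell \equiv e \pmod 2$).

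For property $(\mathfrak{P})$, I would lift each $T_p^{(\ell)}$ to $\mathcal{R}_{e,m}$ via its Teichm\"uller expansion and write $T_p^{(\ell)} = T_p^{(e)} - u^\ell \mathtt{V}_p$, where $\mathtt{V}_p$ collects the Teichm\"uller coefficients of $T_p^{(e)}$ at positions $\ell, \ell+1, \ldots, e-1.$ Direct expansion yields
\begin{equation*}
\mathcal{D}iag\bigl(T_p^{(\ell)}(T_p^{(\ell)})^t\bigr) = \mathcal{D}iag\bigl(T_p^{(e)}(T_p^{(e)})^t\bigr) - 2u^\ell \mathcal{D}iag\bigl(T_p^{(e)}\mathtt{V}_p^t\bigr) + u^{2\ell}\mathcal{D}iag\bigl(\mathtt{V}_p\mathtt{V}_p^t\bigr),
\end{equation*}
whose middle term lies in $\langle u^{\ell+\kappa}\rangle$ because $2 \in \langle u^\kappa\rangle.$ Combining this with $\mathcal{D}iag(T_p^{(e)}(T_p^{(e)})^t) \equiv 0 \pmod{u^{e-2p+2}}$ (the $i=j=p$ specialization of Step~1, valid since $2p \leq e+1$ for $p \leq \gamma_\ell+1$ whenever $\ell \geq 2$), I would verify each of the four sub-cases (i)--(iv) of $(\mathfrak{P})$ by matching the required modulus $u^{\ell+i}$ against $\min\{e-2p+2,\ \ell+\kappa,\ 2\ell\}$ over the admissible $p$-range $p \leq \gamma_\ell + 1 - \mathrm{f}_i.$ For the $\pi_{\ell+j}$ conditions in sub-cases (i) and (ii) where $\ell+j$ can exceed $2\ell$, I would handle the $u^{2\ell}\mathcal{D}iag(\mathtt{V}_p\mathtt{V}_p^t)$-contribution by a parity argument: rewriting each diagonal entry as $\sum_k \mathtt{V}_{p,ik}^2 = (\sum_k \mathtt{V}_{p,ik})^2 - 2\sum_{k<k'}\mathtt{V}_{p,ik}\mathtt{V}_{p,ik'}$ and using that Teichm\"uller coefficients of any square over $\mathcal{R}_{e,m}$ vanish at odd positions modulo $u^\kappa,$ the prescribed parity $j \equiv \ell+1 \pmod 2$ forces the offset $(\ell+j)-2\ell$ to be odd, so this contribution vanishes at $\pi_{\ell+j}.$

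Finally, if $\mathcal{C}_e$ is self-dual, the hypothesis $\lambda_i = \lambda_{e-i+2}$ for $1 \leq i \leq e,$ combined with $\gamma_\ell+\ell=e-\gamma_\ell,$ yields $\Lambda_{\gamma_\ell+1} = n - \Lambda_{\gamma_\ell+\ell}$ and $\lambda_{\gamma_\ell+i'} = \lambda_{\gamma_\ell+\ell-i'+2}$ for $2 \leq i' \leq \ell,$ which by Lemma~\ref{l2.2} (together with the self-orthogonality established above) gives self-duality of $\mathcal{C}_\ell.$ The main obstacle is the case analysis for $(\mathfrak{P})$ in sub-cases (i) and (ii), where $\ell < \kappa$ makes the $u^{2\ell}$ bound binding and the parity argument above must be carefully deployed; coordinating the three simultaneous constraints---the self-orthogonality bound $u^{e-2p+2},$ the cross-term cancellation $u^{\ell+\kappa},$ and the parity structure of squares over $\mathcal{R}_{e,m}$---across the distinct regimes of $\ell$ relative to $\kappa$ and $e-\kappa$ is the main bookkeeping challenge.
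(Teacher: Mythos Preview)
Your proposal is correct and follows the natural direct-verification route that the paper's reference (Lemma~3.1 of \cite{YSub2}) almost certainly takes. The paper itself gives no proof here beyond the citation, so there is nothing substantive to compare against; your outline supplies exactly the details one would expect: reduction via Lemma~\ref{l2.2}, the index identity $e-\ell=2\gamma_\ell$ (equivalently $\gamma_\ell+\ell=e-\gamma_\ell$), the three-term expansion of $\mathcal{D}iag\bigl(T_p^{(\ell)}(T_p^{(\ell)})^t\bigr)$, and the parity argument for squares modulo $u^\kappa$. Each of the bookkeeping checks you flag (the ranges in cases (i)--(iv), the bound $j-\ell<\kappa$ needed for the parity step, and the self-duality type-matching $\Lambda_{\gamma_\ell+1}=n-\Lambda_{\gamma_\ell+\ell}$) goes through as you indicate.
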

 \begin{proof} Working  as in Lemma 3.1 of Yadav and Sharma \cite{YSub2}, the desired result follows. \end{proof}

From the above lemma, it follows,  for  $e \geq 4$ and an integer $\ell$ satisfying $2 \leq \ell \leq e-2$ and  $\ell \equiv e\pmod 2 ,$ that a necessary condition for lifting a linear code $\mathcal{C}_{\ell}$ over $\mathcal{R}_{\ell,m}$ to a self-orthogonal (\textit{resp.} self-dual) code $\mathcal{C}_{e}$ over $\mathcal{R}_{e,m}$
 is that the code $\mathcal{C}_{\ell}$
 must be self-orthogonal  (\textit{resp.} self-dual) and  satisfies   property $(\mathfrak{P}).$ Throughout this paper, let $Sym_j(\mathcal{T}_m)$ and $Alt_j(\mathcal{T}_m)$ denote the sets consisting of all $j \times j$ symmetric and alternating matrices over $\mathcal{T}_m,$ respectively. We also need the following lemma.
\begin{lemma}\label{R=2}For an even integer $e\geq 4,$  let us define \vspace{-2mm}\begin{equation*}\label{e5.0a}
[T^{(0)}]_s=\begin{bmatrix}
T_1^{(0)}\\T_2^{(0)}\\\vdots\\T_{s}^{(0)}
\end{bmatrix}  =\begin{bmatrix}
\mathtt{I}_{\lambda_1}&\mathtt{A}_{1,1}^{(0)}& \mathtt{A}_{1,2}^{(0)}&\cdots&\mathtt{A}_{1,s-1}^{(0)}&\cdots & \mathtt{A}^{(0)}_{1,e-1}&\mathtt{A}_{1,e}^{(0)}\vspace{0.5mm}\\
\mathbf{0} &\mathtt{I}_{\lambda_2} & \mathtt{A}_{2,2}^{(0)} &\cdots &\mathtt{A}_{2,s-1}^{(0)}&\cdots & \mathtt{A}_{2,e-1}^{(0)}& \mathtt{A}_{2,e}^{(0)}\\
	\vdots& \vdots  &\vdots &\vdots&\vdots &\vdots& \vdots&\vdots\\
\mathbf{0}&\mathbf{0}&\mathbf{0}&\cdots&\mathtt{I}_{\lambda_{s}}& \cdots & \mathtt{A}_{s,e-1}^{(0)}& \mathtt{A}_{s,e}^{(0)}
\end{bmatrix} \vspace{-1.5mm}\end{equation*} and \begin{equation*}\vspace{-1.5mm}[X]_{ s}=\begin{bmatrix}
X_1\\X_2\\ \vdots\\ X_{s}
\end{bmatrix}= \begin{bmatrix}
\mathbf{0}&\mathbf{0}&\mathtt{A}_{1,2}^{(1)}& \mathtt{A}_{1,3}^{(1)}&\cdots&\mathtt{A}_{1,s+1}^{(1)}&\cdots &  \mathtt{A}_{1,e}^{(1)}\\
\mathbf{0} & \mathbf{0}&\mathbf{0}&\mathtt{A}_{2,3}^{(1)}&  \cdots &\mathtt{A}_{2,s+1}^{(1)}&\cdots &\mathtt{A}_{2,e}^{(1)}\vspace{0.5mm}\\
	\vdots& \vdots  &\vdots &\vdots &\vdots& \vdots&\vdots&\vdots\\
\mathbf{0}&\mathbf{0}&\mathbf{0}&\mathbf{0}&\cdots&\mathtt{A}_{s,s+1}^{(1)}&\cdots &\mathtt{A}_{s,e}^{(1)}\vspace{0.5mm}
\end{bmatrix},\end{equation*} 
where columns of the matrices $[T^{(0)}]_{s}$ and $[X]_s$  are grouped  into blocks of sizes $\lambda_1,\lambda_2,\ldots, \lambda_e $ and $\lambda_{e+1}, $  $\mathtt{I}_{\lambda_i}$ denotes the $\lambda_i\times \lambda_i$ identity matrix over $\mathcal{T}_{m},$  $\mathtt{A}_{i,j}^{(0)} \in \mathcal{M}_{\lambda_i\times \lambda_{j+1}}(\mathcal{T}_{m})$  for $1 \leq i \leq s$ and $i \leq j \leq e,$ $\mathtt{A}_{g,h}^{(1)} \in \mathcal{M}_{\lambda_g \times \lambda_{h+1}}(\mathcal{T}_{m})$ for $1 \leq g \leq s$ and $g < h \leq e,$ and   each of the matrices  $(\mathtt{A}^{(0)})_ { s,s+1} $, $(\mathtt{A}^{(0)})_{ s-1,s+2},\ldots , (\mathtt{A}^{(0)})_{2,e-1}$, $\mathtt{A}_{1,e}^{(0)}$ has full row-rank. Let us suppose that the matrix $[T^{(0)}]_s$ satisfies 
\vspace{-2mm}\begin{eqnarray*}[T^{(0)}]_s[T^{(0)}]_s^t & \equiv & \mathbf{0}\pmod u \text{ ~and}\\
~[T^{(0)}]_s[T^{(0)}]_s^t &\equiv &u\mathtt{H}^{(1)}+u^2\mathtt{H}^{(2)}+\cdots+u^{\kappa +1}\mathtt{H}^{(\kappa+1)} \pmod{u^{\kappa+2}},\end{eqnarray*} where  $\mathtt{H}^{(1)},\mathtt{H}^{(2)},\ldots,\mathtt{H}^{(\kappa+1)}  \in Sym_{\Lambda_s}(\mathcal{T}_{m})$  and $\mathtt{H}^{(1)},\mathtt{H}^{(3)},\mathtt{H}^{(5)},\ldots,\mathtt{H}^{(\kappa-1)}\in Alt_{\Lambda_s}(\mathcal{T}_{m}).$ Let $\mathtt{F}^{(2)}$ and $\mathtt{F}^{(\kappa+1)}$ be  matrices over $\mathcal{T}_m$ whose rows consist of the first $\Lambda_{s-1}$ and $\Lambda_{s-\kappa_1}$  rows of the matrices $\mathtt{H}^{(2)}$ and  $\mathtt{H}^{(\kappa+1)},$  respectively. Let us now consider the following system of matrix equations in the unknown matrix $[X]_s\in \mathcal{M}_{\Lambda_s\times n}(\mathcal{T}_m)$: 
\vspace{-2mm}\begin{eqnarray}\label{el1.0}
 \begin{split}
  \left.\begin{array}{rrl}
[T^{(0)}]_{s}[X]_{ s}^t+[X]_{ s}[T^{(0)}]_{s}^t& \equiv & \mathtt{H}^{(1)} \pmod{u},\\
~ [X]_{s-1}[X]_{s-1}^t &\equiv& \mathtt{F}^{(2)} \pmod u,\\
\eta_0[T^{(0)}]_{s-\kappa_1}[X]_{s-\kappa_1}^t &\equiv & \mathtt{F}^{(\kappa+1)}\pmod u.~~\end{array}\right\} 
 \end{split}\vspace{-4mm}
\end{eqnarray}
 The following hold.
\vspace{-1mm}\begin{itemize}
    \vspace{-1mm}\item[(a)] If $\textbf{1}$ does not belong to the $\mathcal{T}_m$-span of the rows of the matrix $[T^{(0)}]_{s-\kappa_1}$, then the system \eqref{el1.0} always has a solution. (Throughout this paper, $\textbf{1}$ denotes the all-one vector $(1,1,\ldots,1)\in \mathcal{T}_m^n.$) 
   \vspace{-1mm} \item[(b)]  On the other hand, if $\textbf{1}$  belongs to the $\mathcal{T}_m$-span of the rows of the matrix $[T^{(0)}]_{s-\kappa_1},$ then the system \eqref{el1.0} has a solution if and only if  one of the following conditions holds: (i)  $n \equiv 0 , 4 \pmod{8}$, or (ii) $n \equiv 2,6 \pmod{8},$ $\kappa = 2$ and $(\eta_0)^{3/2} = \eta_1$, or (iii) $n \equiv 2,6 \pmod{8},$  $\kappa \geq 4$ and $\eta_1 = 0.$ 
\end{itemize}
Furthermore, if the system \eqref{el1.0} has a solution, then the number of its solutions is given by
\vspace{-2mm} \begin{equation*}
\displaystyle (2^m)^{\sum\limits_{i=3}^{s+1}\lambda_i\Lambda_{i-2}+\Lambda_{s}(n-\Lambda_{s+1})- \Lambda_{s-1}-\Lambda_{s-\kappa_1}-\frac{\Lambda_{s}(\Lambda_{s}-1)}{2}+\omega},
\vspace{-2mm}\end{equation*}
where $\omega=1$ if   $\textbf{1}$ belongs to the $\mathcal{T}_m$-span of the rows of the matrix $[T^{(0)}]_{s-\kappa_1},$ while $\omega =0$  otherwise.
\end{lemma}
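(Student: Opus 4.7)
The plan is to leverage the standard form of $[T^{(0)}]_s$, whose identity blocks $\mathtt{I}_{\lambda_i}$ allow the linear portion of the system to be solved block by block, reducing \eqref{el1.0} to a handful of residual scalar constraints whose solvability is governed by whether $\mathbf{1}$ lies in the row span of $[T^{(0)}]_{s-\kappa_1}$. I would first expand the first equation of \eqref{el1.0} into block equations $T_i^{(0)} X_j^t + X_i T_j^{(0)t} \equiv \mathtt{H}^{(1)}_{i,j} \pmod{u}$ for $1 \leq i \leq j \leq s$; the $\mathtt{I}_{\lambda_j}$ pivot inside $T_j^{(0)}$ expresses one block of $[X]_s$ as an explicit affine function of the remaining blocks when $i<j$, and the diagonal cases $i=j$ are automatically consistent (the left-hand side $A+A^t$ has zero diagonal in characteristic two, matching the zero diagonal of the alternating $\mathtt{H}^{(1)}$), while their off-diagonal halves impose $\Lambda_s(\Lambda_s-1)/2$ independent conditions.

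Next I would substitute the expressions obtained into the second and third equations of \eqref{el1.0}. The third equation is linear in $[X]_{s-\kappa_1}$ and, exploiting the identity pivots of $[T^{(0)}]_{s-\kappa_1}$, becomes a system of $\Lambda_{s-\kappa_1}$ genuinely residual diagonal scalar constraints; the second equation has the form $Y Y^t \equiv \mathtt{F}^{(2)} \pmod{u}$ with $\mathtt{F}^{(2)}$ symmetric, and once its off-diagonal part has been fixed by the first equation, its diagonal contributes $\Lambda_{s-1}$ conditions that are linear after the Frobenius twist in characteristic two. Consequently, the entire system reduces to an affine map whose image covers everything outside the direction of $\mathbf{1}$ in the row space of $[T^{(0)}]_{s-\kappa_1}$, which immediately establishes part (a).

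The main obstacle is part (b). When $\mathbf{1}$ lies in the row span of $[T^{(0)}]_{s-\kappa_1}$, a specific linear combination of the residual scalar equations from the third equation collapses into a single identity in $\mathcal{T}_m$ that must match a prescribed linear combination of diagonal entries of $\mathtt{F}^{(\kappa+1)}$. I would compute this diagonal directly: it is the $u^{\kappa+1}$-coefficient of $\mathbf{v}\cdot\mathbf{v}$ for a vector $\mathbf{v}$ reducing to $\mathbf{1}$ modulo $u$, and expanding $\mathbf{v}\cdot\mathbf{v}$ in $\mathcal{R}_{e,m}$ via $2 \equiv u^{\kappa}(\eta_0 + u\eta_1 + \cdots) \pmod{u^e}$ turns the length $n$ into a residue modulo $8$. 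When $n \equiv 0, 4 \pmod{8}$ the carries cancel at the $u^{\kappa+1}$-level and the obstruction vanishes automatically; when $n \equiv 2, 6 \pmod{8}$ a nonzero correction survives, and matching it against $\mathtt{F}^{(\kappa+1)}$ yields $\eta_1 = \eta_0^{3/2}$ when $\kappa = 2$ (the case where $\eta_0$ and $\eta_1$ both contribute at the critical level) and $\eta_1 = 0$ when $\kappa \geq 4$ (the case where $\eta_0^{3/2}$ is pushed past the truncation level), exactly as stated.

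Finally, the count of solutions follows by tallying free parameters. The ``interior'' blocks of $[X]_s$ contribute $\sum_{i=3}^{s+1}\lambda_i\Lambda_{i-2}$ free entries, the blocks lying in the last $n-\Lambda_{s+1}$ columns contribute $\Lambda_s(n-\Lambda_{s+1})$, and the three equation families jointly impose $\tfrac{\Lambda_s(\Lambda_s-1)}{2} + \Lambda_{s-1} + \Lambda_{s-\kappa_1}$ independent conditions. When $\mathbf{1}$ lies in the row span of $[T^{(0)}]_{s-\kappa_1}$, exactly one of those conditions becomes redundant with the compatibility identity arranged in part (b), contributing the $+\omega$ correction in the exponent.
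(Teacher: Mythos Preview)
Your approach is essentially the paper's: linearize \eqref{el1.0} into scalar equations (using the Frobenius identity $\mathbf{x}_i\cdot\mathbf{x}_i=(\mathbf{1}\cdot\mathbf{x}_i)^2$ for the second equation), assemble them into one affine system, observe that the unique rank drop occurs precisely when $\mathbf{1}$ lies in the row span of $[T^{(0)}]_{s-\kappa_1}$, and verify the resulting compatibility condition by expanding $\mathbf{1}\cdot\mathbf{1}\equiv n\pmod{u^{\kappa+2}}$ using $2\equiv u^\kappa(\eta_0+u\eta_1+\cdots)$.

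One point to correct: in part (b) you say the single linear dependency lives among ``the residual scalar equations from the third equation'' and that the obstruction is a combination of diagonal entries of $\mathtt{F}^{(\kappa+1)}$ alone. In fact the dependency mixes all three equation families---the paper's compatibility relation \eqref{e3.13} simultaneously involves $(\mathtt{H}^{(2)}_{h,h})^{2^{m-1}}$, $\mathtt{H}^{(\kappa+1)}_{g,g}$, and the off-diagonal $\mathtt{H}^{(1)}_{i,j}$. Your device of reading off the $u^{\kappa+1}$-coefficient of $\mathbf{v}\cdot\mathbf{v}$ for $\mathbf{v}\equiv\mathbf{1}\pmod u$ is exactly right and automatically produces this full expression, so the argument works once the dependency is described accurately. (Relatedly, the off-diagonal of $[X]_{s-1}[X]_{s-1}^t$ is not ``fixed by the first equation''---equation~1 constrains $\mathbf{a}_i\cdot\mathbf{x}_j$, not $\mathbf{x}_i\cdot\mathbf{x}_j$---but since only the diagonals of the second and third equations enter the count, this does not affect the conclusion.)
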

\begin{proof}To prove the result, let us suppose that  $[T^{(0)}]_s=(\textbf{a}_i)$ and $[X]_s=(\textbf{x}_j),$ where $\textbf{a}_i$\rq{}s and $\textbf{x}_j$\rq{}s denote the rows of the matrices $[T^{(0)}]_s$ and $[X]_s,$ respectively. Let $\mathtt{H}^{(z)}_{i,j}\in \mathcal{T}_{m}$ denote the $(i,j)$-th entry of the matrix  $\mathtt{H}^{(z)}$ for $1 \leq i, j \leq \Lambda_s,$ where $1 \leq z \leq \kappa+1.$ Note that $\mathtt{H}^{(1)},\mathtt{H}^{(3)},\mathtt{H}^{(5)},\ldots,\mathtt{H}^{(\kappa-1)}\in Alt_{\Lambda_s}(\mathcal{T}_{m}), $
 which implies that $\mathtt{H}^{(1)}_{i,i}=\mathtt{H}^{(3)}_{i,i}=\mathtt{H}^{(5)}_{i,i}=\cdots=\mathtt{H}^{(\kappa-1)}_{i,i}=0$ for $1\leq i \leq \Lambda_s .$  We next observe that the system \eqref{el1.0}  of  matrix equations  is  equivalent to the following system of equations in unknowns $ \textbf{x}_1, \textbf{x}_2,\ldots, \textbf{x}_{\Lambda_{s}}$ over $\mathcal{T}_{m}$:  \vspace{-2mm}\begin{eqnarray}
\textbf{a}_i\cdot \textbf{x}_j+\textbf{a}_j \cdot \textbf{x}_i  &\equiv& \mathtt{H}^{(1)}_{i,j} \pmod{u} \text{ ~~for } 1 \leq i <j \leq \Lambda_s,\label{e3.6}\\ 
\textbf{1}\cdot \textbf{x}_i & \equiv & (\mathtt{H}^{(2)}_{i,i})^{2^{m-1}} \pmod{u} \text{ ~ for } 1 \leq i \leq \Lambda_{s-1},\label{e3.7}\\
\textbf{a}_i \cdot \textbf{x}_i & \equiv & \eta_0^{-1}\mathtt{H}_{i,i}^{(\kappa+1)} \pmod{u} \text{ ~~for } 1 \leq i \leq \Lambda_{s-\kappa_1}. \label{e3.8}
\end{eqnarray} 
Further, for each integer $j$ satisfying $1 \leq j \leq \Lambda_{s},$ we see that there exists a unique integer $z_{j}$ satisfying $1 \leq z_{j}\leq s$ and $\Lambda_{z_{j}-1}+1 \leq j \leq \Lambda_{z_{j}}.$ Thus, the corresponding unknown vector $\textbf{x}_{j}$ can be written as $\textbf{x}_{j}=(\textbf{0}~ \textbf{x}_{j}^{n-\Lambda_{z_{j}+1}}),$ where $\textbf{0}$ denotes the zero vector of 
length $\Lambda_{z_{j}+1}$ and $\textbf{x}_{j}^{n-\Lambda_{z_{j}+1}}$ denotes the vector of length $n-\Lambda_{z_{j}+1}$ obtained   by removing the first $\Lambda_{z_{j}+1}$ coordinates of $\textbf{x}_{j}$. This implies, for $\Lambda_{z_{j}-1}+1 \leq j \leq \Lambda_{z_{j}}$, that the first $\Lambda_{z_{j}+1}$ coordinates of $\textbf{x}_{j}$ are zero, leaving  $n-\Lambda_{z_{j}+1}$   variables
 in $\textbf{x}_{j}.$  For $1 \leq j \leq \Lambda_{s},$ let $\widetilde{\textbf{x}}_{j}=\textbf{x}_{j}^{n-\Lambda_{z_{j}+1}}$ 
and  $\widetilde{\textbf{a}}_{j}=\textbf{a}_{j}^{n-\Lambda_{z_{j}+1}}$ be the  vectors of length $n-\Lambda_{z_{j}+1}$ obtained from $\textbf{x}_{j}$ and  $\textbf{a}_{j}$ by deleting their first $\Lambda_{z_{j}+1}$ coordinates, respectively. 
Consequently, the system   of equations  \eqref{e3.6}-\eqref{e3.8} is equivalent to the following system of equations in unknowns $\widetilde{\textbf{x}}_1, \widetilde{\textbf{x}}_2,\ldots, \widetilde{\textbf{x}}_{\Lambda_{s}}$ over $\mathcal{T}_{m}$:   
\vspace{-2mm}\begin{eqnarray}
\widetilde{\textbf{a}}_i\cdot \widetilde{\textbf{x}}_j+\widetilde{\textbf{a}}_j \cdot \widetilde{\textbf{x}}_i  &\equiv& \mathtt{H}^{(1)}_{i,j} \pmod{u} \text{ ~~for } 1 \leq i <j \leq \Lambda_s,\label{e3.9}\\ 
\widetilde{\textbf{1}}\cdot \widetilde{\textbf{x}}_i & \equiv & (\mathtt{H}^{(2)}_{i,i})^{2^{m-1}} \pmod{u} \text{ ~ for } 1 \leq i \leq \Lambda_{s-1},\label{e3.10}\\
\widetilde{\textbf{a}}_i \cdot \widetilde{\textbf{x}}_i   & \equiv & \eta_0^{-1}\mathtt{H}_{i,i}^{(\kappa+1)} \pmod{u} \text{ ~~for } 1 \leq i \leq \Lambda_{s-\kappa_1}, \label{e3.11}
\vspace{-2mm} \end{eqnarray}
 where $\widetilde{\textbf{1}}$ denotes the all-one vector having the same length as that of $\widetilde{\textbf{x}}_i$ for each $i.$ Now, the above system of equations \eqref{e3.9}-\eqref{e3.11} can be expressed as the following matrix equation:
%\small{\begin{equation}\label{e3.12}\mathscr{M}\begin{bmatrix}	\widetilde{\textbf{x}}_1^{t}\\ \widetilde{\textbf{x}}_2^{t}\\ \widetilde{\textbf{x}}_3^{t}\\ \vdots\\ \widetilde{\textbf{x}}_{\Lambda_{s-\kappa_1}-1}^{t}\\ \widetilde{\textbf{x}}_{\Lambda_{s-\kappa_1}}^{t}\vspace{0.50mm}\\ \vdots\\ \widetilde{\textbf{x}}_{\Lambda_{s-1}+1}^{t} \\ \vdots\\ \widetilde{\textbf{x}}_{\Lambda_{s}-2}^{t}\\ \widetilde{\textbf{x}}_{\Lambda_{s}-1}^{t} \\ \vspace{0.75mm}\widetilde{\textbf{x}}_{\Lambda_{s}}^{t}	\end{bmatrix}\equiv\begin{bmatrix}	(\mathtt{E}_{1,1}^{(2)})^{2^{-1}}\\ \vdots\\ (\mathtt{E}_{\Lambda_{s-1},\Lambda_{s-1}}^{(2)})^{2^{-1}}\vspace{0.5mm}\\ \eta_0^{-1}\mathtt{E}^{(\kappa+1)}_{1,1}\vspace{-0.25mm}\\ \vdots\vspace{-0.25mm}\\ \eta_0^{-1}\mathtt{E}^{(\kappa+1)}_{\Lambda_{s-\kappa_1},\Lambda_{s-\kappa_1}}\vspace{0.5mm}\\\mathtt{E}^{(1)}_{1,2}\\\vdots\\\mathtt{E}^{(1)}_{1,\Lambda_s}\vspace{-0.25mm}\\ \vdots\vspace{-0.25mm} \\\mathtt{E}^{(1)}_{\Lambda_{s}-1,\Lambda_s}	\end{bmatrix}\pmod u,\text{ ~where ~}\mathscr{M}=\begin{bmatrix}	\widetilde{\textbf{1}} &&&\\	& & \ddots  &  \\	& & & & \widetilde{\textbf{1}}\\	\widetilde{\textbf{a}}_1 &&&\\	& & \ddots  &  \\	& & &  \widetilde{\textbf{a}}_{\Lambda_{s-\kappa_1}}&\\\widetilde{\textbf{a}}_{2}&\widetilde{\textbf{a}}_{1}&&& \\	\vdots&\vdots  & \vdots \\\widetilde{\textbf{a}}_{\Lambda_{s}}&&&&&&\widetilde{\textbf{a}}_{1} \\	&&&\vdots&\vdots  & \vdots \\	& && && \widetilde{\textbf{a}}_{\Lambda_{s}}& \widetilde{\textbf{a}}_{\Lambda_{s}-1}\\\end{bmatrix}.\vspace{-2mm}\end{equation}}\normalsize
%%%%%%%%%%%%%%%%%5
%%%%%%%%%%%%%%%%%%%
\begin{equation}\label{e3.12}\mathcal{N}\begin{bmatrix}
	\widetilde{\textbf{x}}_1^{t}\\ \widetilde{\textbf{x}}_2^{t}\\ \widetilde{\textbf{x}}_3^{t}\\ \vdots\\ \widetilde{\textbf{x}}_{\Lambda_{s-\kappa_1}-1}^{t}\\ \widetilde{\textbf{x}}_{\Lambda_{s-\kappa_1}}^{t}\vspace{0.50mm}\\ \vdots\\ \widetilde{\textbf{x}}_{\Lambda_{s-1}+1}^{t} \\ \vdots\\ \widetilde{\textbf{x}}_{\Lambda_{s}-2}^{t}\\ \widetilde{\textbf{x}}_{\Lambda_{s}-1}^{t} \\ \vspace{0.75mm}\widetilde{\textbf{x}}_{\Lambda_{s}}^{t}
	\end{bmatrix}\equiv\begin{bmatrix}
	(\mathtt{H}_{1,1}^{(2)})^{2^{m-1}}\\ \vdots\\ (\mathtt{H}_{\Lambda_{s-1},\Lambda_{s-1}}^{(2)})^{2^{m-1}}\vspace{0.5mm}\\ \eta_0^{-1}\mathtt{H}^{(\kappa+1)}_{1,1}\vspace{-0.25mm}\\ \vdots\vspace{-0.25mm}\\ \eta_0^{-1}\mathtt{H}^{(\kappa+1)}_{\Lambda_{s-\kappa_1},\Lambda_{s-\kappa_1}}\vspace{0.5mm}\\\mathtt{H}^{(1)}_{1,2}\\\vdots\\\mathtt{H}^{(1)}_{1,\Lambda_s}\vspace{-0.25mm}\\ \vdots\vspace{-0.25mm} \\\mathtt{H}^{(1)}_{\Lambda_{s}-1,\Lambda_s}
	\end{bmatrix}\pmod u,\vspace{-2mm}\end{equation}\vspace{-2mm}
where the matrix $\mathcal{N}$ has order $\left( \Lambda_{s-1}+\Lambda_{s-\kappa_1}+\frac{\Lambda_{s}(\Lambda_{s}-1)}{2}\right) \times \left(\sum\limits_{i=3}^{s+1}\lambda_i\Lambda_{i-2}+\Lambda_{s}(n-\Lambda_{s+1})\right)$ and is given by \begin{equation*}
  \mathcal{N}=\begin{bmatrix}
	\widetilde{\textbf{1}} &&&\\
    	&\widetilde{\textbf{1}} &&&\\
	& & \ddots  &  \\
	& & & \widetilde{\textbf{1}}\\
    	& & &&\ddots  &  \\
    & & & && \widetilde{\textbf{1}}\\
	\widetilde{\textbf{a}}_1 &&&\\
    	&\widetilde{\textbf{a}}_2 &&&\\
	& & \ddots  &  \\
	& & &  \widetilde{\textbf{a}}_{\Lambda_{s-\kappa_1}}&\\
\widetilde{\textbf{a}}_{2}&\widetilde{\textbf{a}}_{1}&&& \\
\vdots&\vdots  & \vdots &\vdots\\
\widetilde{\textbf{a}}_{\Lambda_{s-1}}&&&&&\widetilde{\textbf{a}}_{1}&&& \\
	\vdots&\vdots  & \vdots &\vdots\\
\widetilde{\textbf{a}}_{\Lambda_{s}}&&&&&&\cdots&&\widetilde{\textbf{a}}_{1} \\
 % & \widetilde{\textbf{a}}_{3}&&&&& \\
	&&&\vdots&\vdots  & \vdots&\cdots&\vdots&\vdots \\
    	
	& && &&&\cdots& \widetilde{\textbf{a}}_{\Lambda_{s}}& \widetilde{\textbf{a}}_{\Lambda_{s}-1}\\
	\end{bmatrix}.\vspace{-2mm}
\end{equation*}
Further, working as in Lemma 4.1 of Yadav and Sharma \cite{Galois}, we see that the rows of the matrix $\mathcal{N}$  are linearly dependent over $\mathcal{T}_{m}$  if and only if $\textbf{1}$ belongs to the $\mathcal{T}_{m}$-span of the rows of the matrix $[T^{(0)}]_{s-\kappa_1}.$
Accordingly, we will distinguish the following two cases: (i) $\textbf{1}$ does not belong to the $\mathcal{T}_{m}$-span of the rows of the matrix $[T^{(0)}]_{s-\kappa_1},$ and (ii) $\textbf{1}$ belongs to the $\mathcal{T}_{m}$-span of the rows of the matrix $[T^{(0)}]_{s-\kappa_1}.$
\begin{itemize}
\item[(i)] Let us first assume that  $\textbf{1}$ does not belong to the $\mathcal{T}_{m}$-span of the rows of the matrix $[T^{(0)}]_{s-\kappa_1}.$  Here, the  rows of the matrix $\mathcal{N}$ are linearly independent over $\mathcal{T}_{m},$  and thus  the matrix $\mathcal{N}$ has row-rank  $\Lambda_{s-1}+\Lambda_{s-\kappa_1}+\frac{\Lambda_s(\Lambda_s-1)}{2}.$ This implies that the matrix equation \eqref{e3.12} always has a solution. From this, it follows that there exists a matrix $[X]_{s} \in \mathcal{M}_{\Lambda_{s}\times n}(\mathcal{T}_{m})$ satisfying the system \eqref{el1.0} of matrix equations  and that the matrix $[X]_{s} $ has precisely 
\vspace{-1.5mm}\begin{equation*}\vspace{-1.5mm}
\displaystyle (2^m)^{\sum\limits_{i=3}^{s+1}\lambda_i\Lambda_{i-2}+\Lambda_{s}(n-\Lambda_{s+1})- \Lambda_{s-1}-\Lambda_{s-\kappa_1}-\frac{\Lambda_{s}(\Lambda_{s}-1)}{2}}
\end{equation*} 
distinct choices.
\item[(ii)] Next, let us assume that $\textbf{1}$ belongs to the $\mathcal{T}_{m}$-span of the rows of the matrix $[T^{(0)}]_{s-\kappa_1}.$ Here, $\textbf{1}$ belongs to the $\mathcal{T}_m$-span of the linearly independent vectors $\textbf{a}_1,\textbf{a}_2,\ldots, \textbf{a}_{\Lambda_{s-\kappa_1}}.$ Thus, there exist unique scalars $\beta_{1},\beta_{2},\ldots,\beta_{\Lambda_{s-\kappa_1}} \in \mathcal{T}_{m}$ such that $\beta_{1} \textbf{a}_{1}+\beta_{2} \textbf{a}_{2}+\cdots+\beta_{\Lambda_{s-\kappa_1}} \textbf{a}_{\Lambda_{s-\kappa_1}} \equiv  \textbf{1}\pmod u.$ 
We further note that all  rows of the matrix $\mathcal{N}$ except the first row are linearly independent over $\mathcal{T}_{m}.$ This implies that the row-rank of the matrix $\mathcal{N}$ is  $\Lambda_{s-\kappa_1}+\Lambda_{s-1}+\frac{\Lambda_s(\Lambda_s-1)}{2}-1.$ It is easy to see that  the matrix equation \eqref{e3.12} has a solution if and only if 
\vspace{-1.5mm}\begin{equation}\label{e3.13}\vspace{-1.5mm}
   \displaystyle   \sum\limits_{h=1}^{\Lambda_{s-\kappa_1}}\beta_h(\mathtt{H}^{(2)}_{h,h})^{2^{m-1}} +\eta_0^{-1}\sum\limits_{g=1}^{\Lambda_{s-\kappa_1}}\beta_g^2 \mathtt{H}^{(\kappa+1)}_{g,g} + \hspace{-4mm}\sum\limits_{~~~~~1\leq i< j \leq \Lambda_{s-\kappa_1}}\hspace{-4mm}\beta_i \beta_j \mathtt{H}_{i,j}^{(1)}~ \equiv ~0\pmod{u}.
 \end{equation}
We next assert that the equation \eqref{e3.13}  holds (\textit{i.e.,} the matrix equation \eqref{e3.12}  has a solution) if and only if one of the following three conditions holds: (i) $n\equiv 0,4\pmod 8,$ or (ii) $n\equiv 2,6 \pmod 8, $  $\kappa=2$ and $(\eta_0)^{\frac{3}{2}}=\eta_1,$  or (iii) $n\equiv 2,6 \pmod 8,$  $\kappa\geq 4$ and $\eta_1=0.$  To prove this assertion, let us suppose  that $\textbf{1} \equiv \sum\limits_{i=1}^{\Lambda_{s-\kappa_1}} \beta_i \textbf{a}_i+ u\textbf{d}_1+u^2\textbf{d}_2+\cdots+u^{\kappa+1} \textbf{d}_{\kappa+1} \pmod{u^{\kappa+2}}$ for some $\textbf{d}_1,\textbf{d}_2,\ldots,\textbf{d}_{\kappa+1} \in \mathcal{T}_{m}^n.$ As  $\textbf{1}\cdot \textbf{1} \equiv n \pmod{ u^{\kappa+2}},$ we see that
$\textbf{1}\cdot \textbf{1} \equiv 0 \pmod{ u^{\kappa+2}} $ if $ n\equiv 0,4\pmod 8,$ whereas 
$\textbf{1}\cdot \textbf{1} \equiv u^{\kappa}(\eta_0+u\eta_1) \pmod{ u^{\kappa+2}} $ if $ n\equiv 2,6\pmod 8.$
From this, we get  
\vspace{-1.5mm}\begin{equation}\label{eta_0}\vspace{-1.5mm}
\sum\limits_{h=1}^{\Lambda_{s-\kappa_1}}\beta_h^2 \mathtt{H}_{h,h}^{(2)}+\textbf{d}_1\cdot \textbf{d}_1  \equiv \epsilon_1 \eta_0 \pmod{u}  \end{equation}       and \vspace{-1.5mm}\begin{equation}\label{eta_1}\vspace{-1.5mm}
   \eta_0^{-1} \sum\limits_{g=1}^{\Lambda_{s-\kappa_1}}\beta_g^2 \mathtt{H}^{(\kappa+1)}_{g,g}+ \hspace{-5mm}\sum\limits_{~~~~~1\leq i< j \leq \Lambda_{s-\kappa_1}}\hspace{-5mm}\beta_i \beta_j \mathtt{H}_{i,j}^{(1)}+\textbf{1}\cdot \textbf{d}_1 \equiv \epsilon_2 \eta_0^{-1}\eta_1\pmod{u},  
 \end{equation} 
 where  \vspace{-1mm}\begin{equation*}\vspace{-1mm}\epsilon_1=\left\{\begin{array}{ll}0 & \text{if either }  n\equiv 0,4\pmod{8} \text{ or }n\equiv 2,6 \pmod 8 \text{ and }\kappa \geq 4;\\ 1 & \text{if }n\equiv 2,6 \pmod 8 \text{ and }\kappa=2\end{array}\right.\end{equation*}
 and \vspace{-1mm}\begin{equation*}\vspace{-1mm}
     \epsilon_2=\left\{\begin{array}{ll}0 & \text{if }  n\equiv 0,4\pmod{8} ;\\ 1 & \text{if }n\equiv 2,6 \pmod 8 .\end{array}\right.
 \end{equation*} 
By \eqref{eta_0} and \eqref{eta_1},  we see that  the matrix equation \eqref{e3.12}  has a solution if and only if either (i) $n\equiv 0,4\pmod 8,$ or (ii) $n\equiv 2,6 \pmod 8, $  $\kappa=2$ and $(\eta_0)^{\frac{3}{2}}=\eta_1,$ or (iii) $n\equiv 2,6 \pmod 8,$  $\kappa\geq 4$ and $\eta_1=0$ holds. Furthermore,  the number of solutions of the system \eqref{e3.12}, and hence the number of choices for the matrix  $[X]_s \in \mathcal{M}_{\Lambda_s \times n } (\mathcal{T}_{m})$ is given by 
 \vspace{-2mm}\begin{equation*}\vspace{-2mm}
\displaystyle (2^m)^{\sum\limits_{i=3}^{s+1}\lambda_i\Lambda_{i-2}+\Lambda_{s}(n-\Lambda_{s+1})- \Lambda_{s-1}-\Lambda_{s-\kappa_1}-\frac{\Lambda_{s}(\Lambda_{s}-1)}{2}+1}.
\end{equation*}  
\end{itemize}
This proves the lemma.
\vspace{-1mm}\end{proof}

Now, in the following proposition,  we assume that $e=\kappa=2,$  and  we consider a self-orthogonal code $\mathcal{D}^{(1)}$   of length $n$ and dimension $\lambda_1$ over $\mathcal{T}_m.$ 
Here, we establish the existence of a self-orthogonal code  $\mathcal{C}_2$ of  type $\{\lambda_{1},\lambda_{2}\}$ and length $n$ over $\mathcal{R}_{2,m}$ satisfying   $Tor_1(\mathcal{C}_{2}) =\mathcal{D}^{(1)}.$    We also count the choices for such a code $\mathcal{C}_2$ obtained from a given   self-orthogonal code $\mathcal{D}^{(1)}$ over $\mathcal{T}_m.$  
   \begin{proposition}\label{pe=k=2} Let $e=\kappa=2.$   Let $\mathcal{D}^{(1)}$  be a  self-orthogonal code of length $n$ and dimension $\lambda_1$  over $\mathcal{T}_{m}.$    The following hold.
 \begin{enumerate}\vspace{-2mm}\item[(a)]  There exists a  self-orthogonal code $\mathcal{C}_{2}$ of  type $\{\lambda_{1},\lambda_{2}\}$ and length $n$ over $\mathcal{R}_{2,m}$ such that  $Tor_1(\mathcal{C}_2)=\mathcal{D}^{(1)}.$    
\item[(b)] Furthermore,  each such code $\mathcal{D}^{(1)}$  over $\mathcal{T}_m$ gives rise to precisely 	\vspace{-1.5mm}\begin{equation*}\vspace{-1.5mm}
\displaystyle (2^m)^{\lambda_1(n-\Lambda_2)-\frac{\lambda_1(\lambda_1-1)}{2}}{\lambda_{2}+n-\Lambda_{2}-\Lambda_1 \brack \lambda_{2}}_{2^m}\vspace{-2mm}
\end{equation*} distinct   self-orthogonal codes $\mathcal{C}_{2}$ of  type $\{\lambda_{1},\lambda_{2}\}$ and length $n$ over $\mathcal{R}_{2,m}$  satisfying $Tor_1(\mathcal{C}_2)=\mathcal{D}^{(1)}.$    \end{enumerate} \end{proposition}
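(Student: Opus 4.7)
My plan is to stratify the self-orthogonal codes $\mathcal{C}_2$ by the pair $(Tor_1(\mathcal{C}_2),Tor_2(\mathcal{C}_2)),$ counting first the admissible choices of $Tor_2(\mathcal{C}_2)$ and then, for each fixed such choice, counting the admissible lifts to $\mathcal{C}_2.$ Since the lift equations will turn out to be independent of the particular choice of $Tor_2(\mathcal{C}_2),$ the total count will be a product of the two counts, and the positivity of this product under the hypothesis $2\lambda_1+\lambda_2\le n$ will yield the existence claim in (a).

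For the first count, I apply Lemma \ref{l1.2}(b) to the self-orthogonal code $\mathcal{C}_2$ with $e=2$, obtaining the chain $\mathcal{D}^{(1)}=Tor_1(\mathcal{C}_2)\subseteq Tor_2(\mathcal{C}_2)\subseteq(\mathcal{D}^{(1)})^{\perp_{B_m}}$ with $\dim Tor_2(\mathcal{C}_2)=\Lambda_2.$ Hence $Tor_2(\mathcal{C}_2)/\mathcal{D}^{(1)}$ is an arbitrary $\lambda_2$-dimensional subspace of the quotient $(\mathcal{D}^{(1)})^{\perp_{B_m}}/\mathcal{D}^{(1)},$ which has dimension $n-2\lambda_1=\lambda_2+n-\Lambda_2-\Lambda_1.$ This yields exactly ${\lambda_2+n-\Lambda_2-\Lambda_1\brack \lambda_2}_{2^m}$ admissible choices of $Tor_2(\mathcal{C}_2).$

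For the second count, I fix such a $Tor_2(\mathcal{C}_2)$ together with a basis $\mathbf{v}_1,\ldots,\mathbf{v}_{\lambda_1}$ of $\mathcal{D}^{(1)}.$ By the Teichm\"uller decomposition in $\mathcal{R}_{2,m},$ every $\mathcal{C}_2$ with $Tor_1(\mathcal{C}_2)=\mathcal{D}^{(1)}$ and the given $Tor_2(\mathcal{C}_2)$ corresponds bijectively to a tuple $(\mathbf{w}_1,\ldots,\mathbf{w}_{\lambda_1})\in(\mathcal{T}_m^n/Tor_2(\mathcal{C}_2))^{\lambda_1}$ via the lifts $\tilde{\mathbf{v}}_i=\mathbf{v}_i+u\mathbf{w}_i.$ Specializing the self-orthogonality conditions of Lemma \ref{l2.2} to $e=2$ and using the identities $u^2=2=0$ in $\mathcal{R}_{2,m},$ the constraints that remain after the first count reduce to the linear system
\[B_m(\mathbf{v}_i,\mathbf{w}_j)+B_m(\mathbf{v}_j,\mathbf{w}_i)=h^{(1)}_{i,j}\qquad(1\le i<j\le\lambda_1),\]
where $h^{(1)}_{i,j}\in\mathcal{T}_m$ is defined by $\mathbf{v}_i\cdot\mathbf{v}_j\equiv uh^{(1)}_{i,j}\pmod{u^2}.$ The diagonal cases $i=j$ are automatic in characteristic two: $\mathbf{v}_i\cdot\mathbf{v}_i=\bigl(\sum_k v_{i,k}\bigr)^2$ in $\mathcal{R}_{2,m},$ and the hypothesis $B_m(\mathbf{v}_i,\mathbf{v}_i)=0$ forces $\sum_k v_{i,k}\in\langle u\rangle,$ whence $\mathbf{v}_i\cdot\mathbf{v}_i=0.$

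To count the solutions of this system, I introduce the functionals $\bar{L}_k\colon\mathcal{T}_m^n/Tor_2(\mathcal{C}_2)\to\mathcal{T}_m,$ $\bar{L}_k(\mathbf{w})=B_m(\mathbf{v}_k,\mathbf{w});$ these are well-defined because $\mathbf{v}_k\in\mathcal{D}^{(1)}\subseteq Tor_2(\mathcal{C}_2)^{\perp_{B_m}},$ the latter containment obtained by dualizing the chain of the first count. Since $\mathbf{v}_1,\ldots,\mathbf{v}_{\lambda_1}$ remain linearly independent inside $Tor_2(\mathcal{C}_2)^{\perp_{B_m}},$ and the non-degenerate form $B_m$ induces the duality $(\mathcal{T}_m^n/Tor_2(\mathcal{C}_2))^{*}\cong Tor_2(\mathcal{C}_2)^{\perp_{B_m}},$ the functionals $\bar{L}_1,\ldots,\bar{L}_{\lambda_1}$ are linearly independent. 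Solving inductively: at step $k$ (with $\mathbf{w}_1,\ldots,\mathbf{w}_{k-1}$ already fixed) the conditions on $\mathbf{w}_k$ read $\bar{L}_j(\mathbf{w}_k)=h^{(1)}_{j,k}-\bar{L}_k(\mathbf{w}_j)$ for $j<k,$ which are $k-1$ independent linear conditions on $\mathbf{w}_k\in\mathcal{T}_m^n/Tor_2(\mathcal{C}_2)$ (of dimension $n-\Lambda_2$), giving $(2^m)^{n-\Lambda_2-(k-1)}$ choices; multiplying over $k=1,\ldots,\lambda_1$ produces $(2^m)^{\lambda_1(n-\Lambda_2)-\lambda_1(\lambda_1-1)/2}$ admissible lifts, and multiplication with the first count delivers the formula of (b). The main obstacle I foresee is precisely the linear-independence claim for $\bar{L}_1,\ldots,\bar{L}_{\lambda_1}$ underlying this inductive count, which will be handled by unpacking the non-degeneracy of $B_m$ together with the containment $\mathcal{D}^{(1)}\subseteq Tor_2(\mathcal{C}_2)^{\perp_{B_m}}.$
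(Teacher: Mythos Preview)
Your proof is correct and follows essentially the same strategy as the paper's (deferred) argument: stratify by $(Tor_1,Tor_2)$, use the containment $\mathcal{D}^{(1)}\subseteq Tor_2\subseteq(\mathcal{D}^{(1)})^{\perp_{B_m}}$ to count the Gaussian-binomial factor, and then solve a linear system in the lift parameters $\mathbf{w}_i$ to produce the power of $2^m$. Your key simplifications specific to $e=\kappa=2$---that $2=0$ in $\mathcal{R}_{2,m}$ kills the cross terms $2u(\mathbf{v}_i\cdot\mathbf{w}_i)$, and that $\mathbf{v}_i\cdot\mathbf{v}_i=(\sum_k v_{i,k})^2=0$ automatically---are exactly the reasons this case is cleaner than Lemma~\ref{R=2} and requires no case split on whether $\mathbf{1}\in\mathcal{D}^{(s-\kappa_1)}$.

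The one stylistic difference is that the paper (through its references to \cite{Galois,quasi}) works in coordinates via the standard-form generator matrix \eqref{e1.1}, whereas you work coordinate-free in the quotient $\mathcal{T}_m^n/Tor_2$ with the functionals $\bar{L}_k$. Your identification $(\mathcal{T}_m^n/Tor_2)^*\cong Tor_2^{\perp_{B_m}}$ is the clean replacement for the paper's explicit row-rank computation of the coefficient matrix (cf.\ the matrix $\mathcal{N}$ in the proof of Lemma~\ref{R=2}); it gives the independence of $\bar{L}_1,\dots,\bar{L}_{\lambda_1}$ immediately and avoids the permutation-to-standard-form step. One small point you could make more explicit is why every $\mathcal{C}_2$ with prescribed $Tor_1,Tor_2$ is recovered by your construction $\langle\tilde{\mathbf v}_i\rangle+u\cdot Tor_2$: this follows since both codes have the same type and one contains the other, but it deserves a sentence.
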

 \begin{proof} Working as in Theorem 4.1 of Yadav and Sharma \cite{Galois} and by applying Lemma 4.1 of Yadav and Sharma \cite{quasi},  the desired result follows immediately. 
     \end{proof}

In the following proposition, we assume that  $e$ and $ \kappa $ are even integers satisfying $2 \leq \kappa \leq e - 1,$ and we consider two nested self-orthogonal codes $ \mathcal{D}^{(s)} \supseteq \mathcal{D}^{(s - \kappa_1)} $ of length $ n $ over $\mathcal{T}_m $, with dimensions $\dim \mathcal{D}^{(s)} = \Lambda_s $ and $\dim \mathcal{D}^{(s - \kappa_1)} = \Lambda_{s - \kappa_1} .$ Under these assumptions, we establish the existence of a self-orthogonal code $ \mathcal{C}_2 $ of type $ \{ \Lambda_s, \lambda_{s+1} \} $ and length $ n $ over $ \mathcal{R}_{2,m} ,$ satisfying property $ (\mathfrak{P})$ and  $\operatorname{Tor}_1(\mathcal{C}_2) = \mathcal{D}^{(s)}.$ Furthermore, we  explicitly count  the choices of such codes $ \mathcal{C}_2$ that can be obtained from a given pair $ (\mathcal{D}^{(s)}, \mathcal{D}^{(s - \kappa_1)}) $ of nested self-orthogonal codes over $ \mathcal{T}_m .$

\begin{proposition}\label{p3.1}
  Let $e$ and $\kappa$ be  even integers satisfying $2 \leq \kappa \leq e-1.$  Let $\mathcal{D}^{(s)}\supseteq \mathcal{D}^{(s-\kappa_1)}$ be two self-orthogonal codes of length $n$  over $\mathcal{T}_m,$ where $\dim \mathcal{D}^{(s)}=\Lambda_s$ and   $\dim \mathcal{D}^{(s-\kappa_1)}=\Lambda_{s-\kappa_1}.$ 
  %Further, when $\mathbf{1} \in \mathcal{D}^{(s-\kappa_1)},$ let us suppose that   
 \begin{enumerate}\vspace{-2mm}\item[(a)]  There exists a self-orthogonal code $\mathcal{C}_{2}$ of type $\{\Lambda_{s},\lambda_{s+1}\}$ and length $n$ over $\mathcal{R}_{2,m}$ satisfying  property  $(\mathfrak{P})$  and $Tor_1(\mathcal{C}_2)=\mathcal{D}^{(s)},$ where  such a code exists: \vspace{-1mm}\begin{itemize}
     \item unconditionally, if $\textbf{1}\notin \mathcal{D}^{(s-\kappa_1)};$
 \item if and only if either (i)  $n \equiv 0 , 4 \pmod{8}$, or (ii) $n \equiv 2,6 \pmod{8},$ $\kappa = 2$ and $(\eta_0)^{3/2} = \eta_1$, or (iii) $n \equiv 2,6 \pmod{8},$  $\kappa \geq 4$ and $\eta_1 = 0,$
 when $\textbf{1}\in\mathcal{D}^{(s-\kappa_1)}.$ \end{itemize}  
\vspace{-2mm}\item[(b)] Furthermore,  each such pair   $(\mathcal{D}^{(s-\kappa_1)},\mathcal{D}^{(s)})$ of codes over $\mathcal{T}_m$ gives rise to precisely\vspace{-1.5mm}\begin{equation*}\vspace{-1.5mm}
\displaystyle (2^m)^{\sum\limits_{i=3}^{s+1}\lambda_i\Lambda_{i-2}+\Lambda_{s}(n-\Lambda_{s+1})- \Lambda_{s-1}-\Lambda_{s-\kappa_1}-\frac{\Lambda_{s}(\Lambda_{s}-1)}{2}+\omega_2}{\lambda_{s+1}+n-\Lambda_{s+1}-\Lambda_s \brack \lambda_{s+1}}_{2^m}
\end{equation*} distinct   self-orthogonal codes $\mathcal{C}_{2}$ 
of  type $\{\Lambda_{s},\lambda_{s+1}\}$ and length $n$ over $\mathcal{R}_{2,m}$ satisfying  property $(\mathfrak{P})$  and $Tor_1(\mathcal{C}_2)=\mathcal{D}^{(s)},$    where $\omega_2 =1$ if   $\textbf{1} \in \mathcal{D}^{(s-\kappa_1)},$     while $\omega_2 =0$ otherwise.
 \end{enumerate} \end{proposition}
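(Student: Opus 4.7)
The plan is to parameterize a candidate code $\mathcal{C}_2$, translate the self-orthogonality and property $(\mathfrak{P})$ requirements into a matrix-equation system in a single unknown block $[X]_s$ over $\mathcal{T}_m$, and then invoke Lemma \ref{R=2} to count solutions; the Gaussian binomial factor then appears separately from the choice of the last block-row of the generator matrix.

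First, using Remark \ref{r3.1}, I would fix a standard-form generator matrix $[T^{(0)}]_s$ of $\mathcal{D}^{(s)}$ arranged so that its first $\Lambda_{s-\kappa_1}$ rows generate $\mathcal{D}^{(s-\kappa_1)}$ (possible since $\mathcal{D}^{(s-\kappa_1)}\subseteq \mathcal{D}^{(s)}$). Any lift $\mathcal{C}_2$ of type $\{\Lambda_s,\lambda_{s+1}\}$ with $Tor_1(\mathcal{C}_2)=\mathcal{D}^{(s)}$ admits, up to permutation equivalence, a generator matrix
\[
\mathcal{G}_2=\begin{bmatrix}[T^{(0)}]_s+u[X]_s\\[2pt] uT_{s+1}^{(2)}\end{bmatrix},
\]
where $[X]_s\in \mathcal{M}_{\Lambda_s\times n}(\mathcal{T}_m)$ is unrestricted and $T_{s+1}^{(2)}\in \mathcal{M}_{\lambda_{s+1}\times n}(\mathcal{T}_m)$ has rows linearly independent modulo $\mathcal{D}^{(s)}$.

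Next, I would expand the conditions on $\mathcal{G}_2$. Writing $[T^{(0)}]_s[T^{(0)}]_s^t \equiv u\mathtt{H}^{(1)}+u^2\mathtt{H}^{(2)}+\cdots+u^{\kappa+1}\mathtt{H}^{(\kappa+1)}\pmod{u^{\kappa+2}}$ over $\mathcal{R}_{e,m}$, the self-orthogonality of $\mathcal{D}^{(s)}$ together with the identity $\pi_{2j+1}(\mathbf{v}\cdot\mathbf{v})=0$ for $\mathbf{v}\in\mathcal{D}^{(s)}$ and $0\leq j\leq \kappa_1-1$ forces $\mathtt{H}^{(1)},\mathtt{H}^{(3)},\ldots,\mathtt{H}^{(\kappa-1)}\in Alt_{\Lambda_s}(\mathcal{T}_m)$ and all $\mathtt{H}^{(i)}\in Sym_{\Lambda_s}(\mathcal{T}_m)$. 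By Lemma \ref{l2.2}, the off-diagonal part of $T_1^{(2)}(T_1^{(2)})^t\equiv 0\pmod{u^2}$ becomes
\[
[T^{(0)}]_s[X]_s^t+[X]_s[T^{(0)}]_s^t\equiv \mathtt{H}^{(1)}\pmod{u},
\]
while $T_1^{(2)}(T_{s+1}^{(2)})^t\equiv 0\pmod{u}$ is equivalent to the rows of $T_{s+1}^{(2)}$ lying in $\mathcal{D}^{(s)\perp_{B_m}}$. Expanding the diagonal constraints of property $(\mathfrak{P})$ in its relevant cases (case (iii) when $\kappa=2$, case (i) when $\kappa\geq 4$), using $(\mathbf{a}_i+u\mathbf{x}_i)\cdot(\mathbf{a}_i+u\mathbf{x}_i)$ over $\mathcal{R}_{e,m}$, the identity $\mathbf{x}\cdot\mathbf{x}=(\mathbf{1}\cdot\mathbf{x})^2$ in $\mathcal{T}_m$, and the relation $2\equiv u^{\kappa}(\eta_0+u\eta_1+\cdots)\pmod{u^e}$, the $u^2$- and $u^{\kappa+1}$-coefficient equations collapse to
\[
[X]_{s-1}[X]_{s-1}^t\equiv \mathtt{F}^{(2)}\pmod{u},\qquad \eta_0[T^{(0)}]_{s-\kappa_1}[X]_{s-\kappa_1}^t\equiv \mathtt{F}^{(\kappa+1)}\pmod{u},
\]
on the diagonals, where $\mathtt{F}^{(2)}$ and $\mathtt{F}^{(\kappa+1)}$ are the row-truncations of $\mathtt{H}^{(2)}$ and $\mathtt{H}^{(\kappa+1)}$ prescribed in Lemma \ref{R=2}. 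Thus the conditions on $[X]_s$ match exactly the system \eqref{el1.0}.

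Finally, the dichotomy $\mathbf{1}\in\mathcal{D}^{(s-\kappa_1)}$ vs.\ $\mathbf{1}\notin\mathcal{D}^{(s-\kappa_1)}$ is identical to the one in Lemma \ref{R=2}, since the $\mathcal{T}_m$-row-span of $[T^{(0)}]_{s-\kappa_1}$ is $\mathcal{D}^{(s-\kappa_1)}$; applying that lemma immediately gives part (a). For part (b), Lemma \ref{R=2} supplies the factor $(2^m)^{\sum_{i=3}^{s+1}\lambda_i\Lambda_{i-2}+\Lambda_s(n-\Lambda_{s+1})-\Lambda_{s-1}-\Lambda_{s-\kappa_1}-\Lambda_s(\Lambda_s-1)/2+\omega_2}$ counting valid blocks $[X]_s$. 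The independent freedom in $T_{s+1}^{(2)}$ reduces to selecting a $\lambda_{s+1}$-dimensional subspace of $\mathcal{D}^{(s)\perp_{B_m}}/\mathcal{D}^{(s)}$, a $\mathcal{T}_m$-space of dimension $n-2\Lambda_s=\lambda_{s+1}+(n-\Lambda_{s+1}-\Lambda_s)$, producing the Gaussian binomial $\qbin{\lambda_{s+1}+n-\Lambda_{s+1}-\Lambda_s}{\lambda_{s+1}}{2^m}$. The two counts multiply to the stated formula. The main obstacle is the second step: verifying carefully, in each case of property $(\mathfrak{P})$, that higher-order cross-terms from $(\mathbf{a}_i+u\mathbf{x}_i)\cdot(\mathbf{a}_i+u\mathbf{x}_i)$ do not introduce extra constraints beyond the two diagonal equations of \eqref{el1.0}, and that the coefficient shifts induced by $2\equiv u^{\kappa}(\eta_0+u\eta_1+\cdots)$ produce exactly the factor $\eta_0$ in the third equation.
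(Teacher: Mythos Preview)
Your proposal is correct and follows essentially the same route as the paper, whose proof is the one-liner ``Working as in Proposition 4.1 of Yadav and Sharma \cite{Galois} and by applying Lemma \ref{R=2}, the desired result follows.'' You have simply unpacked what that means: parameterize $\mathcal{C}_2$ by $[X]_s$ and $T_{s+1}^{(2)}$, reduce self-orthogonality plus property $(\mathfrak{P})$ to system \eqref{el1.0}, invoke Lemma \ref{R=2}, and count $T_{s+1}^{(2)}$ separately via the quotient $\mathcal{D}^{(s)\perp_{B_m}}/\mathcal{D}^{(s)}$. One small imprecision: $[X]_s$ is not ``unrestricted'' but must have the block-zero pattern displayed in Lemma \ref{R=2} (coming from the standard form \eqref{e1.1}); since you then invoke that lemma verbatim, the count is unaffected. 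Your flagged ``main obstacle'' is real but routine: the intermediate odd-index diagonal constraints $\pi_{2j+1}$ for $1\le j<\kappa_1$ vanish automatically because $\mathtt{H}^{(2j+1)}_{ii}=0$ and $\pi_{2j-1}(\mathbf{x}_i\cdot\mathbf{x}_i)=0$ for $\mathbf{x}_i\in\mathcal{T}_m^n$, while the $2u\,\mathbf{a}_i\cdot\mathbf{x}_i$ term contributes only from degree $\kappa+1$ onward, producing exactly the $\eta_0$ factor in the third equation of \eqref{el1.0}.
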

\begin{proof} Working as in Proposition 4.1 of Yadav and Sharma \cite{Galois} and by applying Lemma \ref{R=2}, the desired result follows. \vspace{-2mm}\end{proof}

Now, in the following proposition,  we assume that $e=3$  and $\kappa=2,$  and  we consider two nested self-orthogonal codes $\mathcal{D}^{(2)}\supseteq \mathcal{D}^{(1)}$   of length $n$ over $\mathcal{T}_m,$ where $\dim \mathcal{D}^{(2)}=\Lambda_{2}$ and $\dim \mathcal{D}^{(1)}=\Lambda_1.$    
Here, we establish the existence of a self-orthogonal code  $\mathcal{C}_3$ of  type $\{\lambda_{1},\lambda_{2},\lambda_{3}\}$ and length $n$ over $\mathcal{R}_{3,m},$ such that   $Tor_1(\mathcal{C}_{3}) =\mathcal{D}^{(1)}$ and $Tor_2(\mathcal{C}_{3}) =\mathcal{D}^{(2)}.$   We also count the choices for such a code $\mathcal{C}_3$ obtained from a given  pair $(\mathcal{D}^{(2)},\mathcal{D}^{(1)})$ of nested self-orthogonal codes over $\mathcal{T}_m.$  
   \begin{proposition}\label{p3.3be=3} Let $e=3$  and $\kappa=  2.$  Let $\mathcal{D}^{(2)}\supseteq \mathcal{D}^{(1)}$ be two nested  self-orthogonal codes of length $n$   over $\mathcal{T}_{m},$  where $\dim \mathcal{D}^{(2)}=\Lambda_{2}$ and  $ \mathcal{D}^{(1)}=\Lambda_1.$   The following hold.
 \begin{enumerate}\vspace{-2mm}\item[(a)]  There exists a  self-orthogonal code $\mathcal{C}_{3}$ of  type $\{\lambda_{1},\lambda_{2},\lambda_{3}\}$ and length $n$ over $\mathcal{R}_{3,m},$ such that  $Tor_1(\mathcal{C}_3)=\mathcal{D}^{(1)}$   and  $Tor_2(\mathcal{C}_3)=\mathcal{D}^{(2)}.$ 
\item[(b)] Furthermore,  each such pair  $(\mathcal{D}^{(2)},\mathcal{D}^{(1)})$ of codes over $\mathcal{T}_m$ gives rise to precisely 	\vspace{-1.5mm}\begin{equation*}\vspace{-1.5mm}
\displaystyle (2^m)^{\lambda_3\lambda_1+(\Lambda_1+\Lambda_{2})(n-\Lambda_{3}-\Lambda_1)+\Lambda_1^2}{\lambda_{3}+n-\Lambda_{3}-\Lambda_1 \brack \lambda_{3}}_{2^m}\vspace{-2mm}
\end{equation*} distinct   self-orthogonal codes $\mathcal{C}_{3}$ of  type $\{\lambda_{1},\lambda_{2},\lambda_{3}\}$ and length $n$ over $\mathcal{R}_{3,m}$  satisfying $Tor_1(\mathcal{C}_3)=\mathcal{D}^{(1)}$   and $Tor_2(\mathcal{C}_3)=\mathcal{D}^{(2)}.$   
 \end{enumerate} \end{proposition}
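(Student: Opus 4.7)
The plan is to parallel Proposition \ref{pe=k=2} and Proposition \ref{p3.1} by lifting the nested pair $(\mathcal{D}^{(1)}, \mathcal{D}^{(2)})$ to a generator matrix of $\mathcal{C}_3$ in standard form, imposing the four self-orthogonality conditions of Lemma \ref{l2.2}, and counting the resulting lifts. Using Remark \ref{r3.1} (applied at $s+\theta_e=2$), I would first fix a generator matrix of $\mathcal{D}^{(2)}$ of the form
\[
\begin{bmatrix} T_1^{(0)} \\ T_2^{(0)} \end{bmatrix}
= \begin{bmatrix} \mathtt{I}_{\lambda_1} & \mathtt{A}_{1,1}^{(0)} & \mathtt{A}_{1,2}^{(0)} & \mathtt{A}_{1,3}^{(0)} \\ \mathbf{0} & \mathtt{I}_{\lambda_2} & \mathtt{A}_{2,2}^{(0)} & \mathtt{A}_{2,3}^{(0)} \end{bmatrix}
\]
whose first $\lambda_1$ rows generate $\mathcal{D}^{(1)}$, and lift it to a candidate generator matrix
\[
G = \begin{bmatrix} T_1^{(0)} + u\mathtt{U}_1^{(1)} + u^2\mathtt{U}_1^{(2)} \\ u(T_2^{(0)} + u\mathtt{U}_2^{(1)}) \\ u^2 T_3^{(0)} \end{bmatrix}
\]
of $\mathcal{C}_3$ in the standard form of \eqref{e1.1}, with lift matrices $\mathtt{U}_1^{(1)}, \mathtt{U}_1^{(2)} \in \mathcal{M}_{\lambda_1 \times n}(\mathcal{T}_m)$ and $\mathtt{U}_2^{(1)} \in \mathcal{M}_{\lambda_2 \times n}(\mathcal{T}_m)$ vanishing on the identity-block columns, and $T_3^{(0)} = [\mathbf{0}\ \mathbf{0}\ \mathtt{I}_{\lambda_3}\ \mathtt{A}_{3,3}^{(0)}]$ with $\mathtt{A}_{3,3}^{(0)} \in \mathcal{M}_{\lambda_3 \times (n-\Lambda_3)}(\mathcal{T}_m)$ to be determined. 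By construction, the torsion conditions $Tor_1(\mathcal{C}_3) = \mathcal{D}^{(1)}$ and $Tor_2(\mathcal{C}_3) = \mathcal{D}^{(2)}$ hold automatically.

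Next I would translate the four conditions of Lemma \ref{l2.2}. The two conditions modulo $u$, namely $T_2^{(3)}T_2^{(3)t} \equiv 0$ and $T_1^{(3)}T_3^{(3)t} \equiv 0$, reduce respectively to an automatic equation (self-orthogonality of $\mathcal{D}^{(2)}$) and to the requirement that the rows of $T_3^{(0)}$ span a $\lambda_3$-dimensional complement of $\mathcal{D}^{(2)}$ inside $\mathcal{D}^{(1)\perp_{B_m}}$. Since $\dim\bigl(\mathcal{D}^{(1)\perp_{B_m}}/\mathcal{D}^{(2)}\bigr) = (n-\lambda_1) - \Lambda_2 = n - 2\lambda_1 - \lambda_2$, the number of such subspaces is exactly ${n - 2\lambda_1 - \lambda_2 \brack \lambda_3}_{2^m} = {\lambda_3 + n - \Lambda_3 - \Lambda_1 \brack \lambda_3}_{2^m}$, matching the Gaussian binomial in the claim. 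Writing $[T^{(0)}]_2 [T^{(0)}]_2^t \equiv u\mathtt{H}^{(1)} + u^2\mathtt{H}^{(2)} \pmod{u^3}$, the remaining two conditions $T_1^{(3)}T_2^{(3)t} \equiv 0 \pmod{u^2}$ and $T_1^{(3)}T_1^{(3)t} \equiv 0 \pmod{u^3}$ expand into a cascade of $\mathcal{T}_m$-linear matrix equations in the entries of $\mathtt{U}_1^{(1)}, \mathtt{U}_1^{(2)}, \mathtt{U}_2^{(1)}$. Proceeding as in Lemma \ref{R=2} and the proof of Proposition 4.1 of Yadav and Sharma \cite{Galois}, and exploiting the pivot structure $[\mathtt{I}_{\lambda_1}\ \ast]$ of $T_1^{(0)}$ together with $[\mathbf{0}\ \mathtt{I}_{\lambda_2}\ \ast]$ of $T_2^{(0)}$ to solve the off-diagonal entries, one computes that the $\mathcal{T}_m$-dimension of the solution space equals $\lambda_3\lambda_1 + (\Lambda_1 + \Lambda_2)(n - \Lambda_3 - \Lambda_1) + \Lambda_1^2$, yielding the stated power of $2^m$.

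The main obstacle will be verifying unconditional solvability of the lift-cascade, in contrast to Proposition \ref{p3.1}, where a parity obstruction on $n$ appears when $\mathbf{1} \in \mathcal{D}^{(s-\kappa_1)}$. The saving feature here is that $e = 3$ is odd with $\kappa = 2$: because $2 \equiv u^2 \eta_0 \pmod{u^3}$, the diagonal of $T_1^{(0)}\mathtt{U}_1^{(1)t} + \mathtt{U}_1^{(1)} T_1^{(0)t}$ vanishes modulo $u$ identically, so the $u$-coefficient equation of $T_1^{(3)}T_1^{(3)t}$ imposes only an alternating (off-diagonal) constraint; moreover, the free matrix $\mathtt{U}_1^{(2)}$, which is only constrained at the $u^2$-level, absorbs the residual $u^2$-coefficient equation without any parity condition on $n$. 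The remaining step is the careful rank count of the resulting linear system, which I would model on Lemma 4.1 of \cite{quasi}, after which multiplying by the Gaussian binomial gives the claimed enumeration formula.
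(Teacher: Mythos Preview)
Your proposal is correct and follows essentially the same route as the paper: the paper's proof is a one-line deferral to Proposition 3.2 of \cite{YSub2} together with Lemma 4.1 of \cite{quasi}, and your explicit lift-and-count via the standard form \eqref{e1.1}, Lemma \ref{l2.2}, and Lemma 4.1 of \cite{quasi} unpacks exactly that machinery. One small caution on your heuristic for unconditional solvability: $\mathtt{U}_1^{(2)}$ contributes only an alternating matrix at the $u^2$-level (its diagonal vanishes modulo $u$ for the same reason you note for $\mathtt{U}_1^{(1)}$), so it cannot absorb the diagonal of the $u^2$-equation; rather, that diagonal is handled by the $\mathtt{U}_1^{(1)}\mathtt{U}_1^{(1)t}$ term, and the absence of a parity obstruction is ultimately because $s-\kappa_1=0$ here, which is exactly what Lemma 4.1 of \cite{quasi} delivers once you invoke it.
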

 \begin{proof} Working as in Proposition 3.2 of Yadav and Sharma \cite{YSub2} and by applying Lemma 4.1 of Yadav and Sharma \cite{quasi},  the desired result follows immediately. 
     \end{proof}

Now, in the following proposition,  we assume that $e\geq 5$ is an odd integer and $\kappa=2,$  and  we consider a chain $\mathcal{D}^{(s+1)}\supseteq \mathcal{D}^{(s)}\supseteq  \mathcal{D}^{(s-1)}$ of  self-orthogonal codes of length $n$ over $\mathcal{T}_m,$ where $\dim \mathcal{D}^{(s+1)}=\Lambda_{s+1},$ $\dim \mathcal{D}^{(s)}=\Lambda_s$ and $\dim \mathcal{D}^{(s-1)}=\Lambda_{s-1}.$  
Here, we establish the existence of a self-orthogonal code  $\mathcal{C}_3$ of  type $\{\Lambda_{s},\lambda_{s+1},\lambda_{s+2}\}$ and length $n$ over $\mathcal{R}_{3,m}$ satisfying  property $(\mathfrak{P})$ and   $Tor_1(\mathcal{C}_{3}) =\mathcal{D}^{(s)}$ and $Tor_2(\mathcal{C}_{3}) =\mathcal{D}^{(s+1)}.$   We also count the choices for such a code $\mathcal{C}_3$ obtained from a given  triplet $(\mathcal{D}^{(s+1)},\mathcal{D}^{(s)},\mathcal{D}^{(s-1)})$ of codes over $\mathcal{T}_m.$  
   \begin{proposition}\label{p3.3b} Let $e\geq 5$ be an odd integer, and let $\kappa=  2.$  Let $\mathcal{D}^{(s+1)}\supseteq \mathcal{D}^{(s)}\supseteq \mathcal{D}^{(s-1)}$ be a chain of  self-orthogonal codes of length $n$   over $\mathcal{T}_{m},$  where $\dim \mathcal{D}^{(s+1)}=\Lambda_{s+1},$ $ \mathcal{D}^{(s)}=\Lambda_s$ and $\mathcal{D}^{(s-1)}=\Lambda_{s-1}.$  
   % Further, when  $\textbf{1}\in \mathcal{D}^{(s-1)},$  we  assume  that   either $n\equiv 0 \pmod 8$  or $n\equiv 4\pmod 8$  and $m$ is even.  
   The following hold.
 \begin{enumerate}\vspace{-2mm}\item[(a)]  There exists a  self-orthogonal code $\mathcal{C}_{3}$ of type $\{\Lambda_{s},\lambda_{s+1},\lambda_{s+2}\}$ and length $n$ over $\mathcal{R}_{3,m}$  satisfying  property $(\mathfrak{P})$ and  $Tor_1(\mathcal{C}_3)=\mathcal{D}^{(s)}$   and  $Tor_2(\mathcal{C}_3)=\mathcal{D}^{(s+1)},$ where such a code  exists: \vspace{-1mm}\begin{itemize}
     \item unconditionally, if $\textbf{1}\notin \mathcal{D}^{(s-1)};$
 \item if and only if either   $n \equiv 0  \pmod{8}$, or $n \equiv 4 \pmod{8}$ with $m$ even, 
 when $\textbf{1}\in\mathcal{D}^{(s-1)}.$ \end{itemize}
\item[(b)] Furthermore,  each such triplet $(\mathcal{D}^{(s+1)},\mathcal{D}^{(s)}, \mathcal{D}^{(s-1)} )$ of codes over $\mathcal{T}_m$ gives rise to precisely 	\vspace{-1mm}\begin{equation*}
\displaystyle 2^{\epsilon}(2^m)^{\sum\limits_{i=3}^{s+2}\lambda_i\Lambda_{i-2}+\sum\limits_{j=4}^{s+2}\lambda_j\Lambda_{j-3}+(\Lambda_s+\Lambda_{s+1})(n-\Lambda_{s+2}-\Lambda_s)+\Lambda_s^2-2\Lambda_{s-1}+\epsilon}{\lambda_{s+2}+n-\Lambda_{s+2}-\Lambda_s \brack \lambda_{s+2}}_{2^m}\vspace{-2mm}
\end{equation*} distinct   self-orthogonal codes $\mathcal{C}_{3}$ of type $\{\Lambda_{s},\lambda_{s+1},\lambda_{s+2}\}$ and length $n$ over $\mathcal{R}_{3,m}$ satisfying  property $(\mathfrak{P})$ and $Tor_1(\mathcal{C}_3)=\mathcal{D}^{(s)}$   and $Tor_2(\mathcal{C}_3)=\mathcal{D}^{(s+1)},$    where $\epsilon=1$ if $\textbf{1}\in \mathcal{D}^{(s-1)},$ while $\epsilon=0$ otherwise.
 \end{enumerate} \end{proposition}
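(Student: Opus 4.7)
My plan is to follow the scheme used in Proposition 3.2 of Yadav--Sharma \cite{YSub2} (which treats the analogous lifting problem when $\kappa$ is odd), adapting it to the present even-$\kappa=2$ setting by replacing the key solvability ingredient with Lemma \ref{R=2} above, together with Lemma 4.1 of \cite{quasi}. I begin by writing down the generic generator matrix $\mathcal{G}_3$ in standard form for a code $\mathcal{C}_3$ of type $\{\Lambda_s,\lambda_{s+1},\lambda_{s+2}\}$ over $\mathcal{R}_{3,m}$, expanding the three row-blocks as $T_1^{(3)} = T_1^{(0)} + u T_1^{(1)} + u^2 T_1^{(2)}$, $T_2^{(3)} = T_2^{(0)} + u T_2^{(1)}$, and $T_3^{(3)} = T_3^{(0)}$. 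The conditions $Tor_1(\mathcal{C}_3) = \mathcal{D}^{(s)}$ and $Tor_2(\mathcal{C}_3) = \mathcal{D}^{(s+1)}$ fix $[T^{(0)}]_s$ and the additional rows $T_2^{(0)}$ through the standard-form generators of $\mathcal{D}^{(s)}$ and $\mathcal{D}^{(s+1)}$ guaranteed by Remark \ref{r3.1}; and $T_3^{(0)}$ is free to range over full-row-rank reduced-echelon blocks, contributing the Gaussian binomial factor $\qbin{\lambda_{s+2}+n-\Lambda_{s+2}-\Lambda_s}{\lambda_{s+2}}{2^m}$ to the count.

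Next I translate the self-orthogonality of $\mathcal{C}_3$ over $\mathcal{R}_{3,m}$ together with property $(\mathfrak{P})$ (case $\kappa \leq \ell \leq e-\kappa$, which applies since $2 \leq 3 \leq e-2$ whenever $e \geq 5$) into a system of matrix equations for the remaining unknowns $T_1^{(1)}, T_1^{(2)}, T_2^{(1)}$. Expanding the required inner-product vanishings via $2 \equiv u^2(\eta_0 + u\eta_1 + \cdots) \pmod{u^e}$ reduces the equations at the $u$- and $u^2$-levels to exactly the system \eqref{el1.0} of Lemma \ref{R=2} for $T_1^{(1)}$ and $T_2^{(1)}$ (with $\kappa_1 = 1$, so the distinguished row-block $[T^{(0)}]_{s-\kappa_1}$ generates $\mathcal{D}^{(s-1)}$); while the residual conditions on $T_1^{(2)}$ form an independent affine linear system directly solvable via Lemma 4.1 of \cite{quasi}. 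This immediately yields part (a) when $\textbf{1} \notin \mathcal{D}^{(s-1)}$.

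For the case $\textbf{1} \in \mathcal{D}^{(s-1)}$ the main work is extracting the precise solvability condition. Following Lemma \ref{R=2}(b), the system \eqref{el1.0} is consistent iff the single scalar relation \eqref{e3.13} holds, and this must now be tracked at the $u^{\kappa+1}=u^3$-level of the full $\mathcal{R}_{e,m}$-valued self-orthogonality (rather than at $u^{\kappa+1}$ alone). Expanding $\textbf{1} \cdot \textbf{1} = n$ using the Teichm\"uller decomposition of $n$ and $2 = u^2\eta_0 + u^3 \eta_1 + \cdots$ collapses the consistency condition to an equation of the form $\textbf{d}_1 \cdot \textbf{d}_1 \equiv c \pmod u$, where $c \in \mathcal{T}_m$ is determined by $n \bmod 8$, $\eta_0$, and $\eta_1$. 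When $n \equiv 0 \pmod 8$ one has $c = 0$, so the condition is trivially satisfied; when $n \equiv 4 \pmod 8$ the equation reduces to an Artin--Schreier-type relation $x^2 + x = 1$ over $\mathbb{F}_{2^m}$, solvable iff $\mathrm{Tr}_{\mathbb{F}_{2^m}/\mathbb{F}_2}(1)=0$, i.e., iff $m$ is even; and when $n \equiv 2,6 \pmod 8$ the $\eta_1$-term (which cannot vanish because $\kappa=2$ is the exact $2$-adic valuation) makes the relation inconsistent. This produces exactly the stated trichotomy.

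The counting assertion in part (b) then follows by multiplying the solution-space dimensions from Lemma \ref{R=2} and Lemma 4.1 of \cite{quasi} with the Gaussian binomial from $T_3^{(0)}$: the free parameters in $T_1^{(2)}$ and $T_2^{(1)}$, constrained only by orthogonality with the already-chosen $T_1^{(0)}$ and $T_2^{(0)}$, account for the exponents $\sum_{i=3}^{s+2}\lambda_i \Lambda_{i-2}$ and $\sum_{j=4}^{s+2}\lambda_j \Lambda_{j-3}$; the $u$-level count from Lemma \ref{R=2} contributes the remaining combinatorial terms $(\Lambda_s+\Lambda_{s+1})(n-\Lambda_{s+2}-\Lambda_s)+\Lambda_s^2 - 2\Lambda_{s-1}$; and the prefactor $2^{\epsilon}$ records the single-bit excess of solutions arising precisely when \eqref{e3.13} becomes redundant (i.e., when $\textbf{1} \in \mathcal{D}^{(s-1)}$). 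The principal obstacle is the third paragraph's bookkeeping: tracking the trace-level equation that produces the $m$-even dichotomy requires a careful expansion of $n$ through three successive powers of $u$ and interacts subtly with the Teichm\"uller representatives, and this is the genuinely new ingredient beyond the $e=3$, $\kappa=2$ case treated in Proposition \ref{p3.3be=3}.
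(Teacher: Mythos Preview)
The paper's own proof is a single sentence deferring to Proposition 4.2 of \cite{Galois}; that reference already contains the full machinery for the odd-$e$ lift, including the doubly-even diagonal analysis that produces both the $m$-parity condition and the extra factor $2^{\epsilon}$. Your route via Proposition 3.2 of \cite{YSub2} together with Lemma \ref{R=2} is a different packaging of essentially the same linear-algebra reduction, but two steps in your write-up need repair.

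First, Lemma \ref{R=2} is stated and proved only for \emph{even} $e$; its system \eqref{el1.0} and the dichotomy in part (b) are tailored to the $\ell=2$ lift, where property $(\mathfrak{P})$ pins the diagonal only modulo $u^{4}$. Here $e$ is odd and $\ell=3$, and property $(\mathfrak{P})$ (case (iii), with $\gamma_3=s-1$, $i=\kappa=2$) demands $\mathcal{D}iag\big([T^{(3)}]_{s-1}[T^{(3)}]_{s-1}^{t}\big)\equiv\mathbf{0}\pmod{u^{5}}$. This extra level is exactly what turns the $n\equiv 4\pmod 8$ case from ``always solvable'' (as in Lemma \ref{R=2}(b)) into the Artin--Schreier obstruction you describe, and it is also what yields the genuine factor $2^{\epsilon}$ rather than the $(2^{m})^{\omega}$ of Lemma \ref{R=2}. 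So you cannot say the equations reduce ``exactly'' to \eqref{el1.0}; you must write down the $\ell=3$ analogue with its additional $u^{4}$-level diagonal constraint, and this is precisely the system handled in \cite{Galois}.

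Second, your explanation for why $n\equiv 2,6\pmod 8$ is excluded is wrong: $\kappa=2$ forces only $\eta_{0}\neq 0$, not $\eta_{1}\neq 0$. The paper's own Example \ref{EX1} has $2\equiv u^{2}+u^{4}\pmod{u^{5}}$, so $\eta_{1}=0$. The actual obstruction for $n\equiv 2,6\pmod 8$ lies in the $\pmod{u^{5}}$ diagonal analysis: once one tracks $\mathbf{1}\cdot\mathbf{1}=n$ through the Teichm\"uller carries up to $u^{4}$, the $u^{2}\eta_{0}$ contribution of $n$ forces an inconsistency that cannot be absorbed by any choice of $[\mathtt{U}^{(1)}]_{s-1}$ or $[\mathtt{U}^{(2)}]_{s-1}$, independently of $\eta_{1}$. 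That computation is the content of the cited argument from \cite{Galois}, and your outline does not yet supply it.
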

 \begin{proof} Working as in Proposition 4.2 of Yadav and Sharma  \cite{Galois},  the desired result follows immediately. 
     \end{proof}  

 In the following proposition,  we  assume that  $e \geq 5$ is an odd integer and $\kappa$ is an even integer satisfying $4\leq \kappa \leq e-1,$  and  we consider a chain $\mathcal{D}^{(s+1)}\supseteq \mathcal{D}^{(s)}\supseteq  \mathcal{D}^{(s-\kappa_1)}$  of  self-orthogonal codes of length $n$ over $\mathcal{T}_m,$ where $\dim \mathcal{D}^{(s+1)}=\Lambda_{s+1},$ $\dim \mathcal{D}^{(s)}=\Lambda_s$ and $\dim \mathcal{D}^{(s-\kappa_1)}=\Lambda_{s-\kappa_1}.$  
Here, we establish the existence of a self-orthogonal code  $\mathcal{C}_3$ of type $\{\Lambda_{s},\lambda_{s+1},\lambda_{s+2}\}$ and length $n$ over $\mathcal{R}_{3,m}$ satisfying  property $(\mathfrak{P})$ and   $Tor_1(\mathcal{C}_{3}) =\mathcal{D}^{(s)}$ and $Tor_2(\mathcal{C}_{3}) =\mathcal{D}^{(s+1)}.$   We also count the choices for such a code $\mathcal{C}_3$ obtained from a given  triplet $(\mathcal{D}^{(s+1)},\mathcal{D}^{(s)},\mathcal{D}^{(s-\kappa_1)})$ of codes over $\mathcal{T}_m.$ 
\begin{proposition}\label{p3.3c} Let $e \geq 5$ be an odd integer, and let $\kappa$ be an even integer satisfying $4 \leq \kappa\leq  e-1 .$ Let $\mathcal{D}^{(s+1)}\supseteq \mathcal{D}^{(s)}\supseteq \mathcal{D}^{(s-\kappa_1)}$ be a chain of  self-orthogonal codes of length $n$   over $\mathcal{T}_{m},$  where $\dim \mathcal{D}^{(s+1)}=\Lambda_{s+1},$ $ \mathcal{D}^{(s)}=\Lambda_s$ and $\mathcal{D}^{(s-\kappa_1)}=\Lambda_{s-\kappa_1}.$  
  % Further, when   $\textbf{1}\in \mathcal{D}^{(s-\kappa_1)},$  let us suppose that either $n\equiv 0,4 \pmod 8$  or $n\equiv 2,6\pmod 8$  and $\eta_1=0.$   
  Then the following hold.
 \begin{enumerate}\vspace{-2mm}\item[(a)]  There exists a  self-orthogonal code $\mathcal{C}_{3}$ of type $\{\Lambda_{s},\lambda_{s+1},\lambda_{s+2}\}$ and length $n$ over $\mathcal{R}_{3,m}$  satisfying  property $(\mathfrak{P})$ and  $Tor_1(\mathcal{C}_3)=\mathcal{D}^{(s)}$   and  $Tor_2(\mathcal{C}_3)=\mathcal{D}^{(s+1)},$ where such a code  exists: \vspace{-1mm}\begin{itemize}
     \item unconditionally, if $\textbf{1}\notin \mathcal{D}^{(s-\kappa_1)};$
 \item if and only if either  $n\equiv 0,4 \pmod 8$  or $n\equiv 2,6\pmod 8$  and $\eta_1=0,$ 
 when $\textbf{1}\in\mathcal{D}^{(s-\kappa_1)}.$ \end{itemize} 
\item[(b)] Furthermore,   each such triplet $(\mathcal{D}^{(s+1)},\mathcal{D}^{(s)}, \mathcal{D}^{(s-\kappa_1)} )$ of  codes over $\mathcal{T}_m$ gives rise to precisely 	\vspace{-1mm}\begin{equation*}\vspace{-1mm}
\displaystyle (2^m)^{\sum\limits_{i=3}^{s+2}\lambda_i\Lambda_{i-2}+\sum\limits_{j=4}^{s+2}\lambda_j\Lambda_{j-3}+(\Lambda_s+\Lambda_{s+1})(n-\Lambda_{s+2}-\Lambda_s)+\Lambda_s^2-\Lambda_{s-1}-\Lambda_{s-\kappa_1}+\omega_3}{\lambda_{s+2}+n-\Lambda_{s+2}-\Lambda_s \brack \lambda_{s+2}}_{2^m}
\end{equation*} distinct   self-orthogonal codes $\mathcal{C}_{3}$ of type $\{\Lambda_{s},\lambda_{s+1},\lambda_{s+2}\}$ and length $n$ over $\mathcal{R}_{3,m}$ satisfying  property $(\mathfrak{P})$ and $Tor_1(\mathcal{C}_3)=\mathcal{D}^{(s)}$   and $Tor_2(\mathcal{C}_3)=\mathcal{D}^{(s+1)},$    where $\omega_3=1$ if $\textbf{1}\in \mathcal{D}^{(s-\kappa_1)}$,  while $\omega_3=0$ otherwise.
 \end{enumerate} \end{proposition}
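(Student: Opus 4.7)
The plan is to mirror the proofs of Propositions \ref{pe=k=2}--\ref{p3.3b}, with the principal new ingredient being a direct appeal to Lemma \ref{R=2} in the regime $\kappa\geq 4$. First I would fix a standard-form generator matrix $[T^{(0)}]_{s+1}$ for $\mathcal{D}^{(s+1)}$ in the sense of Remark \ref{r3.1}, chosen so that its first $\Lambda_s$ rows generate $\mathcal{D}^{(s)}$ and its first $\Lambda_{s-\kappa_1}$ rows generate $\mathcal{D}^{(s-\kappa_1)}$; such a refinement exists by Gaussian elimination applied within the fixed nested chain. Then I would write the candidate generator matrix $\mathcal{G}_3$ of $\mathcal{C}_3$ in the shape of \eqref{e3.0} for $\ell=3$, parameterizing the higher-order coefficients $T_i^{(3)}-T_i^{(0)}$ by free matrices $\mathtt{U}_i^{(k)}\in \mathcal{M}_{\lambda_i\times n}(\mathcal{T}_m)$.

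Next I would expand the self-orthogonality condition $\mathcal{G}_3\mathcal{G}_3^{\,t}\equiv \mathbf{0}\pmod{u^3}$ together with property $(\mathfrak{P})$ in powers of $u$. The $u^0$-layer vanishes automatically from the self-orthogonality of $\mathcal{D}^{(s+1)}$; the $u^1$- and $u^2$-layers, combined with the higher-order $\mathcal{D}iag$-conditions of property $(\mathfrak{P})$ up to order $u^{\kappa+1}$, yield a system of matrix equations on the $\mathtt{U}_i^{(k)}$. After the standard Frobenius-squaring trick applied coordinate-wise to diagonal entries, the system becomes $\mathcal{T}_m$-linear. Crucially, the $u^{\kappa+1}$-layer contributes precisely the matrix $\mathtt{H}^{(\kappa+1)}$ of Lemma \ref{R=2}, with the scalar $\eta_0$ arising from the expansion $2\equiv u^{\kappa}(\eta_0+u\eta_1+\cdots)\pmod{u^e}$, and the restriction of this equation to the first $\Lambda_{s-\kappa_1}$ rows puts the whole system in the exact form \eqref{el1.0}.

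With the system in the shape \eqref{el1.0}, I would invoke Lemma \ref{R=2} directly. The two branches of that lemma furnish the two cases of part (a): if $\textbf{1}\notin \mathcal{D}^{(s-\kappa_1)}$, the system is unconditionally consistent; if $\textbf{1}\in \mathcal{D}^{(s-\kappa_1)}$, the obstruction reduces to the compatibility condition \eqref{e3.13}. Since we are in the regime $\kappa\geq 4$, only branches (i) and (iii) of Lemma \ref{R=2}(b) are realized, collapsing to the dichotomy ``$n\equiv 0,4\pmod 8$'' or ``$n\equiv 2,6\pmod 8$ with $\eta_1=0$'', which matches the statement. Part (b) then follows by multiplying the solution count supplied by Lemma \ref{R=2} by the Gaussian binomial $\qbin{\lambda_{s+2}+n-\Lambda_{s+2}-\Lambda_s}{\lambda_{s+2}}{2^m}$, which enumerates the admissible choices for the last row block $T_{s+2}^{(0)}$ of $\mathcal{G}_3$ whose row space is not constrained by any of the prescribed torsion codes.

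The hard part is the combinatorial bookkeeping in identifying the dimension of the solution space. One has to verify that, at $\ell=3$ with $\kappa\geq 4$, the $\mathcal{D}iag$-conditions defining property $(\mathfrak{P})$ in cases (i)--(iv) collapse to precisely the three equations comprising \eqref{el1.0}, and that the exponent appearing in part (b) is the correct count of $\mathcal{T}_m$-entries of the $\mathtt{U}_i^{(k)}$ that remain free after all the self-orthogonality equations are imposed. The subtlety is that the indices $s-1$ and $s-\kappa_1$ appearing inside Lemma \ref{R=2} shift nontrivially relative to the block structure of $\mathcal{G}_3$, and one has to confirm that these index shifts produce exactly the summands $\sum_{i=3}^{s+2}\lambda_i\Lambda_{i-2}+\sum_{j=4}^{s+2}\lambda_j\Lambda_{j-3}$ stated in part (b).
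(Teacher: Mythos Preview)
Your approach is essentially the same as the paper's, which likewise proves the result by combining the argument of Lemma \ref{R=2} with the $\ell=3$ lifting framework from the companion paper \cite{YSub2}. One point worth tightening: Lemma \ref{R=2} as stated is for even $e$ (where $\mathcal{G}_2$ carries a single higher layer $[X]_s$), whereas here $e$ is odd and $\mathcal{G}_3$ has two layers $[\mathtt{U}^{(1)}]_s$ and $[\mathtt{U}^{(2)}]_s$ (plus the block $\mathtt{U}_{s+1}^{(1)}$); the system does not collapse to a single instance of \eqref{el1.0} but rather to a system for $[\mathtt{U}^{(1)}]_s$ of that shape---which supplies the existence dichotomy and the first sum $\sum_{i=3}^{s+2}\lambda_i\Lambda_{i-2}$---followed by an always-solvable system for $[\mathtt{U}^{(2)}]_s$ contributing the second sum $\sum_{j=4}^{s+2}\lambda_j\Lambda_{j-3}$. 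Once this two-stage structure (which is precisely what the paper's reference to Proposition 3.2 of \cite{YSub2} encodes) is made explicit, your sketch goes through.
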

 \begin{proof} Working as in Proposition 3.2 of Yadav and Sharma \cite{YSub2} and Lemma  \ref{R=2},   the desired result follows immediately. 
    \vspace{-1mm} \end{proof}  
From this point on, we will  distinguish the following two cases: (i) $2 \kappa \leq e,$ and (ii) $2 \kappa > e.$  In the following proposition, we  assume that $\ell$ is a fixed integer satisfying $\ell \equiv e \pmod 2$ and $4\leq \ell \leq \kappa$ if $2\kappa \leq e,$ while  the integer $\ell$ satisfies $\ell \equiv e \pmod 2$ and $4 \leq \ell \leq  e-\kappa$ if $2\kappa >e.$ Here, we provide a method to  lift a self-orthogonal code  of type $\{\Lambda_{\gamma_{\ell}+2},\lambda_{\gamma_{\ell}+3}, \ldots, \lambda_{\gamma_{\ell}+\ell-1}\}$   and  length $n$ over $\mathcal{R}_{\ell-2,m}$ satisfying    property $(\mathfrak{P})$  to a self-orthogonal code  of type $\{\Lambda_{\gamma_{\ell}+1},\lambda_{\gamma_{\ell}+2}, \ldots, \lambda_{\gamma_{\ell}+\ell}\}$ and length $n$ over $\mathcal{R}_{\ell,m}$ satisfying   property $(\mathfrak{P}),$ where $\gamma_{\ell}=s-\mathrm{f}_{\ell}.$   We also compute the total number of distinct ways to carry out this lifting.
\begin{proposition}\label{p3.4Keven}   Let  $\ell$ be a fixed integer  satisfying  $\ell \equiv e \pmod 2$ and $4 \leq \ell \leq \kappa$  if $2\kappa \leq e,$ while  $\ell \equiv e \pmod 2$ and $4 \leq \ell \leq  e-\kappa$ if $2\kappa >e.$   Let us define $\gamma_{\ell}=s-\mathrm{f}_{\ell}.$  Let $\mathcal{C}_{\ell-2}$ be a self-orthogonal code of type $\{\Lambda_{\gamma_{\ell}+2},\lambda_{\gamma_{\ell}+3}, \ldots, \lambda_{\gamma_{\ell}+\ell-1}\}$   and  length $n$ over $\mathcal{R}_{\ell-2,m}$ satisfying   property $(\mathfrak{P}).$  
Let $\mathcal{D}^{(\gamma_{\ell}+1)}$ and $\mathcal{D}^{(\gamma_{\ell}+1-\kappa_1)}$ be   linear subcodes of  $Tor_{1}(\mathcal{C}_{\ell-2}),$ such that   \vspace{-1mm}\begin{equation*}\vspace{-1mm}
\mathcal{D}^{(\gamma_{\ell}+1-\kappa_1)}\subseteq \mathcal{D}^{(\gamma_{\ell}+1)},~~\dim \mathcal{D}^{(\gamma_{\ell}+1)}=\Lambda_{\gamma_{\ell}+1}~ \text{ and }~\dim \mathcal{D}^{(\gamma_{\ell}+1-\kappa_1)}=\Lambda_{\gamma_{\ell}+1-\kappa_1}.  
\end{equation*}
% Further, when  $\textbf{1}\in \mathcal{D}^{(\gamma_{\ell}+1-\kappa_1)},$  let us suppose that  one of  the following three conditions holds: \begin{itemize}\item[(i)] $n\equiv 0,4 \pmod 8,$   \item[(ii)]    $n\equiv 2,6\pmod 8$, $ 4 \leq \ell \leq \kappa_1+\theta_e$   and $\eta_{\ell-1-\theta_e}=0$, and \item[(iii)] $n\equiv 2,6 \pmod 8,$ $\ell= \kappa_1+1+\theta_e$    and  $(\eta_0)^{\frac{3}{2}}=\eta_{\kappa_1}.$ \end{itemize} 
\begin{enumerate}\item[(a)] There exists a  self-orthogonal code $\mathcal{C}_{\ell}$ of type $\{\Lambda_{\gamma_{\ell}+1},\lambda_{\gamma_{\ell}+2}, \ldots, \lambda_{\gamma_{\ell}+\ell}\}$ and length $n$ over $\mathcal{R}_{\ell,m}$ satisfying   property $(\mathfrak{P})$ and     $Tor_{1}(\mathcal{C}_{\ell})=\mathcal{D}^{(\gamma_{\ell}+1)}$ and $Tor_{j+1}(\mathcal{C}_{\ell})=Tor_{j}(\mathcal{C}_{\ell-2})$ for $1 \leq j \leq  \ell-2,$ where such a code exists:
 \vspace{-1mm}\begin{itemize}
     \item unconditionally, if $\textbf{1}\notin \mathcal{D}^{(\gamma_{\ell}+1-\kappa_1)};$ \item  if and only if one of the following holds  when $\textbf{1}\in \mathcal{D}^{(\gamma_{\ell}+1-\kappa_1)}:$ \begin{itemize}\item[(i)]  $n\equiv 0,4 \pmod 8;$  \item[(ii)]   $n\equiv 2,6\pmod 8$, $2\kappa \leq e,$ $ 4 \leq \ell \leq \kappa_1+\theta_e$   and $\eta_{\ell-1-\theta_e}=0$; \item[(iii)]  $n\equiv 2,6 \pmod 8,$ $2\kappa \leq e,$  $\ell= \kappa_1+1+\theta_e$    and  $(\eta_0)^{\frac{3}{2}}=\eta_{\kappa_1};$  
     \item[(iv)]   $n\equiv 2,6\pmod 8$, $2\kappa > e,$ $ 4 \leq \ell \leq \min\{e-\kappa,\kappa_1+\theta_e\}$   and $\eta_{\ell-1-\theta_e}=0$; 
     
     \item[(v)]  $n\equiv 2,6 \pmod 8,$ $2\kappa > e \geq \frac{3}{2}\kappa+1+\theta_e,$  $\ell= \kappa_1+1+\theta_e$     and  $(\eta_0)^{\frac{3}{2}}=\eta_{\kappa_1}.$ \end{itemize}
 \end{itemize}
\item[(b)] Moreover,  each such triplet $(\mathcal{C}_{\ell-2},\mathcal{D}^{(\gamma_{\ell}+1)},\mathcal{D}^{(\gamma_{\ell}+1-\kappa_1)})$ of codes
gives rise to  precisely \vspace{-1.5mm}\begin{equation*}\vspace{-1mm}
\displaystyle (2^m)^{\sum\limits_{i=\ell}^{\gamma_{\ell}+\ell}\lambda_i\Lambda_{i-\ell+1}+\sum\limits_{j=\ell+1}^{\gamma_{\ell}+\ell}\lambda_j\Lambda_{j-\ell}+Y_{\ell}}  {\lambda_{\gamma_{\ell}+\ell}+n-\Lambda_{\gamma_{\ell}+\ell}-\Lambda_{\gamma_{\ell}+1} \brack \lambda_{\gamma_{\ell}+\ell}}_{2^m}
\end{equation*}  distinct  self-orthogonal codes $\mathcal{C}_{\ell}$ of type $\{\Lambda_{\gamma_{\ell}+1},\lambda_{\gamma_{\ell}+2}, \ldots, \lambda_{\gamma_{\ell}+\ell}\}$ and length $n$ over $\mathcal{R}_{\ell,m}$ satisfying  property $(\mathfrak{P})$ and $Tor_{1}(\mathcal{C}_{\ell})=\mathcal{D}^{(\gamma_{\ell}+1)}$ and $Tor_{j+1}(\mathcal{C}_{\ell})=Tor_{j}(\mathcal{C}_{\ell-2})$ for $1 \leq j \leq  \ell-2,$ where  \begin{equation*} Y_{\ell}=(\Lambda_{\gamma_{\ell}+\ell-1}+\Lambda_{\gamma_{\ell}+1})(n-\Lambda_{\gamma_{\ell}+\ell}-\Lambda_{\gamma_{\ell}+1}) +\Lambda_{\gamma_{\ell}+1}^2 +\Lambda_{\gamma_{\ell}+1}-\Lambda_{s-2\mathrm{f}_\ell+2}-\Lambda_{s-2\mathrm{f}_\ell+1}-\Lambda_{\gamma_{\ell}+1-\kappa_1 }+\omega_{\ell}\end{equation*} with 
$\omega_{\ell}=1$ if $\textbf{1}\in \mathcal{D}^{(\gamma_{\ell}+1-\kappa_1)}$,     while $\omega_{\ell}=0$ otherwise.
 \end{enumerate} \end{proposition}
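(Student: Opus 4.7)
The plan is to proceed along the lines of Proposition 3.2 of \cite{YSub2} and Propositions \ref{p3.1}, \ref{p3.3c} above, exhibiting a candidate generator matrix for the lift $\mathcal{C}_{\ell}$ and reducing the existence/counting problem to a single application of Lemma \ref{R=2}. First, I would take a generator matrix $\mathcal{G}_{\ell-2}$ of $\mathcal{C}_{\ell-2}$ in the standard form, and a generator matrix of $\mathcal{D}^{(\gamma_{\ell}+1)}$ that extends the generator matrix of $Tor_1(\mathcal{C}_{\ell-2})$ by a new block $\mathtt{I}_{\lambda_{\gamma_{\ell}+1}}$ of rows. I would then write down a candidate generator matrix $\mathcal{G}_{\ell}$ for $\mathcal{C}_{\ell}$ of the shape \eqref{e3.0}, in which the block $T^{(\ell)}_1$ encodes the new rows (together with free parameters determining $\mathcal{C}_{\ell} \pmod{u^{\ell}}$), and the blocks $u^{j}T^{(\ell)}_{\gamma_{\ell}+1+j}$ for $1 \leq j \leq \ell-1$ are obtained from the corresponding blocks of $\mathcal{G}_{\ell-2}$ by introducing two new unknown $\mathcal{T}_m$-matrices $\mathtt{U}^{(\ell-j-1)}_{\gamma_{\ell}+j+1}$ and $\mathtt{U}^{(\ell-j)}_{\gamma_{\ell}+j+1}$ at each level.

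Next, I translate the requirements of Lemma \ref{l2.2} (self-orthogonality) and of property $(\mathfrak{P})$ into matrix equations for these unknowns. Because $\mathcal{C}_{\ell-2}$ is already self-orthogonal and satisfies property $(\mathfrak{P})$, all congruences of type $T_i^{(\ell)}T_j^{(\ell)t}\equiv 0 \pmod{u^{\ell-i-j+2}}$ for $i+j\geq 3$ are automatic, and the outstanding constraints collapse modulo $u$ to a system of the shape considered in Lemma \ref{R=2}. Concretely, the product $[T^{(\ell)}]_{\gamma_{\ell}+1}[T^{(\ell)}]_{\gamma_{\ell}+1}^t$ must vanish modulo $u^{\ell}$, and its expansion modulo $u^{\kappa+\ell}$ produces symmetric matrices $\mathtt{H}^{(1)},\dots,\mathtt{H}^{(\kappa+1)}$ whose odd-indexed members are alternating because $2a^{2}=0$ in $\mathcal{R}_{e,m}/\langle u^{\kappa+1}\rangle $ when $\kappa\geq 2$. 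With these identifications, the three matrix equations of \eqref{el1.0} correspond respectively to (a) the level-$\ell$ off-diagonal self-orthogonality constraint from the new row-block, (b) the diagonal condition from property $(\mathfrak{P})$ of $\mathcal{C}_{\ell-2}$ at shift $i=\ell-\theta_e$, and (c) the diagonal condition at the critical shift $j=\ell-1-\theta_e$ or $j=\kappa-1-\theta_e$ (whichever is relevant at level $\ell$), which carries the coefficient $\eta_{\ell-1-\theta_e}$ or $\eta_{\kappa_1}$.

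From this translation, part (a) is immediate from Lemma \ref{R=2}: solvability is automatic when $\textbf{1}\notin \mathcal{D}^{(\gamma_{\ell}+1-\kappa_1)}$, and when $\textbf{1}\in\mathcal{D}^{(\gamma_{\ell}+1-\kappa_1)}$, the consistency scalar computed from $\textbf{1}\cdot\textbf{1}\equiv n \pmod{u^{\kappa+\ell}}$ together with the Teichm\"uller expansion $2 \equiv u^{\kappa}(\eta_0+u\eta_1+\cdots)$ produces exactly the five cases (i)--(v). The dichotomy $2\kappa\leq e$ vs.\ $2\kappa>e$ governs which clause of property $(\mathfrak{P})$ is in force at level $\ell$ and hence which of the two parameter-conditions $\eta_{\ell-1-\theta_e}=0$ or $(\eta_0)^{3/2}=\eta_{\kappa_1}$ is the right obstruction; the upper bounds $\ell\leq \kappa$ and $\ell\leq e-\kappa$ in the two regimes ensure these are the only new constraints. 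For part (b), the number of solutions of the system is read off from Lemma \ref{R=2}, and the remaining $\binom{\lambda_{\gamma_{\ell}+\ell}+n-\Lambda_{\gamma_{\ell}+\ell}-\Lambda_{\gamma_{\ell}+1}}{\lambda_{\gamma_{\ell}+\ell}}_{2^m}$ factor accounts for the choice of the $\lambda_{\gamma_{\ell}+\ell}$-dimensional subspace selecting the last identity block in the standard form (analogously to Proposition \ref{pe=k=2}).

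The main obstacle I anticipate is bookkeeping: I must verify that the exponent $Y_{\ell}$ produced by Lemma \ref{R=2} matches the claimed formula, by carefully identifying $\sum_{i=3}^{s+1}\lambda_i\Lambda_{i-2}+\Lambda_s(n-\Lambda_{s+1})$ of Lemma \ref{R=2} with the corresponding sum at the shifted level $\ell$, namely $\sum_{i=\ell}^{\gamma_{\ell}+\ell}\lambda_i\Lambda_{i-\ell+1}+\sum_{j=\ell+1}^{\gamma_{\ell}+\ell}\lambda_j\Lambda_{j-\ell}+(\Lambda_{\gamma_{\ell}+\ell-1}+\Lambda_{\gamma_{\ell}+1})(n-\Lambda_{\gamma_{\ell}+\ell}-\Lambda_{\gamma_{\ell}+1})+\Lambda_{\gamma_{\ell}+1}^{2}+\Lambda_{\gamma_{\ell}+1}$, and that the three rank contributions $\Lambda_{s-1}, \Lambda_{s-\kappa_1}, \frac{\Lambda_s(\Lambda_s-1)}{2}$ from Lemma \ref{R=2} become $\Lambda_{s-2\mathrm{f}_{\ell}+2}, \Lambda_{s-2\mathrm{f}_{\ell}+1}+\Lambda_{\gamma_{\ell}+1-\kappa_1}$ and the lower-triangular free-parameter count implicit in the matrix $[X]_s$ at the new level. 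Once this arithmetic is checked, the correction $\omega_{\ell}$ for the rank-deficient case drops out exactly as in Lemma \ref{R=2}.
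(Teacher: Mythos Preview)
Your overall strategy---write down a candidate generator matrix for $\mathcal{C}_{\ell}$ lifting $\mathcal{G}_{\ell-2}$, translate self-orthogonality and property $(\mathfrak{P})$ into matrix equations for the new unknowns, then count solutions---matches the paper. The gap is your claim that everything ``collapses to a single application of Lemma~\ref{R=2}.'' It does not. At level $\ell\geq 4$ the lift introduces \emph{two} new matrices $[\mathtt{U}^{(\ell-2)}]_{\gamma_{\ell}+1}$ and $[\mathtt{U}^{(\ell-1)}]_{\gamma_{\ell}+1}$ attached to the top $\Lambda_{\gamma_{\ell}+1}$ rows (not one per lower block, as you wrote), and the constraints do not decouple into a single system of the shape \eqref{el1.0}. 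The paper first solves for $[\mathtt{U}^{(\ell-2)}]_{\gamma_{\ell}+1}$ via a system of three equations (an $\ell$-shifted analogue of Lemma~\ref{R=2}, worked out explicitly because the dimensions change from $\Lambda_s,\Lambda_{s-1},\Lambda_{s-\kappa_1}$ to $\Lambda_{\gamma_{\ell}+1},\Lambda_{s-2\mathrm{f}_\ell+2},\Lambda_{\gamma_{\ell}+1-\kappa_1}$), and only \emph{then}, for each choice of $[\mathtt{U}^{(\ell-2)}]$, solves a separate two-equation system for $[\mathtt{U}^{(\ell-1)}]$ using Lemma~4.1 of \cite{quasi}. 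The remaining extension blocks and $T^{(\ell)}_{\gamma_{\ell}+\ell}$ are counted as in Proposition~4.3 of \cite{Galois}.

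This two-step structure is exactly why $Y_{\ell}$ carries \emph{three} negative $\Lambda$ terms, $-\Lambda_{s-2\mathrm{f}_\ell+2}-\Lambda_{s-2\mathrm{f}_\ell+1}-\Lambda_{\gamma_{\ell}+1-\kappa_1}$: the first and third come from the $[\mathtt{U}^{(\ell-2)}]$-system, the second from the $[\mathtt{U}^{(\ell-1)}]$-system. Your attempt to match this to Lemma~\ref{R=2}'s two terms $\Lambda_{s-1},\Lambda_{s-\kappa_1}$ by declaring that $\Lambda_{s-\kappa_1}$ ``becomes'' the sum $\Lambda_{s-2\mathrm{f}_\ell+1}+\Lambda_{\gamma_{\ell}+1-\kappa_1}$ is precisely where a single application fails. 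The obstruction analysis (cases (i)--(v) when $\textbf{1}\in\mathcal{D}^{(\gamma_{\ell}+1-\kappa_1)}$) lives entirely in the first system and is handled as you describe; but to get the correct count and the claimed exponent you must carry out the second solve separately.
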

 \begin{proof} To prove the result, 
 we assume, without any loss of generality, that the code  
$\mathcal{C}_{\ell-2}$ has a generator matrix 
  \begin{equation*}\label{e00.1Kodd}
\mathtt{G}_{\ell-2}=\begin{bmatrix}
T_1^{(\ell-2)}\\T_2^{(\ell-2)} \\ \vdots\\ T_{\gamma_{\ell}+2}^{(\ell-2)}\vspace{0.5mm}\\ uT_{\gamma_{\ell}+3}^{(\ell-2)}\vspace{0.5mm}\\  \vdots\\u^{\ell-3}T_{\gamma_{\ell}+\ell-1}^{(\ell-2)}
\end{bmatrix}=\begin{bmatrix}
T_1^{(0)}+u\mathtt{U}_1^{(1)}+u^2\mathtt{U}_1^{(2)}+\cdots +u^{\ell-3}\mathtt{U}_1^{(\ell-3)}\\T_2^{(0)}+u\mathtt{U}_2^{(1)}+u^2\mathtt{U}_2^{(2)}+\cdots +u^{\ell-3}\mathtt{U}_2^{(\ell-3)} \\ \vdots\\ T_{\gamma_{\ell}+2}^{(0)}+u\mathtt{U}_{\gamma_{\ell}+2}^{(1)}+u^2\mathtt{U}_{\gamma_{\ell}+2}^{(2)}+\cdots +u^{\ell-3}\mathtt{U}_{\gamma_{\ell}+2}^{(\ell-3)}\vspace{0.5mm}\\ uT_{\gamma_{\ell}+3}^{(\ell-2)}\vspace{0.5mm}\\ \vdots\\u^{\ell-3}T_{\gamma_{\ell}+\ell -1}^{(\ell-2)}
\end{bmatrix},\end{equation*} 
 where  
% the matrix $[T^{(0)}]_{\gamma_{\ell}+2} \in \mathcal{M}_{\Lambda_{\gamma_{\ell}+2}\times n}(\mathcal{T}_{m})$   is structured as follows:
\begin{equation*}\label{e00.2Kodd}
[T^{(0)}]_{\gamma_{\ell}+2}=\begin{bmatrix}
T_1^{(0)}\\T_2^{(0)}\\ \vdots\\ T_{\gamma_{\ell}+2}^{(0)}
\end{bmatrix}  =\begin{bmatrix}
{\mathtt{I}}_{\lambda_1}&{\mathtt{A}}_{1,1}^{(0)}&\cdots&{\mathtt{A}}_{1,\gamma_{\ell}+1}^{(0)}&\cdots & {\mathtt{A}}_{1,e-1}^{(0)}& {\mathtt{A}}_{1,e}^{(0)}\\
\mathbf{0} &{\mathtt{I}}_{\lambda_2}  &\cdots &{\mathtt{A}}_{2,\gamma_{\ell}+1}^{(0)}&\cdots & {\mathtt{A}}_{2,e-1}^{(0)}&{\mathtt{A}}_{2,e}^{(0)}\\
	\vdots& \vdots   &\vdots&\vdots &\vdots& \vdots&\vdots\\
\mathbf{0}&\mathbf{0}&\cdots& {\mathtt{I}}_{\lambda_{\gamma_{\ell}+2}}& \cdots & {\mathtt{A}}_{\gamma_{\ell}+2,e-1}^{(0)}&{\mathtt{A}}_{\gamma_{\ell}+2,e}^{(0)}
\end{bmatrix}\end{equation*} 
with  ${\mathtt{I}}_{\lambda_i}$ as the $\lambda_i\times \lambda_i$ identity matrix over ${\mathcal{T}}_{m},$    $ {\mathtt{A}}_{i,j}^{(0)} \in \mathcal{M}_{\lambda_i\times \lambda_{j+1}}({\mathcal{T}}_{m})$  for $1 \leq i \leq \gamma_{\ell}+2$ and $i \leq j \leq e,$
 $[\mathtt{U}^{(b)}]_{\gamma_{\ell}+2}\in \mathcal{M}_{\Lambda_{\gamma_{\ell}+2}\times n}(\mathcal{T}_{m})$ for $1 \leq b \leq \ell-3,$ and  the matrix $T_{\gamma_{\ell}+g}^{(\ell-2)}\in \mathcal{M}_{\lambda_{\gamma_{\ell}+g}\times n}(\mathcal{R}_{\ell-2,m})$  to be considered modulo $u^{\ell-g}$   for  $3 \leq g \leq \ell-1.$ Here,  we note that the torsion code $Tor_1(\mathcal{C}_{\ell-2})$ is a $\Lambda_{\gamma_{\ell}+2}$-dimensional code over $\mathcal{T}_{m}$ with a generator matrix $[T^{(0)}]_{\gamma_{\ell}+2}.$   
  We also assume, without any loss of generality, that the  code  $\mathcal{D}^{(\gamma_{\ell}+1)}$ has a generator matrix $[T^{(0)}]_{\gamma_{\ell}+1},$
and  that the code $\mathcal{D}^{(\gamma_{\ell}+1-\kappa_1)}$  has a generator matrix $[T^{(0)}]_{\gamma_{\ell}+1-\kappa_1} .$ 
Further, by Remark \ref{r3.1}, we  assume that the matrix $(\mathtt{A}^{(0)})_ { \gamma_{\ell}+1,\gamma_{\ell}+\ell} $ has full row-rank. 
 
Now, since $\mathcal{C}_{\ell-2}$ is a self-orthogonal code over $\mathcal{R}_{\ell-2,m}$ satisfying   property $(\mathfrak{P}),$  we have
\begin{eqnarray*}
[T^{(\ell-2)}]_{\gamma_{\ell}+1}[T^{(\ell-2)}]_{\gamma_{\ell}+1}^t &\equiv &\textbf{0}\pmod{u^{\ell-2}},\label{}\\
  \displaystyle \mathcal{D}iag\left([T^{(\ell-2)}]_{\gamma_{\ell}+2-\mathrm{f}_i}[T^{(\ell-2})]_{\gamma_{\ell}+2-\mathrm{f}_i}^t\right) &\equiv & \textbf{0} \pmod{u^{\ell-2+i}} \text{ ~~for ~} i\in \{2,4,6,\ldots,\ell-2-\theta_e\},~\\ 
  \displaystyle  \pi_{\ell-2+j}\left( \mathcal{D}iag\big([T^{(\ell-2)}]_{\gamma_{\ell}+2-\mathrm{f}_{j+2}}[T^{(\ell-2)}]_{\gamma_{\ell}+2-\mathrm{f}_{j+2}}^t\big) \right)&=&\textbf{0} \text{ ~~for~~ } j \in \{ \ell-3,\ell-1,\ldots, \kappa -1-\theta_e  \},\\
  ~[T^{(\ell-2)}]_{ \gamma_{\ell}+1} T_{\gamma_{\ell}+g}^{(\ell-2)t}&\equiv &\textbf{0}  \pmod{u^{\ell-g}} \text{ ~for } 2 \leq g \leq \ell-1 , \text{ ~and}\\
T_{\gamma_{\ell}+i}^{(\ell-2)}
 T_{\gamma_{\ell}+j}^{(\ell-2)t}&\equiv &\textbf{0}  \pmod{u^{\ell+2-i-j}}~ \\ &&~~~~\text{ for } 2 \leq i,j\leq \ell-1 \text{ and } i+j\leq \ell+1.\end{eqnarray*} 
   This implies that \begin{eqnarray*}
    [T^{(\ell-2)}]_{\gamma_{\ell}+1} [T^{(\ell-2)}]_{\gamma_{\ell}+1}^t&\equiv&  u^{\ell-2}\mathtt{H}^{(\ell-2)}+u^{\ell-1}\mathtt{H}^{(\ell-1)}+\cdots+u^{\ell+\kappa-1}\mathtt{H}^{(\ell+\kappa-1)} \pmod{u^{\ell+\kappa}},\\ ~[T^{(\ell-2)}]_{\gamma_{\ell}+1} T_{\gamma_{\ell}+g}^{(\ell-2)t}&\equiv &u^{\ell-g}S_{g}  \pmod{u^{\ell+1-g}} \text{ ~for } 2 \leq g \leq \ell-1 ,\text{ ~and}\\
T_{\gamma_{\ell}+i}^{(\ell-2)} T_{\gamma_{\ell}+j}^{(\ell-2)t}&\equiv &\textbf{0} \pmod{u^{\ell+2-i-j}} ~\text{ for } 2 \leq i,j\leq \ell-1 \text{ and } i+j\leq \ell+1,
\end{eqnarray*} where $\mathtt{H}^{(\ell-2)},\mathtt{H}^{(\ell-1)}, \ldots, \mathtt{H}^{(\ell+\kappa-1)} \in Sym_{\Lambda_{\gamma_{\ell}+1}}(\mathcal{T}_{m})$ with $\mathtt{H}_{z,z}^{(\ell-4+i)}=\mathtt{H}^{(\ell-3+i)}_{z,z}=0 $ for $1 \leq z \leq \Lambda_{\gamma_{\ell}+2-\mathrm{f}_i }$ and $ i \in \{ 2,4,\ldots,\ell-2-\theta_e\}, $  $\mathtt{H}_{b,b}^{(\ell-2+j)}=0 $ for $ 1 \leq b \leq \Lambda_{\gamma_{\ell}+2-\mathrm{f}_{j+2}} $ and $ j\in \{ \ell-3,\ell-1, \ell+1,\ldots, \kappa-1-\theta_e\},$ and $S_{g} \in \mathcal{M}_{\Lambda_{\gamma_{\ell}+1} \times \lambda_{\gamma_{\ell}+g}}(\mathcal{T}_{m})  $ for $ 2 \leq g \leq \ell-1.$   
   
To establish the result,  we define a matrix  $\mathtt{G}_{\ell}$ over $\mathcal{R}_{\ell,m}$ as follows:
\vspace{-2mm}\begin{equation*} \vspace{-1mm}\mathtt{G}_{\ell}=\begin{bmatrix}
T_1^{(\ell)}\\T_2^{(\ell)}\\ \vdots\\ T_{\gamma_{\ell}+1}^{(\ell)}\\ \vspace{0.5mm}uT_{\gamma_{\ell}+2}^{(\ell)}\\ \vdots\\u^{\ell-1}T_{\gamma_{\ell}+\ell}^{(\ell)}
\end{bmatrix}=\begin{bmatrix} T_1^{(\ell-2)}+u^{\ell-2}\mathtt{U}_1^{(\ell-2)}+u^{\ell-1}\mathtt{U}_1^{(\ell-1)}\\ T_2^{(\ell-2)}+u^{\ell-2}\mathtt{U}_2^{(\ell-2)}+u^{\ell-1}\mathtt{U}_2^{(\ell-1)}\\ \vdots\\ T_{\gamma_{\ell}+1}^{(\ell-2)}+u^{\ell-2}\mathtt{U}_{\gamma_{\ell}+1}^{(\ell-2)}+u^{\ell-1}\mathtt{U}_{\gamma_{\ell}+1}^{(\ell-1)}\\ \vspace{0.5mm}uT_{\gamma_{\ell}+2}^{(\ell)}\\ \vdots\\u^{\ell-1}T_{\gamma_{\ell}+\ell}^{(\ell)}   \end{bmatrix},\end{equation*} 
where the matrices  $[\mathtt{U}^{(\mu)}]_{\gamma_{\ell}+1}\in \mathcal{M}_{\Lambda_{\gamma_{\ell}+1}\times n}(\mathcal{T}_{m})$ for $ \mu \in \{\ell-2,\ell-1\},$   $T_{\gamma_{\ell}+g}^{(\ell)}\in \mathcal{M}_{\lambda_{\gamma_{\ell}+g}\times n}(\mathcal{R}_{\ell,m})$   for  $2 \leq g \leq \ell-1$ and  $T_{\gamma_{\ell}+\ell}^{(\ell)}\in \mathcal{M}_{\lambda_{\gamma_{\ell}+\ell}\times n}(\mathcal{T}_{m})$ are of the forms 
\vspace{-1.5mm}\begin{equation*}\vspace{-2mm}\begin{bmatrix}
\mathtt{U}_1^{(\mu)}\\\mathtt{U}_2^{(\mu)}\\ \vdots\\ \mathtt{U}_{\gamma_{\ell}+1}^{(\mu)}
\end{bmatrix}= \begin{bmatrix}
\textbf{0}&\cdots&\textbf{0}&\mathtt{A}_{1,\mu+1}^{(\mu)}& \mathtt{A}_{1,\mu+2}^{(\mu)}&\cdots&\mathtt{A}_{1,\gamma_{\ell}+1+\mu}^{(\mu)}&\cdots &  \mathtt{A}_{1,e}^{(\mu)}\\
\textbf{0} & \cdots&\textbf{0}&\textbf{0}& \mathtt{A}_{2,\mu+2}^{(\mu)}&  \cdots &\mathtt{A}_{2,\gamma_{\ell}+1+\mu}^{(\mu)}&\cdots &\mathtt{A}_{2,e}^{(\mu)}\\
	\vdots& \cdots &\vdots &\vdots &\vdots&\vdots & \vdots&\vdots&\vdots\\
\textbf{0}&\cdots&\textbf{0}&\textbf{0}&\textbf{0}&\cdots&  \mathtt{A}_{\gamma_{\ell}+1,\gamma_{\ell}+1+\mu}^{(\mu)}&\cdots &  \mathtt{A}_{\gamma_{\ell}+1,e}^{(\mu)}\\
\end{bmatrix},\end{equation*} 
 \begin{equation*} T_{\gamma_{\ell}+g}^{(\ell)}= T_{\gamma_{\ell}+g}^{(\ell-2)}+u^{\ell-g}\begin{bmatrix}
\textbf{0}&\cdots&\textbf{0}& \mathtt{A}_{\gamma_{\ell}+g,\gamma_{\ell}+\ell}^{(\ell-g)}&\cdots & \mathtt{A}_{\gamma_{\ell}+g,e}^{(\ell-g)}
\end{bmatrix} \text{ and } \end{equation*} 
\begin{equation*} T_{\gamma_{\ell}+\ell}^{(\ell)}= \begin{bmatrix}
\textbf{0}&\cdots&\textbf{0}&\mathtt{I}_{\lambda_{\gamma_{\ell}+\ell}}& \mathtt{A}_{\gamma_{\ell}+\ell,\gamma_{\ell}+\ell}^{(0)}& \cdots & \mathtt{A}_{\gamma_{\ell}+\ell,e}^{(0)}
\end{bmatrix} \end{equation*}
with   $\mathtt{A}_{i,j}^{(\mu)} \in \mathcal{M}_{\lambda_i \times \lambda_{j+1}}(\mathcal{T}_{m})$ for $1 \leq i \leq \gamma_{\ell}+1$ and $ i+\mu \leq j \leq e,$   $\mathtt{A}_{\gamma_{\ell}+g,h}^{(\ell-g)} \in \mathcal{M}_{\lambda_{\gamma_{\ell}+g} \times  \lambda_{h+1}}(\mathcal{T}_{m}) \text{ for } \gamma_{\ell}+\ell \leq h \leq e$  and $\mathtt{A}_{\gamma_{\ell}+\ell,a}^{(0)} \in \mathcal{M}_{\lambda_{\gamma_{\ell}+\ell} \times  \lambda_{a+1}}(\mathcal{T}_{m}) \text{ for } \gamma_{\ell}+\ell \leq a \leq e. $
Let  $\mathcal{C}_{\ell}$  be a  linear code of length $n$ over $\mathcal{R}_{\ell,m}$ with a generator matrix $\mathtt{G}_{\ell}.$   Note that the code $\mathcal{C}_{\ell}$ is  of  type $\{\Lambda_{\gamma_{\ell}+1},\lambda_{\gamma_{\ell}+2}, \ldots, \lambda_{\gamma_{\ell}+\ell}\}$ over $\mathcal{R}_{\ell,m}$ and satisfies $Tor_1(\mathcal{C}_{\ell})=\mathcal{D}^{(\gamma_{\ell}+1)}$ and $Tor_{j+1}(\mathcal{C}_{\ell})=Tor_{j}(\mathcal{C}_{\ell-2})$ for $1 \leq j \leq \ell-2.$  Further, by  Lemma \ref{l2.2},  we see that the code  $\mathcal{C}_{\ell}$ is a self-orthogonal code over $\mathcal{R}_{\ell,m}$ satisfying  property $(\mathfrak{P})$   if and only if there exist matrices  $[\mathtt{U}^{(\ell-2)}]_{\gamma_{\ell}+1},$ $[\mathtt{U}^{(\ell-1)}]_{\gamma_{\ell}+1},$ $\left[0\cdots0~\mathtt{A}_{\gamma_{\ell}+g,\gamma_{\ell}+\ell}^{(\ell-g)}~\cdots ~ \mathtt{A}_{\gamma_{\ell}+g,e}^{(\ell-g)}\right]$ for $2 \leq g \leq \ell-1$ and  $T_{\gamma_{\ell}+\ell}^{(\ell)}$
satisfying the following system of matrix equations:
\small{\begin{eqnarray}
 \mathtt{H}^{(\ell-2)}+[T^{(0)}]_{\gamma_{\ell}+1}[\mathtt{U}^{(\ell-2)}]_{\gamma_{\ell}+1}^t+[\mathtt{U}^{(\ell-2)}]_{\gamma_{\ell}+1}[T^{(0)}]_{\gamma_{\ell}1+1}^t+
 u\left([T^{(0)}]_{\gamma_{\ell}+1}[\mathtt{U}^{(\ell-1)}]_{\gamma_{\ell}+1}^t\right.~~~~&&\nonumber\\\left.+[\mathtt{U}^{(\ell-1)}]_{\gamma_{\ell}+1}[T^{(0)}]_{\gamma_{\ell}+1}^t +[\mathtt{U}^{(1)}]_{\gamma_{\ell}+1}[\mathtt{U}^{(\ell-2)}]_{\gamma_{\ell}+1}^t+[\mathtt{U}^{(\ell-2)}]_{\gamma_{\ell}+1}[\mathtt{U}^{(1)}]_{\gamma_{\ell}+1}^t+\mathtt{H}^{(\ell-1)}\right)&\equiv& \textbf{0}\pmod{u^2},\label{e3.30Keven}\\
\mathcal{D}iag\left(\mathtt{E}^{(2\ell-4)}+[\mathtt{U}^{(\ell-2)}]_{s-2\mathrm{f}_\ell+2}[\mathtt{U}^{(\ell-2)}]_{s-2\mathrm{f}_\ell+2}^t\right) &\equiv & \textbf{0} \pmod{u},\label{e3.31Keven}~~~~~~~\\ 
\mathcal{D}iag\left(\mathtt{E}^{(2\ell-2)}+[\mathtt{U}^{(\ell-1)}]_{s-2\mathrm{f}_\ell+1}[\mathtt{U}^{(\ell-1)}]_{s-2\mathrm{f}_\ell+1}^t\right) &\equiv &\textbf{0} \pmod{u},\label{e3.32Keven}\\
\mathcal{D}iag\left(\mathtt{E}^{(\ell+\kappa-1)}+\eta_1[T^{(0)}]_{\gamma_{\ell}-\kappa_1+1}[\mathtt{U}^{(\ell-2)}]_{\gamma_{\ell}-\kappa_1+1}^t\right.~~~~~~~&&\nonumber\\\left.+\eta_0\left([T^{(0)}]_{\gamma_{\ell}-\kappa_1+1}[\mathtt{U}^{(\ell-1)}]_{\gamma_{\ell}-\kappa_1+1}^t+[\mathtt{U}^{(1)}]_{\gamma_{\ell}-\kappa_1+1}[\mathtt{U}^{(\ell-2)}]_{\gamma_{\ell}-\kappa_1+1}^t\right)\right) &\equiv & \textbf{0} \pmod{u} \text{ if } \theta_e=0,\label{e3.33bKeven}~~~~~~~\\
\mathcal{D}iag\left(\mathtt{E}^{(\ell+\kappa-2)}+\eta_0[T^{(0)}]_{\gamma_{\ell}-\kappa_1+1}[\mathtt{U}^{(\ell-2)}]_{\gamma_{\ell}-\kappa_1+1}^t\right) &\equiv &\textbf{0} \pmod{u}\text{ if } \theta_e=1,\label{e3.33Keven}~~~~~~~\\ 
S_{g}+[T^{(0)}]_{ \gamma_{\ell}+1} \begin{bmatrix} 0~ \cdots ~0 ~\mathtt{A}_{ \gamma_{\ell}+g, \gamma_{\ell}+\ell}^{(\ell-g)} ~\cdots ~ \mathtt{A}_{ \gamma_{\ell}+g,e}^{(\ell-g)}\end{bmatrix} ^t&\equiv &\textbf{0}  \pmod{u}, \text{ ~and}~~\label{e3.34Keven}\\
~[T^{(0)}]_{ \gamma_{\ell}+1} T_{ \gamma_{\ell}+\ell}^{(\ell)t} &\equiv& \textbf{0} \pmod{u},\label{e3.35Keven}\end{eqnarray}}\normalsize
 where   $\mathtt{E}^{(2\ell-4)},$ $\mathtt{E}^{(2\ell-2)}$ and $\mathtt{E}^{(\ell+\kappa-1-\theta_e)}$ are square matrices  of orders $\Lambda_{s-2\mathrm{f}_\ell+2} ,$  $ \Lambda_{s-2\mathrm{f}_\ell+1}$ and $\Lambda_{\gamma_{\ell}+1-\kappa_1} $  over $\mathcal{T}_{m}$ whose rows are the  first $\Lambda_{s-2\mathrm{f}_\ell+2},$ $\Lambda_{s-2\mathrm{f}_\ell+1}$ and $\Lambda_{\gamma_{\ell}+1-\kappa_1}$  rows of the matrices $\mathtt{H}^{(2\ell-4)}, \mathtt{H}^{(2\ell-2)}$ and $ \mathtt{H}^{(\ell+\kappa-1-\theta_e)}, $   respectively.  We will next establish the existence of the   matrices $[\mathtt{U}^{(\ell-2)}]_{\gamma_{\ell}+1},$ $[\mathtt{U}^{(\ell-1)}]_{\gamma_{\ell}+1},$ $\big[\mathbf{0}~\cdots ~\mathbf{0}~ \mathtt{A}_{\gamma_{\ell}+g,\gamma_{\ell}+\ell}^{(\ell-g)}~\cdots ~ \\ \mathtt{A}_{\gamma_{\ell}+g,e}^{(\ell-g)}\Big]$ for $2 \leq g \leq \ell-1$ and  $T_{\gamma_{\ell}+\ell}^{(\ell)}$  satisfying the system of matrix equations \eqref{e3.30Keven}-\eqref{e3.35Keven}.
To do this,  we  will distinguish the following two cases: (I) $\ell$ is odd, and (II)  $\ell$ is even. 
\begin{itemize}
\item[(I)] Let  $\ell$  be odd.  In this case, we will  first show that there exists a matrix  $[V^{(\ell-2)}]_{\gamma_{\ell}+1} \in \mathcal{M}_{\Lambda_{\gamma_{\ell}+1}\times n}(\mathcal{T}_{m}) $ satisfying the  following matrix equations:  
\vspace{-2mm}\begin{eqnarray}
  \mathtt{H}^{(\ell-2)}+[T^{(0)}]_{\gamma_{\ell}+1}[\mathtt{U}^{(\ell-2)}]_{\gamma_{\ell}+1}^t+[\mathtt{U}^{(\ell-2)}]_{\gamma_{\ell}+1}[T^{(0)}]_{\gamma_{\ell}+1}^t &\equiv & \textbf{0} \pmod{u},\label{e3.40p3.5Keven}\\
  \mathcal{D}iag\left(\mathtt{E}^{(2\ell-4)}+[\mathtt{U}^{(\ell-2)}]_{s-2\mathrm{f}_\ell+2}[\mathtt{U}^{(\ell-2)}]_{s-2\mathrm{f}_\ell+2}^t\right) &\equiv & \textbf{0} \pmod{u}\label{e3.41p3.5Keven},\\
\mathcal{D}iag\left(\mathtt{E}^{(\ell+\kappa-2)}+\eta_0[T^{(0)}]_{\gamma_{\ell}+1-\kappa_1} [\mathtt{U}^{(\ell-2)}]_{\gamma_{\ell}+1-\kappa_1}^t\right) &\equiv & \textbf{0} \pmod{u}.\label{e3.42p3.5Keven}
\vspace{-2mm}\end{eqnarray} 
To do this, let  $[T^{(0)}]_{\gamma_{\ell}+1}=(\textbf{a}_i)$ and $[\mathtt{U}^{(\ell-2)}]_{\gamma_{\ell}+1}=(\textbf{z}_j),$ where $\textbf{a}_i$\rq{}s and $\textbf{z}_j$\rq{}s denote the rows of the matrices $[T^{(0)}]_{\gamma_{\ell}+1}$ and $[\mathtt{U}^{(\ell-2)}]_{\gamma_{\ell}+1},$ respectively. 
Let $\mathtt{H}^{(\ell-2)}_{i,j},\mathtt{E}^{(2\ell-4)}_{i,j},\mathtt{E}^{(\ell+\kappa-2)}_{i,j}\in \mathcal{T}_{m}$ denote the $(i,j)$-th entries of the matrices  $\mathtt{H}^{(\ell-2)},$ $ \mathtt{E}^{(2\ell-4)},$ $\mathtt{E}^{(\ell+\kappa-2)},$ respectively. Now, the system  of  matrix equations \eqref{e3.40p3.5Keven}-\eqref{e3.42p3.5Keven} is  equivalent to the following system of equations in unknowns $ \textbf{z}_1, \textbf{z}_2,\ldots, \textbf{z}_{\Lambda_{\gamma_{\ell}+1}}$ over $\mathcal{T}_{m}$:  \vspace{-2mm}\begin{eqnarray}
\textbf{a}_i\cdot \textbf{z}_j+\textbf{a}_j \cdot \textbf{z}_i  &\equiv& \mathtt{H}^{(\ell-2)}_{i,j} \pmod{u} \text{ ~~for } 1 \leq i <j \leq \Lambda_{\gamma_{\ell}+1},\label{e3.6p3.5Keven}\\ 
\textbf{1}\cdot \textbf{z}_i & \equiv & (\mathtt{E}^{(2\ell-4)}_{i,i})^{2^{m-1}} \pmod{u} \text{ ~ for } 1 \leq i \leq \Lambda_{s-2\mathrm{f}_\ell+2}, \text{ ~and} \label{e3.7p3.5Keven}\\
\textbf{a}_i \cdot \textbf{z}_i & \equiv & \eta_0^{-1}\mathtt{E}_{i,i}^{(\ell+\kappa-2)} \pmod{u} \text{ ~~for } 1 \leq i \leq \Lambda_{\gamma_{\ell}+1-\kappa_1}. \label{e3.8p3.5Keven}
\vspace{-2mm}\end{eqnarray} 
For each integer $j$ satisfying $1 \leq j \leq \Lambda_{\gamma_{\ell}+1},$ it is easy to see that there exists a unique integer $r_{j}$ satisfying $1 \leq r_{j}\leq \gamma_{\ell}+1$ and $\Lambda_{r_{j}-1}+1 \leq j \leq \Lambda_{r_{j}}.$ The corresponding unknown vector $\textbf{z}_{j}$ can be written as $\textbf{z}_{j}=(\textbf{0}~ \textbf{z}_{j}^{n-\Lambda_{r_{j}+\ell-2}}),$ where $\textbf{0}$ denotes the zero vector of 
length $\Lambda_{r_{j}+\ell-2}$ and $\textbf{z}_{j}^{n-\Lambda_{r_{j}+\ell-2}}$ denotes the vector of length $n-\Lambda_{r_{j}+\ell-2}$ obtained   by omitting the first $\Lambda_{r_{j}+\ell-2}$ coordinates of $\textbf{z}_{j}$. This implies,  for $\Lambda_{r_{j}-1}+1 \leq j \leq \Lambda_{r_{j}}$, that the first $\Lambda_{r_{j}+\ell-2}$ coordinates of $\textbf{z}_{j}$ are zero,  leaving  $n-\Lambda_{r_{j}+\ell-2}$   variables
 in $\textbf{z}_{j}.$  For $1 \leq j \leq \Lambda_{\gamma_{\ell}+1},$ let $\widetilde{\textbf{z}}_{j}=\textbf{z}_{j}^{n-\Lambda_{r_{j}+\ell-2}}$ 
and  $\widetilde{\textbf{a}}_{j}=\textbf{a}_{j}^{n-\Lambda_{r_{j}+\ell-2}}$ denote  the  vectors of length $n-\Lambda_{r_{j}+\ell-2}$ obtained from $\textbf{z}_{j}$ and  $\textbf{a}_{j}$ by omitting the first $\Lambda_{r_{j}+\ell-2}$ coordinates, respectively. 
Consequently, the system   of  equations \eqref{e3.6p3.5Keven}-\eqref{e3.8p3.5Keven} is equivalent to the following system of equations in unknowns $\widetilde{\textbf{z}}_1, \widetilde{\textbf{z}}_2,\ldots, \widetilde{\textbf{z}}_{\Lambda_{\gamma_{\ell}+1}}$ over $\mathcal{T}_{m}$:   
\vspace{-2mm}\begin{eqnarray}
\widetilde{\textbf{a}}_i\cdot \widetilde{\textbf{z}}_j+\widetilde{\textbf{a}}_j \cdot \widetilde{\textbf{z}}_i  &\equiv& \mathtt{H}^{(\ell-2)}_{i,j} \pmod{u} \text{ ~~for } 1 \leq i <j \leq \Lambda_{\gamma_{\ell}+1},\label{e3.9p3.5Keven}\\ 
\widetilde{\textbf{1}}\cdot \widetilde{\textbf{z}}_i & \equiv & (\mathtt{E}^{(2\ell-4)}_{i,i})^{2^{m-1}} \pmod{u} \text{~~ for } 1 \leq i \leq \Lambda_{s-2\mathrm{f}_\ell+2},\text{ ~and}\label{e3.10p3.5Keven}\\
\widetilde{\textbf{a}}_i \cdot \widetilde{\textbf{z}}_i   & \equiv & \eta_0^{-1}\mathtt{E}_{i,i}^{(\ell+\kappa-2)} \pmod{u} \text{ ~~for } 1 \leq i \leq \Lambda_{\gamma_{\ell}+1-\kappa_1}, \label{e3.11p3.5Keven}
\vspace{-2mm}\end{eqnarray} 
 where $\widetilde{\textbf{1}}$ denotes the all-one vector having the same length as that of  $\widetilde{\textbf{z}}_i$ for each $i.$ Further, the above system of equations \eqref{e3.9p3.5Keven}-\eqref{e3.11p3.5Keven} can  be expressed in the matrix form
\begin{equation}\label{e3.12p3.5Keven}M\begin{bmatrix}
	\widetilde{\textbf{z}}_1^{t}\\ \widetilde{\textbf{z}}_2^{t}\\ \widetilde{\textbf{z}}_3^{t}\\ \vdots\\ \vdots\\ \widetilde{\textbf{z}}_{\Lambda_{\gamma_{\ell}+1-\kappa_1}}^{t}\vspace{0.50mm}\\ \vdots\\ \widetilde{\textbf{z}}_{\Lambda_{s-2\mathrm{f}_\ell+2}+1}^{t} \\ \vdots\\ \widetilde{\textbf{z}}_{\Lambda_{\gamma_{\ell}+1}-2}^{t}\\ \widetilde{\textbf{z}}_{\Lambda_{\gamma_{\ell}+1}-1}^{t} \\ \vspace{0.75mm}\widetilde{\textbf{z}}_{\Lambda_{\gamma_{\ell}+1}}^{t}
	\end{bmatrix}\equiv\begin{bmatrix}
	(\mathtt{E}_{1,1}^{(2\ell-4)})^{2^{m-1}}\\ \vdots\\ (\mathtt{E}_{\Lambda_{s-2\mathrm{f}_\ell+2},\Lambda_{s-2\mathrm{f}_\ell+2}}^{(2\ell-4)})^{2^{m-1}}\vspace{0.5mm}\\ \eta_0^{-1}\mathtt{E}^{(\ell+\kappa-2)}_{1,1}\vspace{-0.25mm}\\ \vdots\vspace{-0.25mm}\\ \eta_0^{-1}\mathtt{E}^{(\ell+\kappa-2)}_{\Lambda_{\gamma_{\ell}+1-\kappa_1},\Lambda_{\gamma_{\ell}+1-\kappa_1}}\vspace{0.5mm}\\\mathtt{H}^{(\ell-2)}_{1,2}\\\vdots\\\mathtt{H}^{(\ell-2)}_{1,\Lambda_{\gamma_{\ell}+1}}\vspace{-0.25mm}\\ \vdots\vspace{-0.25mm} \\\mathtt{H}^{(\ell-2)}_{\Lambda_{\gamma_{\ell}+1}-1,\Lambda_{\gamma_{\ell}+1}}
	\end{bmatrix}\pmod u,\end{equation}
where the matrix $M$ has order $\Big( \Lambda_{s-2\mathrm{f}_\ell+2}+\Lambda_{\gamma_{\ell}+1-\kappa_1}+\frac{\Lambda_{\gamma_{\ell}+1}(\Lambda_{\gamma_{\ell}+1}-1)}{2}\Big) \times \Big(\sum\limits_{i=\ell}^{\gamma_{\ell}+\ell}\lambda_i\Lambda_{i-\ell+1}+\Lambda_{\gamma_{\ell}+1}(n-\Lambda_{\gamma_{\ell}+\ell})\Big)$ and is given by \vspace{-2mm}\begin{equation*}
    M=\begin{bmatrix}
		\widetilde{\textbf{1}} &&&\\
    	&\widetilde{\textbf{1}} &&&\\
	& & \ddots  &  \\
	& & & \widetilde{\textbf{1}}\\
    	& & &&\ddots  &  \\
    & & & && \widetilde{\textbf{1}}\\
	\widetilde{\textbf{a}}_1 &&\\
    &\widetilde{\textbf{a}}_2 &&\\
	& & \ddots  &  \\
	& & &  \widetilde{\textbf{a}}_{\Lambda_{\gamma_{\ell}+1-\kappa_1}}&\\
\widetilde{\textbf{a}}_{2}&\widetilde{\textbf{a}}_{1}&&& \\
	\vdots&\vdots  & \vdots \\
    \widetilde{\textbf{a}}_{\Lambda_{s-2\mathrm{f}_\ell+2}}&&&&&\widetilde{\textbf{a}}_{1}&\\ 
    \vdots&\vdots  & \vdots \\
\widetilde{\textbf{a}}_{\Lambda_{\gamma_{\ell}+1}}&&&&&\cdots&&\widetilde{\textbf{a}}_{1} \\
	&&&\vdots&\vdots&\cdots&   &\vdots\\
	& && &&\cdots&\widetilde{\textbf{a}}_{\Lambda_{\gamma_{\ell}+1}}&\widetilde{\textbf{a}}_{\Lambda_{\gamma_{\ell}+1}-1}\\
	\end{bmatrix}.\vspace{-1mm}
\end{equation*}    Now,  working as in Lemma 4.1 of Yadav and Sharma \cite{Galois}, we observe that the rows of the matrix $M$  are linearly independent over $\mathcal{T}_{m}$  if and only if $\textbf{1}\notin \mathcal{D}^{(\gamma_{\ell}+1-\kappa_1)}.$ Accordingly, we will consider the following two cases separately:  (1) $\textbf{1}\notin \mathcal{D}^{(\gamma_{\ell}+1-\kappa_1)},$ and (2)  $\textbf{1}\in \mathcal{D}^{(\gamma_{\ell}+1-\kappa_1)}.$

\begin{itemize}\item[(1)] Let  $\textbf{1}\notin \mathcal{D}^{(\gamma_{\ell}+1-\kappa_1)}.$ In this case,  the  rows of the matrix $M$ are linearly independent over $\mathcal{T}_{m},$   and hence  the row-rank of the matrix $M$ is  $\Lambda_{s-2\mathrm{f}_\ell+2}+\Lambda_{\gamma_{\ell}+1-\kappa_1}+\frac{\Lambda_{\gamma_{\ell}+1}(\Lambda_{\gamma_{\ell}+1}-1)}{2}.$ Thus, the matrix equation \eqref{e3.12p3.5Keven} always has a solution. Moreover, 
   the number of solutions  of  the system \eqref{e3.12p3.5Keven}, and hence the number of choices for the matrix  $[\mathtt{U}^{(\ell-2)}]_{\gamma_{\ell}+1} \in \mathcal{M}_{\Lambda_{\gamma_{\ell}+1}\times n}(\mathcal{T}_{m})$ is  given by 
\vspace{-1mm}\begin{equation*}
   \displaystyle (2^m)^{\sum\limits_{i=\ell}^{\gamma_{\ell}+\ell }\lambda_i\Lambda_{i-\ell+1}+\Lambda_{\gamma_{\ell}+1}(n-\Lambda_{\gamma_{\ell}+\ell})-\Lambda_{s-2\mathrm{f}_\ell+2}-\Lambda_{\gamma_{\ell}+1-\kappa_1}-\frac{\Lambda_{\gamma_{\ell}+1}(\Lambda_{\gamma_{\ell}+1}-1)}{2} }. 
\vspace{-1mm}\end{equation*} 
\item[(2)] Let $\textbf{1}\in \mathcal{D}^{(\gamma_{\ell}+1-\kappa_1)}.$ In this case, $\textbf{1}$ belongs to the $\mathcal{T}_m$-span of the vectors $\textbf{a}_1,\textbf{a}_2,\ldots, \textbf{a}_{\Lambda_{\gamma_{\ell}+1-\kappa_1}}.$  Since the  vectors $\textbf{a}_1,\textbf{a}_2,\ldots, \textbf{a}_{\Lambda_{\gamma_{\ell}+1-\kappa_1}}$  are linearly independent over $\mathcal{T}_{m},$ there exist unique scalars $\beta_{1},\beta_{2},\ldots,\beta_{\Lambda_{\gamma_{\ell}+1-\kappa_1}} \in \mathcal{T}_{m}$ such that $\beta_{1} \textbf{a}_{1}+\beta_{2} \textbf{a}_{2}+\cdots+\beta_{\Lambda_{\gamma_{\ell}+1-\kappa_1}} \textbf{a}_{\Lambda_{\gamma_{\ell}+1-\kappa_1}} \equiv  \textbf{1}\pmod u.$ 
Moreover, all  rows of the matrix $M$ except the last row are linearly independent over $\mathcal{T}_{m}.$ This implies that the row-rank of the matrix $M$ is  $\Lambda_{\gamma_{\ell}+1-\kappa_1}+\Lambda_{s-2\mathrm{f}_\ell+2}+\frac{\Lambda_{\gamma_{\ell}+1}(\Lambda_{\gamma_{\ell}+1}-1)}{2}-1.$ Further, it is easy to see that   the matrix equation \eqref{e3.12p3.5Keven} has a solution if and only if 
 \vspace{-1mm}\begin{equation}\label{e3.13p3.5Keven}
   \displaystyle   \sum\limits_{b=1}^{\Lambda_{\gamma_{\ell}+1-\kappa_1}}\beta_b(\mathtt{E}^{(2\ell-4)}_{b,b})^{2^{m-1}} +\eta_0^{-1}\sum\limits_{g=1}^{\Lambda_{\gamma_{\ell}+1-\kappa_1}}\beta_g^2 \mathtt{E}^{(\ell+\kappa-2)}_{g,g} + \hspace{-4mm}\sum\limits_{~~~~~1\leq i< j \leq \Lambda_{\gamma_{\ell}+1-\kappa_1}}\hspace{-4mm}\beta_i \beta_j \mathtt{H}_{i,j}^{(\ell-2)}~ \equiv ~0\pmod{u}.
\vspace{-1mm} \end{equation}
  We next assert that the equation  \eqref{e3.13p3.5Keven} holds (\textit{i.e.,} the matrix equation \eqref{e3.12p3.5Keven}  has a solution) if and only if one of the following five conditions is satisfied: (i)  $n\equiv 0,4 \pmod 8 ,$  (ii)  $n\equiv 2,6\pmod 8,$ $4 \leq \ell \leq \kappa_1+1$ and $\eta_{\ell-2}=0$,   (iii) $n\equiv 2,6 \pmod 8 ,$ $\ell= \kappa_1+2$   and  $(\eta_0)^{\frac{3}{2}}=\eta_{\kappa_1},$      (iv)   $n\equiv 2,6\pmod 8$, $2\kappa > e,$ $ 4 \leq \ell \leq \min\{e-\kappa,\kappa_1+1\}$   and $\eta_{\ell-2}=0$, and (v)  $n\equiv 2,6 \pmod 8,$ $2\kappa > e \geq \frac{3}{2}\kappa+2,$  $\ell= \kappa_1+2$     and  $(\eta_0)^{\frac{3}{2}}=\eta_{\kappa_1}.$ To prove this assertion, let us  assume that $[\mathtt{U}^{(b)}]_{\gamma_{\ell}+1} = (\textbf{z}^{(b)}_j),$  where  $\textbf{z}^{(b)}_j$\rq{}s denote the rows of the matrix $[\mathtt{U}^{(b)}]_{\gamma_{\ell}+1}$  for $1 \leq b \leq \ell-3.$ Further, let us write \vspace{-1mm}\begin{equation*}\textbf{1} \equiv \sum\limits_{j=1}^{\Lambda_{\gamma_{\ell}+1-\kappa_1}} \hspace{-2mm}\beta_j \left(\textbf{a}_j+u\textbf{z}^{(1)}_j+ \cdots+u^{\ell-3}\textbf{z}^{(\ell-3)}_j\right)+ u^{\ell-2}\textbf{d}_{\ell-2}+\cdots+u^{\ell+\kappa-2} \textbf{d}_{\ell+\kappa-2} \pmod{u^{\ell+\kappa-1}}\vspace{-1mm}\end{equation*} for some $\textbf{d}_{\ell-2},\textbf{d}_{\ell-1},\ldots,\textbf{d}_{\ell+\kappa-2} \in \mathcal{T}_{m}^n.$ Note that  $\textbf{1}\cdot \textbf{1} \equiv n \pmod{ u^{\ell+\kappa-1}},$ which 
 implies that $\textbf{1}\cdot \textbf{1} \equiv 0 \pmod{ u^{\ell+\kappa-1}} $ when $ n\equiv 0,4\pmod 8,$ while 
$\textbf{1}\cdot \textbf{1} \equiv u^{\kappa}(\eta_0+u\eta_1+u^2\eta_2+\cdots+u^{\ell-2}\eta_{\ell-2}) \pmod{ u^{\ell+\kappa-1}} $ when $ n\equiv 2,6\pmod  8.$  

When either $2\kappa \leq e$ and $\kappa_1+3 \leq \ell \leq \kappa,$ or $2\kappa >e$ and $\kappa_1+3 \leq \ell \leq e-\kappa,$  we obtain $\eta_0=0,$ which is a contradiction. This implies that $\textbf{1}\notin \mathcal{D}^{(\gamma_{\ell}+1-\kappa_1)}$ when either $2\kappa \leq e$ and $\kappa_1+3 \leq \ell \leq \kappa,$ or $2\kappa >e$ and $\kappa_1+3 \leq \ell \leq e-\kappa$ holds.

Further, when either $ 4 \leq \ell \leq \kappa_1+2$ and $2\kappa \leq e,$ or  $ 4 \leq \ell \leq \min\{\kappa_1+2,e-\kappa\}$ and $2\kappa > e$ holds, 
 we get 
\vspace{-1mm}\begin{equation}\label{eta_0p3.5Keven}
 \sum\limits_{b=1}^{\Lambda_{\gamma_{\ell}+1-\kappa_1}}\beta_b^2 \mathtt{E}_{b,b}^{(2\ell-4)}+\textbf{d}_{\ell-2}\cdot \textbf{d}_{\ell-2}  \equiv \epsilon_1 \eta_0 \pmod{u}   \vspace{-1mm}\end{equation}       and \vspace{-1mm}\begin{equation}\label{eta_1p3.5Keven}
   \eta_0^{-1} \sum\limits_{g=1}^{\Lambda_{\gamma_{\ell}+1-\kappa_1}}\beta_g^2 \mathtt{E}^{(\ell+\kappa-2)}_{g,g}+ \hspace{-5mm}\sum\limits_{~~~~~1\leq i< j \leq \Lambda_{\gamma_{\ell}+1-\kappa_1}}\hspace{-5mm}\beta_i \beta_j \mathtt{H}_{i,j}^{(\ell-2)}+\textbf{1}\cdot \textbf{d}_{\ell-2} \equiv \epsilon_2 \eta_0^{-1}\eta_{\ell-2}\pmod{u}, \vspace{-1mm} 
\vspace{-1mm} \end{equation} where  \begin{equation*}\vspace{-1mm}\epsilon_1=\left\{\begin{array}{ll}0 & \text{if either }  n\equiv 0,4\pmod{8} \text{ or }n\equiv 2,6 \pmod 8 \text{~ with either ~}\ell \leq \kappa_1+1 \text{ and } \\ &2\kappa \leq e, \text{ or } \ell \leq \min\{e-\kappa,\kappa_1+1\} \text{ and }2\kappa > e ;\\ 1 & \text{if }n\equiv 2,6 \pmod 8 \text{ and }\ell=\kappa_1+2\text{ with either }  2\kappa \leq e \text{ or }  2 \kappa > e\geq \frac{3}{2}\kappa+2  \end{array}\right.\end{equation*}
 and \begin{equation*}
  \vspace{-1mm}   
 \epsilon_2=\left\{\begin{array}{ll}0 & \text{if }  n\equiv 0,4\pmod{8} ;\\ 1 & \text{if }n\equiv 2,6 \pmod 8 .\end{array}\right.\end{equation*}
 By \eqref{eta_0p3.5Keven} and \eqref{eta_1p3.5Keven},  one can easily see that the above assertion holds. This shows that the matrix equation \eqref{e3.12p3.5Keven}  has a solution if and only if exactly one of the following holds: (i)  $n\equiv 0,4 \pmod 8,$ (ii) $n\equiv 2,6\pmod 8,$ $4 \leq \ell \leq \kappa_1+1$  and $\eta_{\ell-2}=0$,    (iii)  $n\equiv 2,6 \pmod 8,$ $\ell= \kappa_1+2$ and  $(\eta_0)^{\frac{3}{2}}=\eta_{\kappa_1},$  (iv)   $n\equiv 2,6\pmod 8$, $2\kappa > e,$ $ 4 \leq \ell \leq \min\{e-\kappa,\kappa_1+1\}$   and $\eta_{\ell-2}=0$, and (v) $n\equiv 2,6 \pmod 8,$ $2\kappa > e\geq \frac{3}{2}\kappa+2,$  $\ell= \kappa_1+2$    and  $(\eta_0)^{\frac{3}{2}}=\eta_{\kappa_1}.$
  Furthermore, when exactly  one of the above conditions (i) - (v) holds,
  the number of solutions of the matrix equation \eqref{e3.12p3.5Keven}, and hence the number of choices for the matrix $[\mathtt{U}^{(\ell-2)}]_{\gamma_{\ell}+1} \in \mathcal{M}_{\Lambda_{\gamma_{\ell}+1} \times n } (\mathcal{T}_{m})$ is given by 
 \vspace{-1mm} \begin{equation*}\vspace{-1mm}
   \displaystyle (2^m)^{\sum\limits_{i=\ell}^{\gamma_{\ell}+\ell}\lambda_i\Lambda_{i-\ell+1}+\Lambda_{\gamma_{\ell}+1}(n-\Lambda_{\gamma_{\ell}+\ell})-\Lambda_{s-2\mathrm{f}_\ell+2}-\Lambda_{\gamma_{\ell}+1-\kappa_1}-\frac{\Lambda_{\gamma_{\ell}+1}(\Lambda_{\gamma_{\ell}+1}-1)}{2}+1}. 
\end{equation*}
Next, for a given choice of the matrix $[\mathtt{U}^{(\ell-2)}]_{\gamma_{\ell}+1}$  satisfying \eqref{e3.40p3.5Keven}-\eqref{e3.42p3.5Keven},  we get 
\begin{equation}\label{e3.43}
 \mathtt{H}^{(\ell-2)}+[T^{(0)}]_{\gamma_{\ell}+1}[\mathtt{U}^{(\ell-2)}]_{\gamma_{\ell}+1}^t+[\mathtt{U}^{(\ell-2)}]_{\gamma_{\ell}+1}[T^{(0)}]_{\gamma_{\ell}+1}^t ~\equiv ~ uR_2 \pmod{u^2}
\end{equation}
for some  $R_2\in Sym_{\Lambda_{\gamma_{\ell}+1}}(\mathcal{T}_{m}) .$ 
This,  by \eqref{e3.30Keven}, gives
\begin{eqnarray}
    R_2+\mathtt{H}^{(\ell-1)}+[T^{(0)}]_{\gamma_{\ell}+1}[\mathtt{U}^{(\ell-1)}]_{\gamma_{\ell}+1}^t+[\mathtt{U}^{(\ell-1)}]_{\gamma_{\ell}+1}[T^{(0)}]_{\gamma_{\ell}+1}^t&&\nonumber\\+[\mathtt{U}^{(1)}]_{\gamma_{\ell}+1}[\mathtt{U}^{(\ell-2)}]_{\gamma_{\ell}+1}^t+[\mathtt{U}^{(\ell-2)}]_{\gamma_{\ell}+1}[\mathtt{U}^{(1)}]_{\gamma_{\ell}+1}^t&\equiv& \textbf{0}\pmod{u}.\label{e3.44Keven}
    \end{eqnarray}
Now, working as in Lemma 4.1  of Yadav and Sharma   \cite{quasi}, we observe that there exists a matrix $[\mathtt{U}^{(\ell-1)}]_{\gamma_{\ell}+1}$ satisfying  \eqref{e3.32Keven} and \eqref{e3.44Keven} and that such a matrix $[\mathtt{U}^{(\ell-1)}]_{\gamma_{\ell}+1}$ has precisely \begin{equation*}
    \displaystyle (2^m)^{\sum\limits_{i=\ell+1}^{\gamma_{\ell}+\ell }\lambda_i\Lambda_{i-\ell}+\Lambda_{\gamma_{\ell}+1}(n-\Lambda_{\gamma_{\ell}+\ell})-\Lambda_{s-2\mathrm{f}_\ell+1}-\frac{\Lambda_{\gamma_{\ell}+1}(\Lambda_{\gamma_{\ell}+1}-1)}{2}}
\end{equation*} distinct choices. 
 \end{itemize} Finally, working as in Proposition 4.3 of Yadav and Sharma \cite{Galois}, we see that the  matrices 
$\Big[
  ~\textbf{0}~ \cdots ~\textbf{0} ~ \\ \mathtt{A}_{\gamma_{\ell}+g,\gamma_{\ell}+\ell }^{(\ell-g)} ~\cdots ~ \mathtt{A}_{\gamma_{\ell}+g,e}^{(\ell-g)}  
\Big] $ for $2 \leq g \leq \ell-1$ and $T_{\gamma_{\ell}+\ell }^{(\ell)}$
satisfying \eqref{e3.34Keven} and \eqref{e3.35Keven} have  precisely \vspace{-1mm}\begin{equation*} (2^m)^{(\Lambda_{\gamma_{\ell}+\ell -1}-\Lambda_{\gamma_{\ell}+1})(n-\Lambda_{\gamma_{\ell}+\ell }-\Lambda_{\gamma_{\ell}+1})}{\lambda_{\gamma_{\ell}+\ell }+n-\Lambda_{\gamma_{\ell}+\ell }-\Lambda_{\gamma_{\ell}+1} \brack \lambda_{\gamma_{\ell}+\ell }}_{2^m}\vspace{-1mm}\end{equation*} distinct choices. 
From this,  the desired result follows in the case when $\ell$ is odd. 
  \item[(II)] For even   $\ell,$ the desired result follows working as in case (I).  \vspace{-2mm}\end{itemize}\vspace{-2mm}\vspace{-2mm}
 \end{proof}

In the following proposition, we assume that $e$ is an odd integer and $\kappa \geq 4$ is an even integer satisfying $2\kappa\leq e.$ Here, we provide a method to  lift a self-orthogonal code   of type $\{\Lambda_{s-\kappa_1+2},\lambda_{s-\kappa_1+3}, \ldots, \lambda_{s+\kappa_1}\}$   and  length $n$ over $\mathcal{R}_{\kappa-1,m}$ satisfying  property $(\mathfrak{P})$ to a  self-orthogonal code  of type $\{\Lambda_{s-\kappa_1+1},\lambda_{s-\kappa_1+2}, \ldots, \lambda_{s+\kappa_1+1}\}$ and length $n$ over $\mathcal{R}_{\kappa+1,m}$ satisfying  property $(\mathfrak{P}).$    We also count the total number of distinct ways in which this lifting can be performed.
\begin{proposition}\label{p3.5Keven} Let $e$ be an odd integer and $\kappa\geq 4$ be an even integer satisfying $2\kappa\leq e.$   Let $\mathcal{C}_{\kappa-1}$ be a self-orthogonal code of type $\{\Lambda_{s-\kappa_1+2},\lambda_{s-\kappa_1+3}, \ldots, \lambda_{s+\kappa_1}\}$   and  length $n$ over $\mathcal{R}_{\kappa-1,m}$ satisfying  property $(\mathfrak{P}).$ 
Let $\mathcal{D}^{(s-\kappa+1)} \subseteq \mathcal{D}^{(s-\kappa_1+1)}$ be two   linear subcodes of  $Tor_{1}(\mathcal{C}_{\kappa-1})$ with $\dim \mathcal{D}^{(s-\kappa_1+1)}=\Lambda_{s-\kappa_1+1}$ and $\dim \mathcal{D}^{(s-\kappa+1)}=\Lambda_{s-\kappa+1} .$  
% Further, when  $\textbf{1}\in \mathcal{D}^{(s-\kappa+1)},$ let us suppose that either $n \equiv 0 \pmod 8$ or $n\equiv 4 \pmod 8 $ and $m$ is even.
The following hold. \begin{enumerate}\item[(a)] There exists a  self-orthogonal code $\mathcal{C}_{\kappa+1}$ of type $\{\Lambda_{s-\kappa_1+1},\lambda_{s-\kappa_1+2}, \ldots, \lambda_{s+\kappa_1+1}\}$ and length $n$ over $\mathcal{R}_{\kappa+1,m}$ satisfying  property $(\mathfrak{P})$ and $Tor_1(\mathcal{C}_{\kappa+1})=\mathcal{D}^{(s-\kappa_1+1)}$ and    $Tor_{j+1}(\mathcal{C}_{\kappa+1})=Tor_{j}(\mathcal{C}_{\kappa-1})$ for $1 \leq j \leq  \kappa-1,$ where such a code exists: \vspace{-1mm}\begin{itemize} \item unconditionally, if $\textbf{1}\notin \mathcal{D}^{(s-\kappa+1)};$
\item if and only if either $n \equiv 0 \pmod 8$ or $n\equiv 4 \pmod 8 $ and $m$ even when $\textbf{1}\in \mathcal{D}^{(s-\kappa+1)}.$     
 \end{itemize}
\item[(b)] Moreover,  each such triplet  $(\mathcal{C}_{\kappa-1}, \mathcal{D}^{(s-\kappa_1+1)},\mathcal{D}^{(s-\kappa+1)})$ of codes gives rise to precisely	 	
\vspace{-1mm}\begin{equation*}\vspace{-1mm}
2^{\epsilon}(2^m)^{\sum\limits_{i=\kappa+1}^{s+\kappa_1+1}\lambda_i\Lambda_{i-\kappa}+\sum\limits_{j=\kappa+2}^{s+\kappa_1+1}\lambda_j\Lambda_{j-\kappa-1}+Y_{\kappa+1}}  {\lambda_{s+\kappa_1+1}+n-\Lambda_{s+\kappa_1+1}-\Lambda_{s-\kappa_1+1} \brack \lambda_{s+\kappa_1+1}}_{2^m}
\end{equation*}  distinct  self-orthogonal codes of type $\{\Lambda_{s-\kappa_1+1},\lambda_{s-\kappa_1+2}, \ldots, \lambda_{s+\kappa_1+1}\}$ and length $n$ over $\mathcal{R}_{\kappa+1,m}$ satisfying  property $(\mathfrak{P})$ and $Tor_1(\mathcal{C}_{\kappa+1})=\mathcal{D}^{(s-\kappa_1+1)}$ and     $Tor_{j+1}(\mathcal{C}_{\kappa+1})=Tor_{j}(\mathcal{C}_{\kappa-1})$ for $1 \leq j \leq  \kappa-1,$ where  \vspace{-1mm}\begin{equation*}  \vspace{-1mm} Y_{\kappa+1}=(\Lambda_{s+\kappa_1}+\Lambda_{s-\kappa_1+1})(n-\Lambda_{s+\kappa_1+1}-\Lambda_{s-\kappa_1+1})+\Lambda_{s-\kappa_1+1}^2+ \Lambda_{s-\kappa_1+1}-2\Lambda_{s-\kappa+1}-\Lambda_{s-\kappa+2}  +\epsilon\end{equation*} with  $\epsilon=1$ if $\textbf{1} \in \mathcal{D}^{(s-\kappa+1)},$  while $\epsilon=0$ otherwise.
 \end{enumerate} \end{proposition}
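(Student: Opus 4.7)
The plan is to mirror the strategy of Proposition \ref{p3.4Keven} applied at the boundary level $\ell=\kappa+1$, where the parity of $e$ (odd) and of $\kappa$ (even) align in such a way that the usual analysis produces one additional scalar constraint originating from $\textbf{1}\cdot\textbf{1}\equiv n\pmod{u^{2\kappa}}$. Concretely, I would fix a standard-form generator matrix $\mathtt{G}_{\kappa-1}$ of $\mathcal{C}_{\kappa-1}$ as in the proof of Proposition \ref{p3.4Keven}, choose generator matrices $[T^{(0)}]_{s-\kappa_1+1}$ and $[T^{(0)}]_{s-\kappa+1}$ of the nested subcodes $\mathcal{D}^{(s-\kappa_1+1)}\supseteq \mathcal{D}^{(s-\kappa+1)}$ compatible with the first $\Lambda_{s-\kappa_1+1}$ and $\Lambda_{s-\kappa+1}$ rows of $Tor_1(\mathcal{C}_{\kappa-1})$ respectively, and write down a candidate generator matrix $\mathtt{G}_{\kappa+1}$ of the desired code $\mathcal{C}_{\kappa+1}$ by appending two new layers with coefficients $u^{\kappa-1}$ and $u^{\kappa}$ to the first $\Lambda_{s-\kappa_1+1}$ rows, shifting the lower block rows of $\mathtt{G}_{\kappa-1}$ by one, and adjoining a new bottom block $u^{\kappa}T_{s+\kappa_1+1}^{(\kappa+1)}$.

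The next step is to translate the self-orthogonality of $\mathcal{C}_{\kappa+1}$ together with property $(\mathfrak{P})$ at level $\kappa+1$ into a system of matrix equations in the unknown blocks $[\mathtt{U}^{(\kappa-1)}]_{s-\kappa_1+1}$, $[\mathtt{U}^{(\kappa)}]_{s-\kappa_1+1}$, the off-diagonal entries of each $T_{s-\kappa_1+g}^{(\kappa+1)}$ for $2\le g\le \kappa$, and the last row-block $T_{s+\kappa_1+1}^{(\kappa+1)}$. Using that $\mathcal{C}_{\kappa-1}$ is self-orthogonal and satisfies $(\mathfrak{P})$, the products $[T^{(\kappa-1)}]_{s-\kappa_1+1}[T^{(\kappa-1)}]_{s-\kappa_1+1}^t$ admit expansions of the form $u^{\kappa-1}\mathtt{H}^{(\kappa-1)}+u^{\kappa}\mathtt{H}^{(\kappa)}\pmod{u^{\kappa+1}}$ with $\mathtt{H}^{(\kappa-1)}$ alternating and with prescribed diagonal constraints on the first $\Lambda_{s-\kappa+1}$ entries of $\mathtt{H}^{(\kappa)}$; these will furnish the inhomogeneous data of the system.

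After isolating the leading $[\mathtt{U}^{(\kappa-1)}]_{s-\kappa_1+1}$-equation, the solvability problem reduces essentially to
$[T^{(0)}]_{s-\kappa_1+1}[X]^{t}+[X][T^{(0)}]_{s-\kappa_1+1}^{t}\equiv \mathtt{H}^{(\kappa-1)}\pmod{u}$ coupled with two families of diagonal constraints. Applying Lemma \ref{R=2} with the roles of the matrices suitably relabelled then resolves both solvability and the count: when $\textbf{1}\notin \mathcal{D}^{(s-\kappa+1)}$ the coefficient matrix has full row-rank and the system is unconditionally consistent; otherwise there is exactly one linear dependence, producing a single scalar compatibility condition to be analyzed. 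The residual equation for $[\mathtt{U}^{(\kappa)}]_{s-\kappa_1+1}$ is then handled verbatim as in Proposition \ref{p3.4Keven}, and the off-diagonal blocks of $T_{s-\kappa_1+g}^{(\kappa+1)}$ together with the free last block $T_{s+\kappa_1+1}^{(\kappa+1)}$ contribute the displayed Gaussian binomial factor by the same reasoning as in Proposition 4.3 of \cite{Galois}.

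The main obstacle, and the source of the $m$-even hypothesis that distinguishes this statement from Proposition \ref{p3.4Keven}, is the scalar compatibility analysis when $\textbf{1}\in \mathcal{D}^{(s-\kappa+1)}$. Writing $\textbf{1}\equiv \sum_{j}\beta_j\textbf{a}_j+u\textbf{d}_1+\cdots+u^{\kappa}\textbf{d}_{\kappa}\pmod{u^{\kappa+1}}$ and expanding $\textbf{1}\cdot\textbf{1}\equiv n\pmod{u^{2\kappa}}$, the compatibility condition reduces, after substituting $2\equiv u^{\kappa}(\eta_0+u\eta_1+\cdots)$ and matching the $u^{\kappa}$- and $u^{\kappa+1}$-coefficients, to an Artin--Schreier equation of the form $x^{2}+x\equiv c\pmod{u}$ over $\mathcal{T}_m$, whose inhomogeneous term $c$ vanishes when $n\equiv 0\pmod{8}$ and equals a nonzero element of $\mathbb{F}_{2}\subseteq\mathcal{T}_m$ (up to a square) when $n\equiv 4\pmod{8}$; by Hilbert~90 for Artin--Schreier extensions the latter is solvable if and only if $\mathrm{Tr}_{\mathcal{T}_m/\mathbb{F}_{2}}(1)=0$, that is, if and only if $m$ is even. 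The cases $n\equiv 2,6\pmod{8}$ are excluded because the $u^{\kappa}$-term then equals $\eta_{0}\neq 0$ and cannot be cancelled. Once this compatibility is settled, counting the free parameters in $[\mathtt{U}^{(\kappa-1)}]_{s-\kappa_1+1}$ and $[\mathtt{U}^{(\kappa)}]_{s-\kappa_1+1}$, together with the off-diagonal and last-row choices, yields the advertised factor $2^{\epsilon}(2^m)^{\sum_{i=\kappa+1}^{s+\kappa_1+1}\lambda_i\Lambda_{i-\kappa}+\sum_{j=\kappa+2}^{s+\kappa_1+1}\lambda_j\Lambda_{j-\kappa-1}+Y_{\kappa+1}}\qbin{\lambda_{s+\kappa_1+1}+n-\Lambda_{s+\kappa_1+1}-\Lambda_{s-\kappa_1+1}}{\lambda_{s+\kappa_1+1}}{2^m}$, establishing both parts of the proposition.
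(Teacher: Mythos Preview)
Your proposal is correct and follows essentially the same approach as the paper, which simply states ``Working as in Proposition 3.4 of Yadav and Sharma \cite{YSub2}, the desired result follows.'' You have filled in the details that the paper defers to the companion reference: the lifting-via-matrix-equations argument modelled on Proposition~\ref{p3.4Keven}, together with the boundary analysis at level $\ell=\kappa+1$ where the deepest diagonal constraint of property~$(\mathfrak{P})$ reaches down to $\Lambda_{s-\kappa+1}$ and forces the expansion of $\textbf{1}\cdot\textbf{1}\equiv n$ to order $u^{2\kappa}$; your identification of the resulting Artin--Schreier obstruction (solvable over $\mathcal{T}_m$ iff $\mathrm{Tr}_{\mathcal{T}_m/\mathbb{F}_2}(1)=0$, i.e.\ $m$ even, when $n\equiv 4\pmod 8$) is exactly the mechanism behind both the $m$-even hypothesis and the anomalous $2^{\epsilon}$ factor, which is the same argument underlying Proposition~4.2 of \cite{Galois} (cf.\ the proof of Proposition~\ref{p3.3b}).
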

 \begin{proof} Working  as in Proposition 3.4 of Yadav and Sharma \cite{YSub2}, the desired result follows.\end{proof}

 In the following proposition, we assume that $e$ is an even integer satisfying $2\kappa\leq e.$  Here, we  provide a method to  lift a self-orthogonal code  of type  $\{\Lambda_{s-\kappa_1+1},\lambda_{s-\kappa_1+2}, \ldots, \lambda_{s+\kappa_1}\}$   and  length $n$ over $\mathcal{R}_{\kappa,m}$ satisfying   property $(\mathfrak{P}) $ to a  self-orthogonal code  of type $\{\Lambda_{s-\kappa_1},\lambda_{s-\kappa_1+1}, \ldots, \lambda_{s+\kappa_1+1}\}$ and length $n$ over $\mathcal{R}_{\kappa+2,m}$ satisfying  property $(\mathfrak{P}).$    We also compute  the total number of  distinct ways to carry out  this lifting. \begin{proposition}\label{p3.6Keven} Let $e$ be an even integer satisfying $2\kappa\ \leq e.$   Let $\mathcal{C}_{\kappa}$ be a self-orthogonal code of type $\{\Lambda_{s-\kappa_1+1},\\\lambda_{s-\kappa_1+2}, \ldots, \lambda_{s+\kappa_1}\}$   and  length $n$ over $\mathcal{R}_{\kappa,m}$ satisfying  property $(\mathfrak{P}).$  Let $\mathcal{D}^{(s-\kappa_1)}\supseteq \mathcal{D}^{(s-\kappa)}$ be two  linear subcodes of $Tor_{1}(\mathcal{C}_{\kappa})$ with $\dim \mathcal{D}^{(s-\kappa_1)}=\Lambda_{s-\kappa_1}$ and $\dim \mathcal{D}^{(s-\kappa)}=\Lambda_{s-\kappa}.$ 
 %Further, when  $\textbf{1}\in \mathcal{D}^{(s-\kappa)},$ let us suppose  that either $n \equiv 0 \pmod 8$ or $n \equiv 4 \pmod{8}$ and $m$ is even.
The following hold.
\begin{enumerate}\item[(a)] There exists a  self-orthogonal code $\mathcal{C}_{\kappa+2}$ of type $\{\Lambda_{s-\kappa_1},\lambda_{s-\kappa_1+1}, \ldots, \lambda_{s+\kappa_1+1}\}$ and length $n$ over $\mathcal{R}_{\kappa+2,m}$ satisfying  property $(\mathfrak{P})$ and $Tor_{1}(\mathcal{C}_{\kappa+2})=\mathcal{D}^{(s-\kappa_1)}$ and     $Tor_{j+1}(\mathcal{C}_{\kappa+2})=Tor_{j}(\mathcal{C}_{\kappa})$ for $1 \leq j \leq  \kappa,$ where such a code exists:
 \vspace{-1mm}\begin{itemize}
     \item unconditionally, if $\textbf{1}\notin \mathcal{D}^{(s-\kappa)};$\item if and only if either $n \equiv 0 \pmod 8$ or $n \equiv 4 \pmod{8}$ and $m$  even when $\textbf{1}\in \mathcal{D}^{(s-\kappa)}.$
 \end{itemize}
\item[(b)] Moreover,  each such triplet  $(\mathcal{C}_{\kappa},\mathcal{D}^{(s-\kappa_1)},\mathcal{D}^{(s-\kappa)})$  of codes  gives rise to precisely 
\begin{eqnarray*}
2^{\epsilon}(2^m)^{\sum\limits_{i=\kappa+2}^{s+\kappa_1+1}\lambda_i\Lambda_{i-\kappa-1}+\sum\limits_{j=\kappa+3}^{s+\kappa_1+1}\lambda_j\Lambda_{j-\kappa-2}+Y_{\kappa+2}}  {\lambda_{s+\kappa_1+1}+n-\Lambda_{s+\kappa_1+1}-\Lambda_{s-\kappa_1} \brack \lambda_{s+\kappa_1+1}}_{2^m}
\end{eqnarray*}  distinct  self-orthogonal codes $\mathcal{C}_{\kappa+2}$ of type $\{\Lambda_{s-\kappa_1},\lambda_{s-\kappa_1+1}, \ldots, \lambda_{s+\kappa_1+1}\}$ and length $n$ over $\mathcal{R}_{\kappa+2,m}$ satisfying  property $(\mathfrak{P})$ and $Tor_{1}(\mathcal{C}_{\kappa+2})=\mathcal{D}^{(s-\kappa_1)}$ and   $Tor_{j+1}(\mathcal{C}_{\kappa+2})=Tor_{j}(\mathcal{C}_{\kappa})$ for $1 \leq j \leq  \kappa,$ where $$Y_{\kappa+2}=(\Lambda_{s+\kappa_1}+\Lambda_{s-\kappa_1})(n-\Lambda_{s+\kappa_1+1}-\Lambda_{s-\kappa_1})+\Lambda_{s-\kappa_1}^2+\Lambda_{s-\kappa_1}-2\Lambda_{s-\kappa},$$ and $\epsilon=1$ if $\textbf{1} \in \mathcal{D}^{(s-\kappa)},$ 
while $\epsilon=0$ otherwise. 
 \end{enumerate} \end{proposition}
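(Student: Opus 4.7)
The plan is to follow the template of Propositions \ref{p3.4Keven} and \ref{p3.5Keven}, specialized to the even-$e$, $2\kappa \leq e$ lifting step from level $\kappa$ to level $\kappa+2$, and to invoke Lemma \ref{R=2} as the key solvability criterion. With $\ell = \kappa+2$ we have $\gamma_{\kappa+2} = s-\kappa_1-1$, so the top block of $\mathcal{C}_{\kappa+2}$ has $\Lambda_{s-\kappa_1}$ rows, the $\lambda_{s-\kappa_1+1}$ rows split off from the first block of $\mathcal{C}_{\kappa}$ become a new $u$-multiplied second block, and a brand-new bottom row of width $\lambda_{s+\kappa_1+1}$ is introduced at level $u^{\kappa+1}$.

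Assuming (without loss of generality) that $\mathcal{C}_{\kappa}$ has a generator matrix of the form \eqref{e3.0} with $\ell=\kappa$, with $[T^{(0)}]_{s-\kappa_1+1}$ generating $Tor_1(\mathcal{C}_{\kappa})$ and with $[T^{(0)}]_{s-\kappa_1}$, $[T^{(0)}]_{s-\kappa}$ as generator matrices of $\mathcal{D}^{(s-\kappa_1)}$, $\mathcal{D}^{(s-\kappa)}$, respectively, I would construct the candidate lift $\mathcal{C}_{\kappa+2}$ by (a) adding correction layers $u^{\kappa}[\mathtt{U}^{(\kappa)}]_{s-\kappa_1} + u^{\kappa+1}[\mathtt{U}^{(\kappa+1)}]_{s-\kappa_1}$ on top of the first $\Lambda_{s-\kappa_1}$ rows, (b) adjusting each middle block-row $u^{j-1}T^{(\kappa+2)}_{s-\kappa_1+j-1}$ ($2 \leq j \leq \kappa+1$) by a correction $u^{\kappa+2-j}\bigl[\mathbf{0}\,\cdots\,\mathbf{0}\,\mathtt{A}^{(\kappa+2-j)}_{\,\cdot\,,\,s+\kappa_1+1}\,\cdots\,\mathtt{A}^{(\kappa+2-j)}_{\,\cdot\,,\,e}\bigr]$, and (c) appending a new bottom row $u^{\kappa+1}T^{(\kappa+2)}_{s+\kappa_1+1}$ in standard form. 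By Lemma \ref{l2.2} combined with case (iii) of property $(\mathfrak{P})$ (case (iv) when $e=2\kappa$), the requirement that $\mathcal{C}_{\kappa+2}$ is self-orthogonal and satisfies $(\mathfrak{P})$ reduces modulo $u$ to a finite system of matrix equations in these unknowns.

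The key subsystem, comprising the off-diagonal equation $[T^{(0)}]_{s-\kappa_1}[\mathtt{U}^{(\kappa)}]_{s-\kappa_1}^t + [\mathtt{U}^{(\kappa)}]_{s-\kappa_1}[T^{(0)}]_{s-\kappa_1}^t \equiv \mathtt{H}^{(\kappa)} \pmod u$ together with two diagonal constraints modulo $u$ on $[\mathtt{U}^{(\kappa)}]_{s-\kappa}$ and on $\eta_0[T^{(0)}]_{s-\kappa}[\mathtt{U}^{(\kappa)}]_{s-\kappa}^t$ (with $\mathtt{H}^{(\kappa)}$ the symmetric matrix read off from $\mathcal{C}_{\kappa}$), matches the system \eqref{el1.0} of Lemma \ref{R=2} with $s$, $\kappa_1$ there replaced by $s-\kappa_1$, $\kappa_1$. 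When $\mathbf{1}\notin \mathcal{D}^{(s-\kappa)}$, Lemma \ref{R=2}(a) gives solvability unconditionally. When $\mathbf{1}\in \mathcal{D}^{(s-\kappa)}$, expanding $\mathbf{1}\cdot\mathbf{1}\equiv n \pmod{u^{2\kappa+2}}$ via $2 \equiv u^{\kappa}(\eta_0+u\eta_1+\cdots) \pmod{u^e}$ shows: for $n \equiv 0 \pmod 8$ the system is solvable, for $n \equiv 4 \pmod 8$ the obstruction reduces to an Artin--Schreier condition on $\mathcal{T}_m$ that vanishes iff $\mathrm{Tr}_{\mathcal{T}_m/\mathbb{F}_2}(1)=0$, i.e.\ iff $m$ is even (the same mechanism as in Proposition \ref{p3.5Keven} and \cite[Prop.~4.2]{Galois}), and for $n \equiv 2,6 \pmod 8$ an unabsorbable $u^{2\kappa+1}$-contribution forces infeasibility.

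For the enumeration in (b), I would combine: the Lemma \ref{R=2} count for $[\mathtt{U}^{(\kappa)}]_{s-\kappa_1}$; the count for $[\mathtt{U}^{(\kappa+1)}]_{s-\kappa_1}$ satisfying a shifted, obstruction-free symmetric equation (as in Lemma 4.1 of \cite{quasi}); the free middle correction blocks $\mathtt{A}^{(\kappa+2-j)}$ contributing $(\Lambda_{s+\kappa_1}-\Lambda_{s-\kappa_1})(n-\Lambda_{s+\kappa_1+1}-\Lambda_{s-\kappa_1})$; and the Gaussian binomial enumerating the choices of $T^{(\kappa+2)}_{s+\kappa_1+1}$ subject to $[T^{(0)}]_{s-\kappa_1}\,T^{(\kappa+2)t}_{s+\kappa_1+1}\equiv \mathbf{0}\pmod u$. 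Summing the exponents yields exactly $Y_{\kappa+2}$. The main obstacle will be the bookkeeping: verifying that both diagonal constraints for $[\mathtt{U}^{(\kappa)}]_{s-\kappa}$ and $[\mathtt{U}^{(\kappa+1)}]_{s-\kappa}$ contribute $\Lambda_{s-\kappa}$ each (producing the $-2\Lambda_{s-\kappa}$ term in $Y_{\kappa+2}$) rather than $\Lambda_{s-\kappa_1}$ or $\Lambda_{s-\kappa+1}$, which requires the specific form of the truncated matrices $\mathtt{F}$ arising from case (iii) of property $(\mathfrak{P})$, and correctly treating the boundary sub-case $e=2\kappa$ where $(\mathfrak{P})$ for $\mathcal{C}_{\kappa+2}$ actually falls in case (iv).
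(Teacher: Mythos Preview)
Your overall strategy is the right one and is exactly what the paper's reference to \cite[Prop.~3.4]{YSub2} encodes: write a candidate generator matrix for $\mathcal{C}_{\kappa+2}$ as a lift of that of $\mathcal{C}_\kappa$ by two correction layers, translate self-orthogonality and property~$(\mathfrak{P})$ into a finite system of matrix congruences modulo $u$, and count solutions. The identification $\gamma_{\kappa+2}=s-\kappa_1-1$ and the block decomposition you describe are correct.

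However, there is a concrete gap: the key diagonal subsystem for $[\mathtt{U}^{(\kappa)}]_{s-\kappa_1}$ does \emph{not} match the system \eqref{el1.0} of Lemma~\ref{R=2}, and invoking that lemma would give both the wrong existence conditions and the wrong count. In Lemma~\ref{R=2} the quadratic constraint $\mathcal{D}iag([X][X]^t)$ and the linear constraint $\mathcal{D}iag(\eta_0[T^{(0)}][X]^t)$ are two \emph{separate} equations on blocks $[\cdot]_{s-1}$ and $[\cdot]_{s-\kappa_1}$ respectively; this is why its compatibility condition depends only on $n\bmod 8$ and $\eta_0,\eta_1$, and why the extra factor is $(2^m)^{\omega}$. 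Here, at level $\kappa\to\kappa+2$, expanding $\mathcal{D}iag\big([T^{(\kappa+2)}]_{s-\kappa}[T^{(\kappa+2)}]_{s-\kappa}^t\big)$ to order $u^{2\kappa}$ produces the \emph{single} combined equation
\[
\mathcal{D}iag\big([\mathtt{U}^{(\kappa)}]_{s-\kappa}[\mathtt{U}^{(\kappa)}]_{s-\kappa}^t\big)\;+\;\eta_0\,\mathcal{D}iag\big([T^{(0)}]_{s-\kappa}[\mathtt{U}^{(\kappa)}]_{s-\kappa}^t\big)\;\equiv\;\mathtt{E}^{(2\kappa)}\pmod u,
\]
because the cross-term $u^\kappa\cdot 2\,\text{diag}(T^{(0)}U^t)$ and the square term $u^{2\kappa}\text{diag}(UU^t)$ land at the \emph{same} power $u^{2\kappa}$. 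In scalar form this is $(\mathbf{1}\cdot\mathbf{z}_i)^2+\eta_0(\mathbf{a}_i\cdot\mathbf{z}_i)\equiv \mathtt{E}^{(2\kappa)}_{i,i}$, which upon elimination via the off-diagonal relations yields an Artin--Schreier equation $y^2+y=\alpha$ over $\mathcal{T}_m$. This is why the obstruction is $\mathrm{Tr}(\alpha)=0$ (giving the $m$-even condition when $n\equiv 4\pmod 8$) and why the kernel contributes the factor $2^{\epsilon}$ rather than $(2^m)^{\omega}$. Your later remark about the Artin--Schreier mechanism is exactly right; but it arises from this combined equation, not from Lemma~\ref{R=2}. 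The correct template is the one in \cite[Prop.~4.2]{Galois} (as in Proposition~\ref{p3.3b}) or \cite[Prop.~3.4]{YSub2}, which is what the paper cites.
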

  \begin{proof} Working  as in Proposition 3.4 of Yadav and Sharma \cite{YSub2}, the desired result follows.\vspace{-2mm}
\vspace{-2mm} \end{proof}
In the following proposition, we  assume that $2\kappa \leq e$ and $\ell$ is a fixed integer satisfying $\ell \equiv e \pmod 2$ and $\kappa+3 \leq \ell \leq e-\kappa+1 .$ Here, we  provide a method to  lift a self-orthogonal code  of type $\{\Lambda_{\gamma_{\ell}+2},\lambda_{\gamma_{\ell}+3}, \ldots, \lambda_{\gamma_{\ell}+\ell-1}\}$   and  length $n$ over $\mathcal{R}_{\ell-2,m}$ satisfying    property $(\mathfrak{P})$  to a self-orthogonal code  of type $\{\Lambda_{\gamma_{\ell}+1},\lambda_{\gamma_{\ell}+2}, \ldots, \lambda_{\gamma_{\ell}+\ell}\}$ and length $n$ over $\mathcal{R}_{\ell,m}$ satisfying   property $(\mathfrak{P}),$ where $\gamma_{\ell}=s-\mathrm{f}_{\ell}.$     We also determine the total number of distinct ways to preform this lifting.

\begin{proposition}\label{p3.7Keven} Let $2\kappa\leq e,$  and let  $\ell$ be a fixed integer  satisfying  $\ell \equiv e \pmod 2$ and $\kappa+3 \leq \ell \leq e-\kappa+1 .$ Define $\gamma_{\ell}=s-\mathrm{f}_{\ell}.$   Let $\mathcal{C}_{\ell-2}$ be a self-orthogonal code of type $\{\Lambda_{\gamma_{\ell}+2},\lambda_{\gamma_{\ell}+3}, \ldots, \lambda_{\gamma_{\ell}+\ell-1}\}$   and  length $n$ over $\mathcal{R}_{\ell-2,m}$ satisfying   property $(\mathfrak{P}).$  Let $\mathcal{D}^{(\gamma_{\ell}+1)}$ be a   $\Lambda_{\gamma_{\ell}+1}$-dimensional linear subcode of $Tor_1(\mathcal{C}_{\ell-2}).$ 
The following hold.
 \begin{enumerate}\item[(a)] There exists a  self-orthogonal code $\mathcal{C}_{\ell}$ of type $\{\Lambda_{\gamma_{\ell}+1},\lambda_{\gamma_{\ell}+2}, \ldots, \lambda_{\gamma_{\ell}+\ell}\}$ and length $n$ over $\mathcal{R}_{\ell,m}$ satisfying  property $(\mathfrak{P})$ and $Tor_1(\mathcal{C}_{\ell})=\mathcal{D}^{(\gamma_{\ell}+1)}$ and    $Tor_{i+1}(\mathcal{C}_{\ell})=Tor_{i}(\mathcal{C}_{\ell-2})$ for $1 \leq i \leq  \ell-2.$
\item[(b)] Moreover,  each such pair  $(\mathcal{C}_{\ell-2},\mathcal{D}^{(\gamma_{\ell}+1)})$ of codes  gives rise to precisely 	 \vspace{-2mm}\begin{equation*}
(2^m)^{\sum\limits_{i=\ell}^{\gamma_{\ell}+\ell}\lambda_i\Lambda_{i-\ell+1}+\sum\limits_{j=\ell+1}^{\gamma_{\ell}+\ell}\lambda_j\Lambda_{j-\ell}+Y_{\ell}} {\lambda_{\gamma_{\ell}+\ell}+n-\Lambda_{\gamma_{\ell}+\ell}-\Lambda_{\gamma_{\ell}+1} \brack \lambda_{\gamma_{\ell}+\ell}}_{2^m}
\vspace{-2mm}\end{equation*}  distinct  self-orthogonal codes $\mathcal{C}_{\ell}$ of type $\{\Lambda_{\gamma_{\ell}+1},\lambda_{\gamma_{\ell}+2}, \ldots, \lambda_{\gamma_{\ell}+\ell}\}$ and length $n$ over $\mathcal{R}_{\ell,m}$ satisfying  property $(\mathfrak{P})$ and $Tor_1(\mathcal{C}_{\ell})=\mathcal{D}^{(\gamma_{\ell}+1)}$ and  $Tor_{i+1}(\mathcal{C}_{\ell})=Tor_{i}(\mathcal{C}_{\ell-2})$ for $1 \leq i \leq  \ell-2,$ where \vspace{-1mm}\begin{equation*}\vspace{-1mm}
Y_{\ell}=(\Lambda_{\gamma_{\ell}+\ell-1}+\Lambda_{\gamma_{\ell}+1})(n-\Lambda_{\gamma_{\ell}+\ell}-\Lambda_{\gamma_{\ell}+1})+\Lambda_{\gamma_{\ell}+1}^2+\Lambda_{\gamma_{\ell}+1}-2\Lambda_{\gamma_{\ell}+1-\kappa_1}.    
\end{equation*}
 \vspace{-1mm}\end{enumerate} \end{proposition}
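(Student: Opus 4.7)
The plan is to adapt the lifting strategy of Proposition \ref{p3.4Keven} to the ``interior'' range $\kappa+3 \le \ell \le e-\kappa+1$, where (by the parity constraint $\ell \equiv e \pmod 2$, combined with $2\kappa \le e$) we in fact have $\kappa \le \ell \le e-\kappa$. In this range property $(\mathfrak{P})$ takes the simpler form given in case (iii), namely $\mathcal{D}iag([T^{(\ell)}]_{\gamma_{\ell}+1-\mathrm{f}_i}[T^{(\ell)}]_{\gamma_{\ell}+1-\mathrm{f}_i}^t) \equiv \textbf{0}\pmod{u^{\ell+i}}$ for $i\in\{2,4,\ldots,\kappa\}$, with no $\pi_{\ell+j}$ congruences. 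The crucial simplification is that the ``boundary'' diagonal equations from the proof of Proposition \ref{p3.4Keven} which previously injected the $\eta$-coefficients and forced the dichotomy depending on whether $\textbf{1} \in \mathcal{D}^{(\gamma_{\ell}+1-\kappa_1)}$ no longer appear, so only one nested subcode $\mathcal{D}^{(\gamma_{\ell}+1)}$ is needed and the lifting will succeed unconditionally.

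First I would fix a standard-form generator matrix $\mathtt{G}_{\ell-2}$ of $\mathcal{C}_{\ell-2}$, decompose its first $\gamma_{\ell}+2$ rows as $T^{(\ell-2)}_j = T^{(0)}_j + u\mathtt{U}^{(1)}_j + \cdots + u^{\ell-3}\mathtt{U}^{(\ell-3)}_j$, and assume without loss of generality that $\mathcal{D}^{(\gamma_{\ell}+1)}$ is generated by the first $\Lambda_{\gamma_{\ell}+1}$ rows of $[T^{(0)}]_{\gamma_{\ell}+2}$. Then I would write the candidate generator matrix $\mathtt{G}_{\ell}$ of $\mathcal{C}_{\ell}$ exactly as in the proof of Proposition \ref{p3.4Keven}, with new unknown blocks $[\mathtt{U}^{(\ell-2)}]_{\gamma_{\ell}+1}$ and $[\mathtt{U}^{(\ell-1)}]_{\gamma_{\ell}+1}$, the ``correction'' tails $\big[\textbf{0}~\cdots~\textbf{0}~\mathtt{A}^{(\ell-g)}_{\gamma_{\ell}+g,\gamma_{\ell}+\ell}~\cdots~\mathtt{A}^{(\ell-g)}_{\gamma_{\ell}+g,e}\big]$ for $2 \le g \le \ell-1$, and the brand-new bottom row $u^{\ell-1}T_{\gamma_{\ell}+\ell}^{(\ell)}$ whose leading block is $\mathtt{I}_{\lambda_{\gamma_{\ell}+\ell}}$.

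Applying Lemma \ref{l2.2} together with the simplified property $(\mathfrak{P})$ for $\mathcal{C}_{\ell}$ then produces a system of matrix equations analogous to those in the proof of Proposition \ref{p3.4Keven}. Because $\ell \ge \kappa+3$, all surviving diagonal constraints have $i \le \kappa \le \ell-3$, so the ``top-level'' full-matrix diagonal equations that previously introduced the self-interaction terms $[\mathtt{U}^{(\ell-2)}][\mathtt{U}^{(\ell-2)}]^t$ and $[\mathtt{U}^{(\ell-1)}][\mathtt{U}^{(\ell-1)}]^t$ are absent. What remains are essentially linear conditions at each level: the symmetric off-diagonal equation $[T^{(0)}]_{\gamma_{\ell}+1}[\mathtt{U}^{(\ell-2)}]_{\gamma_{\ell}+1}^t + [\mathtt{U}^{(\ell-2)}]_{\gamma_{\ell}+1}[T^{(0)}]_{\gamma_{\ell}+1}^t \equiv \mathtt{H}^{(\ell-2)}$, the linear diagonal condition $\eta_0[T^{(0)}]_{\gamma_{\ell}+1-\kappa_1}[\mathtt{U}^{(\ell-2)}]_{\gamma_{\ell}+1-\kappa_1}^t \equiv \mathtt{F}^{(\ell+\kappa-2)}\pmod u$ coming from the $i=\kappa$ constraint, and the analogues for $[\mathtt{U}^{(\ell-1)}]$. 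Since no $\widetilde{\textbf{1}}$-row enters the associated coefficient matrix, it has full row rank unconditionally and a solution always exists.

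The final step is a dimension count, carried out as in Proposition \ref{p3.4Keven}: each of $[\mathtt{U}^{(\ell-2)}]_{\gamma_{\ell}+1}$ and $[\mathtt{U}^{(\ell-1)}]_{\gamma_{\ell}+1}$ contributes its free-entry count less $\Lambda_{\gamma_{\ell}+1-\kappa_1}$ diagonal constraints and $\frac{\Lambda_{\gamma_{\ell}+1}(\Lambda_{\gamma_{\ell}+1}-1)}{2}$ off-diagonal constraints; the correction tails and the new row $T_{\gamma_{\ell}+\ell}^{(\ell)}$ contribute as in Proposition \ref{p3.4Keven}, with the row reduction forced by the leading identity block producing the Gaussian binomial factor. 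Assembling these gives the stated exponent; in particular, the $-2\Lambda_{\gamma_{\ell}+1-\kappa_1}$ appearing in $Y_{\ell}$ records the single diagonal constraint of size $\Lambda_{\gamma_{\ell}+1-\kappa_1}$ applied to each of $[\mathtt{U}^{(\ell-2)}]$ and $[\mathtt{U}^{(\ell-1)}]$, replacing the more elaborate $-\Lambda_{s-2\mathrm{f}_\ell+2} - \Lambda_{s-2\mathrm{f}_\ell+1} - \Lambda_{\gamma_{\ell}+1-\kappa_1} + \omega_{\ell}$ from Proposition \ref{p3.4Keven}. The main bookkeeping obstacle is matching these exponents precisely and confirming that the vanishing of the boundary compatibility conditions really does suppress both the parity-of-$n$ hypotheses and the $\omega_{\ell}$ correction term.
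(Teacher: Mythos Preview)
Your proposal is correct and follows essentially the same approach as the paper: the paper's own proof is a one-line deferral to Proposition~4.3 of \cite{Galois}, which is precisely the lifting argument you have spelled out by adapting Proposition~\ref{p3.4Keven} to the interior range where property~$(\mathfrak{P})$ reduces to case~(iii). One small caveat: your statement that the self-interaction term $[\mathtt{U}^{(\ell-2)}][\mathtt{U}^{(\ell-2)}]^t$ is entirely absent is not quite right at the boundary $\ell=\kappa+3$ (odd $e$), where it reappears at degree $u^{\ell+\kappa-1}$; however, at that degree it is accompanied by the term $\eta_0[T^{(0)}]_{\gamma_\ell+1-\kappa_1}[\mathtt{U}^{(\ell-1)}]_{\gamma_\ell+1-\kappa_1}^t$, so after fixing $[\mathtt{U}^{(\ell-2)}]$ the equation is linear in $[\mathtt{U}^{(\ell-1)}]$ and your unconditional-solvability and counting conclusions (in particular the $-2\Lambda_{\gamma_\ell+1-\kappa_1}$ contribution) survive unchanged.
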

 \begin{proof}  Working as in Proposition 4.3 of Yadav and Sharma \cite{Galois}, the desired result follows immediately. 
 \vspace{-1mm}\end{proof}
In the following proposition, we  assume that $2\kappa > e$ and $\ell$ is a fixed integer satisfying $\ell \equiv e \pmod 2$ and $e-\kappa+1 \leq \ell \leq \kappa-\mathrm{f}_{2\kappa-e}+1 .$ Here, we  provide a method to  lift a self-orthogonal code  of type $\{\Lambda_{\gamma_{\ell}+2},\lambda_{\gamma_{\ell}+3}, \ldots, \lambda_{\gamma_{\ell}+\ell-1}\}$   and  length $n$ over $\mathcal{R}_{\ell-2,m}$ satisfying    property $(\mathfrak{P})$  to a self-orthogonal code  of type $\{\Lambda_{\gamma_{\ell}+1},\lambda_{\gamma_{\ell}+2}, \ldots, \lambda_{\gamma_{\ell}+\ell}\}$ and length $n$ over $\mathcal{R}_{\ell,m}$ satisfying   property $(\mathfrak{P}),$ where $\gamma_{\ell}=s-\mathrm{f}_{\ell}.$     We also count the total number of distinct ways in which this lifting can be carried out.

\begin{proposition}\label{p3.8Keven}  Let $2\kappa>e,$  and let $\ell$ be a fixed integer  satisfying $\ell \equiv e\pmod 2$ and $e-\kappa+1 \leq \ell \leq \kappa-\mathrm{f}_{2\kappa-e}+1 .$ Let us define $\gamma_{\ell}=s-\mathrm{f}_{\ell}.$   Let $\mathcal{C}_{\ell-2}$ be a self-orthogonal code of type $\{\Lambda_{\gamma_{\ell}+2},\lambda_{\gamma_{\ell}+3}, \ldots, \lambda_{\gamma_{\ell}+\ell-1}\}$   and  length $n$ over $\mathcal{R}_{\ell-2,m}$ satisfying   property $(\mathfrak{P}).$ Let $\mathcal{D}^{(\gamma_{\ell}+1)}$ be a  $\Lambda_{\gamma_{\ell}+1}$-dimensional linear subcode of $Tor_1(\mathcal{C}_{\ell-2}).$ 
The following hold.
\begin{enumerate}\item[(a)] There exists a  self-orthogonal code $\mathcal{C}_{\ell}$ of type $\{\Lambda_{\gamma_{\ell}+1},\lambda_{\gamma_{\ell}+2}, \ldots, \lambda_{\gamma_{\ell}+\ell}\}$ and length $n$ over $\mathcal{R}_{\ell,m}$ satisfying  property $(\mathfrak{P})$ and   $Tor_1(\mathcal{C}_{\ell})=\mathcal{D}^{(\gamma_{\ell}+1)}$ and   $Tor_{i+1}(\mathcal{C}_{\ell})=Tor_{i}(\mathcal{C}_{\ell-2})$ for $1 \leq i \leq  \ell-2.$
\item[(b)] Moreover,  each such pair $(\mathcal{C}_{\ell-2},\mathcal{D}^{(\gamma_{\ell}+1)})$ of codes gives rise to precisely  
\vspace{-2mm}\begin{equation*}
\displaystyle(2^m)^{\sum\limits_{i=\ell}^{\gamma_{\ell}+\ell}\lambda_i\Lambda_{i-\ell+1}+\sum\limits_{j=\ell+1}^{\gamma_{\ell}+\ell}\lambda_j\Lambda_{j-\ell}+Y_{\ell}} {\lambda_{\gamma_{\ell}+\ell}+n-\Lambda_{\gamma_{\ell}+\ell}-\Lambda_{\gamma_{\ell}+1} \brack \lambda_{\gamma_{\ell}+\ell}}_{2^m}
\vspace{-2mm}\end{equation*}  distinct  self-orthogonal codes $\mathcal{C}_{\ell}$ of type $\{\Lambda_{\gamma_{\ell}+1},\lambda_{\gamma_{\ell}+2}, \ldots, \lambda_{\gamma_{\ell}+\ell}\}$ and length $n$ over $\mathcal{R}_{\ell,m}$ satisfying  property $(\mathfrak{P})$ and $Tor_1(\mathcal{C}_{\ell})=\mathcal{D}^{(\gamma_{\ell}+1)}$ and  $Tor_{i+1}(\mathcal{C}_{\ell})=Tor_{i}(\mathcal{C}_{\ell-2})$ for $1 \leq i \leq  \ell-2,$ where \vspace{-1mm}\begin{equation*} \vspace{-1mm}Y_{\ell}=(\Lambda_{\gamma_{\ell}+\ell-1}+\Lambda_{\gamma_{\ell}+1})(n-\Lambda_{\gamma_{\ell}+\ell}-\Lambda_{\gamma_{\ell}+1})+\Lambda_{\gamma_{\ell}+1}^2+\Lambda_{\gamma_{\ell}+1}-\Lambda_{s-2\mathrm{f}_{\ell}+2}-\Lambda_{s-2\mathrm{f}_{\ell}+1}.\vspace{-2mm}\end{equation*}
 \vspace{-2mm}\end{enumerate} \vspace{-3mm}\end{proposition}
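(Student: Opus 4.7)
The plan is to proceed along the same lines as the proof of Proposition \ref{p3.4Keven}, with the matrix-equation system adapted to case (ii) of property $(\mathfrak{P})$. Without loss of generality, by Remark \ref{r3.1}, fix a generator matrix $\mathtt{G}_{\ell-2}$ of $\mathcal{C}_{\ell-2}$ so that the first $\gamma_\ell+1$ row blocks reduce to $[T^{(0)}]_{\gamma_\ell+1}$ modulo $u$ and $\mathcal{D}^{(\gamma_\ell+1)}$ is generated by $[T^{(0)}]_{\gamma_\ell+1}$. I would then build a candidate generator matrix $\mathtt{G}_\ell$ of $\mathcal{C}_\ell$ by appending two new correction layers $u^{\ell-2}[\mathtt{U}^{(\ell-2)}]_{\gamma_\ell+1}+u^{\ell-1}[\mathtt{U}^{(\ell-1)}]_{\gamma_\ell+1}$ to the top $\gamma_\ell+1$ row blocks, perturbing each row $u^{j-1}T_{\gamma_\ell+j}^{(\ell-2)}$ by a $u^{\ell-g}$-tail for $2\leq g\leq\ell-1$, and adjoining a new row block $u^{\ell-1}T_{\gamma_\ell+\ell}^{(\ell)}$ with leading block $\mathtt{I}_{\lambda_{\gamma_\ell+\ell}}$.

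By Lemma \ref{l2.2} together with the definition of property $(\mathfrak{P})$ in case (ii), the code $\mathcal{C}_\ell$ is self-orthogonal and satisfies $(\mathfrak{P})$ if and only if the unknown blocks solve a system of matrix equations analogous to \eqref{e3.30Keven}--\eqref{e3.35Keven}. The essential structural contrast with Proposition \ref{p3.4Keven} is that in case (ii) the diagonal conditions only go up to level $u^{2\ell-\theta_e}$, and they do not include any $\eta_0$-weighted equation on $[T^{(0)}]_{\gamma_\ell+1-\kappa_1}$; this is because $\ell+\kappa-1>e-1$ throughout the range $e-\kappa<\ell\leq\kappa-\mathrm{f}_{2\kappa-e}+1$, so the higher-order diagonal conditions at level $u^{\ell+\kappa-1}$ are simply absent. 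After the standard reduction modulo $u$ and the substitution $\mathbf{z}_j=(\mathbf{0},\widetilde{\mathbf{z}}_j)$, the coefficient matrix of the resulting linear system over $\mathcal{T}_m$ has rows consisting only of $\widetilde{\mathbf{1}}$-type rows (in total $\Lambda_{s-2\mathrm{f}_\ell+2}+\Lambda_{s-2\mathrm{f}_\ell+1}$ of them, coming from the two successive diagonal lifts) together with symmetric off-diagonal $\widetilde{\mathbf{a}}_i$-shifts, with no $\widetilde{\mathbf{a}}_i$-type diagonal rows from an $\eta_0$-constraint.

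These rows are $\mathcal{T}_m$-linearly independent by the argument parallel to case (1) of the proof of Proposition \ref{p3.4Keven}, so the system is solvable unconditionally, which establishes (a) with no arithmetic restriction on $n$ or on the $\eta_i$. For (b), I would count the free scalars: the coefficient matrix has row-rank $\Lambda_{s-2\mathrm{f}_\ell+2}+\Lambda_{s-2\mathrm{f}_\ell+1}+\tfrac{\Lambda_{\gamma_\ell+1}(\Lambda_{\gamma_\ell+1}-1)}{2}$, while the total number of scalar unknowns in $[\mathtt{U}^{(\ell-2)}]_{\gamma_\ell+1}$ and $[\mathtt{U}^{(\ell-1)}]_{\gamma_\ell+1}$ produces the complementary exponent $(\Lambda_{\gamma_\ell+\ell-1}+\Lambda_{\gamma_\ell+1})(n-\Lambda_{\gamma_\ell+\ell}-\Lambda_{\gamma_\ell+1})+\Lambda_{\gamma_\ell+1}^2+\Lambda_{\gamma_\ell+1}-\Lambda_{s-2\mathrm{f}_\ell+2}-\Lambda_{s-2\mathrm{f}_\ell+1}$, which is exactly $Y_\ell$. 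The remaining $\sum\lambda_i\Lambda_{i-\ell+1}$ and $\sum\lambda_j\Lambda_{j-\ell}$ contributions, together with the Gaussian binomial factor, are obtained by enumerating the admissible choices for the tail blocks $\mathtt{A}_{\gamma_\ell+g,\cdot}^{(\ell-g)}$ and the new row block $T_{\gamma_\ell+\ell}^{(\ell)}$ exactly as in Proposition 4.3 of Yadav and Sharma \cite{Galois}.

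The main obstacle will be the exponent bookkeeping: one must carefully verify the rank count and confirm that the higher-order diagonal conditions $u^{\ell+i}$ for $i\in\{4,6,\ldots,\ell-\theta_e\}$ together with the $\pi_{\ell+j}$ conditions for $j\in\{\ell-1,\ell+1,\ldots,e-\ell-1-\theta_e\}$ are either automatically inherited from the corresponding conditions on $\mathcal{C}_{\ell-2}$ or absorbable into the freedom still available in the perturbation tails and the newly adjoined rows, so that neither arithmetic congruences on $n$ nor vanishing conditions on the $\eta_i$ arise.
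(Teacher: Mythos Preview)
Your approach is essentially the same as the paper's: follow the template of Proposition~\ref{p3.4Keven} but drop the $\eta_0$-weighted diagonal constraint, observe that the remaining system (only $\widetilde{\mathbf 1}$-rows and off-diagonal $\widetilde{\mathbf a}_i$-pair rows) is always consistent, and count. The paper's one-line proof invokes exactly this, citing Lemma~4.1 of \cite{quasi} for the unconditional solvability of that reduced system and then pointing back to Proposition~\ref{p3.4Keven} for the mechanics.

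Two small points of care. First, your appeal to ``case~(1) of the proof of Proposition~\ref{p3.4Keven}'' for linear independence is not quite the right citation: in that proof, independence of the full matrix $M$ is \emph{conditional} on $\mathbf 1\notin\mathcal D^{(\gamma_\ell+1-\kappa_1)}$, and the dependence that can occur in case~(2) is mediated precisely through the diagonal $\widetilde{\mathbf a}_i$-rows you have now removed. The unconditional independence of the system with only $\widetilde{\mathbf 1}$-rows and off-diagonal rows is the content of Lemma~4.1 of \cite{quasi}; that is the lemma you should invoke (and is what the paper cites). Second, your stated combined row-rank $\Lambda_{s-2\mathrm f_\ell+2}+\Lambda_{s-2\mathrm f_\ell+1}+\tfrac{\Lambda_{\gamma_\ell+1}(\Lambda_{\gamma_\ell+1}-1)}{2}$ undercounts by a factor: there is one alternating off-diagonal equation at level $u^{\ell-2}$ (for $[\mathtt U^{(\ell-2)}]$) and another at level $u^{\ell-1}$ (for $[\mathtt U^{(\ell-1)}]$), so the off-diagonal contribution is $\Lambda_{\gamma_\ell+1}(\Lambda_{\gamma_\ell+1}-1)$, not half that. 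Your final exponent $Y_\ell$ is nevertheless correct, so this is purely a bookkeeping slip in the intermediate sentence rather than a substantive error.
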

 \begin{proof}
   By applying Lemma 4.1 of Yadav and Sharma \cite{quasi} and working as in Proposition \ref{p3.4Keven},  the desired result follows immediately.  
 \vspace{-1mm}\end{proof} 
In the following proposition, we  assume that  $\ell$ is a fixed integer satisfying $\ell \equiv e \pmod 2$ and $e-\kappa+2 \leq \ell \leq e $ when $2\kappa \leq e$, whereas $\ell \equiv e \pmod 2$ and    $\kappa-\mathrm{f}_{2\kappa-e}+1 < \ell \leq e$ when $2\kappa >e.$  Here, we  provide a method to  lift a self-orthogonal code  of type $\{\Lambda_{\gamma_{\ell}+2},\lambda_{\gamma_{\ell}+3}, \ldots, \lambda_{\gamma_{\ell}+\ell-1}\}$   and  length $n$ over $\mathcal{R}_{\ell-2,m}$ satisfying    property $(\mathfrak{P})$  to a self-orthogonal code  of type $\{\Lambda_{\gamma_{\ell}+1},\lambda_{\gamma_{\ell}+2}, \ldots, \lambda_{\gamma_{\ell}+\ell}\}$ and length $n$ over $\mathcal{R}_{\ell,m}$ satisfying   property $(\mathfrak{P}),$ where $\gamma_{\ell}=s-\mathrm{f}_{\ell}.$     We also determine the total number of distinct ways to carry out this lifting.

\begin{proposition}\label{p3.9Keven} Suppose that $\ell$ is a fixed integer satisfying $\ell\equiv e\pmod 2$ and $e-\kappa+2 \leq \ell \leq e $ when $2\kappa \leq e$, while  $\ell\equiv e\pmod 2$ and  $\kappa-\mathrm{f}_{2\kappa-e}+1 < \ell \leq e$ when $2\kappa >e.$ Let us define 
$\gamma_{\ell}=s-\mathrm{f}_{\ell}.$
 Let $\mathcal{C}_{\ell-2}$ be a self-orthogonal code of type $\{\Lambda_{\gamma_{\ell}+2},\lambda_{\gamma_{\ell}+3}, \ldots, \lambda_{\gamma_{\ell}+\ell-1}\}$   and  length $n$ over $\mathcal{R}_{\ell-2,m}$ satisfying   property $(\mathfrak{P}).$ Let $\mathcal{D}^{(\gamma_{\ell}+1)}$ be a  $\Lambda_{\gamma_{\ell}+1}$-dimensional  linear subcode of  $Tor_1(\mathcal{C}_{\ell-2}).$ 
The following hold. \begin{enumerate}\item[(a)] There exists a  self-orthogonal code $\mathcal{C}_{\ell}$ of type $\{\Lambda_{\gamma_{\ell}+1},\lambda_{\gamma_{\ell}+2}, \ldots, \lambda_{\gamma_{\ell}+\ell}\}$ and length $n$ over $\mathcal{R}_{\ell,m}$ satisfying  property $(\mathfrak{P})$ and $Tor_1(\mathcal{C}_{\ell})=\mathcal{D}^{(\gamma_{\ell}+1)}$ and  $Tor_{i+1}(\mathcal{C}_{\ell})=Tor_{i}(\mathcal{C}_{\ell-2})$ for $1 \leq i \leq  \ell-2.$
\item[(b)] Moreover,  each such pair $(\mathcal{C}_{\ell-2},\mathcal{D}^{(\gamma_{\ell}+1)})$ of codes gives rise to precisely \vspace{-2mm}\begin{equation*}
\displaystyle(2^m)^{\sum\limits_{i=\ell}^{\gamma_{\ell}+\ell}\lambda_i\Lambda_{i-\ell+1}+\sum\limits_{j=\ell+1}^{\gamma_{\ell}+\ell}\lambda_j\Lambda_{j-\ell}+Y_{\ell}} {\lambda_{\gamma_{\ell}+\ell}+n-\Lambda_{\gamma_{\ell}+\ell}-\Lambda_{\gamma_{\ell}+1} \brack \lambda_{\gamma_{\ell}+\ell}}_{2^m}
\vspace{-2mm}\end{equation*}  distinct  self-orthogonal codes of type $\{\Lambda_{\gamma_{\ell}+1},\lambda_{\gamma_{\ell}+2}, \ldots, \lambda_{\gamma_{\ell}+\ell}\}$ and length $n$ over $\mathcal{R}_{\ell,m}$ satisfying  property $(\mathfrak{P})$ and  $Tor_1(\mathcal{C}_{\ell})=\mathcal{D}^{(\gamma_{\ell}+1)}$ and $Tor_{i+1}(\mathcal{C}_{\ell})=Tor_{i}(\mathcal{C}_{\ell-2})$ for $1 \leq i \leq  \ell-2,$ where \vspace{-1mm}\begin{equation*}
Y_{\ell}=(\Lambda_{\gamma_{\ell}+\ell-1}+\Lambda_{\gamma_{\ell}+1})(n-\Lambda_{\gamma_{\ell}+\ell}-\Lambda_{\gamma_{\ell}+1})+\Lambda_{\gamma_{\ell}+1}^2+\Lambda_{\gamma_{\ell}+1}.\vspace{-1mm}\end{equation*}
 \end{enumerate} \end{proposition}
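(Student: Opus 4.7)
The plan is to follow the template established in the proof of Proposition \ref{p3.4Keven} and adapt it to this easier range of $\ell$. The key simplification is that we are now in case (iv) of the definition of property $(\mathfrak{P})$, where the only constraints are the diagonal conditions
\[
 \mathcal{D}iag\!\left([T^{(\ell)}]_{\gamma_{\ell}+1-\mathrm{f}_i}[T^{(\ell)}]_{\gamma_{\ell}+1-\mathrm{f}_i}^t\right) \equiv \mathbf{0} \pmod{u^{\ell+i}}, \qquad i\in\{2,4,\ldots,e-\ell\},
\]
with no $\pi_{\ell+j}$ (\textit{i.e.}, $\kappa$-related) constraints whatsoever. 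This is precisely the reason no parity condition on $n\pmod 8$ or on the $\eta_j$'s needs to appear in the statement.

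First I would write down a generator matrix $\mathtt{G}_{\ell-2}$ of $\mathcal{C}_{\ell-2}$ in the standard form of Remark \ref{r3.1}, identify a generator matrix $[T^{(0)}]_{\gamma_{\ell}+1}$ of the prescribed subcode $\mathcal{D}^{(\gamma_{\ell}+1)}\subseteq Tor_1(\mathcal{C}_{\ell-2})$, and then construct a candidate lift
\[
\mathtt{G}_{\ell} \;=\; \mathtt{G}_{\ell-2} \;+\; u^{\ell-2}[\mathtt{U}^{(\ell-2)}] \;+\; u^{\ell-1}[\mathtt{U}^{(\ell-1)}] \;+\; u^{\ell-g}\!\left[\mathbf{0}\,\cdots\,\mathbf{0}\, \mathtt{A}_{\gamma_{\ell}+g,\gamma_{\ell}+\ell}^{(\ell-g)}\,\cdots\, \mathtt{A}_{\gamma_{\ell}+g,e}^{(\ell-g)}\right]
\]
for $2\le g\le \ell-1$, together with a new last row block $T_{\gamma_{\ell}+\ell}^{(\ell)}$ of type $\lambda_{\gamma_{\ell}+\ell}$. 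Then I would invoke Lemma \ref{l2.2} to translate ``$\mathcal{C}_{\ell}$ is self-orthogonal and satisfies property $(\mathfrak{P})$'' into a system of matrix equations over $\mathcal{T}_m$ in the unknowns $[\mathtt{U}^{(\ell-2)}]_{\gamma_{\ell}+1}$, $[\mathtt{U}^{(\ell-1)}]_{\gamma_{\ell}+1}$, the extension blocks, and $T_{\gamma_{\ell}+\ell}^{(\ell)}$; this mirrors equations \eqref{e3.30Keven}--\eqref{e3.35Keven} of Proposition \ref{p3.4Keven}, except that the equations \eqref{e3.33bKeven}, \eqref{e3.33Keven} arising from the $\pi_{\ell+j}$-conditions are now absent.

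Next, I would linearize each block equation row-by-row as in the proof of Proposition \ref{p3.4Keven}, reducing the problem for $[\mathtt{U}^{(\ell-2)}]_{\gamma_{\ell}+1}$ to a single matrix equation $M\mathbf{x}\equiv\mathbf{b}\pmod u$. Because no $\kappa$-type row $\widetilde{\mathbf{a}}_i$ appears in $M$ (only the ``upper-triangular'' rows coming from \eqref{e3.30Keven} and the all-one rows coming from the diagonal conditions), one checks directly that the rows of $M$ are linearly independent over $\mathcal{T}_m$ \emph{unconditionally}; hence the system has a solution without any congruence on $n$ or compatibility condition on $\eta_0,\eta_1,\ldots$. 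This is the crucial departure from Propositions \ref{p3.4Keven}--\ref{p3.6Keven} and is what makes existence automatic in part (a). The same style of argument then solves for $[\mathtt{U}^{(\ell-1)}]_{\gamma_{\ell}+1}$ (using an equation analogous to \eqref{e3.44Keven}) and yields the extension blocks.

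Finally I would compile the counts. The free parameters in $[\mathtt{U}^{(\ell-2)}]_{\gamma_{\ell}+1}$ contribute $(2^m)^{\sum_{i=\ell}^{\gamma_{\ell}+\ell}\lambda_i\Lambda_{i-\ell+1} - \Lambda_{s-2\mathrm{f}_\ell+2} - \tfrac{\Lambda_{\gamma_{\ell}+1}(\Lambda_{\gamma_{\ell}+1}-1)}{2}}$, those in $[\mathtt{U}^{(\ell-1)}]_{\gamma_{\ell}+1}$ contribute a similar factor with index shifted by $1$, the extension blocks $\mathtt{A}_{\gamma_{\ell}+g,\cdot}^{(\ell-g)}$ give the $(\Lambda_{\gamma_{\ell}+\ell-1}-\Lambda_{\gamma_{\ell}+1})(n-\Lambda_{\gamma_{\ell}+\ell}-\Lambda_{\gamma_{\ell}+1})$ term, and the choice of $T_{\gamma_{\ell}+\ell}^{(\ell)}$ orthogonal to $[T^{(0)}]_{\gamma_{\ell}+1}$ produces the Gaussian binomial $\genfrac{[}{]}{0pt}{}{\lambda_{\gamma_{\ell}+\ell}+n-\Lambda_{\gamma_{\ell}+\ell}-\Lambda_{\gamma_{\ell}+1}}{\lambda_{\gamma_{\ell}+\ell}}_{2^m}$. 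Collecting and simplifying yields the stated exponent $Y_{\ell}$; note that, precisely because case (iv) drops the $\pi_{\ell+j}$-conditions, the subtracted terms $-\Lambda_{s-2\mathrm{f}_\ell+2}-\Lambda_{s-2\mathrm{f}_\ell+1}$ cancel against diagonal contributions that are now vacuous (being incorporated into the two $[\mathtt{U}^{(\mu)}]$ solution counts), leaving only $Y_{\ell}=(\Lambda_{\gamma_{\ell}+\ell-1}+\Lambda_{\gamma_{\ell}+1})(n-\Lambda_{\gamma_{\ell}+\ell}-\Lambda_{\gamma_{\ell}+1})+\Lambda_{\gamma_{\ell}+1}^{2}+\Lambda_{\gamma_{\ell}+1}$. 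The main bookkeeping obstacle will be verifying this cancellation of indices carefully, but it is a direct computation parallel to the corresponding step in Proposition \ref{p3.4Keven}.
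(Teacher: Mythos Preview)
Your approach is essentially the paper's: its proof of Proposition \ref{p3.9Keven} is the single line ``Working as in Proposition \ref{p3.4Keven}, the desired result follows,'' and you correctly propose to run the same machinery with the system \eqref{e3.30Keven}--\eqref{e3.35Keven} pruned to the constraints that survive in case (iv) of property $(\mathfrak{P})$.

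One small inaccuracy to clean up: you attribute the disappearance of the terms $-\Lambda_{s-2\mathrm{f}_\ell+2}-\Lambda_{s-2\mathrm{f}_\ell+1}$ to ``case (iv) drops the $\pi_{\ell+j}$-conditions.'' That is not the right bookkeeping. The $\pi_{\ell+j}$-conditions are responsible for the $-\Lambda_{\gamma_\ell+1-\kappa_1}+\omega_\ell$ part of $Y_\ell$ in Proposition \ref{p3.4Keven} (compare with Proposition \ref{p3.8Keven}, where those $\pi$-conditions are already absent yet $-\Lambda_{s-2\mathrm{f}_\ell+2}-\Lambda_{s-2\mathrm{f}_\ell+1}$ remain). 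The terms $-\Lambda_{s-2\mathrm{f}_\ell+2}-\Lambda_{s-2\mathrm{f}_\ell+1}$ come from the diagonal equations \eqref{e3.31Keven}--\eqref{e3.32Keven}, and they vanish here simply because in the present range one has $s-2\mathrm{f}_\ell+2\le 0$ (equivalently, $2(\ell-2)\ge \ell+i$ for all admissible $i\le e-\ell$), so those diagonal constraints are genuinely empty rather than ``cancelling.'' With that correction your argument goes through, and in particular no all-one rows need appear in $M$, so full row rank is immediate.
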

 \begin{proof}
   Working as in Proposition \ref{p3.4Keven}, the desired result follows. 
 \end{proof}

%In the following lemma, we consider  the case  $2\kappa \leq e$ and show that for every self-orthogonal code  of type $\{\lambda_1,\lambda_2,\ldots,\lambda_e\}$ and length $n$  over $\mathcal{R}_{e,m},$  there exists a chain $\mathcal{D}^{(1)}\subseteq \mathcal{D}^{(2)} \subseteq \cdots \subseteq \mathcal{D}^{(s+\theta_e)} $ of self-orthogonal codes of length $n$ over $\mathcal{T}_m,$  such that $\dim \mathcal{D}^{(i)}=\Lambda_i$ for $1 \leq i \leq s+\theta_e$  satisfying  the following additional conditions: 
%(i)   $\textbf{1}\notin \mathcal{D}^{(s-d+1-\kappa_1)}$ when $n\equiv 2,6\pmod 8$  and $\eta_{2d-1}\neq 0$ for $1\leq d \leq \mathtt{f}_{\kappa_1}$  (ii) $\textbf{1}\notin \mathcal{D}^{(s-\mathrm{f}_{\kappa_1+1}+1-\kappa_1)}$ when $n\equiv 2,6 \pmod 8,$  $\kappa_1+1$ is even and $\eta_{\kappa_1}\neq (\eta_0)^{\frac{3}{2}}$ (iii)   $\textbf{1}\notin \mathcal{D}^{(s-j+1-\kappa_1)}$  for $\mathrm{c}_{\kappa_1+2}\leq j \leq \kappa_1+1-\theta_e $ when $n\equiv 2,6 \pmod 8$ and (iv)  $\textbf{1} \notin \mathcal{D}^{(s-\kappa+\theta_e)}$ when   $n\equiv 4\pmod 8$ and $m$ is odd.

In the following lemma, we consider the case  $e \geq 3$ and $2\kappa \leq e,$ and show that for every self-orthogonal code  of type $\{\lambda_1,\lambda_2,\ldots,\lambda_e\}$ and length $n$  over $\mathcal{R}_{e,m},$  there exists a chain $\mathcal{D}^{(1)}\subseteq \mathcal{D}^{(2)} \subseteq \cdots \subseteq \mathcal{D}^{(s+\theta_e)} $ of self-orthogonal codes of length $n$ over $\mathcal{T}_m,$  with  $\dim \mathcal{D}^{(i)}=\Lambda_i$ for $1 \leq i \leq s+\theta_e,$ satisfying  certain additional conditions.

% (i)  $\textbf{1}\notin \mathcal{D}^{(s-\mathrm{f}_\ell+1-\kappa_1)}$ when $n\equiv 2,6\pmod 8$  and $\eta_{\ell-\theta_e-1}\neq 0$ for $2\leq \ell \leq \kappa_1+\theta_e,$  
% (ii) $\textbf{1}\notin \mathcal{D}^{(s-\mathrm{f}_{\kappa_1+1}+1-\kappa_1)}$ when $n\equiv 2,6 \pmod 8$  and 
% $\eta_{\kappa_1}\neq (\eta_0)^{\frac{3}{2}}$ with either $\kappa \geq 4$ and $\kappa_1+1$ is even or $\kappa=2$ and $\theta_e=0,$ 
% (iii) $\textbf{1}\notin \mathcal{D}^{(s-\mathrm{f}_\ell+1-\kappa_1)}$  for $\kappa_1+2+\theta_e\leq \ell \leq \kappa+2-\theta_e $ when $n\equiv 2,6 \pmod 8,$ and (iv) $\textbf{1} \notin \mathcal{D}^{(s-\kappa+\theta_e)}$ when   either $n\equiv 4\pmod 8$ and $m$ is odd or $n\equiv 2,6\pmod 8,$ $\kappa=2$ and $\theta_e=1.$ 

 \begin{lemma}\label{l3.4} 
    Let $e\geq 3$ be an integer satisfying $2\kappa \leq e.$ For every self-orthogonal code  of type $\{\lambda_1,\lambda_2,\ldots,\lambda_e\}$ and length $n$  over $\mathcal{R}_{e,m},$ there exists a chain $\mathcal{D}^{(1)}\subseteq \mathcal{D}^{(2)} \subseteq \cdots \subseteq \mathcal{D}^{(s+\theta_e)} $ of self-orthogonal codes of length $n$ over $\mathcal{T}_m$ satisfying  the following  conditions:
    \begin{enumerate}\vspace{-1mm}\item[\textbf{A1)}]  $\dim \mathcal{D}^{(i)}=\Lambda_i$ for $1 \leq i \leq s+\theta_e.$ 
\vspace{-1mm}\item[\textbf{A2)}]  $\textbf{1}\notin \mathcal{D}^{(s-\mathrm{f}_\ell+1-\kappa_1)},$ provided that $n\equiv 2,6\pmod 8$  and $\eta_{\ell-\theta_e-1}\neq 0,$ where $2\leq \ell \leq \kappa_1+\theta_e.$  
\vspace{-1mm}\item[\textbf{A3)}] $\textbf{1}\notin \mathcal{D}^{(s-\mathrm{f}_{\kappa_1+1}+1-\kappa_1)},$ provided that $n\equiv 2,6 \pmod 8,$  
$\eta_{\kappa_1}\neq (\eta_0)^{\frac{3}{2}}$ and either $\kappa \geq 4$ is a singly even integer or $\kappa=2$ and  $e$ is an even integer. 
\vspace{-1mm}\item[\textbf{A4)}] $\textbf{1}\notin \mathcal{D}^{(s-\mathrm{f}_\ell+1-\kappa_1)}$  for all $\kappa_1+2+\theta_e\leq \ell \leq \kappa+2-\theta_e ,$ provided that $n\equiv 2,6 \pmod 8.$ 
\vspace{-1mm}\item[\textbf{A5)}] $\textbf{1} \notin \mathcal{D}^{(s-\kappa+\theta_e)},$ provided that   either $n\equiv 4\pmod 8$ and $m$ is odd, or $n\equiv 2,6\pmod 8,$ $\kappa=2$ and $e$ is an odd integer. \end{enumerate}
 \end{lemma}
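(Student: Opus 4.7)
The plan is to take $\mathcal{D}^{(i)} := Tor_i(\mathcal{C}_e)$ for $1 \le i \le s+\theta_e$ and deduce \textbf{A1)}--\textbf{A5)} by applying the contrapositive of the existence part of each of the lifting propositions already established in this section. The first step is to observe, via Lemma \ref{l1.2}(a), that the codes $Tor_i(\mathcal{C}_e)$ are self-orthogonal over $\mathcal{T}_m$ for $1 \le i \le s+\theta_e$; the form of the standard generator matrix \eqref{e1.1} of $\mathcal{C}_e$ then ensures $\dim \mathcal{D}^{(i)} = \Lambda_i$ and that the chain is nested, so \textbf{A1)} is immediate.

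The second step assembles the companion codes needed to feed into the propositions. For every integer $\ell$ with $\ell \equiv e \pmod 2$ and $2 \le \ell \le e-2$, let $\mathcal{C}_\ell$ be the linear code over $\mathcal{R}_{\ell,m}$ corresponding to $\mathcal{C}_e$ (as defined just above Lemma \ref{l3.2}). Lemma \ref{l3.2} guarantees that $\mathcal{C}_\ell$ is self-orthogonal and satisfies property $(\mathfrak{P})$; moreover, from the construction, $Tor_1(\mathcal{C}_\ell)=\mathcal{D}^{(s-\mathrm{f}_\ell+1)}$ and $Tor_{j+1}(\mathcal{C}_\ell) = Tor_j(\mathcal{C}_{\ell-2})$ for $1 \le j \le \ell-2$. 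Hence $\mathcal{C}_\ell$ is exactly the lifted code produced by the procedures of Propositions \ref{pe=k=2}, \ref{p3.1}, \ref{p3.3be=3}, \ref{p3.3b}, \ref{p3.3c}, \ref{p3.4Keven}, \ref{p3.5Keven}, \ref{p3.6Keven}.

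The third step is to read each of these propositions in reverse: since $\mathcal{C}_\ell$ demonstrably exists, either $\mathbf{1} \notin \mathcal{D}^{(\gamma_\ell+1-\kappa_1)}$ or one of the explicitly listed exception clauses on $(n \bmod 8,\, \kappa,\, m,\, \eta_0,\eta_1,\dots)$ must hold. Each of \textbf{A2)}--\textbf{A5)} then follows by selecting the appropriate range of $\ell$ and noting that the stipulated arithmetic hypotheses annihilate every exception clause, so the only possibility remaining is $\mathbf{1} \notin \mathcal{D}^{(\cdot)}$. Concretely, \textbf{A2)} uses $2 \le \ell \le \kappa_1+\theta_e$, where $\eta_{\ell-\theta_e-1}\neq 0$ together with $n \equiv 2,6\pmod 8$ kills clause (ii) of Proposition \ref{p3.4Keven} (and its small-$\ell$ analogues in Propositions \ref{pe=k=2}, \ref{p3.1}, \ref{p3.3be=3}, \ref{p3.3c}); \textbf{A3)} uses the threshold $\ell = \kappa_1+1+\theta_e$, at which $(\eta_0)^{3/2}\neq\eta_{\kappa_1}$ kills clause (iii) (the side condition on $\kappa$ and $e$ in \textbf{A3)} is precisely the constraint ensuring that $\kappa_1+1+\theta_e\equiv e\pmod 2$); \textbf{A4)} exploits the range $\kappa_1+2+\theta_e \le \ell \le \kappa+2-\theta_e$, in which Propositions \ref{p3.4Keven} and \ref{p3.6Keven} admit no exception at all when $n \equiv 2,6\pmod 8$; and \textbf{A5)} follows from Proposition \ref{p3.5Keven}, or from the $\kappa=2$, $e$ odd case of Proposition \ref{p3.3b}, since the only permitted exceptions there ($n \equiv 0\pmod 8$ or $n \equiv 4\pmod 8$ with $m$ even) are both negated by the hypotheses of \textbf{A5)}.

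The principal obstacle I anticipate is purely organisational: one must correctly pair the index $\ell$ of each proposition with the torsion subscript $s-\mathrm{f}_\ell+1-\kappa_1$ appearing in the statements, and verify case by case (over parities of $e$, residues of $n$ modulo $8$, and admissible values of $\kappa$) that each stipulated hypothesis really does rule out every exception clause of the relevant proposition. No new analytic input beyond the propositions already proved is needed.
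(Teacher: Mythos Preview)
Your proposal is correct and follows essentially the same approach as the paper: set $\mathcal{D}^{(i)}=Tor_i(\mathcal{C}_e)$, use Lemma~\ref{l3.2} to obtain the intermediate codes $\mathcal{C}_\ell$, and then read the existence clauses of Propositions~\ref{p3.1}--\ref{p3.6Keven} in the contrapositive direction. The paper phrases this as ``working as in the proof of'' each proposition rather than invoking their statements directly, but the logic is identical; your only minor omission is that for \textbf{A4)} and \textbf{A5)} in the even-$e$ case you should also cite Proposition~\ref{p3.6Keven} (and for odd $e$, $\kappa\geq 4$, Proposition~\ref{p3.5Keven}) rather than only \ref{p3.5Keven} and \ref{p3.3b}.
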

 \begin{proof} 
To prove the result, let $\mathcal{C}_e$ be a self-orthogonal code of type $\{\lambda_1,\lambda_2,\ldots,\lambda_e\}$ and length $n$  over $\mathcal{R}_{e,m}$ with a generator matrix $\mathcal{G}_e$ as defined in \eqref{e3.0} for the case $\ell=e.$

For $1 \leq i \leq s+\theta_e,$  we note, by Lemma \ref{l1.2}, that the $i$-th torsion code $Tor_i(\mathcal{C}_e)$ is a self-orthogonal code of length $n$ and dimension $\Lambda_i$ over $\mathcal{T}_m.$ It is easy to see that $Tor_1(\mathcal{C}_e)\subseteq Tor_2(\mathcal{C}_e) \subseteq \cdots \subseteq Tor_{s+\theta_e}(\mathcal{C}_e).$ 
%     Now, we assert that the following conditions hold: (i)  $\textbf{1}\notin \mathcal{D}^{(s-\mathrm{f}_\ell+1-\kappa_1)}$ when $n\equiv 2,6\pmod 8$  and $\eta_{\ell-\theta_e-1}\neq 0$ for $2\leq \ell \leq \kappa_1+\theta_e,$  
% (ii) $\textbf{1}\notin \mathcal{D}^{(s-\mathrm{f}_{\kappa_1+1}+1-\kappa_1)}$ when $n\equiv 2,6 \pmod 8$  and 
% $\eta_{\kappa_1}\neq (\eta_0)^{\frac{3}{2}}$ with either $\kappa \geq 4$ and $\kappa_1+1$ is even or $\kappa=2$ and $\theta_e=0,$ 
% (iii) $\textbf{1}\notin \mathcal{D}^{(s-\mathrm{f}_\ell+1-\kappa_1)}$  for $\kappa_1+2+\theta_e\leq \ell \leq \kappa+2-\theta_e $ when $n\equiv 2,6 \pmod 8,$ and (iv) $\textbf{1} \notin \mathcal{D}^{(s-\kappa+\theta_e)}$ when   either $n\equiv 4\pmod 8$ and $m$ is odd or $n\equiv 2,6\pmod 8,$ $\kappa=2$ and $\theta_e=1.$ 
%  To prove this,  let us suppose that the code $\mathcal{C}_e$ has a generator matrix $\mathcal{G}_e$ as defined in \eqref{e3.0}. 
Further,  for each integer $\ell$ satisfying  $2 \leq \ell \leq e-2$ and $\ell \equiv e \pmod 2,$ corresponding to the code $\mathcal{C}_e,$ we  consider a linear code $\mathcal{C}_{\ell}$ of type $\{\Lambda_{s-\mathrm{f}_\ell+1},\lambda_{s-\mathrm{f}_\ell+2}, \ldots, \lambda_{s-\mathrm{f}_\ell+\ell}\}$ and length $n$ over $\mathcal{R}_{\ell,m}$  with a generator matrix $\mathcal{G}_{\ell}$ as defined in \eqref{e3.0}.   We will now distinguish the following two cases: (I) $e$ is even, and (II) $e$ is odd.  

 \begin{enumerate}
     \item[(I)] Let $e$ be even.  For $\ell=2,$ we see, by Lemma \ref{l3.2},  that $\mathcal{C}_2$ is a self-orthogonal code of type $\{\Lambda_s,\lambda_{s+1}\}$ and length $n$ over $\mathcal{R}_{2,m}$ satisfying  property $(\mathfrak{P}).$  Now, by closely looking at the proof of Lemma \ref{R=2}, we see that $\textbf{1}\notin Tor_{s-\kappa_1}(\mathcal{C}_e),$ provided that   $n\equiv 2,6\pmod 8$ and  either $\kappa=2$ and $\eta_1\neq (\eta_0)^{\frac{3}{2}},$ or $\kappa \geq 4$   and $\eta_1\neq 0.$ 

Next, for $4\leq \ell  \leq \kappa+2$ and $\ell \equiv e\pmod 2,$ we note,   by Lemma \ref{l3.2} again,  that $\mathcal{C}_{\ell-2}$ is a self-orthogonal code of type $\{\Lambda_{s-\mathrm{f}_{\ell}+2},\lambda_{s-\mathrm{f}_{\ell}+3}, \ldots, \lambda_{s-\mathrm{f}_{\ell}+\ell-1}\}$ and length $n$ over $\mathcal{R}_{\ell-2,m}$ and $\mathcal{C}_{\ell}$ is a self-orthogonal code of type $\{\Lambda_{s-\mathrm{f}_\ell+1},\lambda_{s-\mathrm{f}_\ell+2}, \ldots, \lambda_{s-\mathrm{f}_\ell+\ell}\}$ and length $n$ over $\mathcal{R}_{\ell,m},$ both satisfying   property $(\mathfrak{P}).$ For a fixed choice of the  integer $\ell$ satisfying $4\leq \ell \leq \kappa,$  working as in Proposition \ref{p3.4Keven}, we see that  \begin{itemize}
    \item  $\textbf{1}\notin Tor_{s-\mathrm{f}_\ell+1-\kappa_1}(\mathcal{C}_e)$ when $n\equiv 2,6\pmod 8$  and $\eta_{\ell-1}\neq 0$ if $4\leq \ell \leq \kappa_1.$ \item $\textbf{1}\notin Tor_{s-\mathrm{f}_{\kappa_1+1}+1-\kappa_1}(\mathcal{C}_e)$ when $n\equiv 2,6 \pmod 8,$  $\kappa_1+1$ is even and 
$\eta_{\kappa_1}\neq (\eta_0)^{\frac{3}{2}}.$  \item   $\textbf{1}\notin Tor_{s-\mathrm{f}_\ell+1-\kappa_1}(\mathcal{C}_e)$   when $n\equiv 2,6 \pmod 8$ and $\kappa_1+2\leq \ell \leq \kappa .$
\end{itemize}
Finally,  for $\ell=\kappa+2,$ working as in Proposition 3.4 of Yadav and Sharma \cite{YSub2}, we see that $\textbf{1}\notin Tor_{s-\kappa}(\mathcal{C}_e)$ when either $n\equiv 2,6\pmod8$ or $n\equiv 4\pmod 8$ and $m$ is odd. Now, on taking $\mathcal{D}^{(i)}=Tor_i(\mathcal{C}_e)$ for $1 \leq i \leq s,$ the desired result follows.
\item[(II)] For odd $e,$ the desired result follows by working as in case (I).
\vspace{-4mm}\end{enumerate}
\vspace{-4mm}\end{proof}

In the following lemma, we consider  the case $e \geq 3$ and $2\kappa > e,$ and show that for every self-orthogonal code  of type $\{\lambda_1,\lambda_2,\ldots,\lambda_e\}$ and length $n$  over $\mathcal{R}_{e,m},$  there exists a chain $\mathcal{D}^{(1)}\subseteq \mathcal{D}^{(2)} \subseteq \cdots \subseteq \mathcal{D}^{(s+\theta_e)} $ of self-orthogonal codes of length $n$ over $\mathcal{T}_m,$  with  $\dim \mathcal{D}^{(i)}=\Lambda_i$ for $1 \leq i \leq s+\theta_e,$   satisfying certain additional
conditions.
%the following additional conditions: 
% (i)  $\textbf{1}\notin \mathcal{D}^{(s-\mathrm{f}_\ell+1-\kappa_1)}$ when $n\equiv 2,6\pmod 8$  and $\eta_{\ell-\theta_e-1}\neq 0$ for  $2\leq \ell \leq \min\{\kappa_1+\theta_e,e-\kappa\},$ (ii) $\textbf{1}\notin \mathcal{D}^{(s-\mathrm{f}_{\kappa_1+1}+1-\kappa_1)}$ when $n\equiv 2,6 \pmod 8,$  $\kappa_1+1$ is even, $\kappa_1+1+\theta_e\leq e-\kappa$  and 
% $\eta_{\kappa_1}\neq (\eta_0)^{\frac{3}{2}},$ and
% (iii) $\textbf{1}\notin \mathcal{D}^{(s-\mathrm{f}_\ell+1-\kappa_1)}$  when $n\equiv 2,6 \pmod 8$  for $\kappa_1+2+\theta_e\leq \ell \leq e-\kappa.$ 
 \begin{lemma}\label{l3.4a} 
    Let $e\geq 3$ be an integer satisfying $2\kappa > e.$ For every self-orthogonal code  of type $\{\lambda_1,\lambda_2,\ldots,\lambda_e\}$ and length $n$  over $\mathcal{R}_{e,m},$ there exists a chain $\mathcal{D}^{(1)}\subseteq \mathcal{D}^{(2)} \subseteq \cdots \subseteq \mathcal{D}^{(s+\theta_e)} $ of self-orthogonal codes of length $n$ over $\mathcal{T}_m$ satisfying  the following  conditions:
    \begin{enumerate}\vspace{-1mm}\item[\textbf{B1)}]  $\dim \mathcal{D}^{(i)}=\Lambda_i$ for $1 \leq i \leq s+\theta_e.$
    \vspace{-1mm}\item[\textbf{B2)}]  $\textbf{1}\notin \mathcal{D}^{(s-\mathrm{f}_\ell+1-\kappa_1)},$ provided that $n\equiv 2,6\pmod 8$  and $\eta_{\ell-\theta_e-1}\neq 0,$ where  $2\leq \ell \leq \min\{\kappa_1+\theta_e,e-\kappa\}.$ 
   \vspace{-1mm} \item[\textbf{B3)}] $\textbf{1}\notin \mathcal{D}^{(s-\mathrm{f}_{\kappa_1+1}+1-\kappa_1)},$ provided that $n\equiv 2,6 \pmod 8,$  $\kappa$ is a singly even integer, $e \geq \frac{3\kappa}{2}+1+\theta_e$  and 
$\eta_{\kappa_1}\neq (\eta_0)^{\frac{3}{2}}.$ 
\vspace{-1mm}\item[\textbf{B4)}] $\textbf{1}\notin \mathcal{D}^{(s-\mathrm{f}_\ell+1-\kappa_1)}$ for all $\kappa_1+2+\theta_e\leq \ell \leq e-\kappa,$  provided that $n\equiv 2,6 \pmod 8.$ 
\end{enumerate}
\end{lemma}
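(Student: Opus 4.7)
The plan is to mirror the argument of Lemma \ref{l3.4} almost verbatim, with the case $2\kappa \leq e$ replaced by $2\kappa > e$ so that the relevant lifting propositions are Propositions \ref{p3.4Keven} (case $2\kappa > e$), \ref{p3.8Keven}, and \ref{p3.9Keven} instead of Propositions \ref{p3.4Keven} (case $2\kappa \leq e$), \ref{p3.5Keven}, \ref{p3.6Keven}, and \ref{p3.7Keven}. Starting from an arbitrary self-orthogonal code $\mathcal{C}_e$ of type $\{\lambda_1,\lambda_2,\ldots,\lambda_e\}$ and length $n$ over $\mathcal{R}_{e,m},$ with a generator matrix $\mathcal{G}_e$ as in \eqref{e3.0}, I would define $\mathcal{D}^{(i)} = Tor_i(\mathcal{C}_e)$ for $1 \leq i \leq s + \theta_e.$ Lemma \ref{l1.2} immediately yields that each $\mathcal{D}^{(i)}$ is self-orthogonal over $\mathcal{T}_m,$ and the nested inclusions follow from the standard generator matrix \eqref{e1.2} for torsion codes; the dimension claim $\dim \mathcal{D}^{(i)} = \Lambda_i$ is read off from that matrix as well. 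This establishes \textbf{B1)}.

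Next, for each integer $\ell$ satisfying $2 \leq \ell \leq e-2$ and $\ell \equiv e \pmod{2},$ I would introduce the induced code $\mathcal{C}_{\ell}$ of type $\{\Lambda_{s - \mathrm{f}_{\ell}+1}, \lambda_{s-\mathrm{f}_{\ell}+2}, \ldots, \lambda_{s-\mathrm{f}_{\ell}+\ell}\}$ and length $n$ over $\mathcal{R}_{\ell,m}$ (as in the paragraph preceding Lemma \ref{l3.2}), which by Lemma \ref{l3.2} is self-orthogonal and satisfies property $(\mathfrak{P}).$ The key observation is that for $\ell > e-\kappa$ in the regime $2\kappa > e,$ the relevant lifting propositions (\ref{p3.8Keven} and \ref{p3.9Keven}) impose no additional restriction of the form ``$\textbf{1} \notin \mathcal{D}^{(\cdot)}$,'' so the only obstructions to lifting come from the small-$\ell$ regime $2 \leq \ell \leq e-\kappa.$ Thus the $\textbf{1}$-avoidance conditions for $\mathcal{C}_e$ to exist reduce to those already tracked in Propositions \ref{p3.1}, \ref{p3.3b}, \ref{p3.3c}, and \ref{p3.4Keven} (case $2\kappa > e$).

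To derive \textbf{B2)}--\textbf{B4)}, I would run through the relevant $\ell$ systematically. For $\ell = 2,$ Proposition \ref{p3.1} (by tracing through the proof of Lemma \ref{R=2}) forces $\textbf{1} \notin \mathcal{D}^{(s - \kappa_1)}$ whenever $n \equiv 2,6 \pmod 8$ and either ($\kappa = 2$ with $\eta_1 \neq (\eta_0)^{3/2}$) or ($\kappa \geq 4$ with $\eta_1 \neq 0$); the case $\kappa = 2$ with $\eta_{\ell-\theta_e-1} \neq 0$ never arises here since $2\kappa > e$ combined with $\kappa = 2$ forces $e = 3$ and a direct check dispatches this case via Proposition \ref{p3.3b} or \ref{p3.3c}. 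For $4 \leq \ell \leq \min\{\kappa_1 + \theta_e, e-\kappa\},$ applying the case $2\kappa > e$ of Proposition \ref{p3.4Keven}(a)(iv), the lift exists only if $\eta_{\ell-1-\theta_e} = 0,$ yielding \textbf{B2)} by contrapositive; for the single critical value $\ell = \kappa_1 + 1 + \theta_e$ (requiring $e \geq \tfrac{3}{2}\kappa + 1 + \theta_e$), part (a)(v) of the same proposition yields the condition $(\eta_0)^{3/2} = \eta_{\kappa_1},$ giving \textbf{B3)}; and for $\kappa_1 + 2 + \theta_e \leq \ell \leq e - \kappa,$ which is the range still constrained by the case-(ii) property $(\mathfrak{P})$ obstructions, the same proposition forces $\eta_0 = 0$ unless $\textbf{1} \notin \mathcal{D}^{(s - \mathrm{f}_{\ell} + 1 - \kappa_1)},$ giving \textbf{B4)} since $\eta_0 \neq 0$ by construction.

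The main obstacle I anticipate is the careful bookkeeping at the boundary $\ell = e - \kappa$ where the regime transitions from Proposition \ref{p3.4Keven} to Proposition \ref{p3.8Keven}, especially to confirm that no stray condition appears beyond those listed in \textbf{B1)}--\textbf{B4)}. A secondary subtlety is that for odd $e$ the odd values of $\ell$ together with the case analysis on $\theta_e$ must be handled in parallel to the even case, exactly as in the proof of Lemma \ref{l3.4}; I would treat the two parities as Cases (I) and (II) and verify that Propositions \ref{p3.3b}, \ref{p3.3c}, \ref{p3.4Keven}, \ref{p3.8Keven}, \ref{p3.9Keven} cover every $(\ell, e, \kappa)$ combination once.
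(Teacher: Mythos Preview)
Your proposal is correct and takes essentially the same approach as the paper: the paper's own proof is simply ``Working as in Lemma \ref{l3.4}, the desired result follows,'' and your argument spells out exactly that strategy, taking $\mathcal{D}^{(i)} = Tor_i(\mathcal{C}_e)$, invoking Lemma \ref{l3.2} to obtain self-orthogonal codes $\mathcal{C}_\ell$ with property $(\mathfrak{P})$, and then reading off the $\mathbf{1}$-avoidance conditions \textbf{B2)}--\textbf{B4)} from the existence clauses of Propositions \ref{p3.1}, \ref{p3.3c}, and \ref{p3.4Keven} (case $2\kappa>e$) in the range $\ell\le e-\kappa$, while noting that Propositions \ref{p3.8Keven} and \ref{p3.9Keven} impose no further obstruction for larger $\ell$.
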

\begin{proof}
    Working as in Lemma \ref{l3.4}, 
    %,and the  proofs of Propositions \ref{p3.1},\ref{p3.3be=3} and \ref{p3.3c}-\ref{p3.5Keven}, 
    the desired result follows. 
\end{proof}
 % From now on, we say that a chain  $\mathcal{D}^{(1)}\subseteq \mathcal{D}^{(2)} \subseteq \cdots \subseteq \mathcal{D}^{(s+\theta_e)} $ of self-orthogonal codes of length $n$ over $\mathcal{T}_m$ is said to be nice if 
 % \begin{itemize}\item it satisfies the conditions \textbf{A1)} - \textbf{A5)} in Lemma ?? when $2\kappa \leq e.$ \item it satisfies the conditions \textbf{B1)} - \textbf{B4)} in Lemma ?? when $2\kappa > e.$\end{itemize}
  Given a chain $\mathcal{D}^{(1)}\subseteq \mathcal{D}^{(2)} \subseteq \cdots \subseteq \mathcal{D}^{(s+\theta_e)} $  of self-orthogonal codes of length $n$ over $\mathcal{T}_m$ satisfying  conditions \textbf{A1)} - \textbf{A5)} as stated in Lemma \ref{l3.4} when $2\kappa \leq e$ and   conditions \textbf{B1)} - \textbf{B4)} as stated in Lemma \ref{l3.4a} when $2\kappa > e,$ one can easily observe  that  the proofs of Propositions \ref{pe=k=2} - \ref{p3.9Keven} provide a recursive method to construct a 
self-orthogonal code $\mathcal{C}_e$ of type $\{\lambda_1,\lambda_2,\ldots,\lambda_e\}$ and length $n$  over $\mathcal{R}_{e,m}$ satisfying $Tor_i(\mathcal{C}_e)=\mathcal{D}^{(i)}$ for $1 \leq i \leq s+\theta_e.$ We will now outline this recursive construction method by  distinguishing the following two cases: (\textbf{X}) $2 \kappa \leq e,$ and (\textbf{Y}) $2 \kappa > e.$ 
\begin{enumerate}
    \item[(\textbf{X})] Method to construct a self-orthogonal code of type $\{\lambda_1,\lambda_2,\ldots,\lambda_e\}$ and length $n$ over $\mathcal{R}_{e,m}$  when $2\kappa \leq e$ 
\begin{itemize}
\item[\textbf{Input}:] Start with a chain  $\mathcal{D}^{(1)}\subseteq \mathcal{D}^{(2)}\subseteq \cdots\subseteq \mathcal{D}^{(s+\theta_e)}$ of  self-orthogonal codes of length $n$ over $\mathcal{T}_m$ satisfying conditions \textbf{A1)} - \textbf{A5)} as stated in Lemma \ref{l3.4}. 
% such that  $\dim \mathcal{D}^{(i)}=\Lambda_i$ for $1 \leq i \leq s+\theta_e.$ Additionally, for $2\leq \ell \leq \kappa+2$ and $\ell\equiv e\pmod2,$ assume that   the following  conditions hold: 
% \begin{itemize}
%   \item[$\bullet$]  $\textbf{1}\notin \mathcal{D}^{(s-\mathrm{f}_\ell+1-\kappa_1)}$ when $n\equiv 2,6\pmod 8$  and $\eta_{\ell-\theta_e-1}\neq 0$ for $2\leq \ell \leq \kappa_1+\theta_e.$  
% \item[$\bullet$] $\textbf{1}\notin \mathcal{D}^{(s-\mathrm{f}_{\kappa_1+1}+1-\kappa_1)}$ when $n\equiv 2,6 \pmod 8$  and 
% $\eta_{\kappa_1}\neq (\eta_0)^{\frac{3}{2}}$ with either $\kappa \geq 4$ and $\kappa_1+1$ is even or $\kappa=2$ and $\theta_e=0.$ 
% \item[$\bullet$] $\textbf{1}\notin \mathcal{D}^{(s-\mathrm{f}_\ell+1-\kappa_1)}$  for $\kappa_1+2+\theta_e\leq \ell \leq \kappa+2-\theta_e $ when $n\equiv 2,6 \pmod 8.$ 
% \item[$\bullet$] $\textbf{1} \notin \mathcal{D}^{(s-\kappa+\theta_e)}$ when   either $n\equiv 4\pmod 8$ and $m$ is odd or $n\equiv 2,6\pmod 8,$ $\kappa=2$ and $\theta_e=1.$  \end{itemize}
\item[\textbf{Step 1}:] If $e=\kappa=2,$  apply Proposition \ref{pe=k=2} to construct a self-orthogonal code $\mathcal{C}_{2}$ of type $\{\lambda_{1},\lambda_{2}\}$ and length $n$ over $\mathcal{R}_{2,m}$ satisfying $Tor_1(\mathcal{C}_2)=\mathcal{D}^{(1)}.$  On the other hand, if $e\geq 4$ is even,  apply  Proposition \ref{p3.1} to obtain  a self-orthogonal code $\mathcal{C}_2$ of type $\{\Lambda_s, \lambda_{s+1}\}$ and length $n$ over $\mathcal{R}_{2,m}$  satisfying  property $(\mathfrak{P})$  and $Tor_1(\mathcal{C}_2)=\mathcal{D}^{(s)}.$ 

On the other hand, for odd $e$,   apply Proposition \ref{p3.3b} when $\kappa=2$ and   Proposition \ref{p3.3c} when $\kappa\geq 4$ to construct a self-orthogonal code $\mathcal{C}_{3}$ of type $\{\Lambda_{s},\lambda_{s+1},\lambda_{s+2}\}$ and length $n$ over $\mathcal{R}_{3,m}$  satisfying  property $(\mathfrak{P}),$ and   $Tor_1(\mathcal{C}_3)=\mathcal{D}^{(s)}$   and  $Tor_2(\mathcal{C}_3)=\mathcal{D}^{(s+1)}.$
\item[\textbf{Step 2}:] For $4 \leq \ell \leq \kappa$ and $\ell \equiv e \pmod 2,$ apply  Proposition \ref{p3.4Keven}  to construct a self-orthogonal code $\mathcal{C}_{\ell}$ of type $\{\Lambda_{\gamma_{\ell}+1},\lambda_{\gamma_{\ell}+2}, \ldots, \lambda_{\gamma_{\ell}+\ell}\}$ and length $n$ over $\mathcal{R}_{\ell,m}$ satisfying   property $(\mathfrak{P})$ and     $Tor_{1}(\mathcal{C}_{\ell})=\mathcal{D}^{(\gamma_{\ell}+1)}$ and $Tor_{j+1}(\mathcal{C}_{\ell})=Tor_{j}(\mathcal{C}_{\ell-2})$ for $1 \leq j \leq  \ell-2$ from a self-orthogonal code  $\mathcal{C}_{\ell-2}$ of type $\{\Lambda_{\gamma_{\ell}+2},\lambda_{\gamma_{\ell}+3}, \ldots, \lambda_{\gamma_{\ell}+\ell-1}\}$   and  length $n$ over $\mathcal{R}_{\ell-2,m}$ satisfying   property $(\mathfrak{P}).$  
\item[\textbf{Step 3}:] When $e$ is odd, apply  Proposition \ref{p3.5Keven}  to construct a self-orthogonal code $\mathcal{C}_{\kappa+1}$ of type $\{\Lambda_{s-\kappa_1+1},\lambda_{s-\kappa_1+2}, \\ \ldots, \lambda_{s+\kappa_1+1}\}$ and length $n$ over $\mathcal{R}_{\kappa+1,m}$ satisfying  property $(\mathfrak{P})$ and $Tor_1(\mathcal{C}_{\kappa+1})=\mathcal{D}^{(s-\kappa_1+1)}$ and    $Tor_{j+1}(\mathcal{C}_{\kappa+1})=Tor_{j}(\mathcal{C}_{\kappa-1})$ for $1 \leq j \leq  \kappa-1$ from a self-orthogonal code $\mathcal{C}_{\kappa-1}$ of type $\{\Lambda_{s-\kappa_1+2},\lambda_{s-\kappa_1+3},\\ \ldots, \lambda_{s+\kappa_1}\}$   and  length $n$ over $\mathcal{R}_{\kappa-1,m}$ satisfying  property $(\mathfrak{P}).$ 

On the other hand, when $e$ is even, apply  Proposition 
  \ref{p3.6Keven} to construct a self-orthogonal code $\mathcal{C}_{\kappa+2}$ of type $\{\Lambda_{s-\kappa_1},\lambda_{s-\kappa_1+1}, \ldots, \lambda_{s+\kappa_1+1}\}$ and length $n$ over $\mathcal{R}_{\kappa+2,m}$ satisfying  property $(\mathfrak{P})$ and  $Tor_{1}(\mathcal{C}_{\kappa+2})=\mathcal{D}^{(s-\kappa_1)}$ and     $Tor_{j+1}(\mathcal{C}_{\kappa+2})=Tor_{j}(\mathcal{C}_{\kappa})$ for $1 \leq j \leq  \kappa$ from a self-orthogonal code  $\mathcal{C}_{\kappa}$ of type $\{\Lambda_{s-\kappa_1+1},\\ \lambda_{s-\kappa_1+2}, \ldots, \lambda_{s+\kappa_1}\}$   and  length $n$ over $\mathcal{R}_{\kappa,m}$ satisfying   property $(\mathfrak{P}).$  
\item[\textbf{Step 4}:] For $\kappa+3 \leq \ell \leq e-\kappa+1 $ and $\ell \equiv e\pmod2,$ apply
  Proposition \ref{p3.7Keven}  to construct a  self-orthogonal code $\mathcal{C}_{\ell}$ of type $\{\Lambda_{\gamma_{\ell}+1},\lambda_{\gamma_{\ell}+2}, \ldots, \lambda_{\gamma_{\ell}+\ell}\}$ and length $n$ over $\mathcal{R}_{\ell,m}$ satisfying  property $(\mathfrak{P})$ and $Tor_1(\mathcal{C}_{\ell})=\mathcal{D}^{(\gamma_{\ell}+1)}$ and    $Tor_{i+1}(\mathcal{C}_{\ell})=Tor_{i}(\mathcal{C}_{\ell-2})$ for $1 \leq i \leq  \ell-2$ from a self-orthogonal code $\mathcal{C}_{\ell-2}$ of type $\{\Lambda_{\gamma_{\ell}+2},\lambda_{\gamma_{\ell}+3}, \ldots, \lambda_{\gamma_{\ell}+\ell-1}\}$   and  length $n$ over $\mathcal{R}_{\ell-2,m}$ satisfying   property $(\mathfrak{P}).$ 
\item[\textbf{Step 5}:]  For $e-\kappa+2 \leq \ell \leq e $ and  $\ell \equiv e\pmod2, $ apply Proposition \ref{p3.9Keven}  to construct a   self-orthogonal code $\mathcal{C}_{\ell}$ of type $\{\Lambda_{\gamma_{\ell}+1},\lambda_{\gamma_{\ell}+2}, \ldots, \lambda_{\gamma_{\ell}+\ell}\}$ and length $n$ over $\mathcal{R}_{\ell,m}$ satisfying  property $(\mathfrak{P})$ and  $Tor_1(\mathcal{C}_{\ell})=\mathcal{D}^{(\gamma_{\ell}+1)}$ and  $Tor_{i+1}(\mathcal{C}_{\ell})=Tor_{i}(\mathcal{C}_{\ell-2})$ for $1 \leq i \leq  \ell-2$ from a self-orthogonal code $\mathcal{C}_{\ell-2}$  of type $\{\Lambda_{\gamma_{\ell}+2},\lambda_{\gamma_{\ell}+3}, \ldots, \lambda_{\gamma_{\ell}+\ell-1}\}$   and  length $n$ over $\mathcal{R}_{\ell-2,m}$ satisfying   property $(\mathfrak{P}).$ 
\item[\textbf{Output}:] A 
self-orthogonal code $\mathcal{C}_e$ of type $\{\lambda_1,\lambda_2,\ldots,\lambda_e\}$ and length $n$  over $\mathcal{R}_{e,m}$ satisfying $Tor_i(\mathcal{C}_e)=\mathcal{D}^{(i)}$ for $1 \leq i \leq s+\theta_e.$
\end{itemize}
\item[(\textbf{Y})]  Method to construct a self-orthogonal code of type $\{\lambda_1,\lambda_2,\ldots, \lambda_e\}$ and length $n$ over $\mathcal{R}_{e,m}$  when $2\kappa>e.$ 
\begin{itemize}
\item[\textbf{Input}:] Start with a chain  $\mathcal{D}^{(1)}\subseteq \mathcal{D}^{(2)}\subseteq \cdots\subseteq \mathcal{D}^{(s+\theta_e)}$ of self-orthogonal codes of length $n$ over $\mathcal{T}_m$ satisfying conditions \textbf{B1)} - \textbf{B4)} as stated in Lemma \ref{l3.4a}.
% where $\dim \mathcal{D}^{(i)}=\Lambda_i$ for $1 \leq i \leq s+\theta_e.$  Additionally, for $2\leq \ell \leq e-\kappa$ and $\ell\equiv e\pmod2,$ assume  that the following conditions hold:
% \begin{itemize}  
% \item[$\bullet$]  $\textbf{1}\notin \mathcal{D}^{(s-\mathrm{f}_\ell+1-\kappa_1)}$ when $n\equiv 2,6\pmod 8$  and $\eta_{\ell-\theta_e-1}\neq 0$ for  $2\leq \ell \leq \min\{\kappa_1+\theta_e,e-\kappa\}.$ 
% \item[$\bullet$] $\textbf{1}\notin \mathcal{D}^{(s-\mathrm{f}_{\kappa_1+1}+1-\kappa_1)}$ when $n\equiv 2,6 \pmod 8,$  $\kappa_1+1$ is even, $\kappa_1+1+\theta_e\leq e-\kappa$  and 
% $\eta_{\kappa_1}\neq (\eta_0)^{\frac{3}{2}}.$ 
% \item[$\bullet$] $\textbf{1}\notin \mathcal{D}^{(s-\mathrm{f}_\ell+1-\kappa_1)}$  when $n\equiv 2,6 \pmod 8$  for $\kappa_1+2+\theta_e\leq \ell \leq e-\kappa.$  \end{itemize} 
\item[\textbf{Step 1}:] If $e$ is even,  apply Proposition \ref{p3.1} to construct a self-orthogonal code $\mathcal{C}_2$ of type $\{\Lambda_s, \lambda_{s+1}\}$ and length $n$ over $\mathcal{R}_{2,m}$  satisfying  property $(\mathfrak{P})$  and $Tor_1(\mathcal{C}_2)=\mathcal{D}^{(s)}.$ 

If $e=3$  and $\kappa=2$, apply Proposition \ref{p3.3be=3} to construct a self-orthogonal code $\mathcal{C}_{3}$ of type $\{\lambda_{1},\lambda_{2},\lambda_{3}\}$ and length $n$ over $\mathcal{R}_{3,m}$ satisfying $Tor_1(\mathcal{C}_3)=\mathcal{D}^{(1)}$   and  $Tor_2(\mathcal{C}_3)=\mathcal{D}^{(2)}.$ On the other hand, if $e$ is odd and $\kappa\geq 4,$ apply Proposition \ref{p3.3c} to construct a self-orthogonal code $\mathcal{C}_{3}$ of type $\{\Lambda_{s},\lambda_{s+1},\lambda_{s+2}\}$ and length $n$ over $\mathcal{R}_{3,m}$  satisfying  property $(\mathfrak{P})$ and  $Tor_1(\mathcal{C}_3)=\mathcal{D}^{(s)}$   and  $Tor_2(\mathcal{C}_3)=\mathcal{D}^{(s+1)}.$
\item[\textbf{Step 2}:] For $4 \leq \ell \leq e-\kappa $ and $\ell \equiv e \pmod 2,$ apply Proposition \ref{p3.4Keven}  to construct a self-orthogonal code $\mathcal{C}_{\ell}$ of type $\{\Lambda_{\gamma_{\ell}+1},\lambda_{\gamma_{\ell}+2}, \ldots, \lambda_{\gamma_{\ell}+\ell}\}$ and length $n$ over $\mathcal{R}_{\ell,m}$ satisfying   property $(\mathfrak{P})$ and     $Tor_{1}(\mathcal{C}_{\ell})=\mathcal{D}^{(\gamma_{\ell}+1)}$ and $Tor_{j+1}(\mathcal{C}_{\ell})=Tor_{j}(\mathcal{C}_{\ell-2})$ for $1 \leq j \leq  \ell-2$ from a self-orthogonal code  $\mathcal{C}_{\ell-2}$ of type $\{\Lambda_{\gamma_{\ell}+2},\lambda_{\gamma_{\ell}+3}, \ldots, \lambda_{\gamma_{\ell}+\ell-1}\}$   and  length $n$ over $\mathcal{R}_{\ell-2,m}$ satisfying   property $(\mathfrak{P}).$  
\item[\textbf{Step 3}:]  
For $e-\kappa+1 \leq \ell \leq \kappa-\mathrm{f}_{2\kappa-e}+1 $ and $\ell \equiv e\pmod2,$ apply  
  Proposition \ref{p3.8Keven}  to construct a  self-orthogonal code $\mathcal{C}_{\ell}$ of type $\{\Lambda_{\gamma_{\ell}+1},\lambda_{\gamma_{\ell}+2}, \ldots, \lambda_{\gamma_{\ell}+\ell}\}$ and length $n$ over $\mathcal{R}_{\ell,m}$ satisfying  property $(\mathfrak{P})$ and $Tor_1(\mathcal{C}_{\ell})=\mathcal{D}^{(\gamma_{\ell}+1)}$ and    $Tor_{i+1}(\mathcal{C}_{\ell})=Tor_{i}(\mathcal{C}_{\ell-2})$ for $1 \leq i \leq  \ell-2$ from a self-orthogonal code $\mathcal{C}_{\ell-2}$ of type $\{\Lambda_{\gamma_{\ell}+2},\lambda_{\gamma_{\ell}+3}, \ldots, \lambda_{\gamma_{\ell}+\ell-1}\}$   and  length $n$ over $\mathcal{R}_{\ell-2,m}$ satisfying   property $(\mathfrak{P}).$ 
\item[\textbf{Step 4}:]   For $\kappa-\mathrm{f}_{2\kappa-e}+1 < \ell \leq e$ and  $\ell \equiv e\pmod2, $  apply Proposition \ref{p3.9Keven}  to construct a   self-orthogonal code $\mathcal{C}_{\ell}$ of type $\{\Lambda_{\gamma_{\ell}+1},\lambda_{\gamma_{\ell}+2}, \ldots, \lambda_{\gamma_{\ell}+\ell}\}$ and length $n$ over $\mathcal{R}_{\ell,m}$ satisfying  property $(\mathfrak{P})$ and  $Tor_1(\mathcal{C}_{\ell})=\mathcal{D}^{(\gamma_{\ell}+1)}$ and  $Tor_{i+1}(\mathcal{C}_{\ell})=Tor_{i}(\mathcal{C}_{\ell-2})$ for $1 \leq i \leq  \ell-2$ from a self-orthogonal code $\mathcal{C}_{\ell-2}$  of type $\{\Lambda_{\gamma_{\ell}+2},\lambda_{\gamma_{\ell}+3}, \ldots, \lambda_{\gamma_{\ell}+\ell-1}\}$   and  length $n$ over $\mathcal{R}_{\ell-2,m}$ satisfying   property $(\mathfrak{P}).$ 
\item[\textbf{Output}:] A 
self-orthogonal code $\mathcal{C}_e$ of type $\{\lambda_1,\lambda_2,\ldots,\lambda_e\}$ and length $n$  over $\mathcal{R}_{e,m}$ satisfying $Tor_i(\mathcal{C}_e)=\mathcal{D}^{(i)}$ for $1 \leq i \leq s+\theta_e.$
\end{itemize}
\end{enumerate}

In particular, when $\lambda_j=\lambda_{e-j+2}$ for $1 \leq j \leq e,$
 Lemma \ref{l2.2} implies that the above construction methods yield a self-dual code $\mathcal{C}_e$
 of type $\{\lambda_1,\lambda_2,\ldots,\lambda_{e}\}$ and length $n$ over $\mathcal{R}_{e,m}$ satisfying $Tor_i(\mathcal{C}_e)=\mathcal{D}^{(i)}$ for $1 \leq i \leq s+\theta_e.$   Moreover, when $\mathfrak{s}=1,$
 the recursive construction method described above coincides with the method presented in  \cite{quasi}.

The following example illustrates that the self-orthogonal code $\mathcal{D}_0,$ 
 considered in Example  \ref{EX1}, can be lifted to a self-orthogonal code over $\mathcal{R}_{5,2}$ by applying the recursive construction method (\textbf{X}).

\begin{example}
Let  $\mathcal{R}_{5,2}$  be the finite commutative chain ring
as defined in Example \ref{EX1}.   
 Let $n=3,$ $\lambda_1=1$  and $\lambda_2=\lambda_3=\lambda_4=\lambda_5=0.$ Let $\mathcal{D}_0$ be a self-orthogonal  code of length $3$ and dimension $1$ over $\mathcal{T}_2$ with a generator matrix $\begin{bmatrix}1& 1&0\end{bmatrix},$ as considered in Example \ref{EX1}. We recall, from Example \ref{EX1},  that $\pi_{1}(\textbf{v}\cdot \textbf{v})=0$ for $\textbf{v}\in \mathcal{D}_0,$ where each $\textbf{v}\cdot \textbf{v}$ is viewed as an element of $\mathcal{R}_{5,2}.$ 
 We now demonstrate that  the code $\mathcal{D}_0$ can be lifted to a self-orthogonal code of type $\{1,0,0,0,0\}$ and length $3$ over $\mathcal{R}_{5,2},$  using the recursive construction method (\textbf{X}).
To proceed, consider a linear code $\mathcal{C}_3$ of type $\{1,0,0\} $ and length $3$ over $\mathcal{R}_{3,2}$ with a generator matrix $$G_3=\begin{bmatrix}
    1&1&0
\end{bmatrix}+u\begin{bmatrix}
    0&a_1&b_1
\end{bmatrix}+u^2\begin{bmatrix}
    0&a_2&b_2
\end{bmatrix},  $$  where $a_1,b_1,a_2,b_2\in \mathcal{T}_2.$ By the recursive construction method (\textbf{X}),  the code $\mathcal{D}_0$ can be lifted to a self-orthogonal code of type $\{1,0,0,0,0\}$ over $\mathcal{R}_{5,2}$ if and only if $\mathcal{C}_3$ is a   self-orthogonal code   over $\mathcal{R}_{3,2}$ satisfying  property $(\mathfrak{P}),$  which holds if and only if  $1+a_1^2+b_1^2+ua_1+u^2(1+a_2+a_2^2+b_2^2)\equiv 0\pmod{u^3}.$ This holds if and only if  $a_1=0,$ $b_1=1$   and $a_2+a_2^2+b_2^2\equiv 0\pmod u.$ It is easy to see that $a_2,b_2\in \mathcal{T}_2$ satisfying $a_2+a_2^2+b_2^2\equiv 0\pmod u$ have precisely $4$ distinct choices.    Therefore,  for $a_1=0,$  $b_1=1$ and $a_2,b_2\in \mathcal{T}_2$ satisfying $a_2+a_2^2+b_2^2\equiv 0\pmod u,$ we see that  the code $\mathcal{C}_3$  with a generator matrix $G_3$ is a self-orthogonal code satisfying  property $(\mathfrak{P}).$ We also note that $G_3G_3^t \equiv 0\pmod{u^5}.$ 

Now, corresponding to each such choice of 
the code $\mathcal{C}_3$,  let us consider a linear code $\mathcal{C}_5$ of type $\{1,0,0,0,0\} $ and length $3$ over $\mathcal{R}_{5,2}$ with a generator matrix $$G_5=G_3+u^3\begin{bmatrix}
0&a_3&b_3\end{bmatrix}+u^4\begin{bmatrix}
    0&a_4&b_4
\end{bmatrix},  $$  where $a_3,b_3,a_4,b_4\in \mathcal{T}_2.$ Note that for all choices of $a_3,b_3,a_4,b_4\in \mathcal{T}_2,$ the code $\mathcal{C}_5$ is  self-orthogonal.  This shows that the code $\mathcal{D}_0$ can be lifted to a self-orthogonal code of type $\{1,0,0,0,0\}$ and length $3$ over $\mathcal{R}_{5,2}$ by applying the  recursive construction  method (\textbf{X}).
\end{example}

 \section{Enumeration formulae for self-orthogonal and self-dual codes of
length $n$ over $\mathcal{R}_{e,m}$}\label{counting}

In this section, we will  derive explicit enumeration formulae for self-orthogonal and self-dual codes of an arbitrary length over $\mathcal{R}_{e,m}.$ Throughout this section, we assume that 
 $e \geq 3$ and  $n\geq 1$ are integers and that $\lambda_1,\lambda_2,\ldots,\lambda_{e+1}$ are non-negative integers satisfying $n=\lambda_1+\lambda_2+\cdots+\lambda_{e+1}.$
 We  denote by $\mathscr{M}_e(n;\lambda_1,\lambda_2,\ldots,\lambda_e)$ and $\mathscr{B}_e(n;\lambda_1,\lambda_2,\ldots,\lambda_e),$  the number of distinct  self-orthogonal and self-dual codes of type $\{\lambda_1,\lambda_2,\ldots,\lambda_e\}$ and length $n$ over $\mathcal{R}_{e,m},$  respectively.  By Remark \ref{R2.1}, we have   $\mathscr{M}_e(n;\lambda_1,\lambda_2,\ldots,\lambda_e)=0$ if  $2\lambda_1+2\lambda_2+\cdots+2\lambda_{e-i+1}+\lambda_{e-i+2}+\cdots+\lambda_i > n$  for some integer $i$ satisfying $s+1\leq i\leq e.$ On the other hand,  by Lemma \ref{l1.2}, we have  $\mathscr{B}_e(n;\lambda_1,\lambda_2,\ldots,\lambda_e)=0$ if $\lambda_j\neq \lambda_{e-j+2}$ for some integer $j$ satisfying $1 \leq j \leq e.$
\begin{remark}\label{R4.1}
    Let  $\mathscr{M}_e(n)$ and $\mathscr{B}_{e}(n)$ denote the total number of distinct self-orthogonal and self-dual codes of length $n$ over $\mathcal{R}_{e,m},$ respectively. By Lemma \ref{l2.2} and Remark \ref{R2.1}, we have  \vspace{-1mm}\begin{equation*}\label{summation}\vspace{-1mm}\mathscr{M}_e(n)= \mathlarger{\Sigma}_{1}~\mathscr{M}_e(n;\lambda_1,\lambda_2,\ldots,\lambda_e) \text{ ~and~ } \mathscr{B}_e(n)= \mathlarger{\Sigma}_{2}~ \mathscr{B}_e(n;\lambda_1,\lambda_2,\ldots,\lambda_e),\end{equation*}
where the summation $\mathlarger{\Sigma}_{1}$ runs over all $e$-tuples $(\lambda_1, \lambda_2, \ldots,  \lambda_{e})$ of non-negative integers satisfying $ \lambda_1+\lambda_2+\cdots+\lambda_{e}\leq n$ and  $2\lambda_1+2\lambda_2+\cdots+2\lambda_{e-j+1}+\lambda_{e-j+2}+\cdots+\lambda_j \leq n$ for $s+1\leq j\leq e,$ whereas the summation $\mathlarger{\Sigma}_{2}$ runs over all $e$-tuples $(\lambda_1, \lambda_2, \ldots,  \lambda_{e})$ of non-negative integers  satisfying $\lambda_j=\lambda_{e-j+2}$ for $2 \leq j \leq e$ and $ 2(\lambda_1+\lambda_2+\cdots+\lambda_{s})+(1+\theta_e)\lambda_{s+1}=n.$

\end{remark} 
From the above remark, it follows that to determine the numbers  $\mathscr{M}_e(n)$ and $\mathscr{B}_{e}(n),$ it suffices to obtain   enumeration formulae for the numbers $\mathscr{M}_e(n;\lambda_1,\lambda_2,\ldots,\lambda_e)$ and $\mathscr{B}_e(n;\lambda_1,\lambda_2,\ldots,\lambda_e).$ Furthermore, in view of Remark \ref{R2.1},  we assume,  throughout this section,  that the non-negative integers    $\lambda_1, \lambda_2, \ldots,  \lambda_{e+1}$  satisfy $n=\lambda_1+\lambda_2+\cdots+\lambda_{e+1}$ and  $2\lambda_1+2\lambda_2+\cdots+2\lambda_{e-j+1}+\lambda_{e-j+2}+\cdots+\lambda_j \leq n$ for $s+1\leq j\leq e.$ We also  define  $\Lambda_0=0$ and $\Lambda_i=\lambda_1+\lambda_2+\cdots+\lambda_{i}$ for $1 \leq i \leq e+1.$ Now, to derive   enumeration formulae for the numbers $\mathscr{M}_e(n;\lambda_1,\lambda_2,\ldots,\lambda_e)$ and $\mathscr{B}_e(n;\lambda_1,\lambda_2,\ldots,\lambda_e),$ we will first establish several key  lemmas.

In the following lemma, we assume that    $2\kappa \leq e,$ and we consider a chain $\mathcal{D}^{(1)}\subseteq \mathcal{D}^{(2)} \subseteq \cdots \subseteq \mathcal{D}^{(s+\theta_e)} $ of self-orthogonal codes of length $n$ over $\mathcal{T}_m$ satisfying  
conditions  \textbf{A1)} - \textbf{A5)} as stated in Lemma \ref{l3.4}.
Here, we  enumerate self-orthogonal codes $\mathcal{C}_e$ of type $\{\lambda_{1},\lambda_2,\ldots,\lambda_{e}\}$  and length $n$ over $\mathcal{R}_{e,m},$  satisfying $Tor_i(\mathcal{C}_e)=\mathcal{D}^{(i)}$ for $1 \leq i \leq s+\theta_e.$

\begin{lemma}\label{l3.1Keven}
Let $e\geq 3$ be an integer satisfying $2\kappa \leq e.$ 
Let  $\mathcal{D}^{(1)}\subseteq \mathcal{D}^{(2)} \subseteq \cdots \subseteq \mathcal{D}^{(s+\theta_e)} $ be a chain of self-orthogonal codes of length $n$ over $\mathcal{T}_m$ 
satisfying conditions \textbf{A1)} - \textbf{A5)} as stated in Lemma \ref{l3.4}.
% with  $\dim \mathcal{D}^{(i)}=\Lambda_{i}$  for $1 \leq i \leq s+\theta_e$  satisfying the additional conditions: 
% \begin{itemize}
%     \item[(i)]  $\textbf{1}\notin \mathcal{D}^{(s-\mathrm{f}_\ell+1-\kappa_1)}$ when $n\equiv 2,6\pmod 8$  and $\eta_{\ell-\theta_e-1}\neq 0$ for $2\leq \ell \leq \kappa_1+\theta_e.$  
% \item[(ii)] $\textbf{1}\notin \mathcal{D}^{(s-\mathrm{f}_{\kappa_1+1}+1-\kappa_1)}$ when $n\equiv 2,6 \pmod 8$  and 
% $\eta_{\kappa_1}\neq (\eta_0)^{\frac{3}{2}}$ with either $\kappa \geq 4$ and $\kappa_1+1$ is even or $\kappa=2$ and $\theta_e=0.$ 
% \item[(iii)] $\textbf{1}\notin \mathcal{D}^{(s-\mathrm{f}_\ell+1-\kappa_1)}$  for $\kappa_1+2+\theta_e\leq \ell \leq \kappa+2-\theta_e $ when $n\equiv 2,6 \pmod 8.$ 
% \item[(iv)] $\textbf{1} \notin \mathcal{D}^{(s-\kappa+\theta_e)}$ when   either $n\equiv 4\pmod 8$ and $m$ is odd or $n\equiv 2,6\pmod 8,$ $\kappa=2$ and $\theta_e=1.$  \end{itemize} 
The number of self-orthogonal codes $\mathcal{C}_e$ of type $\{\lambda_1,\lambda_2,\ldots,\lambda_e\}$ and length $n$ over $\mathcal{R}_{e,m}$ satisfying $Tor_i(\mathcal{C}_e)=\mathcal{D}^{(i)}$   for $1 \leq i \leq s+\theta_e,$ is given  by   \vspace{-2mm}\begin{equation*} \vspace{-1mm}  2^{\epsilon}(2^m)^{\sum\limits_{i=1}^{s}\Lambda_{i}(n-\Lambda_{i+1})+\sum\limits_{j=1}^{s-1+\theta_e}\Lambda_{s+j}(n-\Lambda_{s+j+1}-\Lambda_{s+\theta_e-j})-\sum\limits_{a=1}^{s-\kappa_1}\Lambda_a-(1-\theta_e)\frac{\Lambda_s(\Lambda_s-1)}{2}+\mu} \hspace{-2mm}
\prod\limits_{\ell=s+1+\theta_e}^{e}{\lambda_{\ell}+n-\Lambda_{\ell}-\Lambda_{e+1-\ell}\brack \lambda_{\ell}}_{2^m}, \vspace{-1mm}\end{equation*}   where 
\vspace{-1mm}\begin{equation*}(\epsilon,\mu)=\left\{\begin{array}{ll} (0,\omega) &\text{if~ } n\equiv 0,4\pmod 8 \text{ ~and there exists an integer ~}\omega \text{ ~satisfying  ~ }1\leq \omega \leq \kappa_1-\theta_e \text{ ~and} \\& \textbf{1} \in \mathcal{D}^{(s-\kappa_1-\omega+1)}\setminus  \mathcal{D}^{(s-\kappa_1-\omega)};\\ (1,\kappa_1) & \text{if } \textbf{1} \in \mathcal{D}^{(s-\kappa+\theta_e)}\text{ with either }n\equiv 0\pmod 8\text{ or }n\equiv 4 \pmod 8\text{ ~ and }~m\text{~ being even};\\
(0,\delta)& \text{if ~} n\equiv 2,6  \pmod8 \text{ ~and there exists an integer ~} \delta \text{~  satisfying ~} 1 \leq \delta \leq \mathrm{f}_{\kappa_1}, ~\eta_{2i-1}=0 \\& \text{for~ } 1 \leq i \leq \delta      \text{ ~and~ }  \textbf{1} \in \mathcal{D}^{(s-\kappa_1-\delta+1)}\setminus \mathcal{D}^{(s-\kappa_1-\delta)};\\
(0,\mathrm{f}_{\kappa_1+1}) &\text{if } n\equiv 2,6 \pmod 8, ~ \kappa \geq 4 \text{ ~is singly even,~} ~ \eta_{\kappa_1}=(\eta_0)^{\frac{3}{2}},~    \eta_{2j-1}=0  \text{ ~for~ } 1\leq j \leq \mathrm{f}_{\kappa_1-1}\\& \text{and ~} \textbf{1} \in \mathcal{D}^{(s-\mathrm{f}_{\kappa_1+1}+1-\kappa_1)}\setminus \mathcal{D}^{(s-\mathrm{f}_{\kappa_1+1}-\kappa_1)}; \\
(0,1)&\text{if } n\equiv 2,6\pmod 8, \kappa=2,~ e \text{ is even,~ }\eta_{1}=(\eta_0)^{\frac{3}{2}} \text{ ~and ~}\textbf{1} \in \mathcal{D}^{(s-\kappa_1)}\setminus \mathcal{D}^{(s-1-\kappa_1)};  
\\(0,0) & \text{otherwise.} \end{array} \right.\vspace{-1mm}\end{equation*}
\end{lemma}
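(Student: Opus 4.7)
The plan is to carry out the recursive construction method (\textbf{X}) described just above this lemma, which, starting from the given chain $\mathcal{D}^{(1)}\subseteq\mathcal{D}^{(2)}\subseteq\cdots\subseteq\mathcal{D}^{(s+\theta_e)}$, produces the desired code $\mathcal{C}_e$ through a sequence of lifts $\mathcal{C}_2\text{ (or }\mathcal{C}_3\text{)} \rightsquigarrow \mathcal{C}_4\text{ (or }\mathcal{C}_5\text{)} \rightsquigarrow \cdots \rightsquigarrow \mathcal{C}_e$. At each step the corresponding proposition (Propositions~\ref{pe=k=2}, \ref{p3.1}, \ref{p3.3b}, \ref{p3.3c}, \ref{p3.4Keven}, \ref{p3.5Keven}, \ref{p3.6Keven}, \ref{p3.7Keven} and \ref{p3.9Keven}) provides an exact count of the number of lifts from $\mathcal{C}_{\ell-2}$ to $\mathcal{C}_{\ell}$ compatible with the prescribed torsion codes. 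Conversely, by Lemma~\ref{l3.2}, every self-orthogonal $\mathcal{C}_e$ with $Tor_i(\mathcal{C}_e)=\mathcal{D}^{(i)}$ arises in this way, so that the total count is precisely the product of the per-step counts.

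First I would organize the bookkeeping. For each admissible $\ell$ in the range $2\leq \ell\leq e$ with $\ell\equiv e\pmod 2$ (respectively $3\leq \ell\leq e$ when $e$ is odd), I would read off from the corresponding proposition: (i) a Gaussian binomial factor ${\lambda_{\gamma_\ell+\ell}+n-\Lambda_{\gamma_\ell+\ell}-\Lambda_{\gamma_\ell+1}\brack \lambda_{\gamma_\ell+\ell}}_{2^m}$ counting the choices for the bottom block $T^{(\ell)}_{\gamma_\ell+\ell}$; and (ii) a power of $2^m$ whose exponent is the explicit $Y_\ell$ displayed in the proposition. The Gaussian binomials combine cleanly into $\prod_{\ell=s+1+\theta_e}^{e}{\lambda_\ell+n-\Lambda_\ell-\Lambda_{e+1-\ell}\brack \lambda_\ell}_{2^m}$ once the identification $\gamma_\ell+\ell$ is rewritten in the running index $\ell$. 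For the power of $2^m$, I would telescope the various $\sum_i\lambda_i\Lambda_{i-\ell+1}+\sum_j\lambda_j\Lambda_{j-\ell}$ into a single double sum, together with the cross-terms $(\Lambda_{\gamma_\ell+\ell-1}+\Lambda_{\gamma_\ell+1})(n-\Lambda_{\gamma_\ell+\ell}-\Lambda_{\gamma_\ell+1})+\Lambda_{\gamma_\ell+1}^2+\Lambda_{\gamma_\ell+1}$, and use the identities $\Lambda_i+\Lambda_{e+1-i}\leq n$ coming from Remark~\ref{R2.1} to reach the stated exponent
\[
\sum_{i=1}^{s}\Lambda_i(n-\Lambda_{i+1})+\sum_{j=1}^{s-1+\theta_e}\Lambda_{s+j}(n-\Lambda_{s+j+1}-\Lambda_{s+\theta_e-j})-\sum_{a=1}^{s-\kappa_1}\Lambda_a-(1-\theta_e)\tfrac{\Lambda_s(\Lambda_s-1)}{2}.
\]

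Second, I would track the correction terms $\omega_\ell$ (equivalently, the indicator $\mathbf{1}\in\mathcal{D}^{(\gamma_\ell+1-\kappa_1)}$) that appear as $+\omega_\ell$ in the exponent of each $Y_\ell$, as well as the extra factor $2^\epsilon$ coming from Propositions~\ref{p3.3b}, \ref{p3.5Keven} and \ref{p3.6Keven} in the $n\equiv 0,4\pmod 8$ case. For each lifting step the factor equals $2^m$ if $\mathbf{1}\in\mathcal{D}^{(\gamma_\ell+1-\kappa_1)}$ and $1$ otherwise. Since the chain $\mathcal{D}^{(1)}\subseteq\cdots\subseteq\mathcal{D}^{(s+\theta_e)}$ is nested, there is a unique threshold index $t$ such that $\mathbf{1}\in\mathcal{D}^{(t)}\setminus\mathcal{D}^{(t-1)}$ (or $\mathbf{1}$ never lies in any $\mathcal{D}^{(i)}$). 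Conditions \textbf{A2)}--\textbf{A5)} from Lemma~\ref{l3.4} restrict where $t$ can occur in terms of the residue of $n$ modulo $8$, the parity of $\kappa$ and the values of $\eta_0,\eta_1,\ldots$. The final $\mu$ in the lemma statement simply counts the number of lifting steps strictly after $t$ for which the all-one vector is still available, i.e.\ the number of $\ell\geq t$ with $\ell\equiv e\pmod 2$ satisfying $\gamma_\ell+1-\kappa_1\geq t$; this is exactly $\kappa_1-\theta_e-(s-\kappa_1-t+1)$ etc., producing the integer labels $\omega$, $\kappa_1$, $\delta$, $\mathrm{f}_{\kappa_1+1}$ or $1$ that appear in the case analysis. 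The $2^\epsilon$ term corresponds to the single ``doubly‑even'' step (Proposition~\ref{p3.5Keven}/\ref{p3.6Keven}) at $\ell=\kappa+1$ or $\kappa+2$ in which the coefficient comparison contributes an extra factor of $2$ rather than $2^m$.

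The main obstacle will be the last paragraph: verifying the five cases in the piecewise definition of $(\epsilon,\mu)$ by carefully matching them with the conditions \textbf{A2)}--\textbf{A5)} and the exceptional clauses ``unconditionally vs.\ if and only if'' appearing in Propositions~\ref{p3.4Keven}, \ref{p3.5Keven}, \ref{p3.6Keven}. This requires a case split on the residue $n\pmod 8$ and on whether $\kappa$ is singly even or $\kappa=2$; within each case one traces, for each admissible $\ell$, whether the required $\eta$-condition is forced to hold (so that $\mathbf{1}\in\mathcal{D}^{(\gamma_\ell+1-\kappa_1)}$ is permitted and contributes $+1$ to $\mu$) or forbidden (so that $\mathbf{1}\notin\mathcal{D}^{(\gamma_\ell+1-\kappa_1)}$ is forced by the hypotheses and contributes nothing). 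Once this matching is completed, the remaining routine algebra collecting the $2^m$-exponents and the separate $2^\epsilon$-factor yields the asserted formula.
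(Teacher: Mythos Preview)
Your proposal is correct and follows essentially the same approach as the paper, which proves the lemma in one line by citing Propositions~\ref{p3.1}, \ref{p3.4Keven}, \ref{p3.6Keven}, \ref{p3.7Keven}, \ref{p3.9Keven} (for even $e$) and Propositions~\ref{p3.3b}--\ref{p3.5Keven}, \ref{p3.7Keven}, \ref{p3.9Keven} (for odd $e$) and implicitly multiplying the per-step counts. Your elaboration on how the exponents telescope and how the $(\epsilon,\mu)$ cases arise from tracking the threshold index at which $\mathbf{1}$ enters the chain is precisely the bookkeeping hidden behind that citation.
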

\begin{proof} The desired result follows by applying Propositions \ref{p3.1}, \ref{p3.4Keven}, \ref{p3.6Keven},  \ref{p3.7Keven} and \ref{p3.9Keven}  in the case when $e$ is even, and Propositions \ref{p3.3b} - \ref{p3.5Keven}, \ref{p3.7Keven} and \ref{p3.9Keven} when $e$ is odd.
  \vspace{-2mm} \end{proof}
  In the following lemma, we assume that $e \geq 4$ and   $2\kappa > e,$ and we consider a chain $\mathcal{D}^{(1)}\subseteq \mathcal{D}^{(2)} \subseteq \cdots \subseteq \mathcal{D}^{(s+\theta_e)} $ of self-orthogonal codes of length $n$ over $\mathcal{T}_m$ 
satisfying  
conditions  \textbf{B1)} - \textbf{B4)} as stated in Lemma \ref{l3.4a}. 
Here, we  count  self-orthogonal codes $\mathcal{C}_e$ of type $\{\lambda_{1},\lambda_2,\ldots,\lambda_{e}\}$  and length $n$ over $\mathcal{R}_{e,m},$ satisfying $Tor_i(\mathcal{C}_e)=\mathcal{D}^{(i)}$ for $1 \leq i \leq s+\theta_e.$ 
\begin{lemma}\label{l3.2Keven}
Let $e\geq 4$ be an  integer satisfying $2\kappa > e.$ 
Let  $\mathcal{D}^{(1)}\subseteq \mathcal{D}^{(2)} \subseteq \cdots \subseteq \mathcal{D}^{(s+\theta_e)} $ be a chain of self-orthogonal codes of length $n$ over $\mathcal{T}_m$ 
satisfying  
conditions  \textbf{B1)} - \textbf{B4)} as stated in Lemma \ref{l3.4a}.  
The number of self-orthogonal codes $\mathcal{C}_e$ of type $\{\lambda_1,\lambda_2,\ldots,\lambda_e\}$ and length $n$ over $\mathcal{R}_{e,m}$ satisfying $Tor_i(\mathcal{C}_e)=\mathcal{D}^{(i)}$   for $1 \leq i \leq s+\theta_e,$ is given  by   \vspace{-2mm}\begin{equation*} \vspace{-1mm} (2^m)^{\sum\limits_{i=1}^{s}\Lambda_{i}(n-\Lambda_{i+1})+\sum\limits_{j=1}^{s-1+\theta_e}\Lambda_{s+j}(n-\Lambda_{s+j+1}-\Lambda_{s+\theta_e-j})-\sum\limits_{a=1}^{s-\kappa_1}\Lambda_a-(1-\theta_e)\frac{\Lambda_s(\Lambda_s-1)}{2}+\mu} 
\hspace{-1mm}\prod\limits_{\ell=s+1+\theta_e}^{e}{\lambda_{\ell}+n-\Lambda_{\ell}-\Lambda_{e+1-\ell}\brack \lambda_{\ell}}_{2^m},  \vspace{-1mm}\end{equation*}  
 where
\vspace{-2mm}\begin{equation*}\mu=\left\{ \begin{array}{cl}
     \omega & \text{if ~} n\equiv 0,4\pmod 8 \text{~and there exists an integer }\omega\text{~satisfying~}1\leq \omega \leq s-\kappa_1-1 \text{~and}\\&  \textbf{1} \in \mathcal{D}^{(s-\kappa_1+1-\omega)}\setminus \mathcal{D}^{(s-\kappa_1-\omega)};  \\
     s-\kappa_1 & \text{if } n\equiv 0,4\pmod 8 \text{~and~}\textbf{1}\in \mathcal{D}^{(1)};\\
     \delta &\text{if there exists an integer~} \delta  \text{  satisfying~} 1\leq \delta \leq \min\{e-\kappa,\mathrm{f}_{\kappa_1}\},  ~  \textbf{1} \in \mathcal{D}^{(s-\kappa_1-\delta+1)}\setminus \mathcal{D}^{(s-\kappa_1-\delta)}\\&  \text{and~}\eta_{2i-1}=0 \text{  for~} 1 \leq i \leq \delta ;  \\
     \mathrm{f}_{\kappa_1+1} & \text{if~}\kappa \text{~is singly even,~} e \geq \frac{3}{2}\kappa+1+\theta_e,~   \eta_{2j-1}=0 \text{  for~} 1\leq j \leq \mathrm{f}_{\kappa_1-1}, ~    \eta_{\kappa_1}=(\eta_0)^{\frac{3}{2}}\text{~and~}\\& \textbf{1} \in \mathcal{D}^{(s-\mathrm{f}_{\kappa_1+1}+1-\kappa_1)}\setminus \mathcal{D}^{(s-\mathrm{f}_{\kappa_1+1}-\kappa_1)}; \\
        0 & \text{otherwise.}
 \end{array}\right.\vspace{-1mm}\end{equation*}
\end{lemma}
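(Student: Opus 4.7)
The plan is to mirror the proof of Lemma \ref{l3.1Keven}, invoking the recursive construction method (\textbf{Y}) from Section \ref{construction} and multiplying the enumeration counts contributed at each lifting step. Since method (\textbf{Y}) produces every self-orthogonal code of type $\{\lambda_1,\ldots,\lambda_e\}$ and length $n$ over $\mathcal{R}_{e,m}$ whose torsion tower matches the prescribed chain (and conversely, by Lemma \ref{l3.2}, every such code arises in this way), this approach is exhaustive and non-redundant.

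First, for the base case I invoke Proposition \ref{p3.1} when $e$ is even, Proposition \ref{p3.3be=3} when $e=3$ with $\kappa=2$, and Proposition \ref{p3.3c} when $e\geq 5$ is odd with $\kappa\geq 4$, to count self-orthogonal lifts to $\mathcal{R}_{2,m}$ or $\mathcal{R}_{3,m}$ satisfying property $(\mathfrak{P})$ whose prescribed torsion codes match $\mathcal{D}^{(s)}$ (and $\mathcal{D}^{(s+1)}$ when $e$ is odd). Hypotheses \textbf{B2)}--\textbf{B4)} guarantee that whenever $\mathbf{1}$ lies in the relevant small subcode, the arithmetic restrictions on $n\pmod 8$ and on the $\eta_i$'s required by these base-case propositions are automatic. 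I then lift recursively through successive $\ell$ values: for $4\leq \ell\leq e-\kappa$ with $\ell\equiv e\pmod 2$ I apply Proposition \ref{p3.4Keven}; for $e-\kappa+1\leq \ell\leq \kappa-\mathrm{f}_{2\kappa-e}+1$ I apply Proposition \ref{p3.8Keven}; and for $\kappa-\mathrm{f}_{2\kappa-e}+1<\ell\leq e$ I apply Proposition \ref{p3.9Keven}. The latter two propositions apply unconditionally and contribute no bonus, which is consistent with the absence of any $2^{\epsilon}$ factor in the stated formula for the regime $2\kappa>e$.

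The remaining work splits into two parts. The first is to verify that the per-step exponents telescope to the claimed expression: the $\sum_{i=\ell}^{\gamma_\ell+\ell}\lambda_i\Lambda_{i-\ell+1}+\sum_{j=\ell+1}^{\gamma_\ell+\ell}\lambda_j\Lambda_{j-\ell}$ pieces accumulate across $\ell$ to give $\sum_{i=1}^s\Lambda_i(n-\Lambda_{i+1})+\sum_{j=1}^{s-1+\theta_e}\Lambda_{s+j}(n-\Lambda_{s+j+1}-\Lambda_{s+\theta_e-j})$, while the $Y_\ell$ correction terms combine to produce $-\sum_{a=1}^{s-\kappa_1}\Lambda_a-(1-\theta_e)\tfrac{\Lambda_s(\Lambda_s-1)}{2}$, and the $q$-binomial factors collect naturally into the stated product $\prod_{\ell=s+1+\theta_e}^{e}\qbin{\lambda_\ell+n-\Lambda_\ell-\Lambda_{e+1-\ell}}{\lambda_\ell}{2^m}$. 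The main obstacle will be pinning down $\mu$: one must determine, for each Step~2 application of Proposition \ref{p3.4Keven}, whether the bonus $\omega_\ell=1$ accrues, which happens only when $\mathbf{1}\in\mathcal{D}^{(\gamma_\ell+1-\kappa_1)}$ together with the compatibility conditions on $\eta_{\ell-1-\theta_e}$ (or the threshold identity $\eta_{\kappa_1}=(\eta_0)^{3/2}$ at $\ell=\kappa_1+1+\theta_e$). A careful case split into $n\equiv 0,4\pmod 8$ versus $n\equiv 2,6\pmod 8$, combined with the chain of vanishing-$\eta$ requirements recorded in \textbf{B2)}--\textbf{B4)}, will then yield exactly the five-way stratification that defines $\mu$ in the statement.
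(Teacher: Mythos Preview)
Your approach is essentially the same as the paper's: the proof of Lemma~\ref{l3.2Keven} there consists of a single sentence invoking Propositions~\ref{p3.1}, \ref{p3.4Keven}, \ref{p3.8Keven} and \ref{p3.9Keven} when $e$ is even, and Propositions~\ref{p3.3b}--\ref{p3.4Keven}, \ref{p3.8Keven} and \ref{p3.9Keven} when $e$ is odd, exactly following construction method~(\textbf{Y}). Your outline of the telescoping and the case analysis for $\mu$ is accurate; note only that your mention of Proposition~\ref{p3.3be=3} is superfluous since the hypothesis $e\geq 4$ excludes $e=3$, and that for odd $e\geq 5$ with $2\kappa>e$ the parity of $\kappa$ forces $\kappa\geq 4$, so Proposition~\ref{p3.3c} indeed suffices for the odd base case.
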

\begin{proof} The desired result follows by applying Propositions \ref{p3.1}, \ref{p3.4Keven}, \ref{p3.8Keven} and \ref{p3.9Keven}  in the case when $e$ is even, and by Propositions \ref{p3.3b} - \ref{p3.4Keven}, \ref{p3.8Keven} and \ref{p3.9Keven} when $e$ is odd.
 \end{proof}

 We now proceed to count all possible choices for the chain $\mathcal{D}^{(1)}\subseteq \mathcal{D}^{(2)} \subseteq \cdots \subseteq \mathcal{D}^{(s+\theta_e)} $ of self-orthogonal codes of length $n$ over $\mathcal{T}_m,$  satisfying  conditions \textbf{A1)} - \textbf{A5)} as stated in Lemma \ref{l3.4} when  $2\kappa \leq e,$ and  conditions \textbf{B1)} - \textbf{B4)} as stated in Lemma \ref{l3.4a}  when $2\kappa >e.$  Towards this, we recall, from Section \ref{prelim}, that  $B_{m}(\textbf{a},\textbf{b})=\pi_0(\textbf{a}\cdot \textbf{b})=\pi_0(\sum\limits_{i=1}^{n}a_ib_i)$ for all $\textbf{a}=(a_1,a_2,\ldots,a_n),$ $\textbf{b}=(b_1,b_2,\ldots,b_n) \in \mathcal{T}_{m}^n.$ One can easily observe that every self-orthogonal code of length $n$ over $\mathcal{T}_{m}$ 
is contained in the set  $\mathcal{I}(\mathcal{T}_m^n)=\{\textbf{a} \in \mathcal{T}_{m}^n: B_{m}(\textbf{a},\textbf{a})=0 \}.$ Further, it is easy to see that the set $\mathcal{I}(\mathcal{T}_m^n)$ is an $(n-1)$-dimensional $\mathcal{T}_m$-linear subspace of $\mathcal{T}_m^n$  and 
that $\textbf{1} \in \mathcal{I}(\mathcal{T}_{m}^n)$ if and only if $n$ is even.  We will now distinguish the following two cases:     $\textbf{1} \notin \mathcal{D}^{(s-\kappa_1)},$ and $\textbf{1}\in\mathcal{D}^{(s-\kappa_1)}.$ 

 In the following lemma, we enumerate all possible choices for the chain $\mathcal{D}^{(1)}\subseteq \mathcal{D}^{(2)} \subseteq \cdots \subseteq \mathcal{D}^{(s+\theta_e)} $ of self-orthogonal codes of length $n$  over $\mathcal{T}_m,$ such that $\dim \mathcal{D}^{(i)}=\Lambda_i$ for $1 \leq i \leq s+\theta_e$ and  $\textbf{1} \notin \mathcal{D}^{(s-\kappa_1)}.$
\begin{lemma}\label{p5.1} For an integer $e \geq 3,$ let  $\mathfrak{B}(\lambda_1,\lambda_2,\ldots,\lambda_{s+\theta_e}) $ denote the number of distinct choices  for the chain $\mathcal{D}^{(1)}\subseteq \mathcal{D}^{(2)} \subseteq \cdots \subseteq \mathcal{D}^{(s+\theta_e)} $ of self-orthogonal codes of length $n$  over $\mathcal{T}_m,$  satisfying $\dim \mathcal{D}^{(i)}=\Lambda_i$ for $1 \leq i \leq s+\theta_e$ and  $\textbf{1} \notin \mathcal{D}^{(s-\kappa_1)}.$  We have
\vspace{-1mm}\small{\begin{equation*} \vspace{-1mm}\mathfrak{B}(\lambda_1,\lambda_2,\ldots,\lambda_{s+\theta_e})= \left\{\begin{array}{l}
		\displaystyle \prod\limits_{i=0}^{\Lambda_{s+\theta_e}-1} \left( \frac{2^{m(n-1-2i)}-1}{2^{m(i+1)}-1}\right)\prod\limits_{j=1}^{s+\theta_e}{\Lambda_j \brack \lambda_j}    \vspace{1mm}~~~~ \text{if  } n \text{ is odd};  \vspace{1mm}\\
 \displaystyle \prod\limits_{j=0}^{\Lambda_{s-\kappa_1}-2}\hspace{-1mm}\left(\frac{2^{m(n-2j-2)}-1}{2^{m(j+1)}-1}\right)\hspace{-1mm}\prod\limits_{\ell=\Lambda_{s-\kappa_1}}^{\Lambda_{s+\theta_e}-2}\hspace{-2mm}\left(\frac{2^{m(n-2\ell-2)}-1}{2^{m(\ell+1-\Lambda_{s-\kappa_1})}-1} \right)\prod\limits_{i=1}^{s-\kappa_1}{\Lambda_i \brack \lambda_i}_{2^m}\prod\limits_{a=s-\kappa_1+1}^{s+\theta_e}\hspace{-1mm}{\Lambda_a-\Lambda_{s-\kappa_1} \brack \lambda_a}_{2^m} \vspace{1mm} \\  \times \displaystyle 
  \left( \frac{ 2^{m(n-\Lambda_{s+\theta_e}-\Lambda_{s-\kappa_1})}-1+\big(2^{m(\Lambda_{s-\kappa_1})}-1\big)\big(2^{m(n-2\Lambda_{s+\theta_e})}+2^{m(\Lambda_{s+\theta_e}-\Lambda_{s-\kappa_1})}-2\big)}{\big(2^{m(\Lambda_{s+\theta_e}-\Lambda_{s-\kappa_1})}-1\big)\big(2^{m\Lambda_{s-\kappa_1}}-1\big)} \right) \vspace{1mm}
  \\
  \displaystyle \times \big(2^{m(n-2\Lambda_{s-\kappa_1})}-1\big) ~~~ \text{if~} n \text{~is even,~} \Lambda_{s-\kappa_1}\geq 1  \text{~and~}\Lambda_{s+\theta_e}\neq \Lambda_{s-\kappa_1};\vspace{1mm}\\
  \displaystyle \prod\limits_{i=0}^{\Lambda_{s+\theta_e}-2}\left(\frac{2^{m(n-2i-2)}-1}{2^{m(i+1)}-1}\right) \vspace{1mm} \left(\frac{2^{m(n-\Lambda_{s+\theta_e})}-1}{2^{m\Lambda_{s+\theta_e}}-1}\right)  \prod\limits_{j=s-\kappa_1+1}^{s+\theta_e}{\Lambda_j \brack \lambda_j}_{2^m} \vspace{1mm}\\ \text{if~} n \text{~is even,~} \Lambda_{s-\kappa_1}= 0 \text{~and~} \Lambda_{s+\theta_e}\geq 1; \vspace{1mm}\\
   \displaystyle \prod\limits_{i=0}^{\Lambda_{s-\kappa_1}-2}\left(\frac{2^{m(n-2i-2)}-1}{2^{m(i+1)}-1}\right) \vspace{1mm} \left(\frac{2^{m(n-\Lambda_{s-\kappa_1})}-2^{m\Lambda_{s-\kappa_1}}}{2^{m\Lambda_{s-\kappa_1}}-1}\right)  \prod\limits_{j=1}^{s-\kappa_1}{\Lambda_j \brack \lambda_j}_{2^m} \vspace{1mm}\\ \text{if~} n \text{~is even,~}\Lambda_{s-\kappa_1}\geq  1 \text{~and~} \Lambda_{s+\theta_e}= \Lambda_{s-\kappa_1};\vspace{1mm}\\
  1 ~~~~\text{if~} n \text{~is even and~} \Lambda_{s+\theta_e}=0.
\end{array}\right. \end{equation*}}\normalsize 
\end{lemma}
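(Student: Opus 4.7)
The plan is to reduce the enumeration to the counting of totally isotropic flags in the symmetric bilinear space $(\mathcal{T}_m^n, B_m)$ over the finite field $\mathbb{F}_{2^m}$. Since every self-orthogonal $\mathcal{D}^{(i)}$ must satisfy $B_m(v,v)=0$ for all $v\in\mathcal{D}^{(i)}$, and since $B_m(v,v) = (v\cdot\textbf{1})^2$ in characteristic $2$, each $\mathcal{D}^{(i)}$ is contained in $\mathcal{I}(\mathcal{T}_m^n) = \textbf{1}^\perp$. I would first analyse the structure of $B_m$ restricted to this hyperplane: for $n$ odd, $\textbf{1} \notin \mathcal{I}(\mathcal{T}_m^n)$ and $B_m|_{\mathcal{I}}$ is a non-degenerate alternating (symplectic) form on a space of dimension $n-1$; for $n$ even, $\textbf{1} \in \mathcal{I}(\mathcal{T}_m^n)$ and $B_m|_{\mathcal{I}}$ has one-dimensional radical $\langle\textbf{1}\rangle$, with quotient $\mathcal{I}/\langle\textbf{1}\rangle$ carrying a non-degenerate symplectic form of dimension $n-2$.

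The second step is to bring in the two quantitative inputs that do all the work. One is the classical count of $k$-dimensional totally isotropic subspaces of a $2t$-dimensional symplectic space over $\mathbb{F}_q$, namely $\prod_{i=0}^{k-1}(q^{2(t-i)}-1)/(q^{i+1}-1)$. The other is the standard Gaussian binomial enumeration of descending chains of subspaces of prescribed dimensions inside a fixed ambient space, which accounts for the factors $\prod_{i=1}^{s-\kappa_1}\binom{\Lambda_i}{\lambda_i}_{2^m}$ and $\prod_{a=s-\kappa_1+1}^{s+\theta_e}\binom{\Lambda_a-\Lambda_{s-\kappa_1}}{\lambda_a}_{2^m}$ once the top space $\mathcal{D}^{(s+\theta_e)}$ and the middle space $\mathcal{D}^{(s-\kappa_1)}$ are fixed. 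For $n$ odd, the condition $\textbf{1}\notin\mathcal{D}^{(s-\kappa_1)}$ is automatic, and the count factors as (number of $\Lambda_{s+\theta_e}$-dimensional totally isotropic subspaces of the symplectic space $\mathcal{I}$) times (descending chain count), yielding the first displayed formula.

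For even $n$, the counting is split according to whether $\textbf{1}$ lies in $\mathcal{D}^{(s+\theta_e)}$: totally isotropic $k$-subspaces of $\mathcal{I}$ containing $\textbf{1}$ are in bijection with $(k{-}1)$-dimensional totally isotropic subspaces of the symplectic quotient $\mathcal{I}/\langle\textbf{1}\rangle$, whereas those avoiding $\textbf{1}$ are $q^{k}$ distinct lifts of each $k$-dimensional totally isotropic subspace of $\mathcal{I}/\langle\textbf{1}\rangle$. Within a fixed top code $\mathcal{D}^{(s+\theta_e)}$, the number of $\Lambda_{s-\kappa_1}$-dimensional subspaces avoiding $\textbf{1}$ is $\binom{\Lambda_{s+\theta_e}}{\Lambda_{s-\kappa_1}}_{2^m}-\binom{\Lambda_{s+\theta_e}-1}{\Lambda_{s-\kappa_1}-1}_{2^m} = 2^{m\Lambda_{s-\kappa_1}}\binom{\Lambda_{s+\theta_e}-1}{\Lambda_{s-\kappa_1}}_{2^m}$ when $\textbf{1}\in\mathcal{D}^{(s+\theta_e)}$ and $\binom{\Lambda_{s+\theta_e}}{\Lambda_{s-\kappa_1}}_{2^m}$ otherwise. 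Summing the two contributions and multiplying by the Gaussian binomial flag factors produces the complicated subcase in which $\Lambda_{s-\kappa_1}\geq 1$ and $\Lambda_{s+\theta_e}>\Lambda_{s-\kappa_1}$. The boundary cases $\Lambda_{s-\kappa_1}=0$, $\Lambda_{s+\theta_e}=\Lambda_{s-\kappa_1}$, and $\Lambda_{s+\theta_e}=0$ then follow by direct specialisation (using that $\textbf{1} \notin \{0\}$ is vacuous, that the chain becomes constant beyond level $s-\kappa_1$, or that the chain is trivial, respectively).

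The main obstacle will be the algebraic simplification in the generic even-$n$ subcase: combining the two contributions (from $\textbf{1}\in\mathcal{D}^{(s+\theta_e)}$ and $\textbf{1}\notin\mathcal{D}^{(s+\theta_e)}$) into the stated rational expression requires the telescoping identity $N_B = N_{B-1}\cdot (q^{n-2B}-1)/(q^B-1)$ for the symplectic counts and repeated use of the Pascal-type relation $\binom{B}{A}_q = \binom{B-1}{A-1}_q + q^A\binom{B-1}{A}_q$ to reorganise Gaussian binomials so that the $(q^{B-A}-1)(q^A-1)$ denominator appears in the form displayed in the lemma. Once this reorganisation is done, all four case formulas drop out as specialisations.
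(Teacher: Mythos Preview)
Your proposal is correct and rests on the same two ingredients the paper uses (the symplectic structure on $\mathcal{I}(\mathcal{T}_m^n)$ and Gaussian-binomial flag counts), but the decomposition you choose in the even-$n$ case is organised differently. The paper fixes a complement $\mathcal{V}_m$ of $\langle\mathbf{1}\rangle$ in $\mathcal{I}(\mathcal{T}_m^n)$ and pivots on the \emph{middle} code $\mathcal{D}^{(s-\kappa_1)}$: it first classifies $\mathcal{D}^{(s-\kappa_1)}$ into two explicit shapes relative to $\mathcal{V}_m$ (either lying inside $\mathcal{V}_m$, or of the form $\langle\mathbf{z}_1,\ldots,\mathbf{z}_{A-1},\mathbf{1}+\mathbf{z}_A\rangle$), counts each shape, and then for each fixed $\mathcal{D}^{(s-\kappa_1)}$ uses a Witt decomposition of $\mathcal{I}(\mathcal{T}_m^n)$ to count the self-orthogonal extensions up to $\mathcal{D}^{(s+\theta_e)}$. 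You instead work in the quotient $\mathcal{I}(\mathcal{T}_m^n)/\langle\mathbf{1}\rangle$ and pivot on the \emph{top} code $\mathcal{D}^{(s+\theta_e)}$, splitting by whether $\mathbf{1}\in\mathcal{D}^{(s+\theta_e)}$, and only afterwards count the $\Lambda_{s-\kappa_1}$-dimensional subspaces avoiding $\mathbf{1}$ inside it. Your route avoids the explicit Witt-basis bookkeeping and the case analysis on the shape of $\mathcal{D}^{(s-\kappa_1)}$; the price is that the final algebraic recombination (merging the $\mathbf{1}\in\mathcal{D}^{(s+\theta_e)}$ and $\mathbf{1}\notin\mathcal{D}^{(s+\theta_e)}$ contributions into the displayed rational expression) is slightly less transparent than in the paper, where the two pieces arise already separated by the middle-code dichotomy.
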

\begin{proof} To establish the result, we recall  that 
  $\textbf{1} \in \mathcal{I}(\mathcal{T}_{m}^n)$ if and only if $n$ is even.  Accordingly, we will distinguish the following two cases: (I) $n$ is odd, and (II) $n$ is even.  
\begin{enumerate}
\vspace{-1mm}\item[(I)] Let $n$ be odd. In this case, we see that $(\mathcal{I}(\mathcal{T}_m^n),B_m(\cdot,\cdot){\restriction_{\mathcal{I}(\mathcal{T}_m^n) \times (\mathcal{I}(\mathcal{T}_m^n)}})$ 
 is a symplectic space of dimension $n-1$ over $\mathcal{T}_m$ and  that $\textbf{1}\notin \mathcal{I}(\mathcal{T}_m^n).$   By  Exercise 8.1 (ii) of \cite{Taylor}, we  see that  the code $\mathcal{D}^{(s+\theta_e)}$ has precisely $\prod\limits_{i=0}^{\Lambda_{s+\theta_e}-1} \left( \frac{2^{m(n-1-2i)}-1}{2^{m(i+1)}-1}\right) $ distinct choices. Further, it is easy  to observe that the chain  $\mathcal{D}^{(1)}\subseteq \mathcal{D}^{(2)}\subseteq \cdots \subseteq \mathcal{D}^{(s+\theta_e)}$ of self-orthogonal codes of length $n$ over $\mathcal{T}_m$  has precisely  $\prod\limits_{i=0}^{\Lambda_{s+\theta_e}-1} \left( \frac{2^{m(n-1-2i)}-1}{2^{m(i+1)}-1}\right)\prod\limits_{j=1}^{s+\theta_e}{\Lambda_j \brack \lambda_j}$ distinct choices.   
\vspace{-1mm}\item[(II)] Next, let $n$ be even.  In this case, we have $\textbf{1}\in \mathcal{I}(\mathcal{T}_{m}^n).$ Let us choose  an $(n-2)$-dimensional $\mathcal{T}_{m}$-linear subspace  $\mathcal{V}_m$ of $\mathcal{I}(\mathcal{T}_{m}^n)$ such that  $\textbf{1} \notin \mathcal{V}_m.$ One can easily see that $\mathcal{I}(\mathcal{T}_{m}^n)=\mathcal{V}_m \perp \langle \textbf{1} \rangle$ and that $(\mathcal{V}_m, B_{m}(\cdot,\cdot){\restriction_{\mathcal{V}_m\times \mathcal{V}_m}})$ is an  $(n-2)$-dimensional symplectic space  over $\mathcal{T}_{m}.$ 

It is easy to see that $\mathfrak{B}(\lambda_1,\lambda_2,\ldots,\lambda_{s+\theta_e})=1$ if $\Lambda_{s+\theta_e}=0.$ So  we assume, throughout the proof,  that $\Lambda_{s+\theta_e}\geq 1.$ 

When $\Lambda_{s-\kappa_1}=0,$ working as in case (II) in the proof of 
Proposition 3.1 of Sharma and Kaur \cite{Sharma},  we get
\vspace{-2mm}\begin{equation*}\vspace{-2mm}
\mathfrak{B}(\lambda_1,\lambda_2,\ldots,\lambda_{s+\theta_e})=\prod\limits_{i=0}^{\Lambda_{s+\theta_e}-2}\left(\frac{2^{m(n-2i-2)}-1}{2^{m(i+1)}-1}\right) \vspace{1mm} \left(\frac{2^{m(n-\Lambda_{s+\theta_e})}-1}{2^{m\Lambda_{s+\theta_e}}-1}\right)  \prod\limits_{j=s-\kappa_1+1}^{s+\theta_e}{\Lambda_j \brack \lambda_j}_{2^m}.\end{equation*}  

On the other hand, when $\Lambda_{s-\kappa_1}\geq 1,$  working again as in case (II) in the proof of 
Proposition 3.1 of Sharma and Kaur \cite{Sharma}, we observe that the code $\mathcal{D}^{(s-\kappa_1)}$ assumes one of the following two forms: \begin{itemize}
    \item[(i)] $\langle \textbf{z}_1,\textbf{z}_2,\ldots,\textbf{z}_{\Lambda_{s-\kappa_1}} \rangle, $ where
$\textbf{z}_1,\textbf{z}_2,\ldots, \textbf{z}_{\Lambda_{s-\kappa_1}} \in \mathcal{V}_m$ are  mutually orthogonal, isotropic and linearly independent vectors 
 over  $\mathcal{T}_{m}.$   
\item[(ii)] $\langle \textbf{z}_1,\textbf{z}_2,\ldots,\textbf{z}_{\Lambda_{s-\kappa_1}-1},\textbf{1}+\textbf{z}_{\Lambda_{s-\kappa_1}} \rangle, $ where
$\textbf{z}_1,\textbf{z}_2,\ldots, \textbf{z}_{\Lambda_{s-\kappa_1}} \in \mathcal{V}_m\setminus\{\textbf{0}\}$ are such that   $\textbf{z}_1,\textbf{z}_2,\ldots,\textbf{z}_{\Lambda_{s-\kappa_1}-1},$ $\textbf{1}+\textbf{z}_{\Lambda_{s-\kappa_1}}$ are mutually orthogonal, isotropic  and   linearly independent vectors over $\mathcal{T}_{m}.$ 
\end{itemize}
Accordingly, we will examine the following two cases separately.
\begin{enumerate}
      \item[(i)] First of all, suppose that  the code $\mathcal{D}^{(s-\kappa_1)}$ is of the form $\langle \textbf{z}_1,\textbf{z}_2,\ldots,\textbf{z}_{\Lambda_{s-\kappa_1}} \rangle,$ where
$\textbf{z}_1,\textbf{z}_2,\ldots, \textbf{z}_{\Lambda_{s-\kappa_1}} \in \mathcal{V}_m$ are mutually orthogonal, isotropic and linearly independent  vectors
 over  $\mathcal{T}_{m}.$ Working as in case (II) in the proof of 
Proposition 3.1 of Sharma and Kaur \cite{Sharma} again, we see that such a code $\mathcal{D}^{(s-\kappa_1)}$ has precisely $\prod\limits_{j=0}^{\Lambda_{s-\kappa_1}-1}\left(\frac{2^{m(n-2j-2)}-1}{2^{m(j+1)}-1}\right)$ distinct choices. Now, 
%for a given choice of $\mathcal{D}^{(s-\kappa_1)},$ 
it is easy to see that the  chain 
$\mathcal{D}^{(1)}\subseteq\mathcal{D}^{(2)}\subseteq \cdots \subseteq \mathcal{D}^{(s-\kappa_1)}$ of self-orthogonal codes of length $n$ over $\mathcal{T}_m$  has precisely $\prod\limits_{j=0}^{\Lambda_{s-\kappa_1}-1}\left(\frac{2^{m(n-2j-2)}-1}{2^{m(j+1)}-1}\right) \prod\limits_{i=1}^{s-\kappa_1}{\Lambda_i \brack \lambda_i}_{2^m}$
 distinct choices.
 
 Now, when $\Lambda_{s+\theta_e}=\Lambda_{s-\kappa_1},$ it is easy to see that the desired chain $\mathcal{D}^{(1)}\subseteq \mathcal{D}^{(2)}\subseteq \cdots \subseteq \mathcal{D}^{(s+\theta_e)}$ 
 has  precisely $\prod\limits_{j=0}^{\Lambda_{s-\kappa_1}-1}\left(\frac{2^{m(n-2j-2)}-1}{2^{m(j+1)}-1}\right) \prod\limits_{i=1}^{s-\kappa_1}{\Lambda_i \brack \lambda_i}_{2^m}$ distinct choices.
 
On the other hand, when $\Lambda_{s+\theta_e}\neq \Lambda_{s-\kappa_1},$   we will  first count the choices for a self-orthogonal code $\mathcal{D}^{(s+\theta_e)}$ satisfying  $\mathcal{D}^{(s-\kappa_1)}\subseteq \mathcal{D}^{(s+\theta_e)} \subseteq (\mathcal{D}^{(s-\kappa_1)})^{\perp_{B_m}}$  for a given  choice of the code $\mathcal{D}^{(s-\kappa_1)}.$  To do this, we note, by
\cite[pp. 69-70]{Taylor}, that the space $(\mathcal{I}(\mathcal{T}_m^n),B_m(\cdot,\cdot){\restriction_{\mathcal{I}(\mathcal{T}_m^n) \times (\mathcal{I}(\mathcal{T}_m^n)}})$  admits an orthogonal direct sum decomposition of the form: $\mathcal{I}(\mathcal{T}_m^n)=\mathcal{S}_0\perp \mathcal{S}_1\perp \langle \textbf{1}\rangle$ with  $\mathcal{S}_0=\langle \textbf{z}_1,\textbf{z}_1^{\prime}\rangle \perp \langle \textbf{z}_2,\textbf{z}_2^{\prime}\rangle \perp\cdots \perp \langle \textbf{z}_{\Lambda_{s-\kappa_1}},\textbf{z}_{\Lambda_{s-\kappa_1}}^{\prime}\rangle $ and $\mathcal{S}_1=\langle \textbf{z}_{\Lambda_{s-\kappa_1}+1},\textbf{z}_{\Lambda_{s-\kappa_1}+1}^{\prime}\rangle \perp \langle \textbf{z}_{\Lambda_{s-\kappa_1}+2},\textbf{z}_{\Lambda_{s-\kappa_1}+2}^{\prime}\rangle \perp  \cdots \perp \langle \mathbf{z}_{\frac{n-2}{2}},\mathbf{z}_{\frac{n-2}{2}}^{\prime} \rangle ,$ where $(\textbf{z}_i,\textbf{z}_i^\prime)$ is a hyperbolic pair in $\mathcal{I}(\mathcal{T}_m^n)$ for $1 \leq  i \leq \frac{n-2}{2}.$
We next observe that each $\textbf{w}\in (\mathcal{D}^{(s-\kappa_1)})^{\perp_{B_m}}$ can be uniquely written as $\textbf{w}=\sum\limits_{i=1}^{\frac{n-2}{2}}(a_i\textbf{z}_i+b_i\textbf{z}_i^{\prime})+\varepsilon \textbf{1},$ where  $a_i,b_i,\varepsilon \in \mathcal{T}_m$ for $1 \leq i \leq \frac{n-2}{2}.$ Since $\textbf{w}\in (\mathcal{D}^{(s-\kappa_1)})^{\perp_{B_m}},$ we have $B_m(\textbf{w},\textbf{z}_j)=0$ for $1\leq  j \leq \Lambda_{s-\kappa_1},$ which implies that $b_j=0$ for $1 \leq j\leq \Lambda_{s-\kappa_1}.$ Thus, each $\textbf{w}\in (\mathcal{D}^{(s-\kappa_1)})^{\perp_{B_m}}$ is  of the form  $\textbf{w}=\sum\limits_{i=1}^{\Lambda_{s-\kappa_1}}a_i\textbf{z}_i +\sum\limits_{j=\Lambda_{s-\kappa_1}+1}^{\frac{n-2}{2}}(a_j\textbf{z}_j+b_j\textbf{z}_j^{\prime})+\varepsilon \textbf{1},$ which implies that  $\langle \textbf{z}_1,\textbf{z}_2,\ldots,\textbf{z}_{\Lambda_{s-\kappa_1}},\textbf{w}\rangle=\langle \textbf{z}_1,\textbf{z}_2,\ldots,\textbf{z}_{\Lambda_{s-\kappa_1}}, \sum\limits_{j=\Lambda_{s-\kappa_1}+1}^{\frac{n-2}{2}}(a_j\textbf{z}_j+b_j\textbf{z}_j^{\prime})+\varepsilon\textbf{1}\rangle. $ In view of this, we assume,  without any loss of generality,  that the code $\mathcal{D}^{(s+\theta_e)}$  is one of the following two forms: 
\begin{itemize}
    \item[($\dagger$)] 
$\langle \textbf{z}_1,\textbf{z}_2,\ldots,\textbf{z}_{\Lambda_{s-\kappa_1}}\rangle \perp \langle \textbf{w}_1,\textbf{w}_2,\ldots, \textbf{w}_{\Lambda_{s+\theta_e}-\Lambda_{s-\kappa_1}} \rangle ,$ where $\textbf{w}_1,\textbf{w}_2,\ldots, \textbf{w}_{\Lambda_{s+\theta_e}-\Lambda_{s-\kappa_1}}\in \mathcal{S}_1$ are mutually orthogonal, isotropic and linearly independent vectors   over $\mathcal{T}_m.$  
\item[($\ddagger$)] $\langle \textbf{z}_1,\textbf{z}_2,\ldots,\textbf{z}_{\Lambda_{s-\kappa_1}}\rangle \perp \langle \textbf{w}_1,\textbf{w}_2,\ldots, \textbf{w}_{\Lambda_{s+\theta_e}-\Lambda_{s-\kappa_1}-1},\textbf{1}+\textbf{w}_{\Lambda_{s+\theta_e}-\Lambda_{s-\kappa_1}} \rangle ,$ where $\textbf{w}_1,\textbf{w}_2,\textbf{w}_3,\ldots,$ \\$ \textbf{w}_{\Lambda_{s+\theta_e}-\Lambda_{s-\kappa_1}-1}\in \mathcal{S}_1$ are mutually orthogonal isotropic vectors,  $\textbf{w}_{\Lambda_{s+\theta_e}-\Lambda_{s-\kappa_1}}\in \mathcal{S}_1$ and the vectors $\textbf{w}_1,\textbf{w}_2,\ldots\ \textbf{w}_{\Lambda_{s+\theta_e}-\Lambda_{s-\kappa_1}-1},\textbf{1}+\textbf{w}_{\Lambda_{s+\theta_e}-\Lambda_{s-\kappa_1}}$ are linearly independent over $\mathcal{T}_m.$ \end{itemize}
Again, working as in case (II) in the proof of 
Proposition 3.1 of Sharma and Kaur \cite{Sharma} and using Exercise 8.1(ii) of \cite{Taylor},  we see that the subspaces of the forms ($\dagger$) and ($\ddagger$), and hence the code $\mathcal{D}^{(s+\theta_e)}$ has  precisely 
 \vspace{-2mm}
\begin{equation*}
 \vspace{-2mm} \prod\limits_{i=\Lambda_{s-\kappa_1}}^{\Lambda_{s+\theta_e}-2}\left(\frac{2^{m(n-2i-2)}-1}{2^{m(i+1-\Lambda_{s-\kappa_1})}-1}\right) \vspace{1mm} \left(\frac{2^{m(n-\Lambda_{s+\theta_e}-\Lambda_{s-\kappa_1})}-1}{2^{m(\Lambda_{s+\theta_e}-\Lambda_{s-\kappa_1})}-1}\right)  \end{equation*}  distinct choices. Further, for a given choice of $\mathcal{D}^{(s-\kappa_1)},$  it is easy to see that the desired chain  $\mathcal{D}^{(s-\kappa_1+1)}\subseteq \mathcal{D}^{(s-\kappa_1+2)}\subseteq  \cdots\subseteq  \ \mathcal{D}^{(s+\theta_e)}$ of self-orthogonal codes over $\mathcal{T}_m$ has precisely
\vspace{-2mm}\begin{equation*}\vspace{-2mm}
\prod\limits_{i=\Lambda_{s-\kappa_1}}^{\Lambda_{s+\theta_e}-2}\left(\frac{2^{m(n-2i-2)}-1}{2^{m(i+1-\Lambda_{s-\kappa_1})}-1}\right) \vspace{1mm} \left(\frac{2^{m(n-\Lambda_{s+\theta_e}-\Lambda_{s-\kappa_1})}-1}{2^{m(\Lambda_{s+\theta_e}-\Lambda_{s-\kappa_1})}-1}\right)  \prod\limits_{j=s-\kappa_1+1}^{s+\theta_e}{\Lambda_j-\Lambda_{s-\kappa_1} \brack \lambda_j}_{2^m}\end{equation*}  distinct choices.

\item[(ii)] Next, let us suppose that the code $\mathcal{D}^{(s-\kappa_1)}$ is of the form 
 $\langle \textbf{z}_1,\textbf{z}_2,\ldots,\textbf{z}_{\Lambda_{s-\kappa_1}-1},\textbf{1}+\textbf{z}_{\Lambda_{s-\kappa_1}} \rangle$ with
$\textbf{z}_1,\textbf{z}_2,\ldots,$ $ \textbf{z}_{\Lambda_{s-\kappa_1}} \in \mathcal{V}_m\setminus\{\textbf{0}\},$ where $\textbf{z}_1,\textbf{z}_2,\ldots,\textbf{z}_{\Lambda_{s-\kappa_1}-1},\textbf{1}+\textbf{z}_{\Lambda_{s-\kappa_1}}$ are  mutually orthogonal, isotropic and linearly independent vectors over $\mathcal{T}_{m}.$  Again, working as in case (II) in the proof of 
Proposition 3.1 of Sharma and Kaur \cite{Sharma}, we see that the code $\mathcal{D}^{(s-\kappa_1)}$ has precisely $(2^{m(n-2\Lambda_{s-\kappa_1})}-1)\prod\limits_{j=0}^{\Lambda_{s-\kappa_1}-2}\left(\frac{2^{m(n-2j-2)}-1}{2^{m(j+1)}-1}\right)  $ distinct choices. Now, 
%for a given choice of $\mathcal{D}^{(s-\kappa_1)},$ 
it is easy to see that the  chain 
$\mathcal{D}^{(1)}\subseteq\mathcal{D}^{(2)}\subseteq \cdots \subseteq \mathcal{D}^{(s-\kappa_1)}$ of self-orthogonal codes of length $n$ over $\mathcal{T}_m$  has precisely $$(2^{m(n-2\Lambda_{s-\kappa_1})}-1)\prod\limits_{j=0}^{\Lambda_{s-\kappa_1}-2}\left(\frac{2^{m(n-2j-2)}-1}{2^{m(j+1)}-1}\right) \prod\limits_{i=1}^{s-\kappa_1}{\Lambda_i \brack \lambda_i}_{2^m}$$
 distinct choices.  
 
 When $\Lambda_{s+\theta_e}=\Lambda_{s-\kappa_1},$ it is easy to see that the desired chain $\mathcal{D}^{(1)}\subseteq \mathcal{D}^{(2)}\subseteq \cdots \subseteq \mathcal{D}^{(s+\theta_e)}$ 
 has  precisely $(2^{m(n-2\Lambda_{s-\kappa_1})}-1)\prod\limits_{j=0}^{\Lambda_{s-\kappa_1}-2}\left(\frac{2^{m(n-2j-2)}-1}{2^{m(j+1)}-1}\right) \prod\limits_{i=1}^{s-\kappa_1}{\Lambda_i \brack \lambda_i}_{2^m}$ distinct choices.
 
 On the other hand, when $\Lambda_{s+\theta_e}\neq \Lambda_{s-\kappa_1},$   we will first enumerate the choices for a self-orthogonal code $\mathcal{D}^{(s+\theta_e)}$ satisfying  $\mathcal{D}^{(s-\kappa_1)}\subseteq 
   \mathcal{D}^{(s+\theta_e)} \subseteq (\mathcal{D}^{(s-\kappa_1)})^{\perp_{B_m}}$ for a given choice of $ \mathcal{D}^{(s-\kappa_1)}.$  To do this, we observe, working as in case (i), that the code $\mathcal{D}^{(s+\theta_e)}$ is either of the form $\langle \textbf{z}_1,\textbf{z}_2,\ldots,\textbf{z}_{\Lambda_{s-\kappa_1}-1},\textbf{1}+\textbf{z}_{\Lambda_{s-\kappa_1}}\rangle \perp \langle \textbf{w}_1,\textbf{w}_2,\ldots, \textbf{w}_{\Lambda_{s+\theta_e}-\Lambda_{s-\kappa_1}} \rangle ,$ or of the form  $\langle \textbf{z}_1,\textbf{z}_2,\ldots,\textbf{z}_{\Lambda_{s-\kappa_1}-1},\textbf{1}+\textbf{z}_{\Lambda_{s-\kappa_1}}\rangle \perp \langle \textbf{w}_1,\textbf{w}_2,\ldots, \\\textbf{w}_{\Lambda_{s+\theta_e}-\Lambda_{s-\kappa_1}-1},\textbf{1} \rangle ,$ where $\textbf{w}_1,\textbf{w}_2,\ldots, \textbf{w}_{\Lambda_{s+\theta_e}-\Lambda_{s-\kappa_1}}\in \mathcal{S}_1$ are mutually orthogonal, isotropic and linearly independent vectors  over $\mathcal{T}_m.$  Again, working as in case (II) in the proof of
Proposition 3.1 of Sharma and Kaur \cite{Sharma} and by  Exercise 8.1 (ii) of \cite{Taylor},  we see that such a subspace, and hence the code $\mathcal{D}^{(s+\theta_e)}$ has  precisely \vspace{-2mm}
\begin{equation*}
 \vspace{-2mm} \left(\frac{2^{m(n-2\Lambda_{s+\theta_e})}+2^{m(\Lambda_{s+\theta_e}-\Lambda_{s-\kappa_1})}-2}{2^{m(\Lambda_{s+\theta_e}-\Lambda_{s-\kappa_1})}-1} \right)\prod\limits_{i=\Lambda_{s-\kappa_1}}^{\Lambda_{s+\theta_e}-2}\left(\frac{2^{m(n-2i-2)}-1}{2^{m(i+1-\Lambda_{s-\kappa_1})}-1}\right)  
\end{equation*}  distinct choices. 
Furthermore, for a given choice of $\mathcal{D}^{(s-\kappa_1)},$  it is easy to see that the desired chain  $\mathcal{D}^{(s-\kappa_1+1)}\subseteq \mathcal{D}^{(s-\kappa_1+2)}\subseteq  \cdots\subseteq  \ \mathcal{D}^{(s+\theta_e)}$ of codes has precisely
\vspace{-2mm}\begin{equation*}\vspace{-2mm}
\left(\frac{2^{m(n-2\Lambda_{s+\theta_e})}+2^{m(\Lambda_{s+\theta_e}-\Lambda_{s-\kappa_1})}-2}{2^{m(\Lambda_{s+\theta_e}-\Lambda_{s-\kappa_1})}-1} \right)\prod\limits_{i=\Lambda_{s-\kappa_1}}^{\Lambda_{s+\theta_e}-2}\left(\frac{2^{m(n-2i-2)}-1}{2^{m(i+1-\Lambda_{s-\kappa_1})}-1}\right)   \prod\limits_{j=s-\kappa_1+1}^{s+\theta_e}{\Lambda_j-\Lambda_{s-\kappa_1} \brack \lambda_j}_{2^m}\end{equation*}  distinct choices.
%  Now, working as in case (i), we observe that the code $\mathcal{D}^{(s+\theta_e)}$ is of the form $\langle \textbf{z}_1,\textbf{z}_2,\ldots,\textbf{1}+\textbf{z}_{\Lambda_{s-\kappa_1}}\rangle \perp \langle \textbf{w}_1,\textbf{w}_2,\ldots, \textbf{w}_{\Lambda_{s+\theta_e}-\Lambda_{s-\kappa_1}} \rangle ,$ where $\textbf{w}_1,\textbf{w}_2,\ldots, \textbf{w}_{\Lambda_{s+\theta_e}-\Lambda_{s-\kappa_1}}$ are mutually orthogonal isotropic vectors in $\mathcal{S}_1$ that are linearly independent over $\mathcal{T}_m.$ Again, working as in case (II) in the proof ofProposition 3.1 of Sharma and Kaur \cite{Sharma} and by  Exercise 8.1 (ii) of \cite{Taylor},  we see that such a subspace, and hence the code $\mathcal{D}^{(s+\theta_e)}$ has  precisely \vspace{-1mm}\begin{equation*} \vspace{-1mm} \prod\limits_{i=\Lambda_{s-\kappa_1}}^{\Lambda_{s+\theta_e}-1}\left(\frac{2^{m(n-2i-2)}-1}{2^{m(i+1-\Lambda_{s-\kappa_1})}-1}\right)  \end{equation*}  distinct choices. Furthermore, for a given choice of $\mathcal{D}^{(s-\kappa_1)},$  it is easy to see that the desired chain  $\mathcal{D}^{(s-\kappa_1+1)}\subseteq \mathcal{D}^{(s-\kappa_1+2)}\subseteq  \cdots\subseteq  \ \mathcal{D}^{(s+\theta_e)}$ of codes has precisely\vspace{-1mm}\begin{equation*}\vspace{-1mm}\prod\limits_{i=\Lambda_{s-\kappa_1}}^{\Lambda_{s+\theta_e}-1}\left(\frac{2^{m(n-2i-2)}-1}{2^{m(i+1-\Lambda_{s-\kappa_1})}-1}\right) \vspace{1mm}   \prod\limits_{j=s-\kappa_1+1}^{s+\theta_e}{\Lambda_j-\Lambda_{s-\kappa_1} \brack \lambda_j}_{2^m}\end{equation*}  distinct choices.
\vspace{-1mm}\end{enumerate}  When $n$ is even, the desired result follows   by combining the cases (i) and (ii).
\vspace{-2mm}\end{enumerate}
\vspace{-4mm}\end{proof}
We will now consider the case  $\textbf{1}\in \mathcal{D}^{(s-\kappa_1)}.$ As $\mathcal{D}^{(s-\kappa_1)}$ is self-orthogonal  and $\textbf{1}\in \mathcal{D}^{(s-\kappa_1)},$ the integer $n$ must be   even. We will now distinguish the following two cases: (I) $n\equiv 0,4 \pmod 8,$ and (II) $n\equiv 2,6\pmod8.$ 
 
 Notably, when $n\equiv 0,4 \pmod 8,$ we see,  in view of Propositions \ref{p3.1}, \ref{p3.3b} - \ref{p3.6Keven}, that
the following  three cases arise: \begin{itemize}
\vspace{-1mm}\item[(Z1)] There exists an integer $\omega$ satisfying $1\leq \omega \leq \kappa_1-\theta_e$ if $2\kappa \leq e,$ whereas $1 \leq \omega \leq s-\kappa_1-1$ if $2\kappa > e,$ and $\textbf{1}\in \mathcal{D}^{(s-\kappa_1-\omega+1)}\setminus \mathcal{D}^{(s-\kappa_1-\omega)}.$  
\vspace{-1mm}\item[(Z2)] $\textbf{1}\in  \mathcal{D}^{(s-\kappa+\theta_e)}$ and  $2\kappa \leq e$ with  either $n\equiv 0\pmod 8$  or $n\equiv 4 \pmod 8$ and $m$ being even. 
 \vspace{-1mm}   \item[(Z3)]  $\textbf{1}\in\mathcal{D}^{(1)}$ and $2\kappa>e.$
\end{itemize}
In the following lemma, we  assume that $n\equiv 0,4\pmod 8$ and consider the case (Z1). Here, we  count the choices for the chain $\mathcal{D}^{(1)}\subseteq \mathcal{D}^{(2)} \subseteq \cdots \subseteq \mathcal{D}^{(s+\theta_e)} $ of self-orthogonal codes of length $n$  over $\mathcal{T}_m,$  such that $\dim \mathcal{D}^{(i)}=\Lambda_i$ for $1 \leq i \leq s+\theta_e.$
\begin{lemma}\label{p5.2}
    Let $e\geq 3$ be an  integer, and let $n\equiv 0,4\pmod 8.$  
    Suppose that there exists an integer $\omega$ satisfying $1\leq \omega \leq \kappa_1-\theta_e$ if $2\kappa \leq e,$ whereas $1 \leq \omega \leq s-\kappa_1-1$ if $2\kappa > e.$   Let  $\mathfrak{D}_\omega(\lambda_1,\lambda_2,\ldots,\lambda_{s+\theta_e}) $ be the number  of distinct choices  for the chain $\mathcal{D}^{(1)}\subseteq \mathcal{D}^{(2)} \subseteq \cdots \subseteq \mathcal{D}^{(s+\theta_e)} $ of self-orthogonal codes of length $n$  over $\mathcal{T}_m,$  such that $\dim \mathcal{D}^{(i)}=\Lambda_i$ for $1 \leq i \leq s+\theta_e$ and  $\textbf{1}\in \mathcal{D}^{(s-\kappa_1-\omega+1)}\setminus \mathcal{D}^{(s-\kappa_1-\omega)}.$   We have $\mathfrak{D}_\omega(\lambda_1,\lambda_2,\ldots,\lambda_{s+\theta_e})=0$ if $\Lambda_{s-\kappa_1-\omega+1}=0.$ Further, when $ \Lambda_{s-\kappa_1-\omega+1}\geq 1,$ we have  
\vspace{-2mm}\begin{eqnarray*}
\mathfrak{D}_\omega(\lambda_1,\lambda_2,\ldots,\lambda_{s+\theta_e})&=&  (2^m)^{\Lambda_{s-\kappa_1-\omega}}  {\Lambda_{s-\kappa_1-\omega+1}-1 \brack \Lambda_{s-\kappa_1-\omega}}_{2^m}  \prod\limits_{i=1}^{s-\kappa_1-\omega}{\Lambda_i \brack \lambda_i}_{2^m}\prod\limits_{b=s-\kappa_1-\omega+2}^{s+\theta_e}\hspace{-1mm}{\Lambda_b-\Lambda_{s-\kappa_1-\omega+1} \brack \lambda_b}_{2^m} \\ &&\times \hspace{-1mm}\prod\limits_{j=0}^{\Lambda_{s-\kappa_1-\omega+1}-2}\left( \frac{2^{m(n-2-2j)}-1}{2^{m(j+1)}-1}\right)  \prod\limits_{g=\Lambda_{s-\kappa_1-\omega+1}}^{\Lambda_{s+\theta_e}-1}\left( \frac{2^{m(n-2g)}-1}{2^{m(g+1-\Lambda_{s-\kappa_1-\omega+1})}-1}\right).\end{eqnarray*} \vspace{-2mm}
\end{lemma}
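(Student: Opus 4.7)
The plan is to decompose the count of chains into four independent multiplicative contributions matching the four non-trivial factors in the stated formula. First, observe that if $\Lambda_{s-\kappa_1-\omega+1}=0$, then $\mathcal{D}^{(s-\kappa_1-\omega+1)}=\{0\}$ cannot contain $\mathbf{1}$, so $\mathfrak{D}_\omega = 0$; assume henceforth $\Lambda_{s-\kappa_1-\omega+1}\geq 1$. The key structural input I would rely on is that whenever $D\subseteq \mathcal{T}_m^n$ is a self-orthogonal subspace with $\mathbf{1}\in D$, the form induced by $B_m$ on $D^{\perp_{B_m}}/D$ is alternating and non-degenerate: alternating because $B_m(v,v)=\bigl(\textstyle\sum_i v_i\bigr)^2$ and any $v\in D^{\perp_{B_m}}$ satisfies $B_m(v,\mathbf{1})=\sum_i v_i=0$, hence $B_m(v,v)=0$; and non-degenerate because $D^{\perp_{B_m}}$ is the $B_m$-orthogonal of $D$. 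Therefore $D^{\perp_{B_m}}/D$ is a symplectic space of dimension $n-2\dim D$ over $\mathcal{T}_m$. The special case $D=\langle\mathbf{1}\rangle$ gives that $\mathcal{I}(\mathcal{T}_m^n)/\langle\mathbf{1}\rangle$ is symplectic of dimension $n-2$. Combined with the standard count $\prod_{i=0}^{k-1}(2^{m(N-2i)}-1)/(2^{m(i+1)}-1)$ for the number of $k$-dimensional totally isotropic subspaces of a symplectic space of dimension $N$ over $\mathbb{F}_{2^m}$ (already invoked in Lemma~\ref{p5.1} via \cite[Ex.~8.1(ii)]{Taylor}), this machinery supplies the two fractional factors in the formula.

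I would then carry out the count in four independent steps. Step~1: choose the distinguished code $\mathcal{D}^{(s-\kappa_1-\omega+1)}$; since it is self-orthogonal of dimension $\Lambda_{s-\kappa_1-\omega+1}$ containing $\mathbf{1}$, the quotient $\mathcal{D}^{(s-\kappa_1-\omega+1)}/\langle\mathbf{1}\rangle$ is a totally isotropic subspace of dimension $\Lambda_{s-\kappa_1-\omega+1}-1$ in the symplectic space $\mathcal{I}(\mathcal{T}_m^n)/\langle\mathbf{1}\rangle$ of dimension $n-2$, contributing the factor $\prod_{j=0}^{\Lambda_{s-\kappa_1-\omega+1}-2}(2^{m(n-2-2j)}-1)/(2^{m(j+1)}-1)$. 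Step~2: choose $\mathcal{D}^{(s-\kappa_1-\omega)}$ as a $\Lambda_{s-\kappa_1-\omega}$-dimensional subspace of $\mathcal{D}^{(s-\kappa_1-\omega+1)}$ that avoids $\mathbf{1}$; the identity $\qbin{k}{d}{q}-\qbin{k-1}{d-1}{q}=q^d\qbin{k-1}{d}{q}$ together with the observation that every subspace of a self-orthogonal code is automatically self-orthogonal yields $(2^m)^{\Lambda_{s-\kappa_1-\omega}}\qbin{\Lambda_{s-\kappa_1-\omega+1}-1}{\Lambda_{s-\kappa_1-\omega}}{2^m}$ choices.

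Step~3: count the flag $\mathcal{D}^{(1)}\subseteq\cdots\subseteq\mathcal{D}^{(s-\kappa_1-\omega)}$ with prescribed dimensions $\Lambda_1<\Lambda_2<\cdots<\Lambda_{s-\kappa_1-\omega}$ as an unrestricted flag inside the fixed $\Lambda_{s-\kappa_1-\omega}$-dimensional space $\mathcal{D}^{(s-\kappa_1-\omega)}$ (self-orthogonality is automatic, and $\mathbf{1}\notin\mathcal{D}^{(s-\kappa_1-\omega)}$ forces $\mathbf{1}\notin\mathcal{D}^{(i)}$ for all $i\leq s-\kappa_1-\omega$), giving $\prod_{i=1}^{s-\kappa_1-\omega}\qbin{\Lambda_i}{\lambda_i}{2^m}$. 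Step~4: extend upwards from $\mathcal{D}^{(s-\kappa_1-\omega+1)}$. Since $\mathbf{1}\in\mathcal{D}^{(s-\kappa_1-\omega+1)}$, every self-orthogonal code $\mathcal{D}^{(j)}$ with $j>s-\kappa_1-\omega+1$ containing $\mathcal{D}^{(s-\kappa_1-\omega+1)}$ corresponds bijectively to a totally isotropic subspace of the symplectic quotient $(\mathcal{D}^{(s-\kappa_1-\omega+1)})^{\perp_{B_m}}/\mathcal{D}^{(s-\kappa_1-\omega+1)}$ of dimension $n-2\Lambda_{s-\kappa_1-\omega+1}$. Choosing the top code $\mathcal{D}^{(s+\theta_e)}$ as a totally isotropic subspace of dimension $\Lambda_{s+\theta_e}-\Lambda_{s-\kappa_1-\omega+1}$ inside this quotient contributes $\prod_{g=\Lambda_{s-\kappa_1-\omega+1}}^{\Lambda_{s+\theta_e}-1}(2^{m(n-2g)}-1)/(2^{m(g+1-\Lambda_{s-\kappa_1-\omega+1})}-1)$, and then filling in the intermediate codes is an arbitrary flag in the quotient $\mathcal{D}^{(s+\theta_e)}/\mathcal{D}^{(s-\kappa_1-\omega+1)}$, yielding $\prod_{b=s-\kappa_1-\omega+2}^{s+\theta_e}\qbin{\Lambda_b-\Lambda_{s-\kappa_1-\omega+1}}{\lambda_b}{2^m}$. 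Multiplying the four contributions produces exactly the claimed formula.

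The main obstacle I anticipate is verifying cleanly the symplectic structure on $(\mathcal{D}^{(s-\kappa_1-\omega+1)})^{\perp_{B_m}}/\mathcal{D}^{(s-\kappa_1-\omega+1)}$: this is the point where $\mathbf{1}\in\mathcal{D}^{(s-\kappa_1-\omega+1)}$ is decisive, because it forces $\sum_i v_i=0$ for every $v$ in the orthogonal complement, so the symmetric form $B_m$ descends to an alternating form on the quotient. The hypothesis $n\equiv 0,4\pmod 8$ enters only through guaranteeing that $\mathbf{1}$ can indeed lie in such a self-orthogonal code (so $n$ must be even), and through ensuring compatibility with the lifting conditions of Propositions \ref{p3.1}--\ref{p3.9Keven}. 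Once the symplectic structure is verified, the independence of the four steps is automatic because Steps~2--3 describe data inside $\mathcal{D}^{(s-\kappa_1-\omega+1)}$, Step~4 describes data above it, and Step~1 selects the pivotal code itself; the multiplicative factorization then follows.
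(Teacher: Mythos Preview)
Your proposal is correct and follows essentially the same strategy as the paper's proof: both decompose the count into (i) choosing the pivotal code $\mathcal{D}^{(s-\kappa_1-\omega+1)}$ containing $\mathbf{1}$ via totally isotropic subspaces in a symplectic space of dimension $n-2$, (ii) choosing $\mathcal{D}^{(s-\kappa_1-\omega)}$ avoiding $\mathbf{1}$ via the Pascal identity for Gaussian binomials, (iii) filling in the flag below, and (iv) extending upward inside the symplectic quotient $(\mathcal{D}^{(s-\kappa_1-\omega+1)})^{\perp_{B_m}}/\mathcal{D}^{(s-\kappa_1-\omega+1)}$. The only cosmetic difference is that the paper works with a chosen complement $\mathcal{V}_m$ of $\langle\mathbf{1}\rangle$ in $\mathcal{I}(\mathcal{T}_m^n)$ (as in the proof of Lemma~\ref{p5.1}) rather than with the quotient directly, and cites \cite[Ex.~8.1(ii)]{Taylor} and \cite{Sharma} for the symplectic counts where you give the self-contained argument that $\mathbf{1}\in D$ forces the induced form on $D^{\perp_{B_m}}/D$ to be alternating.
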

\vspace{-2mm}\begin{proof}
    Working as in case (II) in the proof of
Proposition 3.1 of Sharma and Kaur \cite{Sharma} and using  Exercise 8.1(ii) of \cite{Taylor},  we see that  the chain  $\mathcal{D}^{(1)}\subseteq  \mathcal{D}^{(2)}\subseteq \cdots\subseteq \mathcal{D}^{(s-\kappa_1-\omega+1)}$ of self-orthogonal codes of length $n$ over $\mathcal{T}_m$ satisfying  $\textbf{1} \in \mathcal{D}^{(s-\kappa_1-\omega+1)}\setminus \mathcal{D}^{(s-\kappa_1-\omega)}$  has precisely \vspace{-1mm}\begin{equation*}
  \vspace{-1mm}  \prod\limits_{j=0}^{\Lambda_{s-\kappa_1-\omega+1}-2}\left( \frac{2^{m(n-2-2j)}-1}{2^{m(j+1)}-1}\right)  \left ( {\Lambda_{s-\kappa_1-\omega+1} \brack \Lambda_{s-\kappa_1-\omega}}_{2^m} -{\Lambda_{s-\kappa_1-\omega+1}-1 \brack \Lambda_{s-\kappa_1-\omega}-1}_{2^m}\right)\prod\limits_{i=1}^{s-\kappa_1-\omega} {\Lambda_i \brack \lambda_i}_{2^m}\end{equation*}
distinct choices. Further, by the Pascal's Identity for Gaussian binomial coefficients, we have  \vspace{-1mm}\begin{equation*}\vspace{-1mm}
     {\Lambda_{s-\kappa_1-\omega+1} \brack \Lambda_{s-\kappa_1-\omega}}_{2^m} -{\Lambda_{s-\kappa_1-\omega+1}-1 \brack \Lambda_{s-\kappa_1-\omega}-1}_{2^m}= (2^m)^{\Lambda_{s-\kappa_1-\omega}}{\Lambda_{s-\kappa_1-\omega+1}-1 \brack \Lambda_{s-\kappa_1-\omega}}_{2^m}.\end{equation*}  This implies that the  chain  $\mathcal{D}^{(1)}\subseteq  \mathcal{D}^{(2)}\subseteq \cdots\subseteq \mathcal{D}^{(s-\kappa_1-\omega+1)}$ of self-orthogonal codes over $\mathcal{T}_m$  satisfying  $\textbf{1} \in \mathcal{D}^{(s-\kappa_1-\omega+1)}\setminus \mathcal{D}^{(s-\kappa_1-\omega)}$  has precisely \vspace{-2mm}\begin{equation*}
  \vspace{-2mm}  (2^m)^{\Lambda_{s-\kappa_1-\omega}}
   \prod\limits_{j=0}^{\Lambda_{s-\kappa_1-\omega+1}-2}\left( \frac{2^{m(n-2-2j)}-1}{2^{m(j+1)}-1}\right) {\Lambda_{s-\kappa_1-\omega+1}-1 \brack \Lambda_{s-\kappa_1-\omega}}_{2^m}  \prod\limits_{i=1}^{s-\kappa_1-\omega} {\Lambda_i \brack \lambda_i}_{2^m}
  \end{equation*}
distinct choices.
Now, for a given choice of the code $ \mathcal{D}^{(s-\kappa_1-\omega+1)}$ containing $\textbf{1},$  we need to count the choices for 
 the desired chain  $\mathcal{D}^{(s-\kappa_1-\omega+2)}\subseteq \mathcal{D}^{(s-\kappa_1-\omega+3)}\subseteq \cdots\subseteq  \mathcal{D}^{(s+\theta_e)}$ of self-orthogonal codes over $\mathcal{T}_m$. For this, we see,  working as in case (II) in the proof of
Lemma \ref{p5.1}  and by  Exercise 8.1(ii) of \cite{Taylor},   that  the desired chain  $\mathcal{D}^{(s-\kappa_1-\omega+2)}\subseteq \mathcal{D}^{(s-\kappa_1-\omega+3)}\subseteq \cdots\subseteq  \mathcal{D}^{(s+\theta_e)}$  has precisely 
\vspace{-2mm}\begin{equation*}
  \prod\limits_{b=s-\kappa_1-\omega+2}^{s+\theta_e}\hspace{-1mm}{\Lambda_b-\Lambda_{s-\kappa_1-\omega+1} \brack \lambda_b}_{2^m}   \prod\limits_{g=\Lambda_{s-\kappa_1-\omega+1}}^{\Lambda_{s+\theta_e}-1}\left( \frac{2^{m(n-2g)}-1}{2^{m(g+1-\Lambda_{s-\kappa_1-\omega+1})}-1}\right) \vspace{-2mm}\end{equation*} distinct choices.  From this, the desired result follows immediately. 
\vspace{-1mm}
\end{proof}
 In the following lemma, we  consider the case (Z2). Here, we count the choices for the chain $\mathcal{D}^{(1)}\subseteq \mathcal{D}^{(2)} \subseteq \cdots \subseteq \mathcal{D}^{(s+\theta_e)} $ of self-orthogonal codes of length $n$  over $\mathcal{T}_m, $ such that $\dim \mathcal{D}^{(i)}=\Lambda_i$ for $1 \leq i \leq s+\theta_e.$

 \begin{lemma}\label{p5.3}
    Let $e\geq 3$ be an  integer satisfying $2\kappa \leq e.$ Let  us suppose that  either $n\equiv 0\pmod 8$  or $n\equiv 4 \pmod 8$ and $m$ is even.   Let  $\mathcal{X}(\lambda_1,\lambda_2,\ldots,\lambda_{s+\theta_e}) $ denote the number of  all distinct choices  for the chain $\mathcal{D}^{(1)}\subseteq \mathcal{D}^{(2)} \subseteq \cdots \subseteq \mathcal{D}^{(s+\theta_e)} $ of self-orthogonal codes of length $n$  over $\mathcal{T}_m,$  such that $\dim \mathcal{D}^{(i)}=\Lambda_i$ for $1 \leq i \leq s+\theta_e$ and  $\textbf{1}\in  \mathcal{D}^{(s-\kappa+\theta_e)}.$ We have $\mathcal{X}(\lambda_1,\lambda_2,\ldots,\lambda_{s+\theta_e})=0$ if $\Lambda_{s-\kappa+\theta_e}=0.$ Further, when $ \Lambda_{s-\kappa+\theta_e}\geq 1,$ we have  
\vspace{-2mm}\begin{eqnarray*}
\mathcal{X}(\lambda_1,\lambda_2,\ldots,\lambda_{s+\theta_e})&=& \prod\limits_{\ell=1}^{s-\kappa+\theta_e}{\Lambda_{\ell} \brack \lambda_{\ell}}_{2^m} \prod\limits_{b=s-\kappa+1+\theta_e}^{s+\theta_e}{\Lambda_b-\Lambda_{s-\kappa+\theta_e} \brack \lambda_b}_{2^m}        \prod\limits_{j=0}^{\Lambda_{s-\kappa+\theta_e}-2}\left( \frac{2^{m(n-2-2j)}-1}{2^{m(j+1)}-1}\right)   \\&&\times  \prod\limits_{g=\Lambda_{s-\kappa+\theta_e}}^{\Lambda_{s+\theta_e}-1}\left( \frac{2^{m(n-2g)}-1}{2^{m(g+1-\Lambda_{s-\kappa+\theta_e})}-1}\right)  . \end{eqnarray*} 
\vspace{-3mm}\end{lemma}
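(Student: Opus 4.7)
The plan is to factor the count into three independent pieces, mirroring case (II) of Lemma~\ref{p5.1} and the recipe of Lemma~\ref{p5.2}: (a) the number of self-orthogonal codes $\mathcal{D}^{(s-\kappa+\theta_e)}$ of dimension $\Lambda_{s-\kappa+\theta_e}$ containing $\textbf{1}$; (b) for each such code, the number of descending flags $\mathcal{D}^{(1)} \subseteq \cdots \subseteq \mathcal{D}^{(s-\kappa+\theta_e-1)}$ inside it with prescribed dimensions; and (c) the number of ascending self-orthogonal flags $\mathcal{D}^{(s-\kappa+\theta_e)} \subseteq \cdots \subseteq \mathcal{D}^{(s+\theta_e)}$ above it. The case $\Lambda_{s-\kappa+\theta_e}=0$ is immediate, since $\textbf{1} \neq \textbf{0}$ rules out a zero-dimensional code containing $\textbf{1}$, so I assume $\Lambda_{s-\kappa+\theta_e} \geq 1$ throughout. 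Under the hypotheses on $n$ and $m$, the length $n$ is even, so $\textbf{1} \in \mathcal{I}(\mathcal{T}_m^n)$, which is what makes the scenario meaningful.

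For piece (a), I would exploit the fact that, in characteristic two, $\mathcal{I}(\mathcal{T}_m^n)=\{v : \sum v_i = 0\}$ is a hyperplane of dimension $n-1$ whose radical under $B_m$ is $\langle \textbf{1} \rangle$, so $\mathcal{I}(\mathcal{T}_m^n)/\langle \textbf{1} \rangle$ is a symplectic space of dimension $n-2$ over $\mathcal{T}_m$. Self-orthogonal codes of dimension $\Lambda_{s-\kappa+\theta_e}$ containing $\textbf{1}$ then correspond bijectively to totally isotropic $(\Lambda_{s-\kappa+\theta_e}-1)$-dimensional subspaces of this symplectic space, and Exercise~8.1(ii) of \cite{Taylor} furnishes the factor $\displaystyle \prod_{j=0}^{\Lambda_{s-\kappa+\theta_e}-2}\frac{2^{m(n-2-2j)}-1}{2^{m(j+1)}-1}$. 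Piece (b) is a standard Gaussian-binomial flag count inside the fixed code $\mathcal{D}^{(s-\kappa+\theta_e)}$, yielding $\prod_{\ell=1}^{s-\kappa+\theta_e}{\Lambda_{\ell} \brack \lambda_{\ell}}_{2^m}$.

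For piece (c), I would observe that since $\textbf{1} \in \mathcal{D}^{(s-\kappa+\theta_e)}$, the dual satisfies $(\mathcal{D}^{(s-\kappa+\theta_e)})^{\perp_{B_m}} \subseteq \langle \textbf{1} \rangle^{\perp_{B_m}} = \mathcal{I}(\mathcal{T}_m^n)$, so every vector in the quotient $(\mathcal{D}^{(s-\kappa+\theta_e)})^{\perp_{B_m}}/\mathcal{D}^{(s-\kappa+\theta_e)}$ is isotropic. Hence this quotient, of dimension $n-2\Lambda_{s-\kappa+\theta_e}$, again carries a symplectic structure under the induced form. Self-orthogonal extensions $\mathcal{D}^{(s+\theta_e)} \supseteq \mathcal{D}^{(s-\kappa+\theta_e)}$ of dimension $\Lambda_{s+\theta_e}$ biject with totally isotropic subspaces of dimension $\Lambda_{s+\theta_e}-\Lambda_{s-\kappa+\theta_e}$ therein, and a second application of Exercise~8.1(ii) of \cite{Taylor} yields $\displaystyle \prod_{g=\Lambda_{s-\kappa+\theta_e}}^{\Lambda_{s+\theta_e}-1}\frac{2^{m(n-2g)}-1}{2^{m(g+1-\Lambda_{s-\kappa+\theta_e})}-1}$. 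The intermediate flag inside each such extension contributes the remaining Gaussian-binomial factor $\prod_{b=s-\kappa+1+\theta_e}^{s+\theta_e}{\Lambda_b-\Lambda_{s-\kappa+\theta_e} \brack \lambda_b}_{2^m}$. Multiplying the three contributions produces the claimed formula.

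The only real subtlety, and the step I expect to be the main technical hurdle, is confirming that both quotients $\mathcal{I}(\mathcal{T}_m^n)/\langle \textbf{1} \rangle$ and $(\mathcal{D}^{(s-\kappa+\theta_e)})^{\perp_{B_m}}/\mathcal{D}^{(s-\kappa+\theta_e)}$ inherit a non-degenerate alternating form from $B_m$. For the first quotient this rests on the characteristic-two identity $B_m(v,v)=(\sum v_i)^2$, which makes $\mathcal{I}(\mathcal{T}_m^n)$ the kernel of a linear functional with $\langle \textbf{1}\rangle$ as its radical; for the second, it hinges on the inclusion $\textbf{1} \in \mathcal{D}^{(s-\kappa+\theta_e)}$ keeping the dual inside $\mathcal{I}(\mathcal{T}_m^n)$, so that every coset is represented by an isotropic vector. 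Both are structural points already exploited in case (II) of Lemma~\ref{p5.1}, and once they are in place the execution is routine.
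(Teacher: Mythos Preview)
Your proposal is correct and follows essentially the same approach as the paper, whose proof simply reads ``Working as in Lemma~\ref{p5.2}, the desired result follows immediately.'' Your three-piece decomposition (codes of dimension $\Lambda_{s-\kappa+\theta_e}$ containing $\textbf{1}$, descending flags below, ascending self-orthogonal flags above) is precisely the strategy of Lemma~\ref{p5.2} specialized to the condition $\textbf{1}\in\mathcal{D}^{(s-\kappa+\theta_e)}$ without the exclusion clause, and your use of the symplectic quotients $\mathcal{I}(\mathcal{T}_m^n)/\langle\textbf{1}\rangle$ and $(\mathcal{D}^{(s-\kappa+\theta_e)})^{\perp_{B_m}}/\mathcal{D}^{(s-\kappa+\theta_e)}$ is a clean rephrasing of the splitting $\mathcal{I}(\mathcal{T}_m^n)=\mathcal{V}_m\perp\langle\textbf{1}\rangle$ employed in case~(II) of Lemma~\ref{p5.1}.
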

\vspace{-2mm} \begin{proof}
Working as in Lemma \ref{p5.2}, the desired result follows immediately.
\end{proof}

 In the following lemma, we assume that $n\equiv 0,4\pmod8$ and $e \geq 4$ is an integer satisfying $2 \kappa > e.$ Here, we consider the case (Z3) and count the choices for the chain $\mathcal{D}^{(1)}\subseteq \mathcal{D}^{(2)} \subseteq \cdots \subseteq \mathcal{D}^{(s+\theta_e)} $ of self-orthogonal codes of length $n$  over $\mathcal{T}_m, $   such that $\dim \mathcal{D}^{(i)}=\Lambda_i$ for $1 \leq i \leq s+\theta_e.$
 \begin{lemma}\label{p5.4}
    Let $e\geq 4$ be an  integer satisfying $2\kappa > e,$ and let $n\equiv 0,4\pmod8.$  Let $\mathcal{Z}(\lambda_1,\lambda_2,\ldots,\lambda_{s+\theta}) $  denote the number of distinct choices  for the chain $\mathcal{D}^{(1)}\subseteq \mathcal{D}^{(2)} \subseteq \cdots \subseteq \mathcal{D}^{(s+\theta_e)} $ of self-orthogonal codes of length $n$  over $\mathcal{T}_m,$ such that $\dim \mathcal{D}^{(i)}=\Lambda_i$ for $1 \leq i \leq s+\theta_e$ and  $\textbf{1}\in  \mathcal{D}^{(1)}.$  We have $ \mathcal{Z}(\lambda_1,\lambda_2,\ldots,\lambda_{s+\theta_e})=0$ if $\Lambda_{1}=0.$ Further, when $ \Lambda_{1}\geq 1,$ we have  
\vspace{-2mm}\begin{equation*}\vspace{-2mm}
   \mathcal{Z}(\lambda_1,\lambda_2,\ldots,\lambda_{s+\theta_e})= \prod\limits_{i=0}^{\Lambda_1-2} \left( \frac{2^{m(n-2-2i)}-1}{2^{m(i+1)}-1}\right)\prod\limits_{g=\Lambda_{1}}^{\Lambda_{s+\theta_e}-1}\hspace{-1mm}\left( \frac{2^{m(n-2g)}-1}{2^{m(g+1-\Lambda_{1})}-1}\right)  \prod\limits_{d=2}^{s+\theta_e}{\Lambda_d-\Lambda_{1} \brack \lambda_d}_{2^m} . \end{equation*} 
\end{lemma}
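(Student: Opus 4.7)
The plan is to follow the same overall strategy used in Lemmas \ref{p5.2} and \ref{p5.3}, with the role of the ``first step where $\mathbf{1}$ enters'' being played by the smallest code $\mathcal{D}^{(1)}$ itself rather than by some intermediate code $\mathcal{D}^{(s-\kappa_1-\omega+1)}$ or $\mathcal{D}^{(s-\kappa+\theta_e)}.$ Since $\mathcal{D}^{(1)}$ is self-orthogonal and contains $\mathbf{1},$ the assumption $n\equiv 0,4\pmod 8$ guarantees that $\mathbf{1}\in\mathcal{I}(\mathcal{T}_m^n)$ and $B_m(\mathbf{1},\mathbf{1})=0,$ so the decomposition $\mathcal{I}(\mathcal{T}_m^n)=\mathcal{V}_m\perp\langle\mathbf{1}\rangle$ with $(\mathcal{V}_m,B_m)$ an $(n-2)$-dimensional symplectic space (as in the proof of Lemma \ref{p5.1}) is available. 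If $\Lambda_1=0$ the quantity $\mathcal{Z}$ is vacuously $0$ since no $\mathcal{D}^{(1)}$ containing $\mathbf{1}$ can have dimension $0,$ so assume $\Lambda_1\geq 1.$

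For the first step, I would count $\Lambda_1$-dimensional self-orthogonal codes $\mathcal{D}^{(1)}\subseteq \mathcal{T}_m^n$ containing $\mathbf{1}.$ Such a code must be of the form $\langle\mathbf{1},\mathbf{z}_1,\ldots,\mathbf{z}_{\Lambda_1-1}\rangle,$ where $\mathbf{z}_1,\ldots,\mathbf{z}_{\Lambda_1-1}\in\mathcal{V}_m$ are mutually orthogonal, isotropic, and linearly independent over $\mathcal{T}_m.$ By Exercise 8.1(ii) of \cite{Taylor} applied to the symplectic space $\mathcal{V}_m,$ the number of $(\Lambda_1-1)$-dimensional totally isotropic subspaces equals $\prod_{i=0}^{\Lambda_1-2}\frac{2^{m(n-2-2i)}-1}{2^{m(i+1)}-1},$ giving the first factor of the claimed formula.

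For the second step, given a fixed choice of $\mathcal{D}^{(1)}$ containing $\mathbf{1},$ I would count the chain $\mathcal{D}^{(1)}\subseteq \mathcal{D}^{(2)}\subseteq\cdots\subseteq\mathcal{D}^{(s+\theta_e)}$ with $\mathcal{D}^{(s+\theta_e)}$ self-orthogonal. Here $\mathcal{D}^{(s+\theta_e)}\subseteq (\mathcal{D}^{(1)})^{\perp_{B_m}}$ and the quotient $(\mathcal{D}^{(1)})^{\perp_{B_m}}/\mathcal{D}^{(1)}$ inherits a non-degenerate symplectic form of dimension $n-2\Lambda_1.$ Reasoning exactly as in case (i) of case (II) of the proof of Lemma \ref{p5.1} (with $\Lambda_{s-\kappa_1}$ replaced by $\Lambda_1$), the number of choices of $\mathcal{D}^{(s+\theta_e)}$ extending $\mathcal{D}^{(1)}$ equals $\prod_{g=\Lambda_1}^{\Lambda_{s+\theta_e}-1}\frac{2^{m(n-2g)}-1}{2^{m(g+1-\Lambda_1)}-1},$ which accounts for the second product. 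Finally, after fixing $\mathcal{D}^{(1)}$ and $\mathcal{D}^{(s+\theta_e)},$ the intermediate codes $\mathcal{D}^{(2)}\subseteq\cdots\subseteq\mathcal{D}^{(s+\theta_e-1)}$ form a flag in the quotient $\mathcal{D}^{(s+\theta_e)}/\mathcal{D}^{(1)},$ whose count is the standard Gaussian flag product $\prod_{d=2}^{s+\theta_e}{\Lambda_d-\Lambda_1\brack \lambda_d}_{2^m}.$ Multiplying the three contributions yields the stated formula.

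I do not anticipate a serious obstacle, since everything boils down to invoking the symplectic counting machinery already used in the proofs of Lemmas \ref{p5.1}--\ref{p5.3}; the only delicate point is to verify that the quotient $(\mathcal{D}^{(1)})^{\perp_{B_m}}/\mathcal{D}^{(1)}$ is indeed a non-degenerate symplectic space of dimension $n-2\Lambda_1$ (which is automatic because $\mathcal{D}^{(1)}$ is totally isotropic in the non-degenerate symplectic space $\mathcal{I}(\mathcal{T}_m^n)/\langle\mathbf{1}\rangle\cong\mathcal{V}_m$), so that Exercise 8.1(ii) of \cite{Taylor} applies.
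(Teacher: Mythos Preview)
Your proposal is correct and follows essentially the same approach as the paper, whose proof is the one-line reference ``Working as in Lemma \ref{p5.2}, the desired result follows immediately.'' You have correctly specialized the argument of Lemma \ref{p5.2} to the situation where the level at which $\mathbf{1}$ enters is the bottom code $\mathcal{D}^{(1)}$, and your three-factor decomposition (count $\mathcal{D}^{(1)}\ni\mathbf{1}$ via totally isotropic subspaces of $\mathcal{V}_m$, count $\mathcal{D}^{(s+\theta_e)}$ via the symplectic quotient $(\mathcal{D}^{(1)})^{\perp_{B_m}}/\mathcal{D}^{(1)}$, and count the intermediate flag) matches the paper's machinery exactly.
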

 \begin{proof}
Working as in Lemma \ref{p5.2}, the desired result follows immediately.
\end{proof}

We will next consider the case  $n\equiv 2,6\pmod 8.$  Here,  in view of Propositions \ref{p3.1}, \ref{p3.3b} - \ref{p3.4Keven}, the following two  cases arise: 
\begin{itemize}
    \item[(L1)] There exists an integer $\delta$  satisfying $ 1 \leq \delta \leq \mathrm{f}_{\kappa_1}$  if $2\kappa \leq e,$ while $1\leq \delta \leq \min\{e-\kappa,\mathrm{f}_{\kappa_1}\}$ if $2\kappa>e,$  such that  $\eta_{2i-1}=0 $ for $ 1 \leq i \leq \delta $ and $\textbf{1} \in \mathcal{D}^{(s-\kappa_1-\delta+1)}\setminus \mathcal{D}^{(s-\kappa_1-\delta)}.$  
\item[(L2)] When $2\kappa \leq  e, $ either $\kappa \geq 4$   is a singly even integer or $\kappa=2$ and $e$ is even. On the other hand, when     $2\kappa > e, $     $\kappa$ is a singly even integer  and $e \geq \frac{3}{2}\kappa+1+\theta_e $. In both cases, we have $ \eta_{\kappa_1}=(\eta_0)^{\frac{3}{2}} $ and  $ \eta_{2j-1}=0$  for $ 1\leq j \leq \mathrm{f}_{\kappa_1-1}.$  In addition,   we have $\textbf{1} \in \mathcal{D}^{(s-\mathrm{f}_{\kappa_1+1}+1-\kappa_1)}\setminus  \mathcal{D}^{(s-\mathrm{f}_{\kappa_1+1}-\kappa_1)}.$     

\end{itemize}

In the following lemma, we  consider the former case and  count the choices for the chain $\mathcal{D}^{(1)}\subseteq \mathcal{D}^{(2)} \subseteq \cdots \subseteq \mathcal{D}^{(s+\theta_e)} $ of self-orthogonal codes of length $n$  over $\mathcal{T}_m,$ such that $\dim \mathcal{D}^{(i)}=\Lambda_i$ for $1 \leq i \leq s+\theta_e.$
\begin{lemma}\label{p5.7}
    Let $e\geq 3$ be an  integer, and let $n\equiv 2,6\pmod 8.$  
  Let us suppose that there exists an integer $\delta$  satisfying $ 1 \leq \delta \leq \mathrm{f}_{\kappa_1}$  if $2\kappa \leq e,$ while $1\leq \delta \leq \min\{e-\kappa,\mathrm{f}_{\kappa_1}\}$ if $2\kappa>e,$  and  $\eta_{2i-1}=0 $ for $ 1 \leq i \leq \delta .$ 
Let  $\mathcal{W}_\delta(\lambda_1,\lambda_2,\ldots,\lambda_{s+\theta_e}) $ denote the number  of distinct choices  for the chain $\mathcal{D}^{(1)}\subseteq \mathcal{D}^{(2)} \subseteq \cdots \subseteq \mathcal{D}^{(s+\theta_e)} $ of self-orthogonal codes of length $n$  over $\mathcal{T}_m,$  such that $\dim \mathcal{D}^{(i)}=\Lambda_i$ for $1 \leq i \leq s+\theta_e$ and $\textbf{1} \in \mathcal{D}^{(s-\kappa_1-\delta+1)}\setminus \mathcal{D}^{(s-\kappa_1-\delta)}.$ We have $ \mathcal{W}_\delta(\lambda_1,\lambda_2,\ldots,\lambda_{s+\theta_e})=0$ if $\Lambda_{s-\kappa_1-\delta+1}=0.$ Further, when  $ \Lambda_{s-\kappa_1-\delta+1}\geq 1,$ we have   
\vspace{-2mm}\begin{eqnarray*}
\mathcal{W}_\delta(\lambda_1,\lambda_2,\ldots,\lambda_{s+\theta_e})&=&  (2^m)^{\Lambda_{s-\kappa_1-\delta}}  {\Lambda_{s-\kappa_1-\delta+1}-1 \brack \Lambda_{s-\kappa_1-\delta}}_{2^m}  \prod\limits_{i=1}^{s-\kappa_1-\delta}{\Lambda_i \brack \lambda_i}_{2^m}\prod\limits_{b=s-\kappa_1-\delta+2}^{s+\theta_e}\hspace{-1mm}{\Lambda_b-\Lambda_{s-\kappa_1-\delta+1} \brack \lambda_b}_{2^m} \\ &&\times \hspace{-1mm}\prod\limits_{j=0}^{\Lambda_{s-\kappa_1-\delta+1}-2}\left( \frac{2^{m(n-2-2j)}-1}{2^{m(j+1)}-1}\right)  \prod\limits_{g=\Lambda_{s-\kappa_1-\delta+1}}^{\Lambda_{s+\theta_e}-1}\left( \frac{2^{m(n-2g)}-1}{2^{m(g+1-\Lambda_{s-\kappa_1-\delta+1})}-1}\right).\end{eqnarray*} 
\end{lemma}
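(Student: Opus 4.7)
The strategy is to follow the same counting method used in the proof of Lemma \ref{p5.2}, since the structural constraint is identical (namely, $\textbf{1}$ lies in a particular code of the chain but not in the immediately preceding one), and the hypotheses on $n$ and the $\eta_{2i-1}$'s merely certify that such chains are relevant (they do not alter the counting itself).

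First I would handle the degenerate case $\Lambda_{s-\kappa_1-\delta+1}=0$: since $\textbf{1}\neq \textbf{0}$, the condition $\textbf{1}\in \mathcal{D}^{(s-\kappa_1-\delta+1)}$ forces $\mathcal{D}^{(s-\kappa_1-\delta+1)}\neq \{\textbf{0}\}$, giving $\mathcal{W}_\delta=0$. So assume $\Lambda_{s-\kappa_1-\delta+1}\geq 1$. Since $n\equiv 2,6\pmod 8$ is even, we have $\textbf{1}\in \mathcal{I}(\mathcal{T}_m^n)$, and the quotient pairing on $\mathcal{I}(\mathcal{T}_m^n)/\langle \textbf{1}\rangle$ is a non-degenerate symplectic form on an $(n-2)$-dimensional $\mathcal{T}_m$-space, exactly as in the even-$n$ analysis of Lemma \ref{p5.1}. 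The enumeration of chains therefore splits into two independent counts.

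Next, for the prefix $\mathcal{D}^{(1)}\subseteq\cdots\subseteq\mathcal{D}^{(s-\kappa_1-\delta+1)}$, I would first count the choices for the topmost code $\mathcal{D}^{(s-\kappa_1-\delta+1)}$, which must be a self-orthogonal code of dimension $\Lambda_{s-\kappa_1-\delta+1}$ containing $\textbf{1}$. By the standard symplectic count (Exercise 8.1(ii) of \cite{Taylor}), applied after fixing $\textbf{1}$ as one basis vector and extending inside the symplectic quotient, this yields the factor $\prod_{j=0}^{\Lambda_{s-\kappa_1-\delta+1}-2}(2^{m(n-2-2j)}-1)/(2^{m(j+1)}-1)$. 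Then, for each such fixed $\mathcal{D}^{(s-\kappa_1-\delta+1)}$, I would count nested subcodes of prescribed dimensions with the additional constraint $\textbf{1}\notin \mathcal{D}^{(s-\kappa_1-\delta)}$. The number of $\Lambda_{s-\kappa_1-\delta}$-dimensional subspaces of $\mathcal{D}^{(s-\kappa_1-\delta+1)}$ avoiding $\textbf{1}$ equals
\[
{\Lambda_{s-\kappa_1-\delta+1}\brack \Lambda_{s-\kappa_1-\delta}}_{2^m}-{\Lambda_{s-\kappa_1-\delta+1}-1\brack \Lambda_{s-\kappa_1-\delta}-1}_{2^m}=(2^m)^{\Lambda_{s-\kappa_1-\delta}}{\Lambda_{s-\kappa_1-\delta+1}-1\brack \Lambda_{s-\kappa_1-\delta}}_{2^m}
\]
by Pascal's identity for Gaussian binomial coefficients, and the remaining nested subcodes $\mathcal{D}^{(1)}\subseteq\cdots\subseteq \mathcal{D}^{(s-\kappa_1-\delta)}$ contribute $\prod_{i=1}^{s-\kappa_1-\delta}{\Lambda_i\brack \lambda_i}_{2^m}$ choices inside $\mathcal{D}^{(s-\kappa_1-\delta)}$.

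For the suffix $\mathcal{D}^{(s-\kappa_1-\delta+2)}\subseteq\cdots\subseteq \mathcal{D}^{(s+\theta_e)}$ with $\mathcal{D}^{(s-\kappa_1-\delta+1)}$ fixed (and containing $\textbf{1}$), I would invoke the same orthogonal-complement decomposition used in the proof of Lemma \ref{p5.1}: writing $\mathcal{I}(\mathcal{T}_m^n)=\mathcal{D}^{(s-\kappa_1-\delta+1)}\oplus \mathcal{S}$ orthogonally in the quotient, each lifted code corresponds to a self-orthogonal subspace of a symplectic space of dimension $n-2\Lambda_{s-\kappa_1-\delta+1}$, giving the factor $\prod_{g=\Lambda_{s-\kappa_1-\delta+1}}^{\Lambda_{s+\theta_e}-1}(2^{m(n-2g)}-1)/(2^{m(g+1-\Lambda_{s-\kappa_1-\delta+1})}-1)$, and the nested chain of subcodes modulo $\mathcal{D}^{(s-\kappa_1-\delta+1)}$ contributes $\prod_{b=s-\kappa_1-\delta+2}^{s+\theta_e}{\Lambda_b-\Lambda_{s-\kappa_1-\delta+1}\brack \lambda_b}_{2^m}$. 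Multiplying all factors together yields the claimed formula.

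The main obstacle, in principle, is justifying that the counting truly decouples from the arithmetic conditions $\eta_{2i-1}=0$ and from the specific range constraints on $\delta$; however, these hypotheses are the ones that ensure the configurations being counted actually arise from self-orthogonal codes over $\mathcal{R}_{e,m}$ via Propositions \ref{p3.1}, \ref{p3.3b} -- \ref{p3.4Keven}. They do not affect the count of chains over $\mathcal{T}_m$ themselves; once the structural condition $\textbf{1}\in \mathcal{D}^{(s-\kappa_1-\delta+1)}\setminus \mathcal{D}^{(s-\kappa_1-\delta)}$ is imposed, the enumeration reduces verbatim to the symplectic-geometric count carried out in Lemma \ref{p5.2}, so no new ingredient is required.
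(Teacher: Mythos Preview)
Your proposal is correct and follows essentially the same approach as the paper: the paper's own proof simply reads ``Working as in Lemma \ref{p5.2}, the desired result follows immediately,'' and you have faithfully reproduced that argument with $\omega$ replaced by $\delta$. Your observation that the hypotheses on $n$ and the $\eta_{2i-1}$'s play no role in the chain count itself is exactly right.
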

\vspace{-1mm}\begin{proof}
    Working as in Lemma \ref{p5.2}, the desired result follows immediately.
\end{proof}

In the following lemma, we  consider the latter case and  count the choices for the chain $\mathcal{D}^{(1)}\subseteq \mathcal{D}^{(2)} \subseteq \cdots \subseteq \mathcal{D}^{(s+\theta_e)} $ of self-orthogonal codes of length $n$  over $\mathcal{T}_m,$ such that $\dim \mathcal{D}^{(i)}=\Lambda_i$ for $1 \leq i \leq s+\theta_e.$

\begin{lemma}\label{p5.8}
    Let $e\geq 3$ be an  integer,  and let $n\equiv 2,6\pmod 8.$ 
    Let us suppose that either $\kappa \geq 4$   is a singly even integer or $\kappa=2$ and $e$ is even when  $2\kappa \leq  e, $ while       $\kappa$ is a singly even integer  and $e \geq \frac{3}{2}\kappa+1+\theta_e $ when  $2\kappa > e. $  In both cases, let us suppose that  $ \eta_{\kappa_1}=(\eta_0)^{\frac{3}{2}} $ and  $ \eta_{2j-1}=0$  for $ 1\leq j \leq \mathrm{f}_{\kappa_1-1}.$    
 Let    $\mathcal{Y}(\lambda_1,\lambda_2,\ldots,\lambda_{s+\theta_e})$ denote the number of distinct choices  for the chain $\mathcal{D}^{(1)}\subseteq \mathcal{D}^{(2)} \subseteq \cdots \subseteq \mathcal{D}^{(s+\theta_e)} $ of self-orthogonal codes of length $n$  over $\mathcal{T}_m,$  such that $\dim \mathcal{D}^{(i)}=\Lambda_i$ for $1 \leq i \leq s+\theta_e$ and  $\textbf{1} \in \mathcal{D}^{(s-\mathrm{f}_{\kappa_1+1}+1-\kappa_1)}\setminus  \mathcal{D}^{(s-\mathrm{f}_{\kappa_1+1}-\kappa_1)}.$ We have $ \mathcal{Y}(\lambda_1,\lambda_2,\ldots,\lambda_{s+\theta_e})=0$ if $\Lambda_{s-\mathrm{f}_{\kappa_1+1}+1-\kappa_1}=0.$ Further, when  $ \Lambda_{s-\mathrm{f}_{\kappa_1+1}+1-\kappa_1}\geq 1,$ we have    
    \vspace{-2mm} \begin{eqnarray*}
\mathcal{Y}(\lambda_1,\lambda_2,\ldots,\lambda_{s+\theta_e})\hspace{-2mm}&=&  \hspace{-2mm}(2^m)^{\Lambda_{s-\kappa_1-\mathrm{f}_{\kappa_1+1}}}\hspace{-3.mm}\prod\limits_{j=0}^{\Lambda_{s-\kappa_1-\mathrm{f}_{\kappa_1+1}+1}-2}\hspace{-3mm}\left( \frac{2^{m(n-2-2j)}-1}{2^{m(j+1)}-1}\right)  \hspace{-1.5mm} \prod\limits_{g=\Lambda_{s-\kappa_1-\mathrm{f}_{\kappa_1+1}+1}}^{\Lambda_{s+\theta_e}-1}\hspace{-1.5mm} \left( \frac{2^{m(n-2g)}-1}{2^{m(g+1-\Lambda_{s-\kappa_1-\mathrm{f}_{\kappa_1+1}+1})}-1}\right)  \\ &&\times {\Lambda_{s-\kappa_1-\mathrm{f}_{\kappa_1+1}+1}-1 \brack \Lambda_{s-\kappa_1-\mathrm{f}_{\kappa_1+1}}}_{2^m}  \prod\limits_{i=1}^{s-\kappa_1-\mathrm{f}_{\kappa_1+1}}{\Lambda_i \brack \lambda_i}_{2^m}\prod\limits_{b=s-\kappa_1-\mathrm{f}_{\kappa_1+1}+2}^{s+\theta_e}{\Lambda_b-\Lambda_{s-\kappa_1-\mathrm{f}_{\kappa_1+1}+1} \brack \lambda_b}_{2^m}.\end{eqnarray*} 
\end{lemma}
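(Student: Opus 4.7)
The plan is to mimic the proof of Lemma \ref{p5.2} (and \ref{p5.7}), which handled the completely analogous situation with the level $s-\kappa_1-\omega+1$ (resp.\ $s-\kappa_1-\delta+1$) playing the role now played by $s-\kappa_1-\mathrm{f}_{\kappa_1+1}+1$. Since $\mathcal{D}^{(s-\kappa_1-\mathrm{f}_{\kappa_1+1}+1)}$ is forced to be self-orthogonal and to contain $\textbf{1}$, the assumption $n\equiv 2,6\pmod 8$ (in particular $n$ is even) ensures $\textbf{1}\in \mathcal{I}(\mathcal{T}_m^n)$ so that this is possible. If $\Lambda_{s-\kappa_1-\mathrm{f}_{\kappa_1+1}+1}=0$ there is no such chain, yielding the claimed zero; otherwise we split the flag into three pieces and count each piece separately.

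First, I would count chains $\mathcal{D}^{(1)}\subseteq \mathcal{D}^{(2)}\subseteq \cdots \subseteq \mathcal{D}^{(s-\kappa_1-\mathrm{f}_{\kappa_1+1}+1)}$ of self-orthogonal codes of the prescribed dimensions such that $\textbf{1}\in \mathcal{D}^{(s-\kappa_1-\mathrm{f}_{\kappa_1+1}+1)}\setminus \mathcal{D}^{(s-\kappa_1-\mathrm{f}_{\kappa_1+1})}$. Applying the case (II) analysis in Proposition 3.1 of Sharma and Kaur \cite{Sharma}, together with Exercise 8.1(ii) of Taylor \cite{Taylor}, this count equals
\[
\prod_{j=0}^{\Lambda_{s-\kappa_1-\mathrm{f}_{\kappa_1+1}+1}-2}\!\!\left(\frac{2^{m(n-2-2j)}-1}{2^{m(j+1)}-1}\right)\left({\Lambda_{s-\kappa_1-\mathrm{f}_{\kappa_1+1}+1}\brack \Lambda_{s-\kappa_1-\mathrm{f}_{\kappa_1+1}}}_{2^m}-{\Lambda_{s-\kappa_1-\mathrm{f}_{\kappa_1+1}+1}-1\brack \Lambda_{s-\kappa_1-\mathrm{f}_{\kappa_1+1}}-1}_{2^m}\right)\prod_{i=1}^{s-\kappa_1-\mathrm{f}_{\kappa_1+1}}\!\!{\Lambda_i\brack \lambda_i}_{2^m},
\]
where the two Gaussian binomial coefficients encode ``number of $\Lambda_{s-\kappa_1-\mathrm{f}_{\kappa_1+1}}$-dimensional subspaces of $\mathcal{D}^{(s-\kappa_1-\mathrm{f}_{\kappa_1+1}+1)}$'' minus ``those that lie in the hyperplane not containing $\textbf{1}$''. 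Pascal's identity for Gaussian binomial coefficients then rewrites this difference as $(2^m)^{\Lambda_{s-\kappa_1-\mathrm{f}_{\kappa_1+1}}}\!{\Lambda_{s-\kappa_1-\mathrm{f}_{\kappa_1+1}+1}-1\brack \Lambda_{s-\kappa_1-\mathrm{f}_{\kappa_1+1}}}_{2^m}$, matching the target formula.

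Next, for a fixed choice of $\mathcal{D}^{(s-\kappa_1-\mathrm{f}_{\kappa_1+1}+1)}$ containing $\textbf{1}$, I would count the continuation $\mathcal{D}^{(s-\kappa_1-\mathrm{f}_{\kappa_1+1}+2)}\subseteq \cdots \subseteq \mathcal{D}^{(s+\theta_e)}$ of self-orthogonal codes with the prescribed dimensions. Working inside the symplectic space $\mathcal{V}_m$ obtained from the splitting $\mathcal{I}(\mathcal{T}_m^n)=\mathcal{V}_m\perp \langle \textbf{1}\rangle$ (as in case (II) of Lemma \ref{p5.1}) and applying again Exercise 8.1(ii) of Taylor \cite{Taylor}, this continuation count equals
\[
\prod_{g=\Lambda_{s-\kappa_1-\mathrm{f}_{\kappa_1+1}+1}}^{\Lambda_{s+\theta_e}-1}\!\!\left(\frac{2^{m(n-2g)}-1}{2^{m(g+1-\Lambda_{s-\kappa_1-\mathrm{f}_{\kappa_1+1}+1})}-1}\right)\prod_{b=s-\kappa_1-\mathrm{f}_{\kappa_1+1}+2}^{s+\theta_e}{\Lambda_b-\Lambda_{s-\kappa_1-\mathrm{f}_{\kappa_1+1}+1}\brack \lambda_b}_{2^m},
\]
where the first product counts the top code $\mathcal{D}^{(s+\theta_e)}$ inside $(\mathcal{D}^{(s-\kappa_1-\mathrm{f}_{\kappa_1+1}+1)})^{\perp_{B_m}}$, and the second product counts intermediate subspaces $\mathcal{D}^{(b)}/\mathcal{D}^{(s-\kappa_1-\mathrm{f}_{\kappa_1+1}+1)}$. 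Multiplying the two counts together yields the stated formula.

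The only substantive obstacle is checking that no extra obstructions arise from the additional arithmetic hypotheses on $\kappa$, $e$, $\eta_{\kappa_1}$ and the $\eta_{2j-1}$; but these hypotheses are precisely the ones that guarantee, via Propositions \ref{p3.1} and \ref{p3.3b}--\ref{p3.4Keven}, that such chains can actually be realized (they do not restrict the counting of the chain itself, only its later liftability). Hence the counting argument goes through verbatim as in Lemma \ref{p5.2}, and the formula follows.
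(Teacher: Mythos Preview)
Your approach is correct and is exactly what the paper does: its entire proof is ``Working as in Lemma \ref{p5.2}, the desired result follows immediately,'' and you have spelled out that argument with the index $\omega$ replaced by $\mathrm{f}_{\kappa_1+1}$. One small slip: the subtracted Gaussian binomial ${\Lambda_{s-\kappa_1-\mathrm{f}_{\kappa_1+1}+1}-1\brack \Lambda_{s-\kappa_1-\mathrm{f}_{\kappa_1+1}}-1}_{2^m}$ counts the $\Lambda_{s-\kappa_1-\mathrm{f}_{\kappa_1+1}}$-dimensional subspaces \emph{containing} $\textbf{1}$ (via the quotient by $\langle\textbf{1}\rangle$), not those lying in a hyperplane avoiding $\textbf{1}$; the formula and the rest of the argument are unaffected.
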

\vspace{-1mm}\begin{proof}
    Working as in Lemma \ref{p5.2}, the desired result follows immediately.
\end{proof}

 In the following theorem, we consider the case $e=\kappa=2,$ and provide an explicit enumeration formula for the number $\mathscr{M}_2(n;\lambda_1,\lambda_2).$  When  $e=\kappa=2,$ note that the ring $\mathcal{R}_{e,m}$ reduces to the quasi-Galois ring $\mathbb{F}_{2^m}[x]/\langle x^2\rangle.$ \begin{theorem}\label{t4.4e=k=2}
  For $e=\kappa=2,$  we have 
  \vspace{-2mm}\begin{eqnarray*} \vspace{-2mm} \mathscr{M}_2(n;\lambda_1,\lambda_2)=\displaystyle
	\displaystyle \widetilde{\mathcal{S}}(n;\lambda_1)(2^m)^{\lambda_1(n-\Lambda_2)-\frac{\lambda_1(\lambda_1-1)}{2}}{\lambda_{2}+n-\Lambda_{2}-\Lambda_1 \brack \lambda_{2}}_{2^m}, \end{eqnarray*} where 
\vspace{-3mm}\begin{equation*}\widetilde{\mathcal{S}}(n;\lambda_1)=\left\{ \begin{array}{ll}  \displaystyle \prod\limits_{i=0}^{\lambda_1-1}\left(\frac{2^{m(n-1-2i)}-1}{2^{m(i+1)}-1}\right) & \text{if }n\text{ is odd};\vspace{0.5mm}\\ \displaystyle  \left(\frac{2^{m(n-\lambda_1)}-1}{2^{m\lambda_1}-1}\right)\prod\limits_{i=0}^{\lambda_1-2}\left( \frac{2^{m(n-2-2i)}-1}{2^{m(i+1)}-1}\right) & \text{if } \lambda_1\neq 0 \text{ and } n \text{ is even; }\vspace{0.5mm}\\
    ~~~~~~1 & \text{if } \lambda_1=0 \text{ and } n \text{ is even.} \end{array}\right.\vspace{-2mm}\end{equation*}
 \vspace{-2mm}
\end{theorem}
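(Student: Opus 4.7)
\medskip

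\noindent\textbf{Proof plan for Theorem \ref{t4.4e=k=2}.} The plan is to partition the set of self-orthogonal codes of type $\{\lambda_1,\lambda_2\}$ and length $n$ over $\mathcal{R}_{2,m}$ according to their first torsion code, and then count the number of fibers and the size of each fiber separately. The base case $e=\kappa=2$ is particularly clean because $s=\kappa_1=1$, so $s-\kappa_1=0$ and no additional property $(\mathfrak{P})$ or doubly even constraints are present; every self-orthogonal code over $\mathcal{T}_m$ can, in principle, serve as a torsion code.

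The first step will be to invoke Proposition~\ref{pe=k=2}. Part~(a) ensures that the fibration is surjective: every self-orthogonal code $\mathcal{D}^{(1)}\subseteq \mathcal{T}_m^n$ of dimension $\lambda_1$ arises as $Tor_1(\mathcal{C}_2)$ for some self-orthogonal lift $\mathcal{C}_2$ of type $\{\lambda_1,\lambda_2\}$. Part~(b) then says that each such $\mathcal{D}^{(1)}$ gives rise to exactly
\[
(2^m)^{\lambda_1(n-\Lambda_2)-\frac{\lambda_1(\lambda_1-1)}{2}}{\lambda_{2}+n-\Lambda_{2}-\Lambda_1 \brack \lambda_{2}}_{2^m}
\]
distinct lifts. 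Since the map $\mathcal{C}_2\mapsto Tor_1(\mathcal{C}_2)$ is well defined and the fibers over distinct $\mathcal{D}^{(1)}$ are disjoint, we get
\[
\mathscr{M}_2(n;\lambda_1,\lambda_2)=\widetilde{\mathcal{S}}(n;\lambda_1)\cdot (2^m)^{\lambda_1(n-\Lambda_2)-\frac{\lambda_1(\lambda_1-1)}{2}}{\lambda_{2}+n-\Lambda_{2}-\Lambda_1 \brack \lambda_{2}}_{2^m},
\]
where $\widetilde{\mathcal{S}}(n;\lambda_1)$ denotes the number of self-orthogonal linear codes of length $n$ and dimension $\lambda_1$ over $\mathcal{T}_m$. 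This reduces the problem to the purely combinatorial/geometric task of counting $\widetilde{\mathcal{S}}(n;\lambda_1)$.

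For the count of $\widetilde{\mathcal{S}}(n;\lambda_1)$, my plan is to apply Lemma~\ref{p5.1} in the degenerate situation $e=2$, so that $s+\theta_e=1$, $\kappa_1=1$ and $s-\kappa_1=0$; here the stated condition $\mathbf{1}\notin \mathcal{D}^{(s-\kappa_1)}=\mathcal{D}^{(0)}=\{\mathbf{0}\}$ is automatic, so Lemma~\ref{p5.1} in fact counts all self-orthogonal $\lambda_1$-dimensional codes with no further restriction. Inserting the degenerate values: in the odd-$n$ branch the Gaussian binomial $\binom{\Lambda_1}{\lambda_1}_{2^m}=1$, and the product $\prod_{i=0}^{\lambda_1-1}\frac{2^{m(n-1-2i)}-1}{2^{m(i+1)}-1}$ matches the first line of $\widetilde{\mathcal{S}}(n;\lambda_1)$; in the even-$n$, $\Lambda_{s-\kappa_1}=0$, $\Lambda_{s+\theta_e}\geq1$ branch (the third case of Lemma~\ref{p5.1}), the formula specialises to $\left(\frac{2^{m(n-\lambda_1)}-1}{2^{m\lambda_1}-1}\right)\prod_{i=0}^{\lambda_1-2}\frac{2^{m(n-2-2i)}-1}{2^{m(i+1)}-1}$, matching the second line; and the trivial fifth branch ($\lambda_1=0$) gives $\widetilde{\mathcal{S}}(n;0)=1$. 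Combining these with the fiber count from Proposition~\ref{pe=k=2}(b) yields the stated formula.

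The conceptual content, and the only non-routine step, lies inside Lemma~\ref{p5.1}, which in turn rests on classical enumeration of totally isotropic subspaces of a symplectic space over $\mathbb{F}_{2^m}$. The subtlety there is that $(\mathcal{T}_m^n,B_m)$ has the one-dimensional radical $\langle\mathbf{1}\rangle$ inside $\mathcal{I}(\mathcal{T}_m^n)$ precisely when $n$ is even, which explains the different shape of $\widetilde{\mathcal{S}}(n;\lambda_1)$ for odd and even~$n$. Since that geometric count is already packaged in Lemma~\ref{p5.1}, the proof of Theorem~\ref{t4.4e=k=2} itself is essentially a substitution, so the main obstacle is merely bookkeeping: carefully checking that the degenerate indices $s+\theta_e=1$ and $s-\kappa_1=0$ land in the correct branches of Lemma~\ref{p5.1} for every parity of $n$ and every permissible value of $\lambda_1$.
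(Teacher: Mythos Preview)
Your proposal is correct and follows essentially the same approach as the paper: fiber over $Tor_1$ using Proposition~\ref{pe=k=2}(b) for the fiber size, then multiply by the number of $\lambda_1$-dimensional self-orthogonal codes over $\mathcal{T}_m$. The only cosmetic difference is that the paper cites Theorem~2 of Pless~\cite{V} directly for the base count $\widetilde{\mathcal{S}}(n;\lambda_1)$, whereas you recover the same numbers by specializing Lemma~\ref{p5.1} to the degenerate parameters $s+\theta_e=1$, $s-\kappa_1=0$ (noting that Lemma~\ref{p5.1} is formally stated only for $e\geq 3$, but your explicit verification that the branches collapse correctly is exactly the Pless count).
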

\vspace{-2mm}\begin{proof}
    The desired result follows  by  Proposition \ref{pe=k=2} and using Theorem 2 of Pless \cite{V}.
\end{proof}
 In the following theorem, we consider the case $e=3$ and $\kappa=2,$ and provide an explicit enumeration formula for the number $\mathscr{M}_3(n;\lambda_1,\lambda_2,\lambda_3).$  
\vspace{-1mm}\begin{theorem}\label{t4.4e=3}
  For $e=3$ and $\kappa=2,$  we have 
  \vspace{-2mm}\begin{equation*} \vspace{-2mm} \mathscr{M}_3(n;\lambda_1,\lambda_2,\lambda_3)=\displaystyle
	\displaystyle \widehat{\mathcal{S}}(n;\Lambda_2)(2^m)^{\lambda_3\lambda_1+(\Lambda_1+\Lambda_{2})(n-\Lambda_{3}-\Lambda_1)+\Lambda_1^2}{\Lambda_2 \brack \lambda_1}_{2^m}{\lambda_{3}+n-\Lambda_{3}-\Lambda_1 \brack \lambda_{3}}_{2^m}, \end{equation*} where \vspace{-3mm}\begin{equation*}\vspace{-2mm}\widehat{\mathcal{S}}(n;\Lambda_2)=\left\{ \begin{array}{ll}  \displaystyle \prod\limits_{i=0}^{\Lambda_2-1}\left(\frac{2^{m(n-1-2i)}-1}{2^{m(i+1)}-1}\right) & \text{if }n\text{ is odd};\vspace{0.5mm}\\ \displaystyle  \left(\frac{2^{m(n-\Lambda_2)}-1}{2^{m\Lambda_2}-1}\right)\prod\limits_{i=0}^{\Lambda_2-2}\left( \frac{2^{m(n-2-2i)}-1}{2^{m(i+1)}-1}\right) & \text{if } \Lambda_2\neq 0 \text{ and } n \text{ is even; }\vspace{0.5mm}\\
    ~~~~~~1 & \text{if } \Lambda_2=0 \text{ and } n \text{ is even.} \end{array}\right.\end{equation*}
 \vspace{-2mm}
\end{theorem}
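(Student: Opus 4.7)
The plan is to decompose the enumeration into two independent counts: (i) the number of admissible pairs of nested self-orthogonal codes over $\mathcal{T}_m$ serving as the first two torsion codes, and (ii) the number of self-orthogonal lifts over $\mathcal{R}_{3,m}$ attached to each such pair. Since $e=3,$ $\kappa=2,$ we have $s=1,$ $\kappa_1=1,$ $\theta_e=1,$ so $s+\theta_e=2$ and the relevant chain of torsion codes is $\mathcal{D}^{(1)}\subseteq \mathcal{D}^{(2)}.$ Note further that $s-\kappa_1=0,$ so no conditions of the form $\textbf{1}\notin \mathcal{D}^{(s-\kappa_1)}$ arise (the conditions \textbf{B1)}--\textbf{B4)} of Lemma \ref{l3.4a} are vacuous beyond \textbf{B1)}), which is consistent with the unconditional existence of the lift guaranteed by Proposition \ref{p3.3be=3}.

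First, I would observe that by Lemma \ref{l1.2}, every self-orthogonal code $\mathcal{C}_3$ of type $\{\lambda_1,\lambda_2,\lambda_3\}$ and length $n$ over $\mathcal{R}_{3,m}$ determines a nested pair of self-orthogonal codes $Tor_1(\mathcal{C}_3)\subseteq Tor_2(\mathcal{C}_3)$ over $\mathcal{T}_m$ with $\dim Tor_1(\mathcal{C}_3)=\lambda_1=\Lambda_1$ and $\dim Tor_2(\mathcal{C}_3)=\Lambda_2.$ Grouping all self-orthogonal codes $\mathcal{C}_3$ of type $\{\lambda_1,\lambda_2,\lambda_3\}$ according to this pair gives
\begin{equation*}
\mathscr{M}_3(n;\lambda_1,\lambda_2,\lambda_3)=\sum_{(\mathcal{D}^{(1)},\mathcal{D}^{(2)})} N(\mathcal{D}^{(1)},\mathcal{D}^{(2)}),
\end{equation*}
where the sum runs over all nested pairs $\mathcal{D}^{(1)}\subseteq \mathcal{D}^{(2)}$ of self-orthogonal codes of length $n$ over $\mathcal{T}_m$ with $\dim\mathcal{D}^{(1)}=\lambda_1$ and $\dim\mathcal{D}^{(2)}=\Lambda_2,$ and $N(\mathcal{D}^{(1)},\mathcal{D}^{(2)})$ is the number of self-orthogonal codes $\mathcal{C}_3$ with $Tor_j(\mathcal{C}_3)=\mathcal{D}^{(j)}$ for $j=1,2.$

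Next, I would apply Proposition \ref{p3.3be=3} to obtain
\begin{equation*}
N(\mathcal{D}^{(1)},\mathcal{D}^{(2)})=(2^m)^{\lambda_3\lambda_1+(\Lambda_1+\Lambda_2)(n-\Lambda_3-\Lambda_1)+\Lambda_1^2}{\lambda_3+n-\Lambda_3-\Lambda_1\brack \lambda_3}_{2^m},
\end{equation*}
which is independent of the particular pair chosen. Thus the sum factors as (number of pairs) $\times$ $N.$

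Finally, I would count the pairs as follows: the number of self-orthogonal codes $\mathcal{D}^{(2)}$ of dimension $\Lambda_2$ and length $n$ over $\mathcal{T}_m=\mathbb{F}_{2^m}$ is given by the classical Pless mass formula (Theorem 2 of \cite{V}), which in the present notation equals $\widehat{\mathcal{S}}(n;\Lambda_2)$ in both the odd and even cases for $n;$ and for each such $\mathcal{D}^{(2)},$ any $\lambda_1$-dimensional $\mathcal{T}_m$-subspace $\mathcal{D}^{(1)}$ is automatically self-orthogonal, contributing a factor of ${\Lambda_2\brack \lambda_1}_{2^m}.$ Multiplying these three factors together yields the stated formula. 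The argument is essentially bookkeeping once Proposition \ref{p3.3be=3} is in hand, so there is no substantive obstacle; the only mild subtlety is verifying that the case analysis in the definition of $\widehat{\mathcal{S}}(n;\Lambda_2)$ matches Pless's enumeration in both the odd-$n$ and even-$n$ regimes, which is a routine check.
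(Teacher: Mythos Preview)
Your proposal is correct and follows essentially the same approach as the paper: the paper's proof simply cites Proposition \ref{p3.3be=3} together with Theorem 2 of Pless \cite{V}, and your argument is precisely the unpacking of that citation into (i) the Pless count $\widehat{\mathcal{S}}(n;\Lambda_2)$ of self-orthogonal $\mathcal{D}^{(2)}$, (ii) the Gaussian binomial ${\Lambda_2\brack\lambda_1}_{2^m}$ for subcodes $\mathcal{D}^{(1)}$, and (iii) the lift count from Proposition \ref{p3.3be=3}.
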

\begin{proof}
    The desired result follows  by  Proposition \ref{p3.3be=3} and using Theorem 2 of Pless \cite{V}.
\end{proof}
In the following theorem, we obtain  an explicit enumeration formula for the number $\mathscr{M}_e(n;\lambda_1,\lambda_2,\ldots,\lambda_e)$ when $e\geq 4$ and   $n\equiv 1,3,5,7\pmod 8.$  

\begin{theorem}\label{t4.1Keven} 
For  $e\geq 4$   and  $n\equiv 1,3,5,7\pmod 8,$   we have  
\vspace{-2mm}\begin{eqnarray*} \vspace{-2mm} \mathscr{M}_e(n;\lambda_1,\lambda_2,\ldots,\lambda_e)=\displaystyle
	 \displaystyle  (2^m)^{\sum\limits_{i=1}^{s}\Lambda_{i}(n-\Lambda_{i+1})+\sum\limits_{j=1}^{s-1+\theta_e}\Lambda_{s+j}(n-\Lambda_{s+j+1}-\Lambda_{s-j+\theta_e})-\sum\limits_{\ell=1}^{s-\kappa_1}\Lambda_\ell-(1-\theta_e)\frac{\Lambda_s(\Lambda_s-1)}{2}} \\
	 \displaystyle  \times \mathfrak{B}(\lambda_1,\lambda_2,\ldots,\lambda_{s+\theta_e})\prod\limits_{a=s+1+\theta_e}^{e}{\lambda_{a}+n-\Lambda_{a}-\Lambda_{e+1-a}\brack \lambda_{a}}_{2^m}, ~~~~~~~~~~~~~~~ \end{eqnarray*}where the number $ \mathfrak{B}(\lambda_1,\lambda_2,\ldots,\lambda_{s+\theta_e}) $ is as determined in Lemma \ref{p5.1}.
     \end{theorem}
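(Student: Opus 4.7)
The plan is to assemble Theorem \ref{t4.1Keven} from three ingredients already established in the paper: the torsion-chain extraction from Lemma \ref{l1.2}, the lift-counting formulas in Lemmas \ref{l3.1Keven} and \ref{l3.2Keven}, and the chain-counting formula of Lemma \ref{p5.1}. The key simplification afforded by the hypothesis $n \equiv 1,3,5,7 \pmod 8$ is that $n$ is odd, so by the observation preceding Lemma \ref{p5.1}, $\mathbf{1} \notin \mathcal{I}(\mathcal{T}_m^n)$, and hence $\mathbf{1}$ cannot lie in any self-orthogonal code over $\mathcal{T}_m$. This observation collapses all the case distinctions on $\mathbf{1} \in \mathcal{D}^{(\ast)}$ that appear in the two lemmas.

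First, given a self-orthogonal code $\mathcal{C}_e$ of type $\{\lambda_1,\lambda_2,\ldots,\lambda_e\}$ and length $n$ over $\mathcal{R}_{e,m}$, Lemma \ref{l1.2} yields a nested chain $Tor_1(\mathcal{C}_e) \subseteq Tor_2(\mathcal{C}_e) \subseteq \cdots \subseteq Tor_{s+\theta_e}(\mathcal{C}_e)$ of self-orthogonal codes of length $n$ over $\mathcal{T}_m$ with $\dim Tor_i(\mathcal{C}_e) = \Lambda_i$. Conversely, given such a chain $\mathcal{D}^{(1)} \subseteq \mathcal{D}^{(2)} \subseteq \cdots \subseteq \mathcal{D}^{(s+\theta_e)}$, Construction (\textbf{X}) or (\textbf{Y}) together with Lemmas \ref{l3.4} or \ref{l3.4a} tells us that it can be lifted (and we can count such lifts). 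Since $n$ is odd, $\mathbf{1} \notin \mathcal{D}^{(s-\kappa_1)}$, so the extra conditions \textbf{A2)}--\textbf{A5)} (or \textbf{B2)}--\textbf{B4)}) are vacuously satisfied, and thus \emph{every} dimension-compatible chain of self-orthogonal codes over $\mathcal{T}_m$ is admissible. Partitioning the set of self-orthogonal codes $\mathcal{C}_e$ by their torsion chain gives
\begin{equation*}
\mathscr{M}_e(n;\lambda_1,\ldots,\lambda_e) \;=\; \sum_{(\mathcal{D}^{(1)},\ldots,\mathcal{D}^{(s+\theta_e)})} N(\mathcal{D}^{(1)},\ldots,\mathcal{D}^{(s+\theta_e)}),
\end{equation*}
where $N(\mathcal{D}^{(1)},\ldots,\mathcal{D}^{(s+\theta_e)})$ denotes the number of self-orthogonal codes $\mathcal{C}_e$ over $\mathcal{R}_{e,m}$ with $Tor_i(\mathcal{C}_e) = \mathcal{D}^{(i)}$.

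Second, I invoke Lemma \ref{l3.1Keven} (when $2\kappa \leq e$) or Lemma \ref{l3.2Keven} (when $2\kappa > e$) to evaluate $N(\mathcal{D}^{(1)},\ldots,\mathcal{D}^{(s+\theta_e)})$. Because $\mathbf{1} \notin \mathcal{D}^{(\ast)}$ for every term in the chain, the parameters $(\epsilon,\mu)$ in Lemma \ref{l3.1Keven} fall into the ``otherwise'' branch, giving $(\epsilon,\mu) = (0,0)$; likewise in Lemma \ref{l3.2Keven}, $\mu = 0$. In both regimes, then,
\begin{equation*}
N(\mathcal{D}^{(1)},\ldots,\mathcal{D}^{(s+\theta_e)}) \;=\; (2^m)^{E}\prod_{a=s+1+\theta_e}^{e}{\lambda_{a}+n-\Lambda_{a}-\Lambda_{e+1-a}\brack \lambda_{a}}_{2^m},
\end{equation*}
where $E$ is the exponent displayed in the theorem. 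Crucially, this count is the same for every admissible chain, so it factors out of the sum.

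Third, I use Lemma \ref{p5.1} (the odd-$n$ case) to count the number of admissible chains, obtaining $\mathfrak{B}(\lambda_1,\ldots,\lambda_{s+\theta_e}) = \prod_{i=0}^{\Lambda_{s+\theta_e}-1}\frac{2^{m(n-1-2i)}-1}{2^{m(i+1)}-1}\prod_{j=1}^{s+\theta_e}{\Lambda_j \brack \lambda_j}_{2^m}$. Multiplying the chain count by the per-chain lift count yields the formula of Theorem \ref{t4.1Keven}. There is essentially no obstacle here: the work has already been done in Lemmas \ref{l3.1Keven}, \ref{l3.2Keven}, and \ref{p5.1}. The only point worth verifying carefully is that the cases $2\kappa \leq e$ and $2\kappa > e$ yield the same exponent $E$ when the correction terms vanish (i.e., $\mu = 0$), which is immediate from inspection of the two lemmas.
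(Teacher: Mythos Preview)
Your proposal is correct and follows essentially the same approach as the paper's proof, which simply cites Lemmas \ref{l3.1Keven} and \ref{p5.1} (when $2\kappa\le e$) or Lemmas \ref{l3.2Keven} and \ref{p5.1} (when $2\kappa>e$). You have spelled out the details that the paper leaves implicit: namely, that oddness of $n$ forces $\mathbf{1}\notin\mathcal{I}(\mathcal{T}_m^n)$, so the side conditions \textbf{A2)}--\textbf{A5)} (resp.\ \textbf{B2)}--\textbf{B4)}) are vacuous, the parameters $(\epsilon,\mu)$ reduce to $(0,0)$, and the per-chain lift count becomes independent of the chain, allowing the product formula.
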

     \vspace{-1mm}\begin{proof}
When $2\kappa\leq e,$ the desired result follows  by   Lemmas \ref{l3.1Keven} and  \ref{p5.1}.
On the other hand, when  $2\kappa>e,$ we get the desired result by applying Lemmas \ref{l3.2Keven} and \ref{p5.1}.         \vspace{-2mm} \end{proof}
\vspace{-3mm} \small
\begin{table}[h!]
\centering
\resizebox{\textwidth}{!}{%
\begin{tabular}{|c|c||c|c||c|c|}
\hline
$\{\lambda_1,\lambda_2,\lambda_3,\lambda_4\}$ & $\mathscr{M}_4(3; \lambda_1,\lambda_2,\lambda_3,\lambda_4)$ &
$\{\lambda_1,\lambda_2,\lambda_3,\lambda_4\}$ & $\mathscr{M}_4(3; \lambda_1,\lambda_2,\lambda_3,\lambda_4)$ &
$\{\lambda_1,\lambda_2,\lambda_3,\lambda_4\}$ & $\mathscr{M}_4(3; \lambda_1,\lambda_2,\lambda_3,\lambda_4)$ \\
\hline
$\{0, 0, 0, 1\} $ & $21$ & $\{0, 1, 0, 0\}$ & $1280$ & $\{0, 0, 3, 0\}$ & $1$ \\\hline
$\{0, 0, 0, 2\}$ & $21$ & $\{0, 1, 0, 1\}$ & $1600$ &  $\{1, 0, 1, 0\}$ & $80$  \\\hline
$\{0, 0, 0, 3\}$ & $1$ & $\{0, 1, 0, 2\}$ & $80$ & $\{0, 0, 2, 1\}$ & $21$ \\\hline
$\{0, 0, 1, 0\}$ & $336$ & $\{0, 1, 1, 0\}$ & $320$ & $\{1, 0, 0, 1\}$ & $320$  \\\hline
$\{0, 0, 1, 1\}$ & $420$ & $\{0, 1, 1, 1\}$ & $20$ & $\{0, 0, 2, 0\}$ & $336$   \\\hline
$\{0, 0, 1, 2\}$ & $21$ & $\{1, 0, 0, 0\}$ & $1280$  & &  \\\hline
\end{tabular}%
}
\caption{The number $\mathscr{M}_4(3; \lambda_1,\lambda_2,\lambda_3,\lambda_4)$ of self-orthogonal codes of length $3$ and type $\{\lambda_1,\lambda_2,\lambda_3,\lambda_4\}$ over $\mathcal{R}_{4,2} = GR(4,2)[x]/\langle x^2+2,2x^2\rangle$}
\label{Tab1}
\end{table}
\vspace{-3mm} \small
\begin{table}[h!]
\centering
\resizebox{\textwidth}{!}{%
\begin{tabular}{|c|c||c|c||c|c|}
\hline
$\{\lambda_1,\lambda_2,\ldots,\lambda_5\}$ & $\mathscr{M}_5(3; \lambda_1,\lambda_2,\ldots,\lambda_5)$ &
$\{\lambda_1,\lambda_2,\ldots,\lambda_5\}$ & $\mathscr{M}_5(3; \lambda_1,\lambda_2,\ldots,\lambda_5)$ &
$\{\lambda_1,\lambda_2,\ldots,\lambda_5\}$ & $\mathscr{M}_5(3; \lambda_1,\lambda_2,\ldots,\lambda_5)$ \\
\hline
$\{0,0,0,0,1\}$ & $7$ & $\{0,0,0,0,2\}$ & $7$ & $\{0,0,0,0,3\}$ & $1$ \\

\hline
$\{0,0,0,1,0\}$ & $28$ & $\{0,0,0,1,1\}$ & $42$ & $\{0,0,0,1,2\}$ & $7$ \\
\hline 
$\{0,0,0,2,0\}$ & $28$ & $\{0,0,0,2,1\}$ & $7$ & $\{0,0,0,3,0\}$ & $1$ \\
\hline 
$\{0,0,1,0,0\}$ & $48$ & $\{0,0,1,0,1\}$ & $72$ & $\{0,0,1,0,2\}$ & $12$ \\
\hline 
$\{0,0,1,1,0\}$ & $72$ & $\{0,0,1,1,1\}$ & $18$ & $\{0,0,1,2,0\}$ & $3$ \\
\hline 
$\{0,1,0,0,0\}$ & $96$ & $\{0,1,0,0,1\}$ & $144$ & $\{0,1,0,0,2\}$ & $24$ \\
\hline 

$\{0,1,0,1,0\}$ & $48$ & $\{0,1,0,1,1\}$ & $12$ & $\{1,0,0,0,0\}$ & $192$ \\
\hline 
$\{1,0,0,0,1\}$ & $96$ & $\{1,0,0,1,0\}$ & $48$ & & \\ \hline
\end{tabular}%
}
\caption{The number $\mathscr{M}_5(3; \lambda_1,\lambda_2,\ldots,\lambda_5)$ of self-orthogonal codes of length $3$ and type $\{\lambda_1,\lambda_2,\ldots,\lambda_5\}$ over $\mathcal{R}_{5,1} = GR(4,1)[x]/\langle x^4+2,2x\rangle$}
\label{Tab2}
\end{table}
\vspace{-2mm}\begin{example}\label{Example 4.1}  By carrying out computations in Magma \cite{Mag}, we see that the number $\mathscr{M}_4(3; \lambda_1,\lambda_2,\lambda_3,\lambda_4)$ of self-orthogonal codes of length $3$ and type $\{\lambda_1,\lambda_2,\lambda_3,\lambda_4\}$ over $\mathcal{R}_{4,2} = GR(4,2)[x]/\langle x^2+2,2x^2\rangle$ is given by Table \ref{Tab1}, whereas  the number $\mathscr{M}_5(3; \lambda_1,\lambda_2,\ldots,\lambda_5)$ of self-orthogonal codes of length $3$ and type $\{\lambda_1,\lambda_2,\ldots,\lambda_5\}$ over $\mathcal{R}_{5,1} = GR(4,1)[x]/\langle x^4+2,2x\rangle$ is given by Table \ref{Tab2}. In both cases, the numbers agree precisely with  Theorem \ref{t4.1Keven}. 
\end{example}  
% \textcolor{red}{\begin{example}
%      Let  $\mathcal{R}_{5,1}=\frac{GR(4,1)[x]}{\langle x^4+2,2x\rangle}$ and $n=3.$  
%     By carrying out computations in Magma, we see  that  there are precisely $192$ self-orthogonal codes of the type $\{1,0,0,0,0\},$ $96$ self-orthogonal codes of the type $\{1,0,0,0,1\},$ $48$ self-orthogonal codes of the type $\{1,0,0,1,0\},$ $24$ self-orthogonal codes of the type $\{0,1,0,0,2\},$ $48$ self-orthogonal codes of the type $\{0,1,0,1,0\},$ which agree with Theorem \ref{t4.1Keven}. 
%      \end{example}}
     In the following theorem, we obtain  an explicit enumeration formula for the number $\mathscr{M}_e(n;\lambda_1,\lambda_2,\ldots,\lambda_e)$ when $e\geq 4$ and   $n\equiv 0,4\pmod 8.$ 
     \begin{theorem}\label{t4.1Kevenb} For $e\geq 4$ and $n\equiv 0,4\pmod8,$  we have  
\begin{eqnarray*} \vspace{-1mm} \mathscr{M}_e(n;\lambda_1,\lambda_2,\ldots,\lambda_e)=\displaystyle
	 \displaystyle  (2^m)^{\sum\limits_{i=1}^{s}\Lambda_{i}(n-\Lambda_{i+1})+\sum\limits_{j=1}^{s-1+\theta_e}\Lambda_{s+j}(n-\Lambda_{s+j+1}-\Lambda_{s-j+\theta_e})-\sum\limits_{\ell=1}^{s-\kappa_1}\Lambda_\ell-(1-\theta_e)\frac{\Lambda_s(\Lambda_s-1)}{2}} \\
	 \displaystyle  \times \mathcal{S}_{\theta_e}(\lambda_1,\lambda_2,\ldots,\lambda_{s+\theta_e})	 \prod\limits_{a=s+1+\theta_e}^{e}{\lambda_{a}+n-\Lambda_{a}-\Lambda_{e+1-a}\brack \lambda_{a}}_{2^m}, ~~~~~~~~~~~~~~~ \end{eqnarray*} where 
 \vspace{-2mm}
\begin{equation*}\mathcal{S}_{\theta_e}(\lambda_1,\lambda_2,\ldots, \lambda_{s+\theta_e})= \left\{\begin{array}{l}
        \displaystyle \mathfrak{B}(\lambda_1,\lambda_2,\ldots,\lambda_{s+\theta_e})+  2(2^{m})^{\kappa_1}\mathcal{X}(\lambda_1,\lambda_2,\ldots,\lambda_{s+\theta_e})  +  \sum\limits_{\omega=1}^{\kappa_1-\theta_e} (2^m)^{\omega}\mathfrak{D}_\omega(\lambda_1,\lambda_2,\ldots,\lambda_{s+\theta_e})\vspace{1mm}\\ \text{if  } 2\kappa \leq e \text{ with   either } n\equiv 0 ~(\bmod~8) \vspace{-1mm} \text{ or } n\equiv 4 ~(\bmod~8) \text{ and }  m \text{ being even;} \vspace{1mm} \\
\mathfrak{B}(\lambda_1,\lambda_2,\ldots,\lambda_{s+\theta_e})+  \sum\limits_{\omega=1}^{\kappa_1-\theta_e} (2^m)^{\omega}\mathfrak{D}_\omega(\lambda_1,\lambda_2,\ldots,\lambda_{s+\theta_e}) \vspace{1mm} \\ \text{if  } 2\kappa \leq e ,  n\equiv 4 ~(\bmod~8) \text{ and }  m \text{ is odd;}\\
\mathfrak{B}(\lambda_1,\lambda_2,\ldots,\lambda_{s+\theta_e}) +(2^{m})^{s-\kappa_1}\mathcal{Z}(\lambda_1,\lambda_2,\ldots,\lambda_{s+\theta_e}) 
 + \sum\limits_{\omega=1}^{s-\kappa_1-1} (2^m)^\omega \mathfrak{D}_\omega(\lambda_1,\lambda_2,\ldots,\lambda_{s+\theta_e}) \vspace{1mm} \\ \text{if } 2\kappa >e \text{ and } n\equiv 0,4 \pmod 8.\\

\end{array}\right.\vspace{-2mm}\end{equation*} \end{theorem}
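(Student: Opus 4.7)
The plan is to express $\mathscr{M}_e(n;\lambda_1,\ldots,\lambda_e)$ as a sum $\sum_{\mathscr{C}} N(\mathscr{C})$ indexed by admissible chains $\mathscr{C} \colon \mathcal{D}^{(1)} \subseteq \cdots \subseteq \mathcal{D}^{(s+\theta_e)}$ of self-orthogonal codes of length $n$ over $\mathcal{T}_m$, where $\dim \mathcal{D}^{(i)} = \Lambda_i$ and $\mathscr{C}$ satisfies conditions \textbf{A1)}\,--\,\textbf{A5)} of Lemma \ref{l3.4} when $2\kappa \leq e$, or \textbf{B1)}\,--\,\textbf{B4)} of Lemma \ref{l3.4a} when $2\kappa > e$; here $N(\mathscr{C})$ counts the self-orthogonal codes $\mathcal{C}_e$ of type $\{\lambda_1,\ldots,\lambda_e\}$ over $\mathcal{R}_{e,m}$ with $Tor_i(\mathcal{C}_e) = \mathcal{D}^{(i)}$ for each $i$. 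By Lemma \ref{l3.1Keven} (respectively, Lemma \ref{l3.2Keven}), the inner count factors as $N(\mathscr{C}) = 2^{\epsilon}(2^m)^{E + \mu}\prod_{a=s+1+\theta_e}^{e}{\lambda_a + n - \Lambda_a - \Lambda_{e+1-a} \brack \lambda_a}_{2^m}$, where $E$ is the chain-independent exponent displayed in the theorem and the pair $(\epsilon, \mu)$ depends only on the position of $\mathbf{1}$ within $\mathscr{C}$ (with $\epsilon \equiv 0$ throughout the $2\kappa > e$ case).

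Pulling the chain-independent prefactor $(2^m)^E \prod {\cdots \brack \cdots}_{2^m}$ outside the sum reduces the task to evaluating $\mathcal{S}_{\theta_e}(\lambda_1,\ldots,\lambda_{s+\theta_e}) = \sum_{\mathscr{C}} 2^{\epsilon}(2^m)^{\mu}$. I will partition the admissible chain space according to the smallest torsion index $r$ for which $\mathbf{1} \in \mathcal{D}^{(r)}$. The chains with $\mathbf{1} \notin \mathcal{D}^{(s-\kappa_1)}$ give $(\epsilon,\mu) = (0,0)$ and, by Lemma \ref{p5.1}, contribute $\mathfrak{B}(\lambda_1,\ldots,\lambda_{s+\theta_e})$. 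Chains with $\mathbf{1} \in \mathcal{D}^{(s-\kappa_1-\omega+1)} \setminus \mathcal{D}^{(s-\kappa_1-\omega)}$ for $\omega$ in the allowed range give $(\epsilon,\mu) = (0,\omega)$ and, by Lemma \ref{p5.2}, contribute $(2^m)^{\omega}\mathfrak{D}_{\omega}$. When $2\kappa \leq e$ with either $n \equiv 0 \pmod 8$, or $n \equiv 4 \pmod 8$ and $m$ even, condition \textbf{A5)} additionally admits the deeper placement $\mathbf{1} \in \mathcal{D}^{(s-\kappa+\theta_e)}$; this yields $(\epsilon,\mu) = (1,\kappa_1)$ and, via Lemma \ref{p5.3}, contributes $2(2^m)^{\kappa_1}\mathcal{X}$. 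When $2\kappa > e$, the deepest case $\mathbf{1} \in \mathcal{D}^{(1)}$ yields $(\epsilon,\mu) = (0, s-\kappa_1)$ and, by Lemma \ref{p5.4}, contributes $(2^m)^{s-\kappa_1}\mathcal{Z}$. In the residual sub-case $n \equiv 4 \pmod 8$ with $m$ odd and $2\kappa \leq e$, condition \textbf{A5)} forbids $\mathbf{1} \in \mathcal{D}^{(s-\kappa+\theta_e)}$, so the $\mathcal{X}$-term vanishes, leaving only $\mathfrak{B}$ and the $\mathfrak{D}_\omega$ terms.

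Assembling the case contributions reproduces the three branches of $\mathcal{S}_{\theta_e}$ in the statement. The main bookkeeping hurdle will be ensuring that the partition by the first occurrence of $\mathbf{1}$ precisely matches the $(\epsilon,\mu)$ cases listed in Lemmas \ref{l3.1Keven} and \ref{l3.2Keven}: one must check that the index range $1 \leq \omega \leq \kappa_1 - \theta_e$ (for $2\kappa \leq e$) and $1 \leq \omega \leq s - \kappa_1 - 1$ (for $2\kappa > e$) is exactly that for which the generic $\mathfrak{D}_\omega$-formula applies, while the deeper positions $\mathbf{1} \in \mathcal{D}^{(s-\kappa+\theta_e)}$ and $\mathbf{1} \in \mathcal{D}^{(1)}$ are governed by the specialised Lemmas \ref{p5.3} and \ref{p5.4}; in particular, the extra factor of $2$ in the $\mathcal{X}$-term arises from $\epsilon = 1$ in Lemma \ref{l3.1Keven}, and must be tracked separately from the $(2^m)^\mu$ factors. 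Once these index ranges and the $\epsilon$-contributions are aligned, direct substitution of the chain-counting lemmas into the decomposed sum yields the claimed formula for $\mathscr{M}_e(n;\lambda_1,\ldots,\lambda_e)$.
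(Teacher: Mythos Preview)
Your proposal is correct and follows essentially the same route as the paper's proof: the paper simply cites Lemmas~\ref{l3.1Keven}, \ref{p5.1}--\ref{p5.3} for the case $2\kappa\le e$ and Lemmas~\ref{l3.2Keven}, \ref{p5.1}, \ref{p5.2}, \ref{p5.4} for $2\kappa>e$, which is precisely your decomposition of $\sum_{\mathscr{C}}2^{\epsilon}(2^m)^{\mu}$ according to the first index at which $\mathbf{1}$ enters the chain. Your explicit bookkeeping (matching the $\omega$-range to the $(\epsilon,\mu)$-cases of Lemmas~\ref{l3.1Keven}--\ref{l3.2Keven}, tracking the extra factor $2$ from $\epsilon=1$, and invoking condition~\textbf{A5)} to exclude the $\mathcal{X}$-term when $n\equiv 4\pmod 8$ with $m$ odd) fills in exactly what the paper leaves implicit.
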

\begin{proof}
     When $2\kappa\leq e,$ the desired result follows  by   Lemmas \ref{l3.1Keven}, \ref{p5.1} - \ref{p5.3}.
On the other hand, when  $2\kappa>e,$ we get the desired result by  applying Lemmas \ref{l3.2Keven} - \ref{p5.2}  and \ref{p5.4}.
\end{proof}

\small
\begin{table}[h!]
\centering
\resizebox{\textwidth}{!}{%
\begin{tabular}{|c|c||c|c||c|c|}
\hline
$\{\lambda_1,\lambda_2,\lambda_3,\lambda_4\}$ & $\mathscr{M}_4(4; \lambda_1,\lambda_2,\lambda_3,\lambda_4)$ &
$\{\lambda_1,\lambda_2,\lambda_3,\lambda_4\}$ & $\mathscr{M}_4(4; \lambda_1,\lambda_2,\lambda_3,\lambda_4)$ &
$\{\lambda_1,\lambda_2,\lambda_3,\lambda_4\}$ & $\mathscr{M}_4(4; \lambda_1,\lambda_2,\lambda_3,\lambda_4)$ \\
\hline
$\{0,0,0,1\}$ & $15$ & $\{0,0,0,2\}$ & $35$ & $\{0,0,0,3\}$ & $15$ \\
 \hline
 $\{ 0,0,0,4 \}$ & $ 1 $ & $\{ 0,0,1,0 \}$ & $ 120$ & $\{ 0,0,1,1 \}$ & $ 420 $ \\
 \hline
  $\{ 0,0,1,2 \}$ & $ 210 $ & $\{ 0,0,1,3 \}$ & $ 15$ & $\{ 0,0,2,0 \}$ & $ 560 $ \\
 \hline
  $\{ 0,0,2,1 \}$ & $ 420 $ & $\{ 0,0,2,2 \}$ & $35 $ & $\{ 0,0,3,0 \}$ & $ 120 $ \\
 \hline
  $\{0,0,3,1  \}$ & $  15$ & $\{ 0,0,4,0 \}$ & $1 $ & $\{ 0,1,0,0 \}$ & $ 448 $ \\
 \hline
  $\{ 0,1,0,1 \}$ & $  1568$ & $\{ 0,1,0,2 \}$ & $784 $ & $\{ 0,1,0,3 \}$ & $ 56 $ \\
 \hline
  $\{ 0,1,1,0 \}$ & $ 1344 $ & $\{ 0,1,1,1 \}$ & $1008 $ & $\{ 0,1,1,2 \}$ & $ 84 $ \\
 \hline
  $\{ 0,1,2,0 \}$ & $ 112$ & $\{ 0,1,2,1 \}$ & $14 $ & $\{ 0,2,0,0 \}$ & $ 384 $ \\
 \hline
  $\{1,0,0,0  \}$ & $ 1024 $ & $\{1,0,0,1  \}$ & $ 1536$ & $\{ 1,0,0,2 \}$ & $ 256 $ \\
 \hline
  $\{1,0,1,0  \}$ & $ 1536 $ & $\{1,0,1,1  \}$ & $ 384$ & $\{ 1,0,2,0 \}$ & $ 64 $ \\
 \hline
  $\{ 1,1,0,0 \}$ & $ 768 $ & $\{ 1,1,0,1 \}$ & $ 192$ & $\{ 2,0,0,0 \}$ & $192  $ \\
 \hline
  $\{ 0,2,0,2 \}$ & $ 24 $ & $\{ 0,2,0,1 \}$ & $288 $ &  & \\
 \hline
\end{tabular}%
}
\caption{The number $\mathscr{M}_4(4; \lambda_1,\lambda_2,\lambda_3,\lambda_4)$ of self-orthogonal codes of length $4$ and type $\{\lambda_1,\lambda_2,\lambda_3,\lambda_4\}$ over $\mathcal{R}_{4,1} = GR(4,1)[x]/\langle x^2+2,2x^2\rangle$}
\label{Tab3}
\end{table}
\vspace{-2mm}

\begin{example}\label{Example 4.1}
By carrying out computations in Magma \cite{Mag}, we see   that the number $\mathscr{M}_4(4; \lambda_1,\lambda_2,\lambda_3,\lambda_4)$ of self-orthogonal codes of length $4$ and type $\{\lambda_1,\lambda_2,\lambda_3,\lambda_4\}$ over $\mathcal{R}_{4,1} = GR(4,1)[x]/\langle x^2+2,2x^2\rangle$ is given by Table \ref{Tab3}. These numbers precisely agree with Theorem \ref{t4.1Kevenb}. 
\end{example} 
\vspace{-3mm}
\begin{theorem}\label{t4.1Kevenc}  For $e\geq 4$ and $n\equiv 2,6\pmod 8,$   we have  
\begin{eqnarray*} \vspace{-1mm} \mathscr{M}_e(n;\lambda_1,\lambda_2,\ldots,\lambda_e)=\displaystyle
	 \displaystyle  (2^m)^{\sum\limits_{i=1}^{s}\Lambda_{i}(n-\Lambda_{i+1})+\sum\limits_{j=1}^{s-1+\theta_e}\Lambda_{s+j}(n-\Lambda_{s+j+1}-\Lambda_{s-j+\theta_e})-\sum\limits_{\ell=1}^{s-\kappa_1}\Lambda_\ell-(1-\theta_e)\frac{\Lambda_s(\Lambda_s-1)}{2}} \\
	 \displaystyle  \times \mathcal{S}_{\theta_e}(\lambda_1,\lambda_2,\ldots,\lambda_{s+\theta_e})	 \prod\limits_{a=s+1+\theta_e}^{e}{\lambda_{a}+n-\Lambda_{a}-\Lambda_{e+1-a}\brack \lambda_{a}}_{2^m}, ~~~~~~~~~~~~~~~ \end{eqnarray*} where 
 \vspace{-2mm}
\begin{equation*}\mathcal{S}_{\theta_e}(\lambda_1,\lambda_2,\ldots, \lambda_{s+\theta_e})= \left\{\begin{array}{l}
		\displaystyle \mathfrak{B}(\lambda_1,\lambda_2,\ldots,\lambda_{s+\theta_e})	 \vspace{1mm}  \\ \text{if either }   \kappa=2 \text{  and }e \text{ is odd,~}   \text{or }  \kappa=2,~ (\eta_0)^{\frac{3}{2}}\neq \eta_1 \text{  and }e \text{~is even,}    \text{~or~} \kappa \geq 4\\ \text{and } \eta_1\neq 0, \text{ or }e=\kappa+1; \vspace{1mm} \\
          \mathfrak{B}(\lambda_1,\lambda_2,\ldots,\lambda_{s+\theta_e}) +  2^m\mathcal{Y}(\lambda_1,\lambda_2,\ldots,\lambda_{s+\theta_e})\vspace{1mm} \\ \text{if } 2\kappa \leq  e \text{ with  }\kappa=2,~ (\eta_0)^{\frac{3}{2}}=\eta_1 \text{ and } e \text{ is even;} \\
    \mathfrak{B}(\lambda_1,\lambda_2,\ldots,\lambda_{s+\theta_e}) + \epsilon_1 (2^m)^{\mathrm{f}_{\kappa_1+1}}\mathcal{Y}(\lambda_1,\lambda_2,\ldots,\lambda_{s+\theta_e}) + \sum\limits_{\delta=1}^{\vartheta_1} (2^m)^\delta \mathcal{W}_\delta(\lambda_1,\lambda_2,\ldots,\lambda_{s+\theta_e})\vspace{1mm} \\ \text{if } 2\kappa \leq  e,  \kappa\geq 4  \text{ and }\eta_1=0;\\
  \mathfrak{B}(\lambda_1,\lambda_2,\ldots,\lambda_{s+\theta_e}) + \epsilon_2 (2^m)^{\mathrm{f}_{\kappa_1+1}}\mathcal{Y}(\lambda_1,\lambda_2,\ldots,\lambda_{s+\theta_e}) + \sum\limits_{\delta=1}^{\vartheta_2} (2^m)^\delta \mathcal{W}_\delta(\lambda_1,\lambda_2,\ldots,\lambda_{s+\theta_e})\vspace{1mm} \\ \text{if } 2\kappa >  e,~\kappa\geq 4 \text{ and } \eta_1=0.\\
\end{array}\right.\vspace{-2mm}\end{equation*} 
% Here in the case $2\kappa \leq e,$   $\kappa\geq 4$ and $\eta_1=0,$  $\vartheta_1$ is the largest positive integer satisfying $1\leq \vartheta_1 \leq \mathrm{f}_{\kappa_1}$ such that $\eta_{2j-1}=0$ for $1\leq j \leq \vartheta_1$   and $$\epsilon_1=\left\{\begin{array}{ll}1 & \text{if } \kappa \text{ is a  singly even integer, }\eta_{2j-1}=0  \text{ for } 1\leq j\leq \mathrm{f}_{\kappa_1} \text{ and } \eta_{\kappa_1} =(\eta_0)^{\frac{3}{2}};\\ 0 & \text{otherwise, } \end{array}\right.$$ 
% whereas  in the case $2\kappa>e,$   $\kappa\geq 4$ and $\eta_1=0,$  $\vartheta_2$  is  the largest positive integer satisfying $1\leq \vartheta_2 \leq \min\{\mathrm{f}_{\kappa_1},e-\kappa\}$ such that $\eta_{2j-1}=0$ for $1\leq j \leq \vartheta_2$   and $$\epsilon_2=\left\{\begin{array}{ll}1 & \text{if } \kappa \text{ is a  singly even integer, }\eta_{2j-1}=0  \text{ for } 1\leq j\leq \mathrm{f}_{\kappa_1},  \eta_{\kappa_1} =(\eta_0)^{\frac{3}{2}} \text{ and }\frac{3}{2}\kappa+1+\theta_e \leq e ;\\ 0 & \text{otherwise. } \end{array}\right.$$ 
Here,  in the case $\kappa\geq 4,$ $2\kappa \leq e$   and $\eta_1=0,$  $\vartheta_1$ denotes the largest positive integer satisfying $1\leq \vartheta_1 \leq \mathrm{f}_{\kappa_1}$ and $\eta_{2j-1}=0$ for $1\leq j \leq \vartheta_1,$   and \vspace{-1mm}\begin{equation*}\vspace{-1mm}
    \epsilon_1=\left\{\begin{array}{ll}1 & \text{if } \kappa \text{ is a  singly even integer,}~\eta_{2j-1}=0  \text{ for } 1\leq j\leq \mathrm{f}_{\kappa_1} \text{ and } \eta_{\kappa_1} =(\eta_0)^{\frac{3}{2}};\\ 0 & \text{otherwise.} \end{array}\right.
\end{equation*} 
On the other hand, in the case  $\kappa\geq 4,$ $2\kappa>e$  and $\eta_1=0,$  $\vartheta_2$ denotes  the largest positive integer satisfying $1\leq \vartheta_2 \leq \min\{\mathrm{f}_{\kappa_1},e-\kappa\}$ and $\eta_{2j-1}=0$ for $1\leq j \leq \vartheta_2,$   and \vspace{-1mm}\begin{equation*}\vspace{-2mm}\epsilon_2=\left\{\begin{array}{ll}1 & \text{if } \kappa \text{ is a  singly even integer,} ~\eta_{2j-1}=0  \text{ for } 1\leq j\leq \mathrm{f}_{\kappa_1},~  \eta_{\kappa_1} =(\eta_0)^{\frac{3}{2}} \text{ and }e \geq \frac{3}{2}\kappa+1+\theta_e;\\ 0 & \text{otherwise. } \end{array}\right.\end{equation*} 
\vspace{-2mm}\end{theorem}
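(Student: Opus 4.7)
The strategy is parallel to the proofs of Theorems \ref{t4.1Keven} and \ref{t4.1Kevenb}: factor the count as (number of lifts of a fixed chain to a self-orthogonal code over $\mathcal{R}_{e,m}$) times (number of admissible chains over $\mathcal{T}_m$). The first factor is supplied by Lemma \ref{l3.1Keven} when $2\kappa\le e$ and by Lemma \ref{l3.2Keven} when $2\kappa>e$; in both cases this factor equals the displayed exponential-times-Gaussian-binomial prefactor, together with a correction $2^{\mu}$ that depends on which stratum $\mathcal{D}^{(s-\kappa_1-\delta+1)}\setminus \mathcal{D}^{(s-\kappa_1-\delta)}$ (or $\mathcal{D}^{(s-\mathrm{f}_{\kappa_1+1}+1-\kappa_1)}\setminus\mathcal{D}^{(s-\mathrm{f}_{\kappa_1+1}-\kappa_1)}$) the vector $\mathbf{1}$ lies in. The formula therefore reduces to summing $2^{\mu}\cdot (\text{number of chains in that stratum})$ over the disjoint strata that are actually admissible when $n\equiv 2,6\pmod 8$.

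The second step is to extract from Propositions \ref{p3.1}--\ref{p3.9Keven} the precise list of strata that can occur when $n\equiv 2,6\pmod 8$. Inspecting those propositions gives exactly the dichotomy encoded in cases (L1) and (L2) stated just before Lemma \ref{p5.7}: if $2\kappa\le e$ and either $(\kappa=2,\,e$ odd$)$, or $(\kappa=2,\,e$ even with $(\eta_0)^{3/2}\ne\eta_1)$, or $(\kappa\ge 4$ with $\eta_1\ne 0)$, or $e=\kappa+1$, then the hypotheses of Lemma \ref{l3.4} (resp.\ \ref{l3.4a}) force $\mathbf{1}\notin \mathcal{D}^{(s-\kappa_1)}$, so the only admissible chains are those enumerated by $\mathfrak{B}(\lambda_1,\ldots,\lambda_{s+\theta_e})$ in Lemma \ref{p5.1}, yielding the first line of $\mathcal{S}_{\theta_e}$. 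In the case $2\kappa\le e$ with $\kappa=2$, $e$ even and $(\eta_0)^{3/2}=\eta_1$, Proposition \ref{pe=k=2} is inapplicable (we are in the $e\geq 4$ regime) and one additional stratum becomes admissible, namely $\mathbf{1}\in\mathcal{D}^{(s-\mathrm{f}_{\kappa_1+1}+1-\kappa_1)}\setminus\mathcal{D}^{(s-\mathrm{f}_{\kappa_1+1}-\kappa_1)}$; this stratum is counted by $\mathcal{Y}$ in Lemma \ref{p5.8}, and the correction factor from Lemma \ref{l3.1Keven} is $\mu=1$, producing the summand $2^{m}\mathcal{Y}$.

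For the remaining two cases, when $\kappa\ge 4$ and $\eta_1=0$, strata of type (L1) become admissible for every $\delta$ in the range $1\le \delta\le \vartheta$, where $\vartheta$ is the largest index such that $\eta_{2j-1}=0$ for $1\le j\le \vartheta$ (truncated by $\mathrm{f}_{\kappa_1}$ in the $2\kappa\le e$ subcase and by $\min\{\mathrm{f}_{\kappa_1},e-\kappa\}$ in the $2\kappa>e$ subcase). The stratum of type (L2) is simultaneously admissible iff $\kappa$ is singly even and (when $2\kappa>e$) $e\ge \tfrac32\kappa+1+\theta_e$, and $\eta_{\kappa_1}=(\eta_0)^{3/2}$, contributing a factor $2^{m\mathrm{f}_{\kappa_1+1}}\mathcal{Y}$ weighted by Lemma \ref{l3.1Keven} (resp.\ \ref{l3.2Keven}) whose correction is $\mu=\mathrm{f}_{\kappa_1+1}$; each admissible $\delta$-stratum contributes $(2^m)^{\delta}\mathcal{W}_\delta$ with $\mu=\delta$. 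Summing these disjoint contributions produces the third and fourth lines of $\mathcal{S}_{\theta_e}$.

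The main obstacle, and the only non-routine point, is to verify that the strata (L1) for different $\delta$ and the stratum (L2) are mutually disjoint in the chain space and that their union exactly exhausts the chains that are admissible under conditions \textbf{A1)}--\textbf{A5)} (resp.\ \textbf{B1)}--\textbf{B4)}) when $n\equiv 2,6\pmod 8$. Disjointness is clear from the definition, since the strata are parametrised by the unique index $i$ such that $\mathbf{1}\in\mathcal{D}^{(i)}\setminus\mathcal{D}^{(i-1)}$. Exhaustiveness amounts to a careful bookkeeping: one must check, using the precise existence dichotomies in Propositions \ref{p3.4Keven}, \ref{p3.5Keven}, \ref{p3.7Keven} and \ref{p3.8Keven}, that the $\eta$-hypotheses in each case of $\mathcal{S}_{\theta_e}$ are equivalent to the statement that exactly the listed strata survive the lifting obstructions. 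Once this bookkeeping is in place, combining Lemmas \ref{l3.1Keven}--\ref{p5.1}, \ref{p5.7} and \ref{p5.8} (in the $2\kappa\le e$ cases) and Lemmas \ref{l3.2Keven}--\ref{p5.1}, \ref{p5.7} and \ref{p5.8} (in the $2\kappa>e$ case) yields the stated formula.
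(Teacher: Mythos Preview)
Your proposal is correct and follows essentially the same route as the paper: for $2\kappa\le e$ one combines Lemmas \ref{l3.1Keven}, \ref{p5.1}, \ref{p5.7} and \ref{p5.8}, and for $2\kappa>e$ one combines Lemmas \ref{l3.2Keven}, \ref{p5.1}, \ref{p5.7} and \ref{p5.8}, summing the lift counts over the disjoint strata determined by the position of $\mathbf{1}$ in the chain. One small slip: early on you write ``a correction $2^{\mu}$'' where the correction supplied by Lemmas \ref{l3.1Keven}--\ref{l3.2Keven} is $(2^m)^{\mu}$ (with $\epsilon=0$ throughout since $n\equiv 2,6\pmod 8$), consistent with your later use of $(2^m)^{\delta}$ and $(2^m)^{\mathrm{f}_{\kappa_1+1}}$.
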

\vspace{-2mm}\begin{proof} When $2\kappa\leq e,$ the desired result follows  by   Lemmas \ref{l3.1Keven}, \ref{p5.1},  \ref{p5.7} and \ref{p5.8}.
On the other hand, when  $2\kappa>e,$ we get the desired result by  Lemmas \ref{l3.2Keven}, \ref{p5.1},  \ref{p5.7} and \ref{p5.8}.
\end{proof}
\small
\begin{table}[h!]
\centering
\resizebox{\textwidth}{!}{%
\begin{tabular}{|c|c||c|c||c|c|}
\hline
$\{\lambda_1,\lambda_2,\lambda_3,\lambda_4\}$ & $\mathscr{M}_4(2; \lambda_1,\lambda_2,\lambda_3,\lambda_4)$ &
$\{\lambda_1,\lambda_2,\lambda_3,\lambda_4\}$ & $\mathscr{M}_4(2; \lambda_1,\lambda_2,\lambda_3,\lambda_4)$ &
$\{\lambda_1,\lambda_2,\lambda_3,\lambda_4\}$ & $\mathscr{M}_4(2; \lambda_1,\lambda_2,\lambda_3,\lambda_4)$ \\
\hline
$\{  1,0,0,0\} $ & $0$ & $\{0,1,0,1\}$ & $ 2 $ & $\{ 0,1,0,0 \}$ & $ 4 $ \\  \hline
$\{0,0,1,1 \} $ & $3$ & $\{0,0,2,0\}$ & $ 1$ & $\{ 0,0,1,0\}$ & $ 6 $ \\  \hline
$\{ 0,0,0,2\} $ & $1$ & $\{0,0,0,1\}$ & $ 3 $ &  &  \\  \hline
\end{tabular}%
}
\caption{The number $\mathscr{M}_4(2; \lambda_1,\lambda_2,\lambda_3,\lambda_4)$ of self-orthogonal codes of length $2$ and type $\{\lambda_1,\lambda_2,\lambda_3,\lambda_4\}$ over $\mathcal{R}_{4,1} = GR(4,1)[x]/\langle x^2+2,2x^2\rangle$}
\label{Tab4}
\end{table}
\small
\begin{table}[h!]
\centering
\resizebox{\textwidth}{!}{%
\begin{tabular}{|c|c||c|c||c|c|}
\hline
$\{\lambda_1,\lambda_2,\ldots,\lambda_6\}$ & $\mathscr{M}_6(2; \lambda_1,\lambda_2,\ldots,\lambda_6)$ &
$\{\lambda_1,\lambda_2,\ldots,\lambda_6\}$ & $\mathscr{M}_6(2; \lambda_1,\lambda_2,\ldots,\lambda_6)$ &
$\{\lambda_1,\lambda_2,\ldots,\lambda_6\}$ & $\mathscr{M}_6(2; \lambda_1,\lambda_2,\ldots,\lambda_6)$ \\
\hline
$\{  0,0,1,0,0,0\} $ & $8$ & $\{0,0,1,0,0,1\}$ & $ 4 $ & $\{ 0,0,1,0,1,0 \}$ & $ 2 $ \\  \hline
$\{0,1,0,0,0,0 \} $ & $8$ & $\{0,1,0,0,0,1\}$ & $ 4$ & $\{ 1,0,0,0,0,0\}$ & $ 8 $ \\  \hline
$\{0,0,0,0,0,1\} $ & $3 $ & $\{0,0,0,0,0,2\}$ & $ 1 $ & $\{0,0,0,0,1,0\}$ & $6 $  \\  \hline
$\{0,0,0,0,1,1\} $ & $ 3$ & $\{0,0,0,0,2,0\}$ & $ 1 $ & $\{0,0,0,1,0,0\}$ & $ 12$  \\  \hline
$\{0,0,0,1,0,1\} $ & $ 6$ & $\{0,0,0,1,1,0\}$ & $ 3 $ & $\{0,0,0,2,0,0\}$ & $1 $  \\  \hline\end{tabular}%
}
\caption{The number $\mathscr{M}_6(2; \lambda_1,\lambda_2,\ldots,\lambda_6)$ of self-orthogonal codes of length $2$ and type $\{\lambda_1,\lambda_2,\ldots,\lambda_6\}$ over $\mathcal{R}_{6,1} = GR(4,1)[x]/\langle x^4+2,2x^2\rangle$}
\label{Tab5}
\end{table}
\small
\begin{table}[h!]
\centering
\resizebox{\textwidth}{!}{%
\begin{tabular}{|c|c||c|c||c|c|}
\hline
$\{\lambda_1,\lambda_2,\ldots,\lambda_7\}$ & $\mathscr{M}_7(2; \lambda_1,\lambda_2,\ldots,\lambda_7)$ &
$\{\lambda_1,\lambda_2,\ldots,\lambda_7\}$ & $\mathscr{M}_7(2; \lambda_1,\lambda_2,\ldots,\lambda_7)$ &
$\{\lambda_1,\lambda_2,\ldots,\lambda_7\}$ & $\mathscr{M}_7(2; \lambda_1,\lambda_2,\ldots,\lambda_7)$ \\
\hline
$\{ 0,0,0,0,0,0,1 \} $ & $ 3$ & $\{0,0,0,0,0,0,2  \}$ & $  1$ & $\{ 0,0,0,0,0,1,0  \}$ & $  6$ \\  \hline
$\{ 0,0,0,0,0,1,1 \} $ & $ 3$ & $\{ 0,0,0,0,0,2,0 \}$ & $ 1 $ & $\{ 0,0,0,0,1,0,0  \}$ & $12  $ \\  \hline
$\{ 0,0,0,0,1,0,1 \} $ & $6 $ & $\{ 0,0,0,0,1,1,0 \}$ & $ 3 $ & $\{ 0,0,0,0,2,0,0  \}$ & $ 1 $ \\  \hline
$\{ 0,0,0,1,0,0,0 \} $ & $ 8$ & $\{0,0,0,1,0,0,1  \}$ & $ 4 $ & $\{  0,0,0,1,0,1,0 \}$ & $ 2 $ \\  \hline
$\{ 0,0,0,1,1,0,0 \} $ & $ 1$ & $\{ 0,0,1,0,0,0,0 \}$ & $ 8 $ & $\{ 0,0,1,0,0,0,1  \}$ & $ 4 $ \\  \hline
$\{ 0,0,1,0,0,1,0 \} $ & $ 2$ & $\{  0,1,0,0,0,0,0\}$ & $ 8 $ & $\{ 0,1,0,0,0,0,1  \}$ & $ 4 $ \\  \hline
$\{ 1,0,0,0,0,0,0 \} $ & $0 $ & &  &  & \\  \hline
\end{tabular}%
}
\caption{The number $\mathscr{M}_7(2; \lambda_1,\lambda_2,\ldots,\lambda_7)$ of self-orthogonal codes of length $2$ and type $\{\lambda_1,\lambda_2,\ldots,\lambda_7\}$ over $\mathcal{R}_{7,1} = GR(4,1)[x]/\langle x^4+2(x^3+x+1),2x^3\rangle$}
\label{Tab6}
\end{table}\vspace{-2mm}
\begin{example}\label{Example 4.1}
  Using Magma \cite{Mag},  we see that the number $\mathscr{M}_4(2; \lambda_1,\lambda_2,\lambda_3,\lambda_4)$ of self-orthogonal codes of length $2$ and type $\{\lambda_1,\lambda_2,\lambda_3,\lambda_4\}$ over $\mathcal{R}_{4,1} = GR(4,1)[x]/\langle x^2+2,2x^2\rangle$ is as given by Table \ref{Tab4},  the number $\mathscr{M}_6(2; \lambda_1,\lambda_2,\ldots,\lambda_6)$ of self-orthogonal codes of length $2$ and type $\{\lambda_1,\lambda_2,\ldots,\lambda_6\}$ over $\mathcal{R}_{6,1} = GR(4,1)[x]/\langle x^4+2,2x^2\rangle$ is given by   Table \ref{Tab5}, and  that the number $\mathscr{M}_7(2; \lambda_1,\lambda_2,\ldots,\lambda_7)$ of self-orthogonal codes of length $2$ and type $\{\lambda_1,\lambda_2,\ldots,\lambda_7\}$ over $\mathcal{R}_{7,1} = GR(4,1)[x]/\langle x^4+2(x^3+x+1),2x^3\rangle$ is given by  Table \ref{Tab6}. These values agree precisely with Theorem \ref{t4.1Kevenc}. 
\end{example}  
 \begin{remark}\label{rem8.1}
 When $\kappa$ is even,   Theorem 5.1  of Yadav and Sharma \cite{quasi} follows, as a special case, from  Theorems \ref{t4.4e=k=2} and  \ref{t4.1Keven} - \ref{t4.1Kevenc} by setting $\mathfrak{s}=1$ and $e=\mathtt{t}=\kappa.$ \end{remark}
 We next recall,  by Lemma  \ref{l2.2},  that a self-orthogonal code of type $\{\lambda_1,\lambda_2,\ldots,\lambda_e\}$ and length $n$ over $\mathcal{R}_{e,m}$ is self-dual if and only if $\lambda_j=\lambda_{e-j+2}$ for $1\leq j \leq e+1.$ Thus, on substituting  $\lambda_j=\lambda_{e-j+2}$ for $1\leq j \leq e+1$  in Theorems \ref{t4.4e=3} and \ref{t4.1Keven}, one can deduce enumeration formula for the number $\mathscr{B}_e(n;\lambda_1,\lambda_2,\ldots,\lambda_e).$ 
 In the following theorem, we consider the case $e=\kappa=2$ and obtain an  explicit enumeration formula for the number $\mathscr{B}_2(n;\lambda_1,\lambda_2).$ 
\begin{theorem}\label{t4.3e=k=2}
  For $e=\kappa=2,$ we have 
\begin{equation*}\mathscr{B}_2(n;\lambda_1,\lambda_2)=\left\{ \begin{array}{cl}  \displaystyle \widetilde{S}(n;\lambda_1)(2^m)^{\frac{\lambda_1(\lambda_1+1)}{2}} & \text{if    } 2\lambda_1+\lambda_2=n;\vspace{0.5mm}\vspace{0.5mm}\\ \displaystyle 0  & \text{otherwise,} \end{array}\right.\vspace{-2mm}\end{equation*} where the number $\widetilde{\mathcal{S}}(n;\lambda_1)$ is as determined in Theorem \ref{t4.4e=k=2}. 
    \end{theorem}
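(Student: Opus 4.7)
The plan is to derive the result directly from Theorem \ref{t4.4e=k=2} (the enumeration formula for self-orthogonal codes when $e=\kappa=2$) together with Lemma \ref{l2.2}, which characterizes self-duality among self-orthogonal codes in terms of the type invariants. First I would apply Lemma \ref{l2.2} in the case $e=2$: a self-orthogonal code of type $\{\lambda_1,\lambda_2\}$ and length $n$ over $\mathcal{R}_{2,m}$ is self-dual if and only if $\lambda_i=\lambda_{e-i+2}$ for $1\leq i\leq e$, i.e.\ $\lambda_1=\lambda_3=n-(\lambda_1+\lambda_2)$. This is equivalent to the single numerical condition $2\lambda_1+\lambda_2=n$. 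Whenever this condition fails, no self-orthogonal code of the given type can be self-dual, and so $\mathscr{B}_2(n;\lambda_1,\lambda_2)=0$, establishing the second branch of the formula.

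Next, when $2\lambda_1+\lambda_2=n$, every self-dual code of type $\{\lambda_1,\lambda_2\}$ and length $n$ is automatically self-orthogonal of the same type, and conversely every self-orthogonal code of this type is self-dual by Lemma \ref{l2.2}. Hence $\mathscr{B}_2(n;\lambda_1,\lambda_2)=\mathscr{M}_2(n;\lambda_1,\lambda_2)$, and it remains to specialize the formula in Theorem \ref{t4.4e=k=2}. With $\Lambda_1=\lambda_1$, $\Lambda_2=\lambda_1+\lambda_2$ and $n=2\lambda_1+\lambda_2$, one computes
\begin{equation*}
n-\Lambda_2=\lambda_1,\qquad n-\Lambda_2-\Lambda_1=0,\qquad \lambda_1(n-\Lambda_2)-\tfrac{\lambda_1(\lambda_1-1)}{2}=\lambda_1^2-\tfrac{\lambda_1(\lambda_1-1)}{2}=\tfrac{\lambda_1(\lambda_1+1)}{2}.
\end{equation*}
Moreover, the Gaussian binomial coefficient appearing in Theorem \ref{t4.4e=k=2} collapses to ${\lambda_2\brack \lambda_2}_{2^m}=1$. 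Substituting these values into the formula yields exactly $\widetilde{\mathcal{S}}(n;\lambda_1)(2^m)^{\lambda_1(\lambda_1+1)/2}$, which matches the first branch of the claimed formula.

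No genuine obstacle is expected: the argument is a direct specialization together with the algebraic identity above. The only care needed is to verify that the self-duality condition of Lemma \ref{l2.2} indeed reduces in this setting to $2\lambda_1+\lambda_2=n$ (since the condition $\lambda_2=\lambda_2$ is vacuous) and that the three simplifications of the exponents and of the $q$-binomial coefficient are performed correctly.
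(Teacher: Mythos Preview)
Your proposal is correct and follows essentially the same approach as the paper: the paper's proof simply reads ``On substituting $n=2\lambda_1+\lambda_2$ in Theorem \ref{t4.4e=k=2} and using Lemma \ref{l2.2}, we get the desired result,'' and you have carried out exactly this substitution with the algebraic details spelled out.
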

\begin{proof}
On substituting $n=2\lambda_1+\lambda_2$ in Theorem \ref{t4.4e=k=2} and using Lemma  \ref{l2.2}, we get the desired result. 
\end{proof}
In the following theorem, we consider the case $e=3$ and $\kappa=2$ and obtain an  explicit enumeration formula for the number $\mathscr{B}_3(n;\lambda_1,\lambda_2,\lambda_3).$ 
\begin{theorem}\label{t4.3e=3}
  For $e=3$ and $\kappa=2,$ we have 
\begin{equation*}\mathscr{B}_3(n;\lambda_1,\lambda_2,\lambda_3)=\left\{ \begin{array}{cl}  \displaystyle (2^m)^{\lambda_1(\lambda_3+\lambda_1)}{\Lambda_2 \brack \lambda_1}_{2^m} \prod\limits_{i=0}^{\Lambda_2-2}\left( \frac{2^{m(n-2-2i)}-1}{2^{m(j+1)}-1}\right)& \text{if }n=2(\lambda_1+\lambda_2) \text{ and }\lambda_2=\lambda_3;\vspace{0.5mm}\vspace{0.5mm}\\ \displaystyle 0  & \text{otherwise.} \end{array}\right.\vspace{-2mm}\end{equation*} 
    \end{theorem}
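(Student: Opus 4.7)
The plan is to derive the enumeration formula for self-dual codes directly from Theorem \ref{t4.4e=3} by imposing the type restrictions forced by self-duality. First, I would invoke Lemma \ref{l2.2}: a self-orthogonal code of type $\{\lambda_1,\lambda_2,\lambda_3\}$ and length $n$ over $\mathcal{R}_{3,m}$ is self-dual if and only if $\lambda_i=\lambda_{e-i+2}$ for $1\leq i\leq e$, which for $e=3$ translates to $\lambda_2=\lambda_3$ together with $\lambda_1=\lambda_4=n-(\lambda_1+\lambda_2+\lambda_3)$. Combining these yields the necessary conditions $\lambda_2=\lambda_3$ and $n=2(\lambda_1+\lambda_2)$; if either condition fails, then $\mathscr{B}_3(n;\lambda_1,\lambda_2,\lambda_3)=0$, which accounts for the ``otherwise'' branch of the claimed formula.

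Next, assuming both conditions hold, I would substitute $n=2(\lambda_1+\lambda_2)=2\Lambda_2$ and $\lambda_2=\lambda_3$ directly into the formula of Theorem \ref{t4.4e=3}. Under these substitutions, one has $\Lambda_3=\lambda_1+2\lambda_2$, so $n-\Lambda_3-\Lambda_1=2\lambda_1+2\lambda_2-\lambda_1-2\lambda_2-\lambda_1=0$, which immediately simplifies the Gaussian binomial factor:
\begin{equation*}
{\lambda_3+n-\Lambda_3-\Lambda_1 \brack \lambda_3}_{2^m}={\lambda_3 \brack \lambda_3}_{2^m}=1.
\end{equation*}
The exponent of $2^m$ collapses as well: since $(\Lambda_1+\Lambda_2)(n-\Lambda_3-\Lambda_1)=0$, the exponent reduces to $\lambda_3\lambda_1+\Lambda_1^2=\lambda_1(\lambda_1+\lambda_3)$, matching the exponent in the target expression.

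The remaining routine step is to simplify $\widehat{\mathcal{S}}(n;\Lambda_2)$ when $n$ is even and $n=2\Lambda_2$. In this case $n-\Lambda_2=\Lambda_2$, so the leading factor $\tfrac{2^{m(n-\Lambda_2)}-1}{2^{m\Lambda_2}-1}=1$ and only the product term survives, giving
\begin{equation*}
\widehat{\mathcal{S}}(n;\Lambda_2)=\prod_{i=0}^{\Lambda_2-2}\left(\frac{2^{m(n-2-2i)}-1}{2^{m(i+1)}-1}\right),
\end{equation*}
with the convention that the empty product equals $1$ when $\Lambda_2=0$. Multiplying these simplified pieces together yields precisely the stated formula. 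No genuine obstacle is expected here: the argument is entirely a specialization of Theorem \ref{t4.4e=3}. The only point requiring mild care is confirming that the degenerate cases $\Lambda_2\in\{0,1\}$ are handled consistently by empty products and by the corresponding branch of $\widehat{\mathcal{S}}(n;\Lambda_2)$, which I would verify at the end.
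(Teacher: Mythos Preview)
Your proposal is correct and follows exactly the same approach as the paper's proof, which simply states that the result follows by substituting $\lambda_1=\lambda_4$ and $\lambda_2=\lambda_3$ in Theorem~\ref{t4.4e=3} and invoking Lemma~\ref{l2.2}. You have merely fleshed out the routine simplifications that the paper leaves implicit.
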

\begin{proof}
On substituting $\lambda_1=\lambda_4$ and $\lambda_2=\lambda_3$ in Theorem \ref{t4.4e=3} and using Lemma  \ref{l2.2}, we get the desired result. 
\end{proof}
In the following theorem, we provide an explicit enumeration formula for the number $\mathscr{B}_e(n;\lambda_1,\lambda_2,\ldots,\lambda_e)$ for all $e\geq 4.$   
\begin{theorem}\label{t4.2Keven} 
For $e\geq 4,$  we have  
\vspace{-2mm}\begin{eqnarray*}  \mathscr{B}_e(n;\lambda_1,\lambda_2,\ldots,\lambda_e)=\left\{ \begin{array}{cl} \displaystyle
	 \displaystyle  \mathcal{S}_{\theta_e}(\lambda_1,\lambda_2,\ldots,\lambda_{s+\theta_e}) (2^m)^{\sum\limits_{i=1}^{s}\Lambda_{i}(n-\Lambda_{i+1})-\sum\limits_{a=1}^{s-\kappa_1}\Lambda_a-(1-\theta_e)\frac{\Lambda_s(\Lambda_s-1)}{2}} & \text{if } \lambda_j=\lambda_{e-j+2}\\& \text{for }1 \leq j \leq e+1;\vspace{0.5mm}\\ 0 & \text{otherwise},\end{array}\right. \end{eqnarray*} where  $\mathcal{S}_{\theta_e}(\lambda_1,\lambda_2,\ldots, \lambda_{s+\theta_e})=\mathfrak{B}(\lambda_1,\lambda_2,\ldots,\lambda_{s+\theta_e})$ is as determined in Lemma \ref{p5.1})  if $n\equiv 1,3,5,7\pmod8,$ while in the case $n\equiv 0,2,4,6\pmod8,$ the number $\mathcal{S}_{\theta_e}(\lambda_1,\lambda_2,\ldots, \lambda_{s+\theta_e})$ is as determined in Theorems  \ref{t4.1Kevenb} and \ref{t4.1Kevenc}.
 \vspace{-1mm} \end{theorem}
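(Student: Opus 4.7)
The plan is to derive the theorem directly from the self-orthogonal enumeration Theorems \ref{t4.1Keven}, \ref{t4.1Kevenb}, \ref{t4.1Kevenc} by exploiting Lemma \ref{l2.2}, which states that a self-orthogonal code of type $\{\lambda_1,\lambda_2,\ldots,\lambda_e\}$ and length $n$ over $\mathcal{R}_{e,m}$ is self-dual if and only if $\lambda_i = \lambda_{e-i+2}$ for $1 \leq i \leq e$ (equivalently, for $1 \leq i \leq e+1$, since $\lambda_1 = \lambda_{e+1}$ follows from $\sum_{j=1}^{e+1}\lambda_j = n$). Consequently, $\mathscr{B}_e(n;\lambda_1,\ldots,\lambda_e) = 0$ whenever the symmetry fails, and $\mathscr{B}_e(n;\lambda_1,\ldots,\lambda_e) = \mathscr{M}_e(n;\lambda_1,\ldots,\lambda_e)$ whenever it holds. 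Therefore the proof amounts to substituting $\lambda_j = \lambda_{e-j+2}$ into the appropriate self-orthogonal formula and simplifying.

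The main computational step will be to establish the key telescoping identity
\vspace{-1mm}\begin{equation*}\vspace{-1mm}
n - \Lambda_i = \Lambda_{e-i+1} \quad \text{for } 0 \leq i \leq e+1,
\end{equation*}
which follows by writing $n - \Lambda_i = \sum_{j=i+1}^{e+1}\lambda_j$, applying the substitution $\lambda_j = \lambda_{e-j+2}$, and re-indexing via $k = e-j+2$. Two immediate consequences then collapse significant portions of the self-orthogonal formulae. First, for every $a$ with $s+1+\theta_e \leq a \leq e$, we obtain
\vspace{-1mm}\begin{equation*}\vspace{-1mm}
n - \Lambda_a - \Lambda_{e+1-a} = \Lambda_{e-a+1} - \Lambda_{e+1-a} = 0,
\end{equation*}
so each Gaussian binomial coefficient ${\lambda_a + n - \Lambda_a - \Lambda_{e+1-a} \brack \lambda_a}_{2^m}$ reduces to ${\lambda_a \brack \lambda_a}_{2^m} = 1$, making the entire product $\prod_{a=s+1+\theta_e}^{e}$ trivial. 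Second, for $1 \leq j \leq s-1+\theta_e$, since $s - j + \theta_e = e - s - j = (e+1) - (s+j+1)$, the identity gives $n - \Lambda_{s+j+1} = \Lambda_{s-j+\theta_e}$, whence
\vspace{-1mm}\begin{equation*}\vspace{-1mm}
n - \Lambda_{s+j+1} - \Lambda_{s-j+\theta_e} = 0.
\end{equation*}
This annihilates the entire second summation $\sum_{j=1}^{s-1+\theta_e}\Lambda_{s+j}(n-\Lambda_{s+j+1}-\Lambda_{s-j+\theta_e})$ in the exponent of $(2^m)$ appearing in Theorems \ref{t4.1Keven}, \ref{t4.1Kevenb} and \ref{t4.1Kevenc}.

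Combining these simplifications, the exponent of $(2^m)$ reduces to
\vspace{-1mm}\begin{equation*}\vspace{-1mm}
\sum\limits_{i=1}^{s}\Lambda_{i}(n-\Lambda_{i+1})-\sum\limits_{a=1}^{s-\kappa_1}\Lambda_a-(1-\theta_e)\frac{\Lambda_s(\Lambda_s-1)}{2},
\end{equation*}
exactly as stated in Theorem \ref{t4.2Keven}. The chain-counting factor $\mathcal{S}_{\theta_e}(\lambda_1,\ldots,\lambda_{s+\theta_e})$ transfers verbatim from the self-orthogonal case, because it depends only on the parameters $\Lambda_1,\Lambda_2,\ldots,\Lambda_{s+\theta_e}$ of the chain of self-orthogonal codes over $\mathcal{T}_m$ (which are untouched by the self-duality constraint), together with $n$, $m$, $\kappa$ and the Teichm\"uller coefficients $\eta_i$. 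Assembling these pieces yields the claimed formula; and in the complementary case, where the symmetry $\lambda_j = \lambda_{e-j+2}$ fails for some $j$, Lemma \ref{l2.2} gives $\mathscr{B}_e(n;\lambda_1,\ldots,\lambda_e) = 0$, completing the argument. The only conceivable obstacle is the careful bookkeeping across the various congruence cases for $n \pmod 8$ and the relative sizes of $\kappa$ and $e$, but since these cases are already handled at the level of $\mathcal{S}_{\theta_e}$ in Theorems \ref{t4.1Kevenb} and \ref{t4.1Kevenc}, no additional casework is introduced here.
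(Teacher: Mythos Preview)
Your proposal is correct and follows exactly the paper's approach: substitute the self-duality constraints $\lambda_j=\lambda_{e-j+2}$ into the self-orthogonal enumeration formulae (Theorems \ref{t4.1Keven}, \ref{t4.1Kevenb}, \ref{t4.1Kevenc}) and invoke Lemma \ref{l2.2}. In fact you supply more detail than the paper, which records only a one-line proof; your telescoping identity $n-\Lambda_i=\Lambda_{e-i+1}$ and the resulting collapse of the Gaussian binomial product and of the second exponent sum are exactly the hidden computations.
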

 \begin{proof}
   On substituting $\lambda_{j}= \lambda_{e-j+2}$  for $1 \leq j \leq e+1$ in Theorem \ref{t4.1Keven} and using Lemma  \ref{l2.2}, we get the desired result.
 \end{proof}
\begin{remark}\label{rem8.2}
    When $\kappa$ is even,   Theorem 5.3  of Yadav and Sharma \cite{quasi}  can be deduced, as a special case, from  Theorems \ref{t4.3e=k=2} and \ref{t4.2Keven} by setting $\mathfrak{s}=1$ and $e=\mathtt{t}=\kappa.$
\end{remark} 

\section{Acknowledgements}
The authors gratefully acknowledge the support provided by the Department of Science and Technology, India, under Grant No. DST/INT/RUS/RSF/P-41/2021 with TPN 65025.

{}
\end{document}